\theoremstyle{definition}
\newtheorem{definition}{Definition}[section]
\theoremstyle{theorem}
\newtheorem{theorem}{Theorem}[section]
\theoremstyle{example}
\newtheorem{example}{Example}[section]
\theoremstyle{remark}
\newtheorem{remark}{Remark}[section]
\newcommand{\R}{\mathbb{R}}
\newcommand{\N}{\mathbb{N}}
\newcommand{\NN}{\mathcal{N}}
\newcommand{\x}{\mathbf{x}}
\newcommand{\ZZ}{\mathfrak{Z}}
\newcommand{\SSSS}{\mathbf{\Sigma}}
\newcommand{\XXX}{\mathfrak{X}}
\newcommand{\X}{\mathbf{X}}
\newcommand{\Y}{\mathbf{Y}}
\newcommand{\eg}{e.g.}
\newcommand{\ie}{\textit{i.e.}}
\newcommand{\uu}{\mathbf{u}}
\newcommand{\vv}{\mathbf{v}}
\newcommand{\rhog}{\mathbf{f}_{\NN(\mu, \SSSS)}}
\newcommand{\Fg}{\mathbf{F}_{\NN(\mu, \SSSS)}}
\newcommand{\UU}{\mathfrak{U}}
\newcommand{\U}{\mathbf{U}}
\newcommand{\C}{\mathbf{C}}
\newcommand{\CC}{\mathcal{C}}
\newcommand{\MM}{\mathcal{M}}
\newcommand{\1}{\mathbb{1}}
\newcommand{\F}{{F_{\text{T}}}}
\newcommand{\f}{{f_{\text{T}}}}
\newcommand{\E}{\mathbb{\mathbb{E}}}
\newcommand{\ii}{\mathbf{i}}
\newcommand{\SSd}{\SSSS_{(\text{d})}}
\title{Selected Methods for non-Gaussian Data Analysis}
\author{Krzysztof Domino}
\date{November 27, 2018}
\let\runauthor\@author
\let\runtitle\@title
\begin{document}

\thispagestyle{empty}

\begin{flushleft}
    \vspace*{-36pt}
    \hrule
    \vspace{24pt}
    \huge{\bf \runtitle}
    \vspace{24pt}
    \hrule
    \vspace{24pt}
    \LARGE{\bf \runauthor}
    \vspace{24pt}


\end{flushleft}

\chapter*{Foreword}
\addcontentsline{toc}{chapter}{Foreword}

The primary goal of computer engineering is the analysis of data. Such data are 
often large data sets distributed according to various distribution models. In 
this manuscript, we focus on the analysis of non-Gaussian distributed data. In 
the case of univariate data analysis, we discuss stochastic processes with 
auto-correlated increments and univariate distributions derived from specific 
stochastic processes, \ie~L\'evy and Tsallis distributions. Inspired by the 
fact that there is an increasing interest in financial technology and 
computing, we use the Ising model applicable on the quantum computation on the 
D-Wave machine to discuss the stochastic process of real financial data 
generation and analysis.

A crucial observation is that stochastic processes with auto-correlated 
increments may lead to non-Gaussian distributed data that are relatively common 
among real-life data. One can consider here computer network traffic data, data 
generated for the machine learning purposes, audio signals, multiple sensors 
data, weather data, various medical data, and cosmological data, or finally 
financial data where the failure of a Gaussian predictive model may make a 
hazard in economy and lead in some cases to the bankruptcy. The motivation for 
this manuscript comes from the fact that it is expected from computer 
scientists to develop algorithms to handle real-life data such as non-Gaussian 
distributed ones. While analysing non-Gaussian distributed data, there may 
appear a temptation to assume a priori Gaussian distribution and disregard 
extreme data values that do not fit the assumption. Such a naive approach may 
worsen the outcome of the analysis of non-Gaussian distributed data, especially 
if simultaneous extreme values of many marginals are possible. Such extreme 
events are not predicted by a simple Gaussian model but rather by a 
non-Gaussian multivariate frequency distribution.

What is essential, in-depth investigation of multivariate non-Gaussian 
distributions requires the copula approach. A copula is a component of 
multivariate distribution (especially non-Gaussian one) that models the mutual 
interdependence between marginals. There are many copula families characterised 
by various measures of the dependence between marginals. Importantly, one of 
those is `tail' dependencies that model the simultaneous appearance of extreme 
values in many marginals. Those extreme events may reflect a crisis given 
financial data, outliers in machine learning, or traffic congestion.

In this manuscript, we discuss copula-based data generation algorithms 
implemented in the Julia programming language that is an efficient open-source 
programming language suitable for scientific computation. The implementation is 
available on the GitHub repository, and the code is available for scientists 
for analysis and further development. We use a variety of copula families, 
especially those that can be applied in real-life data analysis (Gaussian, 
$t$-Student, Fr\'echet, Archimedean). Using those generators we perform 
experiments, demonstrating how different methods of features extraction or 
selection can distinguish between multivariate data distributed according to 
Gaussian or non-Gaussian copulas.

Having discussed non-Gaussian multivariate probabilistic models, we discuss 
higher order multivariate cumulants that are non-zero if the multivariate 
distribution is non-Gaussian. Nevertheless, the relation between those 
cumulants and copulas is not straight forward, but 
the $d$\textsuperscript{th} order multivariate cumulant encloses the natural 
measure of the $d$-variate cross-correlation between marginals. We discuss the 
application of those cumulants to extract information about non-Gaussian 
multivariate distributions, such that information about non-Gaussian copulas.  
The use of higher order multivariate cumulants in computer science is inspired 
by financial data analysis, especially by the safe investment portfolio 
evaluation. Apart from this, there are many other applications of higher order 
multivariate cumulants in data engineering, especially in: signal processing, 
non-linear system identification, blind sources separation, and direction 
finding algorithms of multi-source signals. 

Another promising computer science discipline, where higher order multivariate 
cumulants are used, involves analysis of data obtained from hyper-spectral 
imaging. In this book, we discuss the small target detection scenario where the 
analysis of the non-Gaussian distribution of features is beneficial. We show on 
the real-life data example, from a forensic analysis, a need for non-Gaussian 
algorithms using copulas and higher order multivariate cumulants. Given those, 
we evolve algorithms based on higher order cumulants applicable for features 
selection and features extraction.  We show by experiments the application of 
those methods in detecting subsets of marginals with non-Gaussian copulas, 
including copulas with `tail' dependencies reflecting the appearance of 
simultaneous high values in many marginals being extreme events.  For further 
real-life examples, we discuss through the manuscript applications of mentioned 
methods to analyse real-life biomedical data as well.
\vspace{12pt}

%

\tableofcontents

\addcontentsline{toc}{chapter}{List of symbols}
\begin{table}[h]
	\renewcommand{\arraystretch}{1.2}
	\centering
	\begin{tabular}{lp{0.44\textwidth}}
		\textbf{Symbol} & \textbf{Description/explanation}  \\ \hline 
		$\XXX$ & univariate random variable \\ \hline 
		$X = [x_1, \ldots, x_t]^{\intercal}$ & vector of its realisations \\ 
		\hline $\E(\XXX), \E(\XXX^2), \ldots$  &
		expectation value operators\\ \hline 
		$f(x), F(x)$ & univariate PDF and CDF functions \\ \hline
		$\mathcal{N}(\mu, \sigma^2)$ & normal univariate distribution with mean 
		$\mu$ and variance $\sigma^2$ \\ \hline
		$\text{Uniform}([0,1])$ & uniform univariate distribution on segment 
		$[0,1]$ \\ \hline
		$(1:n)$ &  a vector $[1,2,\ldots, n]$ \\ \hline
		$\XXX^{(n)}$, $\XXX_i$ & $n$-variate random vector and its 
		the $i$\textsuperscript{th} marginal \\ \hline 
		$\mathbf{X} \in \mathbb{R}^{t \times n}$ & matrix of $t$
		realisations of $n$-variate random vector, with elements $x_{j,i}$\\ 
		\hline $\mathbf{x}_j = 
		[x_{j,1}, \ldots, x_{j,n}]$ & the single $j$\textsuperscript{th} 
		realisation of $n$-variate random 
		vector\\ \hline
		$\mathfrak{Z}_i, \mathfrak{Z}^{(n)}_i$ & the $i$\textsuperscript{th} 
		increment of an univariate or a multivariate 
		random variable. \\ 
		\hline 
		$\mathbf{f}(\x), \mathbf{F}(\x)$ & multivariate PDF and CDF functions\\ 
		\hline 
		$\mathfrak{U}^{(n)}$ & $n$-variate random vector with all marginals 
		uniformly distributed on $[0,1]$ segment\\ \hline
		$\mathbf{c}(\uu), \C(\uu)$ & copula density and copula function \\ 
		\hline  
		$\mathbf{A} \in \R^{n_1 \times n_2}$ & matrix with elements $a_{i_1, 
			i_2}$  \\ \hline$\SSSS \in \R^{[n,2]}$ & covariance matrix with 
		elements $s_{i_1, i_2}$  \\ \hline
		$\mathcal{N}(\mu, \SSSS)$ & normal multivariate distribution 
		parametrised by the mean vector $\mu$ and the covariance matrix $\SSSS$ 
		\\ \hline $\mathcal{T} \in \R^{n_1 \times \cdots \times n_d}$ & $d$ 
		mode tensor of size
		$n_1 \times \ldots \times n_d$, with elements $t_{i_1, \ldots, i_d}$ \\ 
		\hline
		$\mathcal{T} \in \R^{[n,d]}$ &
		$d$ mode super-symmetric tensor of size $n \times \ldots \times n$, 
		with elements 
		$t_{i_1, \ldots, i_d}$\\ 
		\hline
		$\CC_d \in \mathbb{R}^{[n, d]}$ $\MM_d \in \mathbb{R}^{[n, d]}$ & 
		$d$\textsuperscript{th} cumulant, moment
		tensor with elements $c_{i_1, \ldots, i_d}$, $m_{i_1, \ldots, i_d}$\\ 
	\end{tabular}
	\caption{Symbols used in the book.}\label{tab::symbols}
\end{table}

\chapter{Introduction}

The basic goal of computer engineering is the analysis of data. Such 
data are often large data sets distributed according to various distribution 
models. In this manuscript we focus on the analysis of non-Gaussian distributed 
data. 
To show that such data are rather common among real-life data, we can mention 
data 
with auto-correlated increments that may lead to 
non-Gaussian distributions both in an univariate (see 
Chapter~\ref{ch::univ}) and in a multivariate 
data case. For comparison with multivariate Gaussian models see 
Chapter~\ref{ch::mvn}. Deep investigation of 
non-Gaussian multivariate distributions requires the copula 
approach~\cite{nelsen1999introduction}, see 
Chapter~\ref{chap::cops}. A copula is an component of 
 multivariate distribution (especially non-Gaussian one) that models the mutual 
 interdependence between marginals. To extract probabilistic information 
about non-Gaussian multivariate distributions we use multivariate higher order 
cumulants, see Chapter~\ref{ch::hostats} that are non-zero if data are 
non-Gaussian 
distributed~\cite{kendall1946advanced,lukacs1970characteristics}. Having 
introduced higher order multivariate cumulants we use them in 
Chapter~\ref{cha::cumml} to 
discuss and develop some machine learning algorithms that can detect 
non-Gaussian 
features. For the programming implementation we use the Julia 
programming language \cite{bezanson2012julia, bezanson2014julia}, that is 
modern, efficient, open source and high level programming language suitable for 
scientific computations.

The 
motivation for this book comes from the fact that there are many types 
of non-Gaussian distributed real life data, and it is expected from computer 
scientists to develop algorithms to analyse such data. While handling 
non-Gaussian distributed data, 
there may appear a temptation to assume a priory Gaussian distribution 
 and disregard extreme data, that do not fit such Gaussian distribution and are
 out of priory  assumed model. Gaussian models are discussed in 
 Chapter~\ref{ch::mvn}. Such naive approach may worsen the outcome of 
the analysis of non-Gaussian distributed data. For a meaningful example 
consider financial data. Here simultaneously extreme 
values of many marginals, that are not expected by a simple Gaussian model, may 
appear especially during a crisis~\cite{cherubini2004copula}. Such extremes 
results being the failure of a 
Gaussian predictive model, may lead 
in some cases to the bankruptcy~\cite{vasconcelos2004guided}. 

To show wider application of non-Gaussian models, one can observe that computer 
network 
 traffic data do not follow the Gaussian 
 distribution~\cite{domanska2014}. This is due to the fact that computer 
 network 
 data may possess long range 
auto-correlations~\cite{norros1995use, li2009fractional, 
chandrasekaran2009survey} of increases, or may be modelled by the L\'evy 
process~\cite{terdik2009levy}. The same is true for financial 
data~\cite{vasconcelos2004guided}. To see more analogies between those two 
types of data, observe that to model both multivariate 
computer network 
traffic data~\cite{dong2015copula, 
	dong2017copula} and multivariate financial data~\cite{cherubini2004copula} 
	one uses non-Gaussian copulas. 
Beside those distinct examples, there are many other example of non-Gaussian 
distributed real-life data that 
are expected to be handled by computer scientists. One of them 
are audio signals~\cite{kidmose2000alpha, swami1997bibliography} (see also
\cite{porat1991direction, moulines1991second, cardoso1995asymptotic,	
	liang2009joint}), or multiple sensors data~\cite{iyengar2011decision}. 
	The other are weather data
\cite{cong2012interdependence, domino2014meteo, ozga2016snow}, various medical
data \cite{pougaza2008using, iyengar2010quantifying}, and cosmological data 
\cite{scherrer2009finance}. It is also worth to mention data generated for 
machine learning purposes using non-Gaussian 
copulas \cite{eban2013dynamic}. 

The probabilistic model for multivariate non-Gaussian distributed data,  
includes the copula that model interdependency between marginals this is 
discussed in Chapter~\ref{chap::cops}. There are 
many copula families, all imposing different measures of the 
interdependence between marginals. One of such measures are 
`tail' dependencies that model the simultaneous 
appearance of
extreme values in many marginals. Such extreme events may reflect: a crisis 
while analysing financial data, outliers data in machine learning, or a traffic 
congestion. In this book we discuss widely copulas and copula-based data 
generation algorithms. We 
have implemented those algorithms in the Julia 
programming language that is an efficient open source programming language 
suitable for scientific computation. The 
implementation is available in the GitHub repository~\cite{cop} and the code is 
available for scientists for analysis and further development.
In the implementation we use variety of copula families (such as 
Gaussian, $t$-Student, Archimedean, Fr\'echet and Marshal-Olkin), that can be 
found in real life data analysis. For the application of copulas in
decision 
making algorithms~\cite{sundaresan2011copula} or the machine 
learning, see copulas application in the 
remotely sensed 
images~\cite{mercier2008conditional, voisin2014supervised} analysis.

Unfortunately, higher order 
multivariate cumulants are hard to calculate, especially for large 
multivariate data sets. This is due to high computational 
complexity of cumulants calculation while using straight forward algorithms. 
See~\cite{de2012multivariate} 
for discussion and the multivariate cumulans calculation algorithm implemented 
in 
the \texttt{R} programming language. To overcome this problem 
in~\cite{domino2018efficient} a fast parallel algorithm for multivariate 
high order cumulants calculation was introduced. Here multivariate cumulants 
were represented in the tensor form structured in the form of blocks to take 
advantage of their super-symmetry and reduce to minimum redundant operations. 
In this book we discuss algorithms implemented in Julia programming 
language~\cite{cum}. Our 
approach makes the multivariate 
cumulants analysis more tractable and practical.

To understand the meaning of multivariate 
cumulants, one should observe that an 
univariate frequency distribution can be characterised by 
the series of scalar cumulants: the first is a mean, the second a variance, the 
third 
is proportional to an asymmetry, the fourth to a kurtosis, etc. The 
multivariate 
frequency distribution can be 
characterised by a series of cumulants as well, but those cumulants are no more 
scalars. If we consider multivariate data with $n$ 
marginals (or features using the machine learning terminology) the first 
multivariate cumulant is a mean vector of size $\R^n$, the second one 
is a covariance matrix of size $\R^{n \times n}$. Multivariate cumulants of 
order $d$, where $d \geq 3$ are 
called higher order cumulants, and as discussed further in this book, they 
carry 
information about the non-Gaussian multivariate distribution of data. Those 
higher order 
cumulants can be represented in the form of $d$-dimensional array of size 
$n^d$~\cite{domino2018efficient}, called the $d$-mode tensor. Hence if 
computing or storing them, a 
computer memory requirement 
and a processor time requirement rises rapidly with $d$, especially for high 
$n$ as in many real life data cases. This requires more sophisticated 
algorithms, as those introduced in~\cite{domino2018efficient} to reduce 
computational power and memory requirements.

 Having overcame 
computational difficulties in 
calculating higher order multivariate cumulants, we can move to their practical 
application in non-Gaussian data analysis. In general, the use of higher 
order multivariate cumulants in computer science is inspired on financial data 
analysis. See \cite{mittelhammer1996mathematical, sornette2000nonlinear, 
muzy2001multifractal, domino2016use}, 
where the practical use of such cumulants for 
safe investment portfolio determination is discussed. For the computer science 
application of those cumulants, consider for example signal processing 
\cite{nikias1993signal, swami1997bibliography, mendel1991tutorial, 
nikiashigher}, non-linear system 
identification \cite{tugnait1987identification, nam1994application}, blind 
sources separation \cite{cardoso1991super, cardoso1990eigen}, and direction 
finding 
algorithms of multi-source signals \cite{porat1991direction}. 
Another promising 
computer science discipline, where higher order multivariate cumulants are 
used, 
involves data analysis from hyper-spectral imaging. In \cite{geng2015joint} new 
higher order cumulant based algorithm of features selection for the 
small 
target detection of hyper-spectral data is discussed. This type of algorithms 
is 
discussed and developed 
in this manuscript. Finally, consider the analysis of computer 
network 
traffic data. There 
are some recent results concerning the use of higher order univariate 
cumulants~\cite{bai2012modeling, salagean2010anomaly, lam2009transportation} to 
analyse such data. 
For the discussion on the potential application of higher order multivariate 
cumulants, for such data scenario, see the introduction 
of~\cite{gibberd2017characterising}. Hence, one can hope for new results the 
use of higher order multivariate cumulants to model computer network 
traffic data.

Let us use the example of financial data, to discuss why data may be 
non-Gaussian 
distributed, despite the fact, that in many cases data 
records are 
sums on many increases and the Central Limit Theorem should hold leading to the 
Gaussian distribution of such data. In this case we consider 
financial data being a share prices traded on the stock exchange. 
The current share price is the price of last transaction. Such transactions 
may occur frequently. For major companies shares 
traded on major stock exchanges we have many transactions per second. The 
transaction price is derived from buy and 
sell orders on the stock exchange. However, as discussed in 
\cite{osborne1977stock}, the relation is not straight forward. In general, as 
time pass 
new orders are recorded, and new 
transactions are executed. Let $x_0 \in \R^{+}$ be an initial price and $x_1 
\in 
\R^{+}$ a price after the transaction is recorded. In this case the increment 
is:
\begin{equation}
z_1 = x_1 - x_0.
\end{equation}
The price after $k$ transactions is given by the following sum of increments,
\begin{equation}\label{eq::sumofincrease}
x = x_0 +\sum_{i = 1}^{k} z_i
\end{equation}
If increments $z_1, \ldots, z_k$ were independent, sampled from identical 
distribution with finite variance, than according to the Central Limit Theorem 
\cite{vandewalle1997coherent} $x$ would be a sample from the Gaussian 
distribution.
In practice for financial data, conditions of the  
Central Limit Theorem may not be fulfilled due to:
\begin{enumerate}
	\item long range auto-correlations of increments, 
	see \cite{mandelbrot1968fractional},
	\item increments from distribution without finite variance, 
	see~\cite{wim2003levy},
	\item varying distributions of increments \cite{engle1982autoregressive}.
\end{enumerate}
Hence, financial data may not be Gaussian distributed. 

As discussed in Chapter~\ref{ch::univ}, to demonstrate that it is 
true not only for financial data analysis, we can refer to some computer 
science data models inspired on financial data models. For this end consider 
that autoregressive models \cite{engle1982autoregressive, engle2001garch, 
engle2002new} 
rewarded by the Nobel 
Price in 2003 for financial data modelling. Analogically such models are used 
for computer network traffic~\cite{chandrasekaran2009survey}. 
Further~\cite{bak1997price} there are 
sophisticated 
financial data models concerning
games theory, or analogies 
with complex physical systems \cite{sornette2009stock}. Such analogies explains 
financial crisis using 
stochastic auto-correlated processes, fractals and multi-fractals, and 
non-Gaussian distributions. What more, there are hierarchical models of 
investors behaviour~\cite{sornette1998hierarchical}, and we have 
hierarchical models of the computer network traffic as 
well~\cite{altschaffel2014hierarchical}. In Section~\ref{sec::ising} we discuss 
the Ising model of financial data~\cite{gligor2001econophysics} concerning a 
hierarchical relation between investors and we propose the quantum computing 
scheme applicable on the D-Wave machine to implement this model. Such 
approach is based on the 
discussion of features and limitations of the Chimera graph, being a scheme of 
a processor unit of the D-Wave machine. By analogy such model may be 
applicable to analyse other types of computer science data.

After introducing stochastic models leading do non-Gaussian distributions, 
we can discuss now real life applications of multivariate probabilistic models 
of non-Gaussian distributed data. One should note that, there is a vital real 
life issue of multivariate financial data analysis. It is an evaluation of a 
risk of investment portfolio composed of many assets. In the portfolio 
management practice one either minimise risk, given expected revenue, or 
maximise revenue given acceptable risk \cite{best2000implementing}. This can be 
generalised to the computer science multi-parameter optimisation problem.
The classical financial engineering method \cite{best2000implementing} 
identifies 
risk with the 
variance of the portfolio value and assumes that financial data are 
multivariate Gaussian distributed. Next, it uses eigenvalue/eigenvector 
decomposition of the covariance matrix. However, as discussed before, 
financial data are often non-Gaussian distributed, hence the covariance matrix 
and a mean vector does not carry all information about their frequency 
distribution. It is why the classical financial engineering method fails 
to~anticipate cross-correlated extreme events \cite{sornette2000nonlinear}, 
especially during the crash, imposing the bankruptcy risk. 
 
Such failure of the Gaussian distribution in financial engineering, gives  
the motivation to investigate machine learning methods applicable for 
non-Gaussian distributed financial data at first and other types of data as 
well. In machine learning, there are 
many feature extraction and features selection methods, based on the 
multivariate Gaussian 
distribution assumption~\cite{duda2012pattern}. There exists a 
temptation 
to use such methods without testing if data are Gaussian or non-Gaussian 
distributed. 
Examples of Gaussian distribution based machine learning methods are Singular 
Value Decomposition (SVD) of 
the covariance 
matrix or the Principle Component Analysis (PCA) \cite{pearson1901liii, 
jolliffe2002principal}.

To discuss features selection, let us consider hyper-spectral camera
data. We can consider spectral channels of such 
camera as
$n$ marginals of a random vector and light intensities recorded at each pixel 
as $t$ realisations of such random vector. In practice $n  \sim 100$, and $t 
\sim 10^5$.
In the spectral 
approach ~\cite{benediktsson2015spectral}, we analyse multivariate statistics 
of such data, hence we ignore spatial information due to pixel's relative 
positions. Due to the presence of phenomena such as spectral mixing and feature 
redundancy \cite{ghamisi2017advances}, we are often interested in selecting the 
subset of features (spectral canals) that carries 
information relevant for further processing. There are classical methods of 
features selection, that use
only information stored in a covariance matrix, see for example the 
Maximum Ellipsoid Volume (MEV) \cite{sheffield1985selecting}.

Alternatively to the features selection scenario we 
may be interested in linear combinations of features that carries meaningful 
information. The 
classical method of such features extraction is the Principle 
Component Analysis (PCA) \cite{jolliffe2002principal}. This method uses the 
Gaussian 
distribution assumption of data as well. 
Obviously there are methods of features extraction or features selection that 
does not require the Gaussian distribution assumption of data, see 
introduction in~\cite{geng2015joint}. Those methods in general are either 
supervised or analyse only marginal frequency distributions of data. 
Supervised methods are dependent on the training samples what may lead to 
instability of the solution, therefore unsupervised methods may be more robust 
\cite{jia2016novel}. On the other hand, methods that analyses only marginal 
frequency distributions are not sensitive on 
copulas~\cite{nelsen1999introduction}, what may decrease their predictive 
power for non-Gaussian copula of data. Besides those methods, consider the 
Independent 
Component Analysis~\cite{stone2004independent} (ICA) 
that is useful for features extraction~\cite{hyvarinen2001independent} and 
selection~\cite{du2003band} of non-Gaussian 
distributed data. Unfortunately, ICA results are probabilistic and do not 
analyse 
standard correlations of data, what makes the ICA less robust.

There are few machine learning methods using higher order multivariate 
cumulants, despite the fact that such cumulants carry meaningful information 
about 
non-Gaussian joint frequency distribution of data. Hence, analogically to the 
financial 
engineering case, higher order multivariate cumulants should give better 
machine 
learning 
algorithms dedicated to non-Gaussian distributed data. In this book we 
discuss and improve some features selection and features extraction methods 
that 
use mentioned higher 
order multivariate cumulants. Furthermore we show by experiments the 
application of 
those methods in detecting subsets of marginals modelled by non-Gaussian 
copulas, 
including copulas with `tail' dependencies reflecting the appearance of 
simultaneous high values of many marginals~\ie~extreme events.

The implementation described in this book utilises 
tensors~\cite{kolda2009tensor} to store high 
order 
multivariate cumulants. We use 
tensor operations presented in the form of 
tensor networks to increase the clarity of the formulas. Observe that such 
tensor network 
presentation is inspired by the quantum mechanics, see 
\cite{sornette2000nonlinear} where the method of even 
order cumulants calculation has been developed by analogy to 
Feynman diagrams used in the
particle quantum physics. Such approach is used for financial data analysis, 
however priory Weibull class distribution of data is 
assumed. In this book we provide a general 
algorithm of any order multivariate cumulants calculation, regardless of the 
data 
distribution. Concluding the analogy between tensor networks of higher order 
cumulants and Feynman diagrams is worth to notice.

\chapter{Univariate data  models}\label{ch::univ}

In this chapter, we discuss univariate models of data employing stochastic 
processes. For this purpose, we assume that data are generated by the random 
process and we consider both random processes with independent or correlated 
increments.

\section{Random variable and its increments}\label{sec::stoch}

We start with the standard approach of independent increments and the 
Central Limit Theorem.

\subsection{Central Limit Theorem}\label{sec::clt}

Let us start with a simple random process. 
Suppose that we start with initial value $x_0 = 0$. Next in a unit time we 
increment 
it by $z_i = \pm \delta$. The sign of an increment is chosen each time 
randomly, 
hence all 
increments are independent. The probability of being in position  
$x$ after $k$ steps is binomial
\begin{equation} \label{eq::binomdist}
f_k(x) = \left(\frac{1}{2}\right)^k\left(\begin{array}{c}
k \\
\frac{\left(\frac{x}{\delta}\right)+k}{2}
\end{array}\right).
\end{equation}
Here $f_k(x)$ is the discrete function since 
$\frac{\left(\frac{x}{\delta}\right)+k}{2}$ 
must be an integer. One can show that $f_k(x)$ is normalised: $\sum_x f_k(x) = 
1$ and positive valued. The discrete function fulfilling those conditions is 
called in literature the 
probability mass function. If increments are independent, $f_k(x)$ for 
relatively high $k$ can be approximated by means of the continuous Gaussian PDF
(Probability Distribution Function), see Figure~\ref{fig::ras}. 
\begin{figure}[ht!]
    \subfigure[Independent 
    increments\label{fig::ras}]{\includegraphics{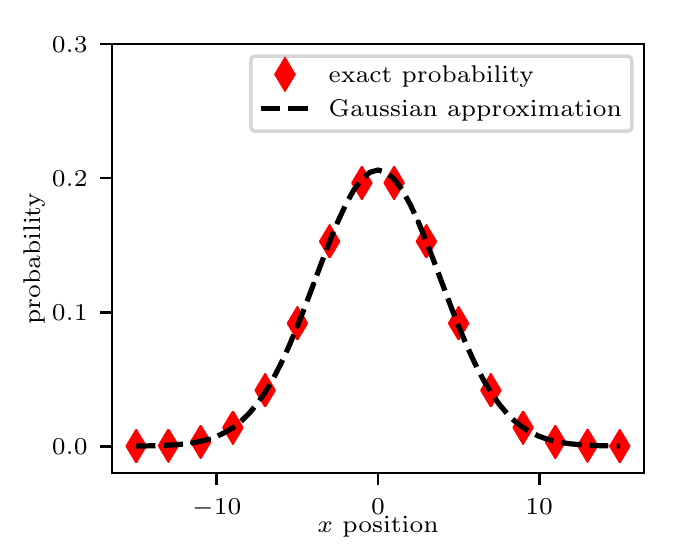}}
    \subfigure[Fully correlated increments\label{fig::cor}]	
    {\includegraphics{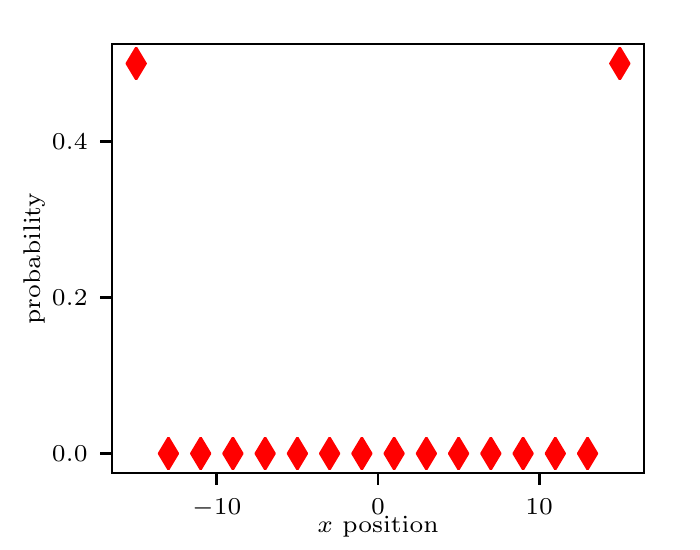}}
    \caption{Probabilities of simple random process after $k = 15$ steps, 
        $\delta = 1$}
    \label{fig::binfin}
\end{figure}

Switching to a random variable notation, $x$ is a single realisation of the 
following random variable
\begin{equation}\label{eq::Z}
\mathfrak{X}_k = \sum_{i=1}^k \ZZ_i,
\end{equation} where $\ZZ_1, \ldots, \ZZ_k$ are 
independent random variables (increments). The random variable $\ZZ_i$ has two 
possible 
realisations $z_i = \delta$  with probability $\frac{1}{2}$, or $z_i = -\delta$ 
with probability~$\frac{1}{2}$. 
It is easy to show that the expecting value of $\XXX_k$ is $\E(\XXX_k) = 0$. 
Following 
\cite{havlin2002diffusion}, the 
variance of $\XXX_k$ is
\begin{equation}\begin{split}\label{eq::independentrandvar}
\sigma^2(\XXX_k) &= \E\left(\left(\XXX_k-\E(\XXX_k)\right)^{2}\right) = \E 
\left(\sum_{i = 
    1}^{k}\ZZ_{i}\right)^{2} = \E \left(\sum_{i = 0}^{k}\ZZ_{i}^{2}\right) + 2 
    \E 
\left(\sum_{i > 
    j}^{k}\ZZ_{i} 
\ZZ_{j}\right) \\ &= \delta^2 k + 0 = \delta^2 k,
\end{split}
\end{equation}
and the standard deviation is $\sigma = \delta 
\sqrt{k}$, hence $\sigma \propto \sqrt{k}$.
\begin{remark}\label{rem::eoperator}
    For the random variable $\XXX$ with 
    realisations $x_1, \ldots, x_t$ occurring with probabilities $p_1, \ldots, 
    p_t$ 
    such that $p_i \geq 0 \wedge \sum p_i = 1$ the expecting value fulfils 
    \begin{equation}
    \E(\XXX) = \sum_{i=1}^t p_i x_i.
    \end{equation}
    While estimating the expected value we can assume that each realisation is 
    equally probable, hence the estimator would be
    \begin{equation}
    \E(\XXX) = \frac{1}{t}\sum_{i=1}^t x_i.
    \end{equation}
\end{remark}
According to Central Limit Theorem (Theorem~\ref{t::clt}) the 
distribution of $\XXX_k$ can be approximated by
the Gaussian one. The higher the $k$, the better the approximation. 

The Gaussian distribution has the following Probability Distribution 
Function (PDF)
\begin{equation}
f(x) = \frac{1}{\sqrt{2 \pi} \sigma 
}\exp\bigg(- \frac{x^{2}}{2\sigma^{2}}\bigg),  
\end{equation}
and the following Cumulative Distribution Function (CDF)
\begin{equation} F(x) = p(\XXX \leq x) = \int_{-\infty}^{x} 
\frac{1}{\sqrt{2 \pi} \sigma}\exp \left(- \frac{u^{2}}{2 \sigma^{2}} 
\right) d u. \end{equation}

Let us move to the formal definition of the Central Limit Theorem 
\cite{vandewalle1997coherent}.
\begin{theorem}\label{t::clt}
    Let $\ZZ_{1}, \ZZ_{2},..., \ZZ_{k}$ be univariate random variables which
    \begin{enumerate}
        \item are independent,
        \item are identically distributed,
        \item have defined finite variance $\delta^2 = \sigma^2(\ZZ_i)$ and 
        the mean value 
        $\mu_z$.
    \end{enumerate} 
The random variable
    \begin{equation}
    \XXX_k' = \frac{1}{\sqrt{k}}\sum_{i=1}^k \left(\ZZ_i - \mu_z \right)
    \end{equation}
    has the frequency distribution that converges to
    $\mathcal{N}(0, 
    \delta^2)$ as $k \rightarrow \infty$.
\end{theorem}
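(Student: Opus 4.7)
The plan is to use the method of characteristic functions, since the statement concerns convergence in distribution of a properly normalised sum of i.i.d.\ random variables with finite variance. First I would reduce to the centred case by setting $\W_i = \ZZ_i - \mu_z$, so that $\E(\W_i) = 0$ and $\sigma^2(\W_i) = \delta^2$, and rewrite $\XXX_k' = \frac{1}{\sqrt{k}} \sum_{i=1}^k \W_i$. The goal then becomes to show that the characteristic function $\varphi_{\XXX_k'}(t) = \E\!\left(e^{\ii t \XXX_k'}\right)$ converges pointwise in $t \in \R$ to $\exp(-\delta^2 t^2 / 2)$, which is the characteristic function of $\NN(0, \delta^2)$.

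Second, exploiting independence and identical distribution, the characteristic function of the normalised sum factorises as
\[
\varphi_{\XXX_k'}(t) = \left[ \varphi_{\W}\!\left(\tfrac{t}{\sqrt{k}}\right) \right]^k,
\]
where $\varphi_\W$ is the common characteristic function of the $\W_i$. The existence of a finite second moment yields the second-order expansion
\[
\varphi_\W(s) = 1 - \tfrac{1}{2}\delta^2 s^2 + o(s^2), \quad s \to 0,
\]
which, substituted with $s = t / \sqrt{k}$, gives
\[
\varphi_{\XXX_k'}(t) = \left[ 1 - \frac{\delta^2 t^2}{2k} + o\!\left(\tfrac{1}{k}\right) \right]^k \xrightarrow[k \to \infty]{} \exp\!\left(-\tfrac{\delta^2 t^2}{2}\right).
\]
Finally, invoking L\'evy's continuity theorem, pointwise convergence of characteristic functions to a limit that is continuous at $0$ is equivalent to convergence in distribution, so $\XXX_k'$ converges in distribution to $\NN(0,\delta^2)$, as required.

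The main obstacle is the rigorous justification of the Taylor expansion of $\varphi_\W$ to second order using only the assumption of finite variance, rather than a finite third moment which would make a Taylor remainder argument immediate. The cleanest route is to start from the elementary bound $\bigl| e^{\ii x} - 1 - \ii x + \tfrac{1}{2} x^2 \bigr| \leq x^2 \wedge \tfrac{1}{6}|x|^3$, apply it with $x = s \W$, and then use dominated convergence with $\W^2$ as the integrable majorant to conclude that the remainder is $o(s^2)$ as $s \to 0$. Beyond this, the passage from $(1 - a/k + o(1/k))^k$ to $e^{-a}$ requires a standard but careful complex-logarithm estimate, which I would isolate as a preliminary lemma rather than weave into the main thread of the argument.
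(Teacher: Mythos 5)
Your proposal is correct. Note, however, that the paper does not prove Theorem~\ref{t::clt} at all: it states the Central Limit Theorem as a classical result and cites the literature for it, so there is no in-paper argument to compare against. Your characteristic-function route (centre, factorise by independence, second-order expansion, L\'evy continuity) is the standard proof, and the two technical points you flag are exactly the right ones: the expansion $\varphi(s) = 1 - \tfrac{1}{2}\delta^2 s^2 + o(s^2)$ under only a finite second moment is correctly handled by the truncated-exponential bound together with dominated convergence using the square of the increment as majorant, and the limit $(1 - a/k + o(1/k))^k \to e^{-a}$ does deserve to be isolated as a lemma since the base is complex-valued. No gaps.
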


\begin{remark}
    One of the important features of a Gaussian distribution is a fact that it 
    is stable in the following sense. Suppose we have two independent random 
    variables $\ZZ_{1} \sim N(\mu_1, \sigma_1^2)$ and $\ZZ_{2} \sim 
    \mathcal{N}(\mu_2, \sigma_2^{2})$, their sum is Gaussian distributed
    as well,
    \begin{equation}
    \ZZ_{1} + \ZZ_{2} \sim \mathcal{N}(\mu_1 + \mu_2 , 
    \sigma_1^{2} 
    + \sigma_2^{2}).
    \end{equation}
    If $\ZZ_1$ and $\ZZ_2$ are Gaussian distributed increments their sum would 
    be Gaussian distributed as well.
\end{remark}
Let us now discuss a few examples where the Central Limit Theorem holds or does 
not hold.
\begin{example}
    Suppose that $\ZZ_1, \ZZ_2, \ldots \ZZ_k$ are identically distributed with 
    mean 
    $\mu$ and variance $\delta^2$, but not independent.
    The distribution of $\XXX$ will be Gaussian (but with changed parameters) 
    if 
    the auto-correlation between $\ZZ_i, 
    \ZZ_j$ is short range~\ie~the auto-correlation function $\text{acf}_{i,j}$ 
    falls rapidly with $|i 
    - j|$, see \cite{billingsley1995probability} for details.
    The 
    auto-correlation function \cite{dunn2017measurement} is defined by:
    \begin{equation}
    \text{acf}_{i,j} = 
    \E\left((\ZZ_i-\mu)(\ZZ_j-\mu)\right)
    \end{equation}
    \ie~without normalisation. For the auto-correlation coefficient 
    \cite{dunn2017measurement} one introduces 
    the following normalisation:
    \begin{equation}
    \text{acc}_{i,j} = \frac{\text{acf}_{i,j}}{\delta^2}
    \end{equation}
\end{example}

\begin{example}\label{ex::fulcorr}
    Suppose now, that $\ZZ_1, \ZZ_2, \ldots \ZZ_k$ are fully dependent. Assume 
    that 
    $z_1 = \pm \delta$ with equal probability, but:
    If $z_1 = \delta$ than $z_2 = \ldots = z_k = \delta$, and oppositely if 
    $z_1 = 
    -\delta$ than 
    $z_2 = \ldots = z_k = -\delta$. It follows that:
    \begin{equation}\label{eq::dependentrandvar}
    \sigma^2(\XXX) = \E\left(\XXX^{2}\right) = \E \left(\sum_{i = 
        1}^{k}\ZZ_{i}\right)^{2} = k^2 E \left(\ZZ_{1}\right)^{2} = k^2 
        \delta^2. 
    \end{equation}
    The standard deviation would be: $\sigma = k 
    \delta$~\ie~$\sigma \propto k$, and the PDF would be:
    \begin{equation}
    f_k(x) = \begin{cases} \frac{1}{2} \ \ \mbox{if } x =  k \delta \\
    \frac{1}{2} \ \ \mbox{if } x = - k \delta \\ 
    0 \ \ \mbox{elsewhere} \end{cases},
    \end{equation}
    that do not converge to a Gaussian distribution, see Figure~\ref{fig::cor}. 
    Importantly in this case we have $\sigma \propto k$ in contrary to the 
    independent 
    increments case, where $\sigma \propto \sqrt{k}$ see 
    Eq.~\eqref{eq::independentrandvar}. This 
    observation will be used further in an auto-correlation analysis.
\end{example}

\begin{example}\label{eg::garchexample}
    Consider the following Auto-Regressive 
    Conditional Heteroskedasticity (ARCH, GARCH) models 
    \cite{engle1982autoregressive,akgiray1989conditional, 
        bollerslev1987conditionally}. We have a random process defined by 
    increments 
    $\ZZ_1, \ldots, \ZZ_k$ with realisations $z_1, \ldots, z_k$, where
    \begin{equation}
    \ZZ_i \sim \mathcal{N}(0, \delta_{i}^2).
    \end{equation}
    The variance is a function of past realisations of increments, in the ARCH 
    model it is:
    \begin{equation}
    \delta_i^2 = \alpha_0+\sum_{j=1}^p\alpha_j \left(z_{(i-j)}\right)^2.
    \end{equation}
    In the Generalised ARCH model, called GARCH 
    \cite{bollerslev1986generalized, tim2010glossary} the variance is:
    \begin{equation}
    \delta_i^2 = \alpha_0+\sum_{j=1}^p\alpha_j 
    \left(z_{(i-j)}\right)^2+\sum_{j=1}^p\beta_j \delta_{(i-j)}
    \end{equation}
    where $\alpha_0 > 0, \alpha_{j}, \beta_j \geq 0$ and $p$ is the length 
    of the auto-correlation.
    Such models break conditions $1$ and $2$ of the Central Limit Theorem, 
    Theorem~\ref{t::clt}, hence a sum of increments
    \begin{equation}
    \XXX = \sum_{i = 1}^k \ZZ_i
    \end{equation} may not be 
    Gaussian 
    distributed. Such autoregressive models were developed to analyse financial 
    data \cite{engle1982autoregressive,akgiray1989conditional, 
        tim2010glossary} and rewarded by the Nobel prize in 2003. 
    Moreover, similar auto-regressive 
    models	appear in modelling computer network traffic 
    data~\cite{chandrasekaran2009survey, zhou2005network, 
        zhou2006traffic}. 
    Beside this non-linear GARCH/ARCH models are recently used to analyse 
    various data such 
    as weather data 
    \cite{taylor2004comparison}, EEG signal analysis \cite{mihandoost2015eeg}, 
    brain activity analysis~\cite{riccelli2017dynamical}, EMG 
    (electromyography) data analysis 
    \cite{amirmazlaghani2009emg}, speech signal 
    analysis~\cite{cohen2004modeling}, and
    sonar 
    imaging~\cite{mousazadeh2010anomaly}.
\end{example}

\subsection{Scaling approach}\label{sec::randomwalk}

In this subsection we concentrate on the specific ceases of auto-correlation of 
increments. For the need of this and the following subsections we introduce 
random variables as a sum of zero mean increments
\begin{equation}
\begin{split}
\XXX(k) &= \sum_{i=1}^k \ZZ_i, \\
\Delta \XXX(k, \Delta k) &= \sum_{i=k+1}^{k+ \Delta k} \ZZ_i.
\end{split}
\end{equation}
Suppose now that increments are independent and, for the sake of clarity of, 
all distributed according to $\ZZ_i 
\sim \NN(0,1)$. It is 
easy to show that 
\begin{equation}\label{eq::uncor}
\E(\XXX(k) \XXX(k+h)) = \E(\XXX(k) \XXX(k)) + \E(\XXX(k) \Delta \XXX(k, \Delta 
k)) = k.
\end{equation} 
Analogously the following auto-correlation is zero
\begin{equation}
\E(\Delta \XXX(k, \Delta k) \Delta \XXX(k+h, \Delta k)) = 0.
\end{equation}

Following \cite{vasconcelos2004guided}, consider now a random variable being a 
sum of not independent but zero mean increments $\ZZ_i$. Such random variable 
is assumed to fulfil
\begin{equation}\label{eq::colorscaling} 
\E\left(\XXX(k) \XXX(k+h)\right) = \frac{1}{2} \left(k^{2H} + 
{(k+h)}^{2H}  - 
h^{2H}\right), 
\end{equation}
where we have the scaling exponent $0 < H < 1$. Here we do not require 
explicitly $\ZZ_i$ to be Gaussian distributed. Given these, it can be shown 
from 
Eq.~\eqref{eq::colorscaling} that
\begin{equation}\label{eq::hex} 
\E\left(\XXX^{2}(k)\right) = k^{2H}, 
\end{equation}
and for $H = \frac{1}{2}$ we reproduce independent case in 
Eq.~\eqref{eq::uncor}. 

Following Eq.~\eqref{eq::colorscaling} and 
taking $h \gg \Delta k$ we measure the long-range auto-correlation. Using the 
expansion
\begin{equation}
(h\pm\Delta k)^{2H} = h^{2H} \pm 2H h^{2H -1} \Delta k + 
2H(2H-1) h^{2H -2} \frac{(\Delta k)^2}{2} \pm \ldots + \ldots
\end{equation} 
we obtain the following long range auto-correlation function of $\Delta \XXX$:
\begin{equation}\label{eq::corf}
\begin{split}
\E\left(\Delta \XXX(k, \Delta k) \Delta \XXX(k+h, \Delta k)\right) &=  
\E\big((\XXX(k+\Delta k) - \XXX(k))(\XXX(k+h+\Delta k) - \XXX(k+h))\big) \\ 
&= 
\frac{1}{2} 
\left((h+\Delta k)^{2H} + (h-\Delta k)^{2H} \right) - h^{2H} 
\\ &\approx H(2H-1)h^{2H-2} (\Delta k)^2.
\end{split}
\end{equation}
Now, from Eq.~\eqref{eq::colorscaling} we can show that
\begin{equation}\label{eq::hexi}
\E(\Delta \XXX^2(k, \Delta 
k)) = (\Delta k)^{2H}
\end{equation}
Importantly, Eq.~\eqref{eq::hexi} or its 
modification can be used to determine the scaling exponent. 
The auto-correlation coefficient (normalised 
auto-correlation function \cite{dunn2017measurement}) of $\Delta \XXX$ given 
the time lag $h \gg \Delta k$ is:
\begin{equation}
\text{acc}_{\Delta k, h} \approx \frac{H(2H-1)h^{2H-2} (\Delta k)^2}{(\Delta 
    k)^{2H}},
\end{equation}
which depends on
$\Delta k$. However, we are rather interested in the sign of the 
auto-correlation function for large $h$ that is:
\begin{equation}
\text{sign} \left( \text{acc}_{\Delta k, h}  \right) = \begin{cases} +1 & 
\text{if} \ 
\frac{1}{2} < H < 
1 \\ 0 & \text{if} \ H =  \frac{1}{2} \\ -1 & \text{if} \ H < \frac{1}{2}  
\end{cases}.
\end{equation}

Suppose now that despite the auto-correlation of elementary increments, $\Delta 
\XXX$ are Gaussian 
distributed, at least for some range of $\Delta k$. Then for $2 q \in 
\N$ we have
\begin{equation}\label{eq::gaussmoms}
\E\left(|\Delta \XXX(k, \Delta k)|^{2q}\right) = (\Delta k)^{2q H} 
\left(\frac{2^{q} 
    \Gamma\left(\frac{2q+1}{2}\right)}{\sqrt{\pi}}\right)
\end{equation}
that are moments of a Gaussian distribution with covariance $\sigma^2 = (\Delta 
t)^{2H}$. Hence:\begin{equation}
^{2q}\sqrt{\E\left(|\Delta \XXX(k, \Delta k))|^{2q}\right)} \propto (\Delta 
k)^{H},
\end{equation}
at lest for $2 q \in \N$. We have the same scaling exponent 
regardless which moment we analyse. 

On the other hand, if $\Delta \XXX$ is not Gaussian distributed due to 
auto-correlation or non-Gaussian distributions of elementary increments, we 
have the multi-fractal scaling~\cite{vasconcelos2004guided}
\begin{equation}
^{2q}\sqrt{\E\left(|\Delta \XXX(k, \Delta k)|^{2q}\right)} \propto (\Delta 
k)^{H(q)},
\end{equation}
where each $H(q)$ belongs to the spectrum of scaling exponents. 
In practice such multi-fractal analysis is extended as well on non-integer 
$2q$, negative $q$ and zero $q$.
For such multi-fractal scaling in financial data 
see~\cite{vasconcelos2004guided}, in computer 
network traffic data see~\cite{feldmann1998data, 
    nogueira2003modeling}, in image analysis 
see~\cite{blachowicz2016statistical} while in biomedical data 
see~\cite{marri2015multifractal}.
\begin{remark}\label{rem::tricacf}
    Consider that the auto-correlation coefficient (normalised 
    auto-correlation function \cite{dunn2017measurement}) 
    is a Pearson's correlation coefficient between $\Delta \XXX(k, \Delta k)$ 
    and $\Delta \XXX(k+h, \Delta k)$. As 
    discussed further in Chapter~\ref{chap::cops} and Section~\ref{sec::cumsg} 
    such Pearson's correlation carry a whole information about the 
    auto-correlation given a Gaussian distribution of $\Delta \XXX$.
    Otherwise, complementary to the multi-fractal approach, we may use 
    higher order cumulant's approach, as discussed in 
    \cite{manolakos2000systematic, 
        nikias1993signal}, see also Section~\ref{sec::cumsacf}. Exemplary
    tri-variate auto-correlation would be
    \begin{equation}
    \E\left(\Delta \XXX(k+h_1, \Delta k) \Delta \XXX(k+h_2, \Delta k) \Delta 
    \XXX(k, \Delta k)\right).
    \end{equation}
    Given $\Delta k =1$ we can use elementary increments. If the random process 
    is 
    stationary, 
    this tri-variate auto-correlation is $k$ independent. Given those it is 
    only 
    parametrised by two 
    lags parameters $h_1$ and $h_2$.
\end{remark}
\subsection{Practical applications of auto-correlation 
    analysis}\label{sec::autocorrelatedproces}

Let us state the following practical problem. We have a series of realisation 
of an univariate random variable $\mathfrak{Y}$, and want to analyse 
auto-correlation. Such analysis may be helpful to predict potential 
statistics of $\mathfrak{Y}$. Further real data are often a joint result of the
deterministic and random process, for example
\begin{equation}\label{eq::lang}
\Delta \mathfrak{Y}(k, \Delta k) =  f(k, \Delta k) + \Delta \XXX(k, \Delta k),
\end{equation}
were $f(k, \Delta k)$ is the deterministic term. To analyse the 
auto-correlation of $\Delta \XXX$, than can reveal a dynamics of 
interesting us system, one can use the Detrended 
Fluctuation Analysis (DFA), where the scaling of the divergence from the local  
trend is analysed~\cite{peng1994mosaic, 
    peng1995quantification}. In practice linear trend models are often used 
    here, 
but other non-linear models are also possible. 
In the multi-fractal DFA, one can determine
spectrum of scaling exponents $H(q)$ \cite{kantelhardt2002multifractal}. By 
taking $q=1$ we reproduce a mono-fractal approach.

The DFA is presented in Algorithm~\ref{alg::dfa}. Given data 
series $\R^t \ni Y = [y_1, \ldots y_t]^{\intercal}$, we divide it into $i$ 
non-overlapping 
sub-series of 
length $\tau = \lceil t /i \rceil$ each. In practice last two sub-series may 
overlap 
but this does not affect results.
For each sub-series we determine 
data model, 
usually by performing the linear regression from data in the sub-series. Values 
predicted by the model $M \in \R^{\tau}$.
Given the sub-series and the model we determine the measure of the noise 
divergence $\Delta_k$ - see line $9$ of Algorithm~\ref{alg::dfa}. Next we 
average 
this measure over sub--series. Finally we investigate its scaling versus the 
length of a sub-series.

\begin{algorithm}[t]
    \caption{Compute a Hurst Exponent using multifractal DFA.
        \label{alg::dfa}}
    \begin{algorithmic}[1]	
        \State \textbf{Input}: $Y \in \R^t$ - univariate time series, $q \in 
        \R$, a multi-fractal parameter, $N \ll t$ - Int, the algorithm 
        parameter. 	
        \State \textbf{Output:} $H(q) \in \R$ a Hurst exponent value.
        \Function{DFA}{$Y$, $q$}
        \For {$i \gets 1 \textrm{ to } N$}
        \State $\tau = \lceil \frac{t}{i} \rceil$
        \For {$k \gets 1 \textrm{ to } i$}
        \State $\R^{\tau} \ni Y' = [y_{(\tau-1) k + 1}, y_{(\tau-1) k + 2}, 
        \ldots,  
        y_{\left((\tau -1)k+ k\right)}]^{\intercal}$ 
        \State fit model $Y' \rightarrow M$ \Comment{model 
            predictions $M = [m_1, \ldots, 
            m_{\tau}]$}
        \State
        $\Delta_k = \frac{1}{\tau} \sum_{j = 1}^{\tau}\left(\left(y'_j - 
        m_j\right)^{2}\right)^q$ \Comment{$Y' = [y'_1, \ldots, y'_{\tau}]$}
        \EndFor
        \State	$\Delta_{(2q)}(\tau) = \frac{1}{i} \sum_{k=1}^i \Delta_k,$
        \EndFor
        \State find $H_q$ using linear regression of: 
        $\log(\Delta_{(2q)}(\tau)) 
        \propto 2q \cdot H(q) \log(\tau)$
        \State\Return $H_q$
        \EndFunction 
    \end{algorithmic}
\end{algorithm}

\begin{remark}
    Mention that the DFA is only one of the methods of determining the scaling 
    (Hurst) exponent 
    or exponents. One can 
    mention here an original 
    as 
    R/S method \cite{hurst1952nile}, wavelet methods \cite{jones1996wavelet, 
        simonsen1998determination, acharya2012application},
    multi-fractal wavelet methods \cite{riedi1999multifractal, 
    muzy1991wavelets}, 
    or the Detrended Cross Correlation Analysis 
    \cite{podobnik2008detrended} applicable to the auto-correlation analysis 
    in the multivariate data domain.
\end{remark}
Hurst exponent analysis has many applications. Let us mention here the analysis 
of financial 
data, since it reveals log periodic osculations that are present before the 
crash \cite{vasconcelos2004guided, vandewalle1997coherent, grech2008local,
    czarnecki2008comparison, domino2012use} or change in trend see 
\cite{domino2011use} and 
bibliography within. In~\cite{domino2016use}, the Hurst exponent was used 
to predict the crash on the Warsaw Stock Exchange, while applying higher order 
multivariate cumulants to determine safe investment portfolios 
for this crash, see Section~\ref{sec::als}.
Furthermore, the 
Hurst exponent can be used to analyse different types of auto-correlated data 
in signal analysis \cite{muzy1991wavelets}, biomedical data analysis 
\cite{acharya2012application, maner2006characterization}, computer network 
traffic analysis 
\cite{riedi1999multifractal, karagiannis2004long, chakraborty2004self}, or as 
initially introduced 
weather analysis \cite{hurst1952nile, bove2006complexity}. Those
analysis examples give evidence of long range 
auto-correlations of increments of many types of data. Hence such data may not 
be Gaussian distributed and more advanced methods are necessary for their 
statistical analysis. Such methods (copulas, higher order cumulants tensors and 
relative methods) will be discussed in further part of this 
manuscript.

\section{Probabilistic non-Gaussian models}\label{sec::nongausdist}

Having discussed stochastic processes suggesting non-Gaussian probabilistic 
models of data, we can discuss some of such non-Gaussian models
resulting form particular types of stochastic processes.

\subsection{L\'evy distribution}\label{ss::levy}

L\'evy distribution family contains stable distributions. If increments are 
L\'evy distributed the sum is L\'evy distributed as well. Given those the 
Gaussian distribution belong to the L\'evy distribution family as well.
\begin{definition}\label{def::sumlevy}
    Let $\ZZ_{1}, ..., \ZZ_{i}, ..., \ZZ_{k}$ be independent 
    random variables with identical frequency distributions $f'(z)$.
    The frequency distribution $f'(z)$ is stable if 
    \begin{equation}\forall_{k\geq 2} \  
    \exists_{a'_{k}, b'_{k}}: \ZZ_{1} + ...+ \ZZ_{k} = \mathfrak{X} \sim 
    f(x),
    \end{equation}
    where
    \begin{equation} f(x) = \frac{1}{a_k}f' \left( \frac{z - 
        b_k}{a_k} \right). \end{equation}
\end{definition}

\begin{example}
    If $f$ and $f'$ are zero mean Gaussian distribution
    \begin{equation} 
    f(x) = 
    \frac{1}{\sqrt{k}} f'\left( \frac{z}{\sqrt{k}} \right),
    \end{equation}
    then we have $a_k = \sqrt{k}$.
\end{example}

It was shown by Paul L\'evy \cite{levy1925calcul} that there exists a family 
containing all 
stable distributions, the L\'{e}vy distributions family. For 
simplicity we will discuss here a symmetric L\'{e}vy distributions with zero 
location parameter,
defined by a following characteristic function
\begin{equation} \varphi_{\alpha, \gamma}(v) = \exp\left(-|\gamma 
v|^{\alpha}\right), 
\end{equation}
where $0 < \alpha \leq 2 $ and $\gamma > 0$. 
Characteristic function is a Fourier transform of the PDF 
function, hence:
\begin{equation}\label{eq::charf}
{f_S}_{\alpha, \gamma}(x) = \frac{1}{2\pi} \int_{- \infty}^{\infty} 
\varphi_{\alpha, \gamma}(v) e ^{-ixv}dv, 
\end{equation}
here $S$ is a notation for the stable distribution. The analytical calculation 
of the 
integral in Eq.~\eqref{eq::charf} can be performed in a 
simple way only for $\alpha = 1, 2$ yielding
\begin{enumerate}
    \item Lorentz / Cauchy distribution
    ${f_S}_{\alpha = 1, \gamma}(x) = \frac{1}{\pi} 
    \frac{\gamma}{x^{2}+\gamma^{2}},$
    \item Gaussian distribution with standard deviation $\sigma = \gamma 
    \sqrt{2}$: \begin{equation}{f_S}_{\alpha = 2, \gamma}(x) = \frac{1}{\sqrt{4 
            \pi \gamma^2}} 
    e^{-\frac{x^{2}}{4 \gamma^2}}.\end{equation}
\end{enumerate}

\begin{remark}
    Given a zero mean symmetric L\'{e}vy distribution we have the following 
    scaling \cite{mandelbrot1960pareto}
    \begin{equation} 	
    {f_S}_{\alpha, \gamma}(x)= {f'_S}_{\alpha, \gamma'}(z), 
    \end{equation}
    where $\gamma = \gamma' k^{\frac{1}{\alpha}}$, and other notation is as in 
    Definition~\ref{def::sumlevy}.
\end{remark}
\begin{remark}
    Importantly it can be shown that for $\alpha < 2$ the L\'{e}vy distribution 
    does not have defined variance as well as higher moments and cumulants. 
    Hence 
    if 
    $\alpha < 2$ the sum of L\'{e}vy 
    distributed random variables with infinite variance gives a L\'{e}vy 
    distributed random variable, that is not a Gaussian 
    distribution. In the following two remarks we would take $\gamma = 1$ for 
    simplicity.
\end{remark} 
\begin{remark} For $0 < \alpha <2$, asymptotic behaviour of L\'evy PDF 
    ${f_S}_{\alpha}(x)$ can be approximated 
    by \cite{vasconcelos2004guided}
    \begin{equation} {f_S}_{\alpha}(x) \approx \frac{C_{\alpha}}{|x| 
        ^{1+\alpha}} \ \textbf{for} \ |x| \rightarrow \infty, \end{equation}
    where
    \begin{equation} C_{\alpha} = 
    \frac{\alpha}{\pi}\Gamma(1+\alpha)\sin\Big(\frac{\alpha\pi}{2}\Big), 
    \end{equation}
    and $\Gamma$ is the Euler Gamma function \cite{abramowitz1}.
\end{remark}
\begin{remark}
    There is a generalisation of the Central Limit Theorem for increments 
    modelled by the distribution with infinite variance \cite{voit2003random}. 
    For $0 < 
    \alpha <2$ the sum 
    of identically distributed random variables modelled by the distribution 
    with 
    a power law asymptotic behaviour
    \begin{equation}\label{eq::asymptoticlevy} {f_S}_{\alpha}(x) \propto 
    \frac{1}{|x| 
        ^{1+\alpha}} \ \textbf{for} \ |x| \rightarrow \infty,
    \end{equation}
    converges to the L\'{e}vy distribution.	
\end{remark}

\begin{remark}\label{rem::stabledist}
    The general `asymmetric' L\'evy distribution \cite{voit2013statistical} 
    characteristic function is given by
    \begin{equation}
    \varphi_{\alpha, \beta, \gamma, \delta}(v) = \exp\left(\mathrm{i} 
    t\delta-|\gamma v|^{\alpha}\left( 1 - \mathrm{i} \beta \text{sign}(v) \Phi 
    \right)\right),
    \end{equation}
    where 
    \begin{equation}
    \Phi = \begin{cases}\tan\left(\frac{\pi \alpha}{2} \right) &\text{ for } 
    \alpha \neq 1 \\ 
    -\frac{2}{\pi} \log(|v|) &\text{ otherwise}\end{cases},
    \end{equation}
    and $\mathrm{i}$ is an imaginary unit. Here $-1 \leq \beta \leq 1$, and in 
    case of $\beta \neq 0$ the L\'evy distribution is not symmetric. Further $c 
    \in (0, \infty)$ and $\mu \in (-\infty, \infty)$ are scale and and location 
    parameters. This general case of L\'evy distribution will be used further 
    in 
    Section~\ref{sec::samplarch} and Section~\ref{sec::nestedarch} while
    sampling L\'evy distribution in sampling algorithms of some copulas. Remind 
    that in 
    most cases its PDF 
    ${f_S}_{(\alpha, \beta, c, \mu)}$ and CDF ${F_S}_{(\alpha, \beta, c, \mu)}$ 
    do not have an analytical form, however they can be sampled, as presented in
    Algorithm~\ref{alg::levygen}~\cite{nolan2003stable}. Histograms of 
    exemplary L\'evy stable distributions are presented in 
    Figure~\ref{fig::symlevy}.
\end{remark}

\begin{algorithm}[t]
    \caption{Sample general L\'evy distribution.
        \label{alg::levygen}}
    \begin{algorithmic}[1]	
        \State \textbf{Input}: $0 < \alpha \leq 2$ - parameter, $-1 
        \leq \beta \leq 1$ - parameter. 	
        \State \textbf{Output:} $x \in \R$ a sample of the general L\'evy 
        distribution.
        \Function{levygen}{$\alpha$, $\beta$}
        \State sample $\theta \sim \text{Uniform}(-\frac{\pi}{2}, 
        \frac{\pi}{2})$
        \State sample $w \sim {F_{ex_1}}$ \Comment{exponential with scale 1: 
            ${f_{ex_1}}(x) = e^{-x}$}
        \If{$\alpha = 1$}
        \State\Return $\frac{2}{\pi}\left(\left(\frac{\pi}{2} +\beta \theta 
        \right) \tan(\theta) -\beta\log \left( \frac{\frac{\pi}{2}w 
            \cos(\theta)}{\frac{\pi}{2}+\beta \theta}\right) \right)$
        \Else
        \State $\theta_0 = \frac{\arctan(\beta \tan(\frac{\pi 
                \alpha}{2}))}{\alpha}$
        \State\Return $\frac{\sin(\alpha(\theta_0+\theta))}{(\cos(\alpha 
            \theta_0)\cos(\theta))^{\frac{1}{\alpha}}}\left( 
            \frac{\cos\left(\alpha 
            \theta_0 + 
            (\alpha-1)\theta\right)}{w}\right)^{\frac{1-\alpha}{\alpha}}$
        \EndIf
        \EndFunction 
    \end{algorithmic}
\end{algorithm}

\begin{figure}
    \subfigure[Symmetric case $\beta = 
    0$\label{fig::symlevy}]{\includegraphics{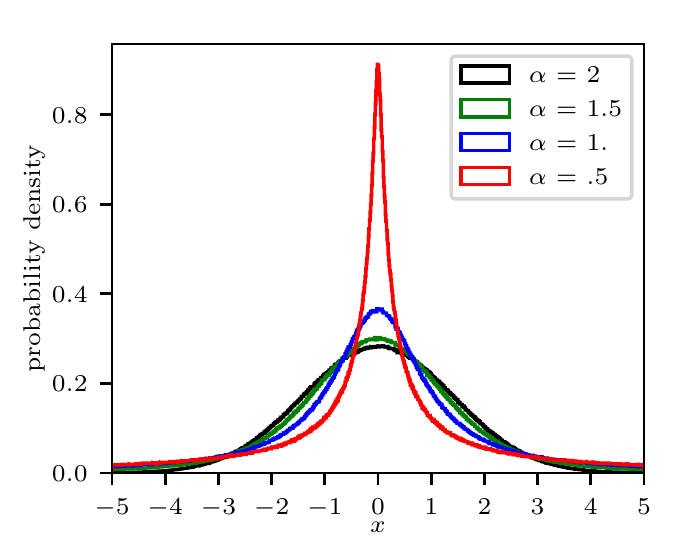}}
    \subfigure[Not symmetric case for $\alpha = 
    0.5$.\label{fig::asymlevy}]{\includegraphics{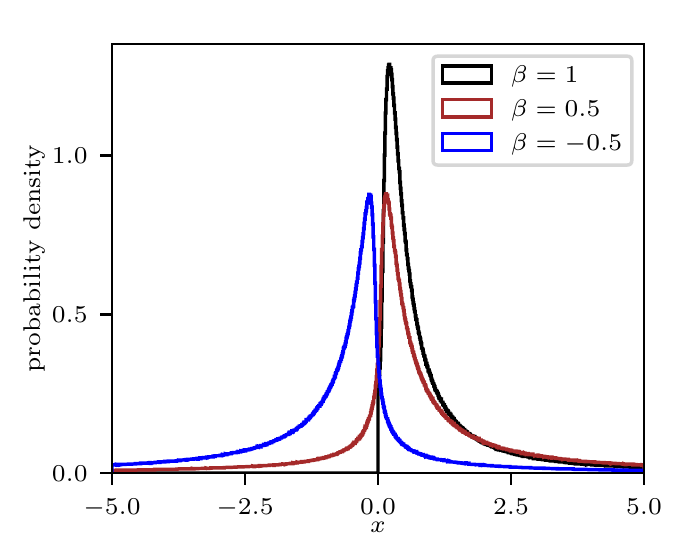}}
    \caption{Histograms of data generated by the L\'evy stable distribution 
        for $\gamma = 
        1$ and $\delta = 0$.}\label{fig::levy}
\end{figure}

Analogically to the Wiener process case, 
increments from distribution with asymptotic behaviour as in 
Eq.~\eqref{eq::asymptoticlevy}, 
would lead to another random process called the L\'{e}vy 
Flight~\cite{miranda2001truncated}. Such process models different types of data 
such as a 
network (internet) traffic~\cite{terdik2009levy}, 
biological data \cite{viswanathan1996levy, 
    viswanathan2010ecology, barthelemy2008levy} and is applicable in 
computer science optimisation and a global maximum search 
\cite{yang2010firefly, yang2010engineering}. Unfortunately the non-trivial 
L\'evy 
distribution do not have defined variance and higher order cumulants as 
well, hence it imposes the limitation for the use of higher order cumulants 
analysis of data that is discussed further in this book.

\subsection{Tsallis q-Gauss distribution}\label{sec::qgauss}

While discussing models of data generated from a random process it is worth to 
mention the Tsallis q-Gauss distribution that is a generalisation of a Gaussian 
distribution in a same 
sense as the Tsallis entropy is a generalization of the standard Shannon 
entropy \cite{tsallis2009nonadditive, gell2004nonextensive}. Sampling of the 
Tsallis q-Gauss 
distribution \cite{4385787} has an application in Generalised Simulated 
Annealing in computer science, where instead of standard Boltzman distribution 
\cite{landau2013course}
\begin{equation}\label{eq::boltzman}
p_i = \frac{\exp(-\beta H_i)}{\sum_j \exp(-\beta H_j)}
\end{equation}
in standard simulated 
annealing algorithm \cite{kirkpatrick1983optimization}, we sample 
\cite{tsallis1996generalized}
\begin{equation}\label{eq::qboltz}
p_i = \frac{\left(1-\beta(1-q)H_i\right)^{\left(\frac{1}{1-q}\right)}}{\sum_j 
    \left(1-\beta(1-q)H_j\right)^{\left(\frac{1}{1-q}\right)}}.
\end{equation}
Here $H_i$ is the generalised energy of the $i$\textsuperscript{th} 
configuration. 
If we have a 
continuous energy spectrum given by $H(x) = x^2$ and we replace sums in 
Eq.~\eqref{eq::boltzman} by an integral over $d x$, we reproduce the 
Gaussian distribution. Analogically 
from Eq.~\eqref{eq::qboltz} we reproduce the Tsallis q-Gauss 
distribution \cite{umarov2008q}
\begin{equation}\label{eq::pdfqg}
f_q (x) = \frac{\sqrt{\beta}}{C_{q}}\big(1-(1-q)\beta 
x^{2}\big)^{\left(\frac{1}{1-q}\right)}, \end{equation}
where the normalisation constant is
\begin{equation}
C_{q} = \begin{cases} 
\frac{2\Gamma\left(\frac{1}{1-q}\right)\sqrt{\pi}}{(3-q)\Gamma
    \left(\frac{3-q}{3(1-q)}\right)\sqrt{1-q}}
&\ \text{if} \ -\infty < q < 1
\\ \sqrt{\pi} &\ \text{if} \  q = 1
\\ 
\frac{\Gamma\left(\frac{3-q}{2(q-1)}\right)\sqrt{\pi}}{
    \Gamma\left(\frac{1}{q-1}\right)\sqrt{q-1}}
&\ \text{if} \  1 < q \leq 3
\end{cases}.
\end{equation}

Parameter $\beta$, in physics denoting the inverse of the
temperature~\cite{landau2013course}, controls the spread of samples and is 
decreased during the simulated annealing as we are moving toward a global 
solution. 
Recall that this $\beta$ here not comply with the $\beta$ parameter for a 
L\'evy distribution, where the scale parameter in literature is rather 
$\gamma$~\cite{nolan2003stable}, but they are not equivalent. 
\begin{remark}
    What is important here, Tsallis q-Gauss distribution parametrised by $1 
    \leq q 
    <3$ is an outcome of the generalisation of the Central Limit Theorem, 
    called 
    the $q$-Central Limit Theorem, where the independence condition is lifted 
    and 
    increments are allowed to be long-range correlated in some specific manner 
    \cite{umarov2008q}. Here we require the existence of the generalised 
    $q$-mean 
    and the generalised $(2q-1)$-variance \cite{umarov2008q} or the 
    distribution of increments. Recall that since we use some generalised for 
    of the mean and variance a standard scaling stability as in the L\'evy 
    distribution case may not hold. 
\end{remark}

The Tsallis q-Gauss distribution 
models better than a Gaussian distribution and many 
types of non-Gaussian distributed data~\eg~the 
financial data~\cite{devi2017financial, borland2002option, 	
    michael2003financial}. In a case of $q < 1$ the q-Gauss distributed random 
variable is limited to $x \in \left[-\frac{1}{\sqrt{(1-q) \beta}}: 
\frac{1}{\sqrt{(1-q) \beta}}\right]$ and 
can be applicable to analyse some biological 
systems~\cite{d2013bounded}.
\begin{figure}[ht!]
    \centering
    \includegraphics{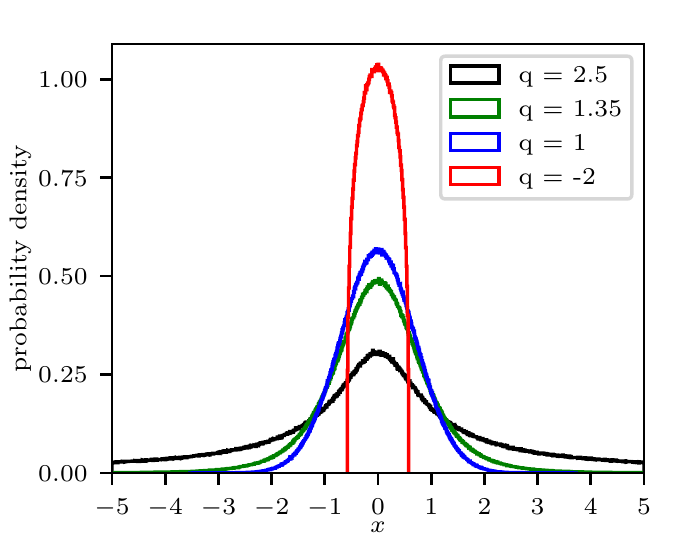}
    \caption{Zero mean q-Gaussian histograms for different $q$ values, $\beta = 
        1$. In a case of $q=1$ we have a Gaussian distribution with standard 
        deviation $\sigma = \sqrt{\frac{1}{2}}$.}
    \label{fig::tcalis}
\end{figure}

Since the CDF of the q-Gauss distribution, given in Eq.~\eqref{eq::pdfqg}, is 
rather complicated we use in general algorithm 
introduced in \cite{4385787} to sample univariate q-Gaussian distribution. We 
present this procedure in Algorithm~\ref{alg::qgaussgen}.

\begin{algorithm}[t]
    \caption{Generate data from q-Gauss distribution.
        \label{alg::qgaussgen}}
    \begin{algorithmic}[1]	
        \State \textbf{Input}: $-\infty < q < 3$ - parameter. 	
        \State \textbf{Output:} $x \in \R$ a sample of q-Gauss 
        distribution.
        \Function{q-gaussgen}{$q$}
        \State sample independently $u_1 \sim \text{Uniform}(0,1), u_2 \sim 
        \text{Uniform}(0,1)$
        \State $q' = \frac{q+1}{3-q}$
        \State $u_1 = \frac{u_1^{1-q'}-1}{1-q'}$ \Comment{$q'$-logarithm as 
            defined in \cite{4385787}}
        \State $\beta = 3-q$
        \State \Return $\frac{\sqrt{-2 u_1}\cos(2 \pi u_2)}{\sqrt{\beta}}$
        \EndFunction 
    \end{algorithmic}
\end{algorithm}

The correspondence between the q-Gaussian distribution and other distributions 
is as follow. For $q 
= 1$ case we reproduce the Gaussian distribution, for $q = \frac{\nu + 
    3}{\nu + 1}$ we reproduce the $t$-Student distribution with $\nu$ degrees 
    of 
freedom 
\cite{de1997student}, while for $q = 2$ we reproduce the Caushy distribution 
\cite{anteneodo2005non, de2017qgaussian}. The relation to the $t$-Student 
distribution gives the motivation to the discussion of the $t$-Student 
multivariate distribution and the $t$-Student copula further in this book.
Observe finally, that for $q < \frac{d+3}{d+1}$ the q-Gauss distribution has 
defined 
$d\textsuperscript{th}$ moment and the $d$\textsuperscript{th} cumulant 
as well. Given certain $q$ values it can be analysed by means of higher order 
cumulants.

\section{Ising model of data}\label{sec::ising}

The portfolio management problem discussed in further part of this book is 
computationally complex, especially if the number of assets in the 
portfolio is large. Hence, the quantum annealing on the D-Wave machine that 
uses the Ising model is promising due to potential 
computational time gain in comparison with classical computing in~\cite[Chapter 
$4$]{florescu2016handbook}, \cite[Section $6.1$]{srivastava2016commercial} and 
\cite{venturelli2018reverse, elsokkary2017financial}. 
    In general, the problem of solving the 
Ising model is NP-hard. However, there are attempts to reduce the computational 
complexity of such problem by means of the quantum 
computing~\cite{bian2010ising}. 

The classical approach to the portfolio optimisation takes the 
Markovitz model~\cite{markowitz1952portfolio} that uses the Gaussian 
distribution of financial data. As discussed in previous sections, 
real-life financial data are in most cases non-Gaussian 
distributed. It is why, to utilise quantum computing for financial data 
analysis, we focus rather on a stochastic model of investors behaviour 
as discussed in \cite{gligor2001econophysics}. Such Ising model is applicable 
to analyse investors driven complex dynamic of financial data. We hope, that 
one can use 
quantum annealing while searching for investors optimal 
behaviour given both their potential for aggregate or individual behaviours.

\subsection{The Ising model}

In this subsection, we discuss an example of a model of investors interactions 
that can be used to model financial data employing the Ising model.
Following \cite{gligor2001econophysics}, let us observe that the crash on the 
stock market occurs if a large group of investors simultaneously decide to sell 
their assets. In most cases, those agents do not know each other, do not 
arrange their actions, and do not follow any leader. For most of the trading 
time, they aggregately give a similar amount of buy and sell orders and hence 
the price of an asset is rather stable. However, sometimes they simultaneously 
sell stocks causing a crash. 

Let us model investor's interaction by means of the regular network of $N$ 
nodes (investors) with $\gamma$ neighbours each. Let us now assume, that each 
investor's
action is influenced both by its nearest surrounding and its individual 
preference. Investors are rather following each other. It is why the neighbour 
interaction will introduce order. Oppositely the individual preference will 
introduce disorder \cite{gligor2001econophysics}. 

Following~\cite{gligor2001econophysics}, to model investor's mutual 
interaction, we assign a number $s_{i} = \pm 1$ ($+1$ means sell, $-1$ means 
buy) to the $i$\textsuperscript{th} investor, and use the following `classical' 
Ising Hamiltonian
\begin{equation} H(\mathbf{s}) = - \sum_{i < j} \varepsilon_{ij}s_{i}s_{j}, 
\end{equation}
where the sum goes over the nearest neighbours set given the particular graph 
of investors interrelationship. Here $\mathbf{s} = [s_1, \ldots, 
s_N]$ is a configuration showing actions of each investor, and $ 
\varepsilon_{ij}$ 
is an interaction strength. Suppose for simplicity that the interaction is 
constant 
$\varepsilon_{ij} = \varepsilon > 0$ over the graph.
Given a regular graph of $N$ nodes (investors), and $\gamma$ neighbours for 
each node there are 
$\frac{\gamma N}{2} $ elements of the sum in 
Eq.~\eqref{eq::isingH},
\begin{equation}\label{eq::isingH}
H(\mathbf{s}) = - \varepsilon \sum_{ \{ij\} } s_{i}s_{j}. 
\end{equation}
Here $H(\mathbf{s})$ is a discrete function and we use the discrete form of the 
Boltzman distribution, as in Eq.~\eqref{eq::boltzman},
\begin{equation}\label{eq::ising} 
f_{\beta}(\mathbf{s}) =\frac{e^{-\beta H(\mathbf{s})}}{Z(\beta)}. 
\end{equation}
The normalisation 
constant  is called the partition function
\begin{equation}\label{eq::partition} Z(\beta) = \sum_{\mathbf{s}} 
e^{-\beta H 
    (\mathbf{s})},
\end{equation}
where the sum goes over the set of all possible configurations of $\mathbf{s}$ 
(all possible actions of investors). The constant $\beta$, which
in 
physics refers to the inverse of the temperature of the system, reflects 
the strength of the statistical noise~\ie~there is no noise if $\beta 
\rightarrow \infty$. In our model the noise correspond to the 
disorder imposed by personal preferences of investors. Hence, as such disorder 
falls, investors organise, and a crash occurs \cite{gligor2001econophysics}. 
Observe, however that presented is this subsection model is very simple, hence 
it can be exactly solved. 

\subsection{Simulated annealing on the D-Wave machine}

To make the model more realistic, we assume that the network of investors 
is not a regular lattice, but rather it reveals some hierarchical structure 
\cite{sornette1998hierarchical}.
Furthermore, we can introduce varying interaction strengths,
$\varepsilon_{ij}$, since different investors may be affected by others in 
different way. Finally, we can introduce $h_i$ representing interaction with 
some sort 
of the `external field', that we can model some 
external factors affecting agents or their particular group. This leads to more 
general Ising Hamiltonian
\begin{equation}\label{eq::ising_general} 
H(\mathbf{s}) = - \sum_{i < j} \varepsilon_{ij}s_{i}s_{j} - 
\sum_i h_i s_i.
\end{equation}
Unfortunately, such 
model becomes more complicated, and hard to simulate. To handle it, we can 
refer to the potentially fast quantum computation on the D-Wave machine 
\cite{cao2016solving, jones2013quantum}. The D-Wave machine is designed to 
solve a specific problem of 
finding a ground state of the following Hamiltonian
\begin{equation}\label{eq::ising_dwave} 
H = - \sum_{i < j} 
\varepsilon_{ij}\sigma_z^{(i)} \sigma_z^{(j)} - 
\sum_{i} h_i \sigma_z^{(i)}. 
\end{equation}
Here $\sigma_z^{(i)}$ is a z-Pauli matrix acting on the 
q-bit~\cite{nielsen2000quantum} located in 
the $i$\textsuperscript{th} node, in other words, on the quantum state of 
the $i$\textsuperscript{th} node. Formally,
\begin{equation}
\sigma_z^{(i)} = \underbrace{\mathbb{1} \otimes \ldots \otimes \mathbb{1}  
}_{i-1} \otimes \sigma_z\otimes \underbrace{\mathbb{1} \otimes \ldots \otimes 
    \mathbb{1}  
}_{N-i}.
\end{equation}
The quantum propagation on a D-Wave machine gives a minimal energy state of the 
Hamiltonian model presented in 
Eq.~\eqref{eq::ising_dwave}. This would correspond to the minimal energy 
solution of classical Hamiltonian presented 
in Eq.~\eqref{eq::ising_general}.

The Hamiltonian in Eq.~\eqref{eq::ising_dwave} has some limitation due to 
the hardware graph of the D-Wave computer chip which is full graph 
nor a regular lattice as in~\cite{gligor2001econophysics}. This processor chip 
is an intermediate case between two types of mentioned graphs, see 
Figure~\ref{fig::chim} \cite{boothby2016fast, king2017quantum, xu2018quantum}, 
and it is called a Chimera graph $\chi_{ij}$.  Such 
graph imposes
a hierarchical relation similar to this observed among investors 
\cite{sornette1998hierarchical}. In our case, investors may be 
organised in groups, that correspond to Chimera cells and 
$1$\textsuperscript{st} degree of the organisation. Those groups are organised 
in the network yielding the $2$\textsuperscript{nd} degree of the organisation.
Its hierarchical layout may be of 
advantage while modelling 
investors behaviour.

\begin{figure}
    \centering
    \includegraphics[scale = 0.8]{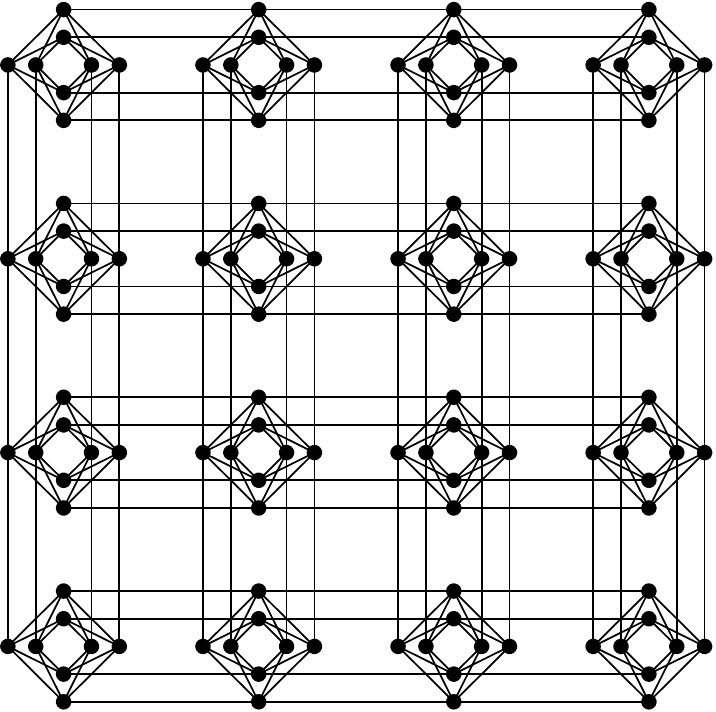}
    \caption{The Chimera graph used actually by the D-Wave 
        machine~\cite{glos2018impact}.}\label{fig::chim}
\end{figure}

Let us now discuss to some extend how the D-Wave machine 
works~\cite{cao2016solving}. It starts with the ground state of the $H_0$ 
Hamiltonian given~by
\begin{equation}
H_0 = - \sum_{i} h_i \sigma_x^{(i)}.
\end{equation} 
Next, it performs the adiabatic evolution of the time dependent 
Hamiltonian $H(t)$ given by
\begin{equation}\label{eq::dwave_evol}
H(t) = \left(1 - \frac{t}{\tau}\right)H_0 + \frac{t}{\tau}H,
\end{equation} 
where  $0 \leq t \leq \tau$ and $\tau$ is an annealing time parameter. In 
results we should obtain a ground state 
eigenvector $\ket{\psi^{(\tau)}}_0$ of the following eigen-equation
\begin{equation}
H(t) \ket{\psi^{(t)}}_n = E_n(t) \ket{\psi^{(t)}}_n,
\end{equation}
where $E_n \in \R$ is the $n$\textsuperscript{th} eigenvalue (energy) and 
$\ket{\psi}_n$ corresponding eigenvector. The following order is fulfilled $E_0 
\leq E_1 \leq \ldots 
\leq E_n$. The output configuration that is supposed to represent the ground 
state can be represented in the form of the vector
\begin{equation}
\mathbf{s} = [s_1, \ldots, s_N],
\end{equation}
where $s_i = \pm 1$. The D-Wave 
propagation is performed many times, and the result is 
probabilistic~\cite{cao2016solving}. As the result we obtain have a series 
$\mathbf{s}_1, 
\mathbf{s}_2, \ldots, \mathbf{s}_M$ all corresponding (in theory) to the ground 
state~\ie~the minimum of Eq.~\eqref{eq::ising_general}, since the D-Wave 
propagation~\cite{cao2016solving} is performed in (almost) $0$ Kelvin 
environment.

In our case, each $\mathbf{s}$ models the optimal state of 
investors for the given parameters of the network of, their mutual interaction 
and 
other external parameters.
This minimal energy Boltzmann distribution with (almost) no thermal noise. In 
the case of the financial data modelling,  
however, we can still be interested in a zero temperature solution to maximally 
`ordered' behaviour of investors to answer a question if a crisis is possible 
given initial settings of the model. To introduce a thermal noise, we can set 
the annealing time parameter 
$\tau$ very short~\cite{cao2016solving} resulting in the not full annealing 
yielding sometimes the ground state and sometimes one of the excited states 
with higher energy. In another approach, given 
an input Ising model see Eq.~\eqref{eq::ising_general} we can introduce a 
noise via a random factor in parameter $h_i$. Hence we can observe how the 
choice of statistics of the noise incorporated in $h_i$ would affect statistics 
of investors behaviour. Finally, statistics of investor behaviour can be 
retranslated on statistics of financial data. One can analyse how the initial 
set of parameters affects the financial data, and answer when they 
are Gaussian distributed and when not.
\chapter{Multivariate Gaussian models}\label{ch::mvn}

Gaussian models are widely used to analyse various types of real life 
data~\cite{duda2012pattern}. 
Unfortunately often Gaussian models are assumed a 
priorly without testing normal distribution on data. To test multivariate 
Gaussian distribution one can use the Marida's test~\cite{mardia1970measures}, 
or the BHEP test~\cite{baringhaus1988consistent}. In this book we will focus on 
higher order multivariate cumulants that are zero for multivariate Gaussian 
distributed data and hence can be used to test their distribution.
The advantage of the last comes form the fast programming 
implementation~\cite{domino2018efficient} especially applicable for large data 
sets.

Henceforth in this book we understand
$n$-variate data as a collection of realisations of the 
$n$-dimensional 
random vector $\XXX^{(n)} = (\XXX_1, \ldots, \XXX_n)$. Such random vector 
consists of $n$ interdependent univariate random variables $\XXX_i$. Given 
those, 
$t$ realisations of  
$n$-dimensional random vector can presented in the form of
matrix:
\begin{equation}\label{eq::variable}
\R^{t \times n} \ni \X = \left[ \begin{array}{ccc}
x_{1,1} & \dots & x_{1,n}  \\ 
\vdots & x_{j,i} & \vdots \\ 
x_{t,1} & \dots & x_{t,n}  \\ 
\end{array}   \right].
\end{equation}
The $i$\textsuperscript{th} column $\XXX_i$ of such matrix is the column vector 
of all 
realisations of the $i$\textsuperscript{th} marginal
\begin{equation}
\R^t \ni X_i = \left[x_{1,i}, \ldots, x_{j,i}, \ldots, 
x_{t,i}\right]^{\intercal},
\end{equation} 
and the single realisation of the random vector $\XXX^{(n)}$ is the row vector
\begin{equation}
\R^n \ni \mathbf{x}_j = [x_{j,1}, \ldots, x_{j,n}].
\end{equation}

\section{Multivariate Gaussian Distribution}\label{sec::mvn}

The Multivariate Gaussian Distribution is widely used to analyse multivariate 
data. The main argument for such probabilistic model comes from the results 
Multivariate 
Central Limit Theorem (MCLT) \cite{van1998asymptotic}.
\begin{theorem}\label{th::mclt}
Let ${\ZZ^{(n)}}_{1}, \ldots, {\ZZ^{(n)}}_{k}$ be $k$ independent
random vectors 
with identical distributions having a mean vector $\mu \in \R^n$ 
and 
a positive definite symmetric covariance matrix $\SSSS \in \R^{n 
\times n}$. The random vector
\begin{equation} \XXX^{(n)} = \frac{1}{\sqrt{k}}\sum_{i=1}^k 
\left({\ZZ^{(n)}}_i - \mu \right)
\end{equation}
has distribution that coverages as $k \rightarrow \infty$ to the 
multivariate Gaussian 
distribution with zero
mean vector and a covariance matrix $\SSSS$.
\end{theorem}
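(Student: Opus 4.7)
The plan is to reduce this multivariate statement to the univariate Central Limit Theorem (Theorem~\ref{t::clt}) already stated in the excerpt, via the Cram\'er--Wold device. The key observation is that convergence in distribution of a sequence of random vectors in $\R^n$ is equivalent to convergence in distribution of every one-dimensional projection $\uu^\intercal \XXX^{(n)}$ for $\uu \in \R^n$. So rather than handling the joint distribution directly, I would reduce the $n$-variate problem to a family of univariate problems indexed by $\uu$.

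First, I would fix an arbitrary deterministic $\uu \in \R^n$ and look at the scalar random variable
\begin{equation}
\uu^\intercal \XXX^{(n)} \;=\; \frac{1}{\sqrt{k}} \sum_{i=1}^{k} \uu^\intercal \left( {\ZZ^{(n)}}_i - \mu \right).
\end{equation}
The summands $\uu^\intercal ({\ZZ^{(n)}}_i - \mu)$ are independent and identically distributed univariate random variables, with mean $0$ and variance $\uu^\intercal \SSSS \uu$, which is finite because $\SSSS$ is positive definite. Hence Theorem~\ref{t::clt} applies directly and gives that $\uu^\intercal \XXX^{(n)}$ converges in distribution to $\mathcal{N}(0, \uu^\intercal \SSSS \uu)$ as $k \to \infty$.

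Next, I would identify that limit as the projection of the target multivariate Gaussian law. If $\mathfrak{Y}^{(n)} \sim \mathcal{N}(0, \SSSS)$, then $\uu^\intercal \mathfrak{Y}^{(n)}$ is exactly $\mathcal{N}(0, \uu^\intercal \SSSS \uu)$. Thus every one-dimensional projection of $\XXX^{(n)}$ converges in distribution to the corresponding projection of $\mathfrak{Y}^{(n)}$. Invoking the Cram\'er--Wold device then upgrades these coordinatewise convergences to joint convergence in distribution $\XXX^{(n)} \Rightarrow \mathcal{N}(0, \SSSS)$.

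The only delicate point is the Cram\'er--Wold reduction itself, since this is the step that is not already available in the excerpt. I would justify it through characteristic functions: the characteristic function of $\XXX^{(n)}$ evaluated at $\vv \in \R^n$ is $\E\!\left[\exp(\mathrm{i}\, \vv^\intercal \XXX^{(n)})\right]$, which by the univariate result above converges to $\exp\!\left(-\tfrac{1}{2} \vv^\intercal \SSSS \vv\right)$ for every $\vv$. Since this limit is the characteristic function of $\mathcal{N}(0, \SSSS)$ and is continuous at the origin, L\'evy's continuity theorem yields convergence in distribution, finishing the argument. The positive-definiteness of $\SSSS$ ensures non-degeneracy of the limit and validity of the variance computation $\uu^\intercal \SSSS \uu > 0$ for nonzero $\uu$.
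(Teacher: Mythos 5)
Your argument is correct, but note that the paper itself offers no proof of Theorem~\ref{th::mclt}: it is stated as a known result with a citation to the literature (\cite{van1998asymptotic}), so there is no in-text argument to compare against. What you supply is the standard Cram\'er--Wold reduction: project onto an arbitrary direction $\uu$, observe that the summands $\uu^{\intercal}({\ZZ^{(n)}}_i - \mu)$ are i.i.d.\ scalars with mean zero and variance $\uu^{\intercal}\SSSS\uu$, invoke the univariate Theorem~\ref{t::clt}, and then pass from convergence of all one-dimensional projections to joint convergence via characteristic functions and L\'evy's continuity theorem. This is sound, and your decision to justify the Cram\'er--Wold step explicitly (rather than cite it as a black box) is the right instinct, since that equivalence is the only ingredient not already present in the text. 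Two cosmetic points: finiteness of $\uu^{\intercal}\SSSS\uu$ follows merely from the existence of the covariance matrix, while positive definiteness is what guarantees $\uu^{\intercal}\SSSS\uu>0$ for $\uu\neq 0$ (non-degeneracy of the limit); and the case $\uu=0$ should be dispatched as trivial since the projection is identically zero and Theorem~\ref{t::clt} as stated presumes a positive variance. Neither affects the validity of the proof.
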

The multivariate Gaussian Probability Density Function (PDF) reads
\begin{equation}\label{eq::gaussmulti}
\rhog(\x) = 
\frac{1}{(2\pi)^{\frac{n}{2}}\det(\SSSS)^{\frac{1}{2}}} 
\exp\bigg(-\frac{1}{2}(\x - \mu)^{T}\SSSS^{-1}( \x - \mu) \bigg),
\end{equation}
and the corresponding multivariate Gaussian Cumulantive Distribution Function 
(CDF) is given by 
\begin{equation}\label{eq::gaussmulticum}
\Fg(\x) = P(\XXX_1 \leq x_1 \wedge \ldots \wedge \XXX_n \leq x_n) =  
\int_{-\infty}^{x_1} \ldots 
\int_{-\infty}^{x_n} 
\rhog(y_1, \ldots, 
y_n) d y_1 \ldots d y_n.
\end{equation}
Gaussian distributed random vector is: $\XXX^{(n)} 
\sim \mathcal{N}(\mu, \SSSS)$. The covariance matrix parameter $\SSSS \in \R^{n 
\times n}$ is symmetric and positive definite. Its elements  $s_{i_1, 
i_2}$ are either diagonal $s_{i,i} = 
\sigma_i^2$, being a variance of the $i$\textsuperscript{th} marginal, or 
off-diagonal $s_{i_1 \neq i_2}$, being a covariance between $i_1$ and 
$i_2$ marginals. These elements yield the Pearson cross-correlation 
coefficient 
\cite{pal2012multiple}
\begin{equation}\label{eq::corrR}
\text{cor}(\XXX_{i_1}, \XXX_{i_2}) = r_{i_1, i_2} = 
\frac{s_{i_1,i_2}}{\sqrt{s_{i_1,i_1} 
s_{i_2,i_2}}}.
\end{equation} 
The Pearson correlation
matrix (or simply the correlation matrix) $\mathbf{R} \in \R^{n \times n}$ is 
positive definite, and symmetric matrix with ones on a diagonal $r_{i,i} 
= 
1$. Since $\mathbf{R}$ is positive definite we get $-1 < r_{i_1,i_2} 
< 1$.

If we have data distributed according to multivariate Gaussian model, there are 
some statistical features important in analysing such data. Let us mention a 
few that may not be fulfilled if data are non-Gaussian distributed as discussed 
in the further part of this book. 
\begin{remark}\label{rem::buvcorsg}
	Given $\XXX^{(n)} \sim \NN(\mu, 
	\SSSS)$ the cross-correlation 
	between marginals is included in $\mathbf{R}$ matrix with elements 
	$r_{i_1,i_2}$. Hence the pairwise 
	analysis of the cross-correlation of Gaussian distributed data would 
	extract all information about marginals interdependency. The observation 
	presented in Remark~\ref{rem::buvcorsg} is not true in 
	general. In Chapter~\ref{ch::hostats}, we discuss higher order cumulants of 
	multivariate data that carries information about the simultaneous 
	interdependence of more than $2$ marginals, and are zero for Gaussian 
	distributed multivariate data.
\end{remark}

\begin{remark} Given $\XXX^{(n)} \sim \NN(\mu, 
	\SSSS)$ we can use the 
Singular Value Decomposition (SVD) of the covariance matrix to obtain
project data on independent directions. The new random vector
\begin{equation}\label{eq::pca}
	\mathfrak{Y}^{(n)} = \XXX^{(n)} \mathbf{A}
\end{equation}
consists of independent marginals, that can be ordered with respect to 
information significance~\cite{duda2012pattern}.

The symmetric and 
positive definite covariance matrix $\SSSS$ can be 
decomposed as
\begin{equation}\label{eq::cvd}
\SSSS = \mathbf{A} \SSSS_{(\text{d})} \mathbf{A}^{\intercal},
\end{equation}
where $\SSd$ is a diagonal matrix 
real and positive eigenvalues of $\SSSS$ notes by ${s_{(d)}}_{i,i}$, and 
$\mathbf{A} \in \R^{n \times n}$ is an unitary matrix of eigenvectors. We 
assume eigenvectors are normalised to $1$. The 
ordering ${s_{(d)}}_{i_1,i_1} \geq 
{s_{(d)}}_{i_2,i_2}$ if $i_1 > i_2$ results in ordering of $\mathfrak{Y}_i$ 
with respect to their variability and hence information they are carrying.
\end{remark}
Such approach is the core of the Principal Component Analysis (PCA) 
\cite{jolliffe2002principal} used to extract information from data. If 
$\XXX^{(n)}$ is not multivariate Gaussian distributed, here information may be 
tied to higher order cross-correlation between marginals. Hence, 
Eq.~\eqref{eq::pca} gives marginals $\mathfrak{Y}_1, 
\ldots, \mathfrak{Y}_n$ that are not independent and their informative 
hierarchy is not straightforward. 

To show non-Gaussian distribution of data that are modelled as a sum of many 
increments consider financial data, where conditions of the MCLT may not hold. 
Take as example multivariate ARCH and GARCH  
models~\cite{stelzer2007multivariate, stelzer2010multivariate, 
lindner2009continuous} that break increments independence and identical 
distributions conditions. To discuss the practical application of the SVD in 
financial data analysis observe the following example.
\begin{example}
	In classical financial engineering one uses the SVD of the covariance 
	matrix in the classical Value at Risk 
	(VaR) procedure~\cite{best2000implementing}, determining
	eigenvectors that correspond to 
	low eigenvalues. These eigenvectors are supposed to give linear 
	combinations 
	of marginals with low variance corresponding to safe portfolio of risky 
	assets, if one tie the variance with the variability and risk. Negative 
	weights correspond to so called `short sale'. In practice, financial data 
	are often non-Gaussian distributed and the 
	covariance analysis may fails 
	to~anticipate cross-correlated extreme events, appearing for many assets 
	simultaneously \cite{cherubini2004copula, yao2017study}. This typically 
	happens during the crisis, and as non-predicted by the Gaussian model is 
	potentially dangerous for equity holders.
\end{example}

In next section we are going to discuss a family of non-Gaussian multivariate 
distributions that can be easily transformed to the Gaussian one.

\section{Gaussian copula}\label{sec::Gcop}

By the Sklar's theorem~\cite{sklar1959fonctions} each multivariate distribution 
can be split onto marginal distributions and the copula, being a `core' of the 
distribution, responsible for the interdependence of marginals. Formally the 
Sklar's theorem is following.
\begin{theorem}
	Every multivariate CDF $\mathbf{F}(x_1, \ldots, x_n)$ of a random vector 
	$\XXX^{(n)} = (\XXX_1, \ldots, \XXX_n)$ can be expressed in terms of its 
	marginal univariate CDFs $F_i(x_i)$ and the copula $\C: [0,1]^n  
	\rightarrow [0,1]$. The relation reads
	\begin{equation}
	\mathbf{F}(x_1, \ldots, x_n) = \C(F_1(x_1), \ldots, F_n(x_n)).
	\end{equation} 
	The Sklar's theorem also implies that if $F_1(x_1), \ldots, F_n(x_n)$ are 
	continuous, the copula function $\C$
	is uniquely defined by $\mathbf{F}$.
\end{theorem}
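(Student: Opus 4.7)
The plan is to prove the theorem by a direct construction: define a candidate copula using generalized inverses of the marginal CDFs, then verify both the representation and, under the continuity hypothesis, uniqueness.

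First, for each marginal CDF $F_i$, I would introduce the quantile function (generalized inverse) $F_i^{\leftarrow}(u) := \inf\{x \in \R : F_i(x) \geq u\}$ for $u \in [0,1]$, recalling two facts that I would use repeatedly: (i) $F_i(F_i^{\leftarrow}(u)) \geq u$ always, with equality whenever $F_i$ is continuous, and (ii) by the probability integral transform, if $\XXX_i \sim F_i$ with $F_i$ continuous, then $F_i(\XXX_i)$ is $\mathrm{Uniform}([0,1])$. With these in hand, I would define the candidate
\[
\C(u_1, \ldots, u_n) := \mathbf{F}\bigl(F_1^{\leftarrow}(u_1), \ldots, F_n^{\leftarrow}(u_n)\bigr).
\]

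Next I would verify that $\C$ is a copula and that it satisfies $\mathbf{F}(x_1, \ldots, x_n) = \C(F_1(x_1), \ldots, F_n(x_n))$. The defining properties of a copula, namely groundedness, uniform univariate marginals, and $n$-increasingness, all transfer from the corresponding properties of the multivariate CDF $\mathbf{F}$ together with the identities in (i) above. For the representation, substituting $u_i = F_i(x_i)$ yields $\C(F_1(x_1), \ldots, F_n(x_n)) = \mathbf{F}(F_1^{\leftarrow}(F_1(x_1)), \ldots, F_n^{\leftarrow}(F_n(x_n)))$; when the marginals are continuous this equals $\mathbf{F}(x_1, \ldots, x_n)$, because the events $\{\XXX_i \leq x_i\}$ and $\{\XXX_i \leq F_i^{\leftarrow}(F_i(x_i))\}$ differ by a null set in the joint distribution for each $i$.

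For uniqueness under the continuity hypothesis, I would argue as follows: if each $F_i$ is continuous, then each $F_i$ surjects onto $[0,1]$, so for any $(u_1, \ldots, u_n) \in [0,1]^n$ I can pick $x_i$ with $F_i(x_i) = u_i$, and the defining relation then forces
\[
\C(u_1, \ldots, u_n) = \mathbf{F}(x_1, \ldots, x_n),
\]
where the right-hand side is independent of the particular choice of $x_i$ (two choices yielding the same $F_i(x_i)$ give the same $\mathbf{F}$ value, again because the discrepancy lies in a null event). Hence $\C$ is pinned down on all of $[0,1]^n$. The main obstacle I expect is the non-continuous case: when some $F_i$ has jumps or plateaus, $\mathrm{Ran}(F_i) \subsetneq [0,1]$, so the construction only determines $\C$ on a subset of $[0,1]^n$, and extending it to a bona fide copula on the full hypercube requires a careful multilinear interpolation argument. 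This delicate step of the classical proof is not needed for the uniqueness claim stated here, but it is where most of the subtlety of Sklar's theorem lies.
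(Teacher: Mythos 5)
The paper itself gives no proof of this statement---it is quoted from Sklar with a citation, and the surrounding text immediately restricts attention to continuous marginals---so there is no internal argument to compare yours against. Your construction $\C(u_1,\ldots,u_n) = \mathbf{F}\bigl(F_1^{\leftarrow}(u_1),\ldots,F_n^{\leftarrow}(u_n)\bigr)$ together with the probability-integral-transform facts is the standard textbook route, and in the continuous-marginal case it is correct: the representation identity, the transfer of groundedness and $n$-increasingness from $\mathbf{F}$ (monotone inverses map rectangles to rectangles, whose $\mathbf{F}$-volume is a probability), and the uniqueness argument all go through. One small slip: a continuous $F_i$ need not attain $0$ or $1$ at any finite argument, so it does not literally surject onto $[0,1]$; the relation pins $\C$ down on $(0,1)^n$, and you then need either the Lipschitz continuity of copulas or the boundary axioms to extend uniqueness to the closed cube. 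This is cosmetic.

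The genuine gap is the existence half for arbitrary (possibly discontinuous) marginals, which the theorem as stated does claim. Your candidate $\C$ is not a copula in that generality: if $F_k$ has a jump, then $F_k(F_k^{\leftarrow}(u)) > u$ is possible, so the margin condition $\C(1,\ldots,1,u_k,1,\ldots,1) = u_k$ fails; and the defining relation only determines $\C$ on $\mathrm{Ran}(F_1)\times\cdots\times\mathrm{Ran}(F_n)$. You correctly name the missing step---extending the subcopula from the closure of that product set to all of $[0,1]^n$ by multilinear interpolation while preserving the $n$-increasing property---but you do not carry it out, and that extension is where the actual content of the general Sklar theorem lies. Since the book confines itself to continuous marginals from this point on, the omission does not affect anything used later, but as a proof of the theorem as literally stated it is incomplete.
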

Henceforth we will discuss the case of continuous univariate PDFs. 
To construct Gaussian 
multivariate distribution we~need the Gaussian copula and the Gaussian 
marginals. 
Consider multivariate random vector $\XXX^{(n)} \sim 
\mathcal{N}(\mu, \SSSS)$. Its marginals are 
univariate Gaussian distributions $\XXX_{i} \sim \mathcal{N}(\mu_i, 
\sigma^2_i)$, where $\sigma^2_i 
= s_{i,i}$ which is the $i$\textsuperscript{th} diagonal element of $\SSSS$. 
Let 
$F_{\mathcal{N}(\mu_i, \sigma_i^2)}(x_i)$ be marginal 
univariate Cumulantve Distribution Function (CDF) of those marginals.
Now define a new $n$-variate random vector $\mathfrak{U}^{(n)} = (\UU_1, 
\ldots, \UU_n)$ fulfilling
\begin{equation}\mathfrak{U}_i = F_{\mathcal{N}(\mu_i, 
\sigma_i^2)}(\XXX_i) \text{ or } \XXX_i = F^{-1}_{\mathcal{N}(\mu_i, 
\sigma_i^2)}(\mathfrak{U}_i).
\end{equation} 
Each marginal $\mathfrak{U}_i$ is uniformly distributed
on a line segment $[0,1]$, \ie~$\mathfrak{U}_i \sim 
\text{Uniform}\left([0,1]\right)$, see Table~\ref{tab::symbols}.
A realisations matrix and the single 
realisation vector of $\mathfrak{U}^{(n)}$ random vector are
\begin{equation}
\U \in [0,1]^{t \times n} \text{ and }\uu \in [0,1]^{n},
\end{equation}
where $t$ is a number of realisations. We can introduce the Gaussian 
copula.
\begin{definition}\label{d::gcop}
	A Gaussian copula~\cite{nelsen1999introduction} $\C_{\mathbf{R}}: [0,1]^n 
	\rightarrow 
	[0,1]$ is the multivariate CDF of $\UU^{(n)}$ with uniform marginals on 
	$[0,1]$ segment, such that
\begin{equation}\label{eq::gauscop1}
\begin{split}
	[0,1] \ni \C_{\mathbf{R}}(\uu) =& \Fg\left(F_{\mathcal{N}(\mu_i, 
		\sigma_1^2)}^{-1}(u_1), \ldots, F_{\mathcal{N}(\mu_i, 
	\sigma_n^2)}^{-1}(u_n)\right) \\ =& 
	\mathbf{F}_{\mathcal{N}(0,\mathbf{R})}\left(F_{\mathcal{N}(0,1)}^{-1}(u_1), 
	\ldots, F_{\mathcal{N}(0,1)}^{-1}(u_n)\right).
\end{split}
\end{equation}
\end{definition}
The Gaussian copula models interdependency between marginals of $\UU^{(n)}$ by 
means of 
multivariate Gaussian distribution. Following 
Eq.~\eqref{eq::corrR} $\mathbf{R} \in 
\R^{n \times n}$ is the correlation matrix, and $F_{\NN(0,1)}(x)$ 
is the standard Gaussian univariate CDF (with zero mean and $1$ variance). 
Suppose, we have univariate continuous CDFs $F_1, \ldots, F_n$. The  
multivariate distribution
	\begin{equation}\label{eq::gcoparbmarg}
	\mathbf{F}(\mathbf{y}) = \mathbf{C}_{\mathbf{R}}(F_1(y_1), \ldots,  
	F_n(y_n))
	\end{equation}
has Gaussian copula and $F_1, \ldots, F_n$ marginals. To show the 
importance of the Gaussian copula consider the following theorem.

\begin{theorem}\label{the::gcoptomv}
	Let us consider a random vector $\mathfrak{Y}^{(n)}$ distributed according 
	to $\mathbf{F}$, that is certain multivariate CDF, with 
	continuous univariate marginal CDFs $F_1, \ldots, F_n$, such that $F_i 
	:(-\infty, 
	\infty) \rightarrow [0,1]$. 
	There exist a transformation of marginals changing $\mathfrak{Y}^{(n)} 
	$ to a multivariate Gaussian distributed $\XXX^{(n)}$  iff 
	$\mathbf{F}$ (if and only if) has the Gaussian copula.
\end{theorem}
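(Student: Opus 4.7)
The plan is to prove the biconditional by exploiting the invariance of copulas under strictly increasing marginal transformations, which is the hidden workhorse of both directions. Since the $F_i$ are continuous, the probability integral transform applies cleanly, so the canonical candidate for the marginal transformation is
\begin{equation}
g_i = F_{\NN(0,1)}^{-1} \circ F_i, \qquad \XXX_i = g_i(\mathfrak{Y}_i),
\end{equation}
each of which is strictly increasing wherever $F_i$ is, hence invertible on the support of $\mathfrak{Y}_i$. I interpret ``transformation of marginals'' as componentwise, strictly monotonic (invertible) maps $\XXX_i = g_i(\mathfrak{Y}_i)$, since otherwise information can be destroyed and the statement becomes ambiguous.

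For the ``$\Leftarrow$'' direction I would suppose $\mathbf{F}$ has Gaussian copula, so that $\mathbf{F}(\mathbf{y}) = \C_{\mathbf{R}}(F_1(y_1), \ldots, F_n(y_n))$ by Eq.~\eqref{eq::gcoparbmarg}. Applying $g_i$ as above and computing the joint CDF of $\XXX^{(n)}$ gives
\begin{equation}
\begin{split}
P(\XXX_1 \leq x_1, \ldots, \XXX_n \leq x_n) &= \mathbf{F}\left(F_1^{-1}(F_{\NN(0,1)}(x_1)), \ldots, F_n^{-1}(F_{\NN(0,1)}(x_n))\right) \\
&= \C_{\mathbf{R}}\left(F_{\NN(0,1)}(x_1), \ldots, F_{\NN(0,1)}(x_n)\right) \\
&= \mathbf{F}_{\NN(0,\mathbf{R})}(x_1, \ldots, x_n),
\end{split}
\end{equation}
where the final equality uses Definition~\ref{d::gcop}. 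Thus $\XXX^{(n)} \sim \NN(0,\mathbf{R})$, which is multivariate Gaussian.

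For the ``$\Rightarrow$'' direction I would invoke the well-known fact (direct from Sklar's theorem applied to the CDF of $\XXX^{(n)}$) that strictly monotonic marginal transformations preserve the underlying copula. Concretely, if $\XXX_i = g_i(\mathfrak{Y}_i)$ with each $g_i$ strictly increasing, then for the CDF of $\XXX_i$ one has $F_{\XXX_i}(x) = F_i(g_i^{-1}(x))$, so substituting back into the joint CDF of $\XXX^{(n)}$ and simplifying shows that its copula equals the copula of $\mathfrak{Y}^{(n)}$. Since $\XXX^{(n)}$ is multivariate Gaussian, its copula is, up to the correlation matrix associated with the standardisation of its covariance $\SSSS$, the Gaussian copula $\C_{\mathbf{R}}$ of Definition~\ref{d::gcop}; hence so is the copula of $\mathbf{F}$, by uniqueness of the copula for continuous marginals (the second part of Sklar's theorem).

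The main obstacle, I expect, is not any calculation but justifying the copula-invariance step cleanly and stating the hypothesis precisely: one must be explicit that the transformation is componentwise and strictly monotonic, and one must use continuity of the $F_i$ both to invoke the probability integral transform in $(\Leftarrow)$ and to apply the uniqueness clause of Sklar's theorem in $(\Rightarrow)$. A secondary technicality is the mean and covariance: the resulting Gaussian is standardised to $\NN(0,\mathbf{R})$, but any affine rescaling gives the general $\NN(\mu,\SSSS)$, so there is no loss of generality in targeting the standard form.
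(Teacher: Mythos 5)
Your proposal is correct and follows essentially the same route as the paper: both use the composite marginal map $F_{\NN}^{-1}\circ F_i$ together with the invariance of the copula under componentwise monotone transformations. Your write-up is somewhat more explicit than the paper's (which compresses the two directions into the single remark that the transformation ``affects only marginals''), but the underlying argument is identical.
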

\begin{proof}
	We transform the random vector $\mathfrak{Y}^{(n)} \rightarrow \UU^{(u)}$, 
	by the following 
	transformation of all marginals
	\begin{equation}
 	\UU_i = F_i(\mathfrak{Y}_i).
	\end{equation}
	Since $F_i$ is continuous, $\mathfrak{U}_i$ is continuous and uniformity 
	distributed on 
	$[0,1]$. Now we can use the inverse univariate Gaussian CDFs (quantile 
	function) $F_{\NN{(\mu_i, 
	\sigma_i^2)}}^{-1}$ and transform each 
	marginal further as
	\begin{equation}
	\XXX_i =  F_{\mathcal{N}(\mu_i, \sigma_i^2)}^{-1}(\UU_i).
	\end{equation} 
	Parameters $\mu_i$ 
	and $\sigma_i^2$ are arbitrary parameter of an univariate Gaussian 
	distribution. The transformation $\mathfrak{Y}^{(n)} 
	\rightarrow \UU^{(u)} \rightarrow \XXX^{(n)}$ affects only marginals. Hence 
	it requires Gaussian copula of $\UU^{(u)}$ to produce multivariate Gaussian 
	distribution of $\XXX^{(n)}$. Finally Gaussian copula of $\UU^{(u)}$ yields 
	a Gaussian copula of $\mathfrak{Y}^{(n)}$.
\end{proof}

\begin{remark}\label{rem::savepca}	
Observe that such transformation 
$\mathfrak{Y}^{(n)} 
\rightarrow \XXX^{(n)}$ preserves rank 
cross-correlation between marginals. 
As previously, $F_i$ is univariate continuous CDFs, hence 
it is increasing. The probability of $y$ being a sample of $F_i$ equals 
$f_i(y) = \frac{d F_i(y)}{d y}$, hence we can exclude such ranges of $y$ where 
$F_i(y)$ is constant. Based on these two observations the relation $u = F_i(y)$ 
is order 
preserving. Observe as well,
that $F_{\mathcal{N}(\mu_i, \sigma_i^2)}^{-1}$ is strictly increasing and hence 
order preserving. Finally the relation 
\begin{equation}
x = F_{\mathcal{N}(\mu_i, 
\sigma_i^2)}^{-1}\left(F_i (y)\right)
\end{equation} is order preserving as well.
Referring to Theorem~\ref{the::gcoptomv}, if data have Gaussian copula, they 
may 
be transformed to multivariate 
Gaussian distribution in a rank cross-correlation preserving method. Then 
methods of 
Gaussian data analysis such as PCA can be applied. Nevertheless, the Gaussian 
copula 
assumption is necessary here. 
\end{remark}

Referring to Remark~\ref{rem::buvcorsg} a Gaussian multivariate distribution 
introduces the cross-correlation between pairs of marginals by means of the 
correlation matrix $\mathbf{R}$. The same is the case for a Gaussian copula. 
Hence it is convenient to use bivariate Gaussian sub-copula while discuss 
interdependencies between marginals. The sub-copula notation is not often used 
in the copula literature, however we used it here to distinguish between 
bivariate and $d$-variate (as $d > 2$) interdependency measures. The later are 
included in $d$\textsuperscript{th} order multivariate cumulants discussed 
further in this book.
\begin{definition}\label{def::gsubcop}
	Suppose we have the Gaussian copula $\C_{\mathbf{R}}(\uu)$, 
	where $\uu = [u_1, \ldots, u_n]$ is single realisation of the random vector 
	$\UU_{(n)}$ with following marginals $\UU_i \sim 
	\text{Uniform}([0,1])$, the probability of $\UU_i \leq 1$ equals one. The 
	bivariate Gaussian sub-copula for marginals $i_1$ and $i_2$ is defined as 
	\begin{equation}
	\begin{split}
	\C_{\mathbf{R}'}(u_{i_1}, u_{i_2}) &= P(\UU_{i_1} \leq u_{i_1} \wedge 
	\UU_{i_2} \leq u_{i_2}) \\ &= 
	P(\UU_{i_1} \leq u_{i_1} \wedge \UU_{i_2} \leq u_{i_2} \wedge_{k \neq 
	i_1,i_2} \UU_k \leq 1) \\ 
	&= 
	\C_{\mathbf{R}}(1, 
	\ldots, 1, u_{i_1}, 1,\ldots 1, u_{i_2}, 1, 
	\ldots, 1),
	\end{split}
	\end{equation}
	where $\mathbf{R}' = 
	\begin{bmatrix} 1 & r_{i_1,i_2} \\ r_{i_2,i_1} & 1 \end{bmatrix}$, and
	$r_{i_1,i_2} = 
	r_{i_2,i_1}$ are corresponding elements of the original correlation matrix 
	$\mathbf{R}$.
\end{definition}

It is easy to conclude that bivariate sub-copula $\C_{\mathbf{R'}}$ models 
bivariate measures of interdependency between marginals $i_1$ and $i_2$. 
Besides standard Pearson cross-correlation (equal to $r_{i_1,i_2}$ given 
Gaussian marginals) another bivariate interdependency 
measures are bivariate `tail' dependencies.
\begin{definition}\label{deg::lambda_l_g}
	The lower `tail' dependency~\cite{nelsen1999introduction} is defined as 
	\begin{equation}
	\lambda_l = \lim_{u \rightarrow 0^+} P(\UU_1 \leq u | \UU_2 \leq 
	u) = \lim_{u \rightarrow 0^+} \frac{P(\UU_1 \leq u, 
		\UU_2 \leq u)}{P(\UU_2 \leq u)} = 
	\lim_{u \rightarrow 0^+} \frac{\mathbf{C}(u, u)}{u},
	\end{equation}
	the last equity results form the fact, that $P(\UU \leq u) = u$ since $\UU 
	\sim \text{Uniform}([0,1])$. 
	Here $u_i \rightarrow 0$ corresponds to very low value of 
	$x_i = F_i^{-1}(u_1)$, and $\mathbf{C}(u, u)$ is a copula function, for 
	 this Chapter it is bivariate Gaussian copula $\C_{\mathbf{R'}}$.
\end{definition}

\begin{definition}\label{deg::lambda_u_g}
	Analogically to definition~\ref{deg::lambda_l_g}, the upper `tail' 
	dependency~\cite{nelsen1999introduction} is defined as
	\begin{equation}
	\begin{split}
	\lambda_u &= \lim_{u \rightarrow 1^-}P(\UU_1 > u | \UU_2 > u)\\& = \lim_{u 
		\rightarrow 1^-} \frac{P(\UU_1 > u, \UU_2 > u)}{P(\UU_2 > u)} \\ &= 
		\lim_{u 
		\rightarrow 1^-} \frac{1 - P(\UU_1 \leq u) - P( \UU_2 \leq u) + P(\UU_1 
		\leq u, 
		\UU_2 \leq u)}{P(\UU_2 > u)}\\& = \lim_{u \rightarrow 1^-} \frac{1-2u+ 
		\mathbf{C}(u,u)}{1-u}.
	\end{split}
	\end{equation}
\end{definition}

The lower `tail' dependency gives a probability of the appearance of
an extreme low value of one marginal given an extreme low value of 
another marginal. Analogically upper `tail' dependency gives a probability of 
the extreme high 
value of one marginal given an extreme high value of an another. 
Importantly, it is easy to show, that 
for Gaussian copula $\lambda_l = \lambda_u = 0$
\cite{nelsen1999introduction}.  
To model data where simultaneous extreme values of many marginals are 
possible, we need other non-Gaussian copula. Some of them are introduced in the
next Chapter.

\chapter{Copulas}\label{chap::cops}

In the previous chapter we introduced the Gaussian copula that is a specific 
case of the wide family of copulas discussed in this chapter. For many 
real-life data non-Gaussian copulas are better models than Gaussian one. This 
is the case especially if simultaneous extreme event in many marginals are 
recorded. Such copulas are used in financial data 
analysis~\cite{cherubini2004copula}, 
reliability analysis~\cite{pham2003recent, 
	wu2014construction}, Civil 
Engineering~\cite{kilgore2011estimating, zhang2015long}, time series analysis 
\cite{eban2013dynamic}, neuroscience~\cite{onken2009analyzing}, 
hydrology~\cite{laux2009modelling}, climate 
research~\cite{domino2014meteo, schoelzel2008multivariate} and random data 
generation~\cite{bandara2011characteristics, strelen2009tools}, to mention a 
few.

In this chapter we discuss multivariate models of data constructed from the 
non-Gaussian copula and continuous univariate marginals. The formal definition 
of the copula function \cite{nelsen1999introduction, sklar1959fonctions} states 
as follows.
\begin{definition}\label{d::copdef} The function $\C: [0,1]^n \rightarrow 
[0,1]$ 
is the $n$-dimensional copula if
	\begin{enumerate}
		\item $\C(u_1, \ldots u_{k-1}, 0, u_{k+1}, \ldots u_n) = 
		0$,
		\item $\C(1, \ldots, 1, u_k, 1, \ldots, 1) = u_k$,
		\item $\C(\mathbf{u})$ is $n$ non-decreasing in the following manner
		\begin{equation}
		\forall_{B \subseteq [0,1]^n} \ \int_{B} d \C(\uu) \geq 0,
		\end{equation}
	\end{enumerate}
		given $\uu = [u_1, \ldots, 
		u_n] \in [0,1]^n$, and
		$B =\prod_{i=1}^n[u_i, v_i]
		\subseteq 
		[0,1]^n$, where $0 \leq u_i \leq v_i \leq 1$.
	
It can be easily shown that the Gaussian copula, from
Definition~\ref{d::gcop} fulfils those conditions.
\end{definition}

\begin{remark} Following Definition~\ref{d::copdef} copula is a 
multivariate CDF of the random 
vector $\UU^{(n)} = (\UU_1, \ldots, \UU_n)$ with uniform marginals~\ie~$\UU_i 
\sim 
\text{Uniform}([0,1])$
\begin{equation}\label{eq::copprob}
[0,1] \ni \C(u_1, \ldots, u_n) = P(\UU_1 \leq u_1 \wedge \ldots \wedge \UU_n 
\leq u_n).
\end{equation}
The corresponding multivariate PDF of $\UU^{(n)}$ is called 
the copula density
\begin{equation}\label{eq::subcop}
\mathbf{c}(\uu) = \frac{\partial ^n}{\partial u_1 \ldots \partial u_n} 
C(\uu).
\end{equation}
\end{remark}

Analogically to the Gaussian copula case in Eq.~\eqref{eq::gcoparbmarg}, 
each copula $\C$ can be used to construct variety of multivariate 
distributions. The multivariate 
CDF of the $n$-variate random 
vector $\XXX^{(n)}$ modelled by a copula $\C$ and marginal CDFs $F_i$ is
\begin{equation}\label{eq::Co}
\mathbf{F}(\x) = P(\XXX_1 \leq x_1 \wedge \ldots \wedge \XXX_n \leq x_n) = 
\C\left(F_1(x_1), \ldots, F_n(x_n)\right).
\end{equation}
As $u_i = F_i(x_i)$ and $\frac{d u_i}{d x_i} = \frac{d}{d x_i} F_i(x_i) = 
f_i(x_i)$, corresponding multivariate PDF is
\begin{equation}\label{eq::co}
\mathbf{f}(\x) = \mathbf{c}\left(F_1(x_1), \ldots, F_n(x_n)\right) \prod_{i = 
	1}^n f_i(x_i).
\end{equation}

We will use here the sub-copula notation in order to discuss various measures 
of the interdependency between marginals. This includes both bivariate 
measures and $d$-variate measures imposed by $d$\textsuperscript{th} order 
multivariate cumulants discussed in following chapters. For the significance of 
bivariate sub-copulas let us refer to 
\cite{erdely2016subcopula}, where the measures of dependencies between
pairs of marginals of a given multivariate random vector are discussed by means 
of such bivariate sub-copulas. However, 
such approach is limited, as discussed further in Chapter~\ref{ch::hostats}.
\begin{definition}\label{d::subcop}
	Let $\textbf{r} = (r_1, \ldots, r_k) \subset (1,2, \ldots, n)$, 
	$k < n$, and let $\C:[0,1]^n \rightarrow [0,1]$ be the $n$-variate copula. 
	The multivariate CDF od the marginal's subset indexed by $\textbf{r}$ is a 
	sub-copula $\C^{(s)}: [0,1]^k 
	\rightarrow [0,1]$, where
	\begin{equation}
	\C^{(s)}(u_{r_1}, \ldots, u_{r_k}) = \C(w_1, \ldots, w_{n}), \text{ where } 
	w_i = 
	\begin{cases} u_i\ \ \text{if} \ i \in \mathbf{r} \\
	1\ \ \ \text{otherwise}
	\end{cases}.
	\end{equation}
	It can be shown, that such $k$-dimensional sub-copula fulfils conditions 
	from Definition~\ref{d::copdef} and is a copula itself 
	\cite{cherubini2004copula}. Recall that in Definition~\ref{def::gsubcop} we 
	have introduced a Gaussian bivariate 
	sub-copula in similar manner.
\end{definition}
 
 Suppose we have a bivariate random vector 
$\UU^{(2)}$ with marginals $\UU_i \sim \text{Uniform}({0,1})$ modelled by a 
bivariate copula $\C(u_1, u_2)$. The covariance between marginals is
\begin{equation}
\text{cov}(\UU_1, \UU_2) = \int_{0}^1 \int_0^1\left(u_1 - 
\frac{1}{2}\right)\left(u_2 - 
\frac{1}{2}\right)d \C(u_1, u_2),
\end{equation}
or equivalently \cite{schweizer1981nonparametric} for continuous copulas
	\begin{equation}\label{eq::copcor}
	\text{cov}(\UU_1, \UU_2) = \int_{0}^1 \int_0^1 \left(\C(u_{1}, u_{2}) 
	-u_{1}u_{2}\right) 
	du_1 d u_2 = \int_{0}^1 \int_0^1 
	\C(u_{1}, 
	u_{2})du_{1} du_{2} - 3.
	\end{equation}
The variance of an uniform distribution on $[0,1]$ is $\frac{1}{12}$, hence the 
Pearson cross-correlation between marginals is
\begin{equation}\label{eq::coprho}
\text{cor}(\UU_1, \UU_2) = 12 \text{cov}(\UU_1, \UU_2).
\end{equation}
The Spearman's rank cross-correlation 
coefficient \cite{schweizer1981nonparametric} is
\begin{equation}
\rho(\UU_1, \UU_2) := \text{cor}(\text{rank}(\UU_1), 
\text{rank}(\UU_2)) = \text{cor}(\UU_1, \UU_2).
\end{equation}
The second equity results from the fact, that univariate 
marginals are uniformly distributed on $[0,1]$.
Kendall's rank cross-correlation coefficient $-1 \leq \tau \leq 1$ equals 
to~\cite{trivedi2007copula}
	\begin{equation}\label{eq::coptau}
	\tau(\UU_1, \UU_2) = 4 \int_{0}^1 \int_0^1  \C(u_1, u_2) d \C(u_1, u_2) -1.
	\end{equation} 
\begin{remark}
	Suppose we have random vectors $\XXX^{(n)}$ and $\UU^{(n)}$ with the same 
	copula. $\UU^{(n)}$ have all uniform marginals on $[0,1]$, and $\XXX^{(n)}$ 
	has continuous univariate marginal CDFs $F_i$, hence $\UU_i = F_i(\XXX_i)$. 
	Following 
	Remark~\ref{rem::savepca} such 
	transformations does not change the rank of the given realisation of the 
	given marginal. Hence it does not change Spearman's and Kendall's 
	rank cross-correlation sa well. We have
	\begin{equation}
		\rho(\XXX_1, \XXX_2) = \rho(\UU_1, \UU_2), \quad
	\tau(\XXX_1, \XXX_2) = \tau(\UU_1, \UU_2).
	\end{equation}
	On the other hand, this does not hold for Pearson's cross-correlation 
	\cite{schweizer1981nonparametric}
\begin{equation}
\text{cov}(\XXX_1, \XXX_2) = \int_{0}^1 \int_0^1 \left(\C(u_{1}, u_{2}) 
-u_{1}u_{2}\right) dF_1^{-1}(u_1) d F_2^{-1}(u_2),
\end{equation}
and in general
\begin{equation}
\text{cor}(\XXX_1, \XXX_2) \neq \text{cor}(\UU_1, \UU_2).
\end{equation}	
\end{remark}

Another important bivariate measures of interdependency between marginals are 
bivariate `tail' dependencies defined as \cite{nelsen1999introduction}
\begin{equation}\label{eq::llambda}
\lambda_l = \lim_{u \rightarrow 0^+} \frac{\C(u, u)}{u},
\end{equation}
and 
\begin{equation}\label{eq::ulambda}
\lambda_u = \lim_{u \rightarrow 1^-} \frac{1- 2u + \C(u, u)}{1-u}.
\end{equation}
In the Gaussian copula case those would correspond to 
Definitions~\ref{deg::lambda_l_g}, and~\ref{deg::lambda_u_g}. As discussed 
further, many non-Gaussian copulas have non-zero `tail' 
dependencies and hence can be used to model simultaneous extreme events 
appearing in two (or many in general case) marginals. 

\section{Elliptical copulas}\label{sec::elcops}
The Gaussian copula mentioned in Chapter~\ref{ch::mvn} is a member of wider 
family of elliptical copulas derived from elliptical multivariate 
distributions, see Definition $5.1$ in~\cite{embrechts2001modelling}. 

\begin{definition} The random vector 
	$\XXX^{(n)} \sim 
	\mathbf{E}(\mu, \mathbf{\Sigma}, \psi)$ is modelled by an elliptical 
	distribution if the probability distribution function of 
	$\mathfrak{Y}^{(n)} = 
	\XXX^{(n)} - \mu$ 
has 
the following characteristic function 
\begin{equation}
\phi_{\mathfrak{Y}^{(n)}}(\mathbf{\tau}) 
= \psi(\tau^{\intercal} \mathbf{\Sigma} \tau).
\end{equation}
Here $\mu
\in \R^n$, and $\psi$ is some scalar function~\cite{cambanis1981theory}, and
$\mathbf{\Sigma} \in \R^{n 
\times n}$ is positive semi-definite symmetric matrix. 
\end{definition} 
Gaussian multivariate distribution is an elliptical 
one with $\psi(t) = e^{-\frac{t}{2}}$~\cite{embrechts2001modelling}.
\begin{definition}\label{def::elcop}
	The elliptical 
	copula $\C_{\mathbf{E}}$ is given by
	\begin{equation}
	\C_{\mathbf{E}}(u_1, \ldots, u_n) = \mathbf{E}_{\mu, \mathbf{\Sigma, 
			\psi}}(E_1^{-1}(u_1), \ldots, 
	E_n^{-1}(u_n)).
	\end{equation}
	Here $\mathbf{E}_{\mu, \mathbf{\Sigma}, \psi}$  is the multivariate 
	Cumulative Distribution Function (CDF) 
	of the elliptical distribution, and $E_i$ are corresponding univariate 
	marginal CDFs. For simplicity, we can use $\mu = 0$ 
	both in 
	$\mathbf{E}_{\mu, \mathbf{\Sigma}, \psi}$ and all $E_i$. 
\end{definition}

\begin{remark}\label{rem::elcopgen}
Suppose we want to generate $t$ realisations of $n$-variate elliptical copula, 
derived from $\mathbf{E}(\mathbf{\Sigma}, \psi)$, we can perform it using the 
following steps:
\begin{enumerate}
	\item Sample $\X \in \R^{t \times n}$ with elements $x_{j,i}$ from 
	$\mathbf{E}(\mathbf{\Sigma}, \psi)$.
	\item Transform $\X \rightarrow \mathbf{U} \in 
	[0,1]^{t \times n}$ by means of $u_{j,i} = E_i(x_{j,i})$.
\end{enumerate}
For the special 
case of $t$-Student distribution see Algorithms $5.2$ 
in~\cite{embrechts2001modelling}.
\end{remark}

It can be shown, that a $t$-Student multivariate distribution is an elliptical 
one~\cite{kotz2004multivariate}, its multivariate PDF reads
\begin{equation}
\mathbf{t}_{\mathbf{R}, \nu}(\x) = \frac{\Gamma(\frac{\nu + 
		n}{2})}{\Gamma(\frac{\nu}{2})\sqrt{\nu^{n} \pi 
	^{n}\det(\mathbf{R})}}\left(1+\frac{\x
	\mathbf{R}^{-1}\x^{\intercal}}{\nu}\right)^{-\frac{\nu+n}{2}}.
\end{equation}
In this case, the function $\psi$ has complicated analytical form, and is 
parametrised by an integer parameter $\nu \in\N^+$ 
~\cite{kotz2004multivariate}. Further $\mathbf{R} \in \R^{n \times n}$ is 
symmetric positive definite matrix with ones on a diagonal. The
integer parameter $\nu \in\N^+$ is an number of degrees of freedom parameter.
Obviously, the multivariate $t$-Student CDF would be
\begin{equation}
[0,1] \ni \mathbf{T}_{\nu, \mathbf{R}}(\x) = \int_{-\infty}^{x_1} \ldots 
\int_{-\infty}^{x_n} \mathbf{t}_{\nu, \mathbf{R}}(y_1, \ldots, y_n) dy_1 \ldots 
dy_n.
\end{equation}
The marginal univariate $t$-Student PDF is
\begin{equation}\label{eq::tmarg}
t_{\nu}(x) = \frac{\Gamma(\frac{\nu + 
		n}{2})}{\Gamma(\frac{\nu}{2})\sqrt{\nu 
		\pi}}\left(1+\frac{x^2}{\nu}\right)^{-\frac{\nu+n}{2}},
\end{equation}
which can be integrated to marginal CDF $T_{\nu}(x) \in [0,1]$.
Following Definition~\ref{def::elcop}, the $t$-Student 
copula $\C_{\mathbf{R}, \nu}: [0,1]^n \rightarrow [0,1]$ is
\begin{equation}
\C_{\mathbf{R}, \nu}(\uu) = \mathbf{T}_{\mathbf{R}, 
\nu}\left(T_{\nu}^{-1}(u_1), 
\ldots, T_{\nu}^{-1}(u_n)\right).
\end{equation}
It is used mainly to model financial data 
\cite{de2004measuring, szego2002measures, semenov2017portfolio}. However it is 
applied also in a machine learning, for example in colour texture 
classification \cite{el2011color} or as a kernel 
function of the Support Vector Data Description \cite{belghith2013change}. 

\begin{remark} Take bivariate $t$-Student copula parametrised by $\mathbf{R} = 
\begin{bmatrix} 1 & r \\ r & 1 \end{bmatrix}$ where $-1 < 
r < 1$ and $\nu \in\N^+$, Kendall's rank 
cross-correlation~\cite{demarta2005t} equals
\begin{equation}\label{eq::kencorr}
\tau = \frac{2}{\pi}\arcsin(r),
\end{equation}
and both `tail' dependencies~\cite{nelsen1999introduction} are
\begin{equation}\label{eq::t_tail}
\lambda_l = \lambda_u = 
2t_{\nu+1}\left(-\sqrt{\nu+1}\left(\frac{\sqrt{1-r}}{\sqrt{1+r}}\right)\right),
\end{equation}
where $t_{\nu}$ is introduced in Eq.~\eqref{eq::tmarg}.
\end{remark}
As discussed in~\cite{matthias2017simulating}, Eq.~\eqref{eq::kencorr} 
holds in general for elliptical copulas, including the Gaussian one. To see the 
Gaussian copula case take $\nu \rightarrow \infty$.
Below we present some interesting example of the $t$-Student copula.

\begin{example}\label{eg::svdfaliststudent}
	In Figure~\ref{fig::t_cop} we present a scatter-plot and univariate 
	histograms of data sampled from the $t$-Student bivariate copula 
	parametrised 
	by 
	$\nu = 
	1$ and 
	$\mathbf{R} = \begin{bmatrix} 1 & 0.75 \\ 0.75 & 1 \end{bmatrix}$ - left 
	panel, and $\nu = 
	1$ and $\mathbf{R} = \begin{bmatrix} 1 & 0 \\ 0 & 1 \end{bmatrix}$ - 
	right panel.
	\begin{figure}
			\subfigure[$r = 
			0$.\label{fig::t_not_cor}]{\includegraphics[width=0.47\textwidth]{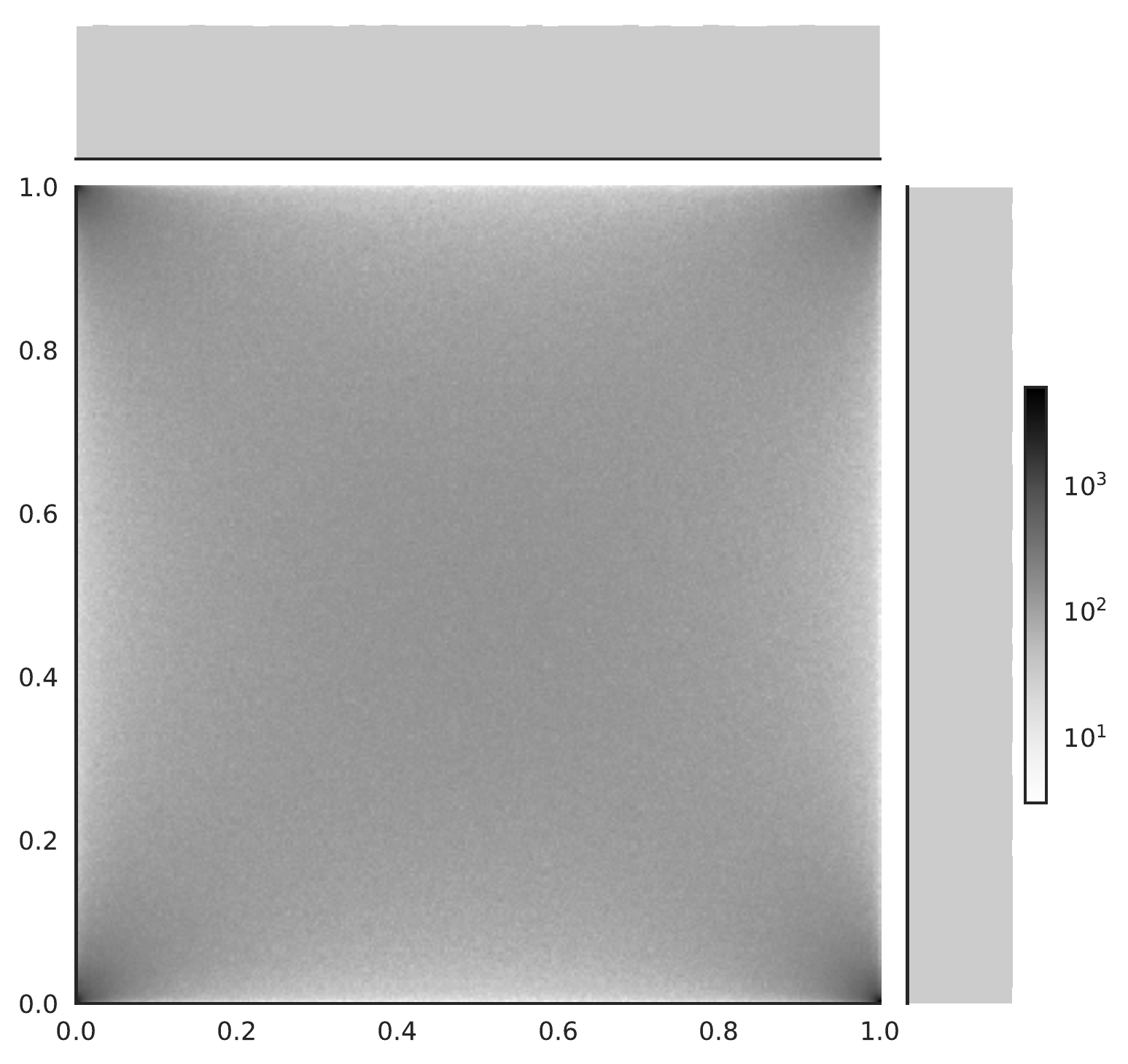}}
		\subfigure[$r = 
		0.75$\label{fig::t_cor}]{\includegraphics[width=0.47\textwidth]{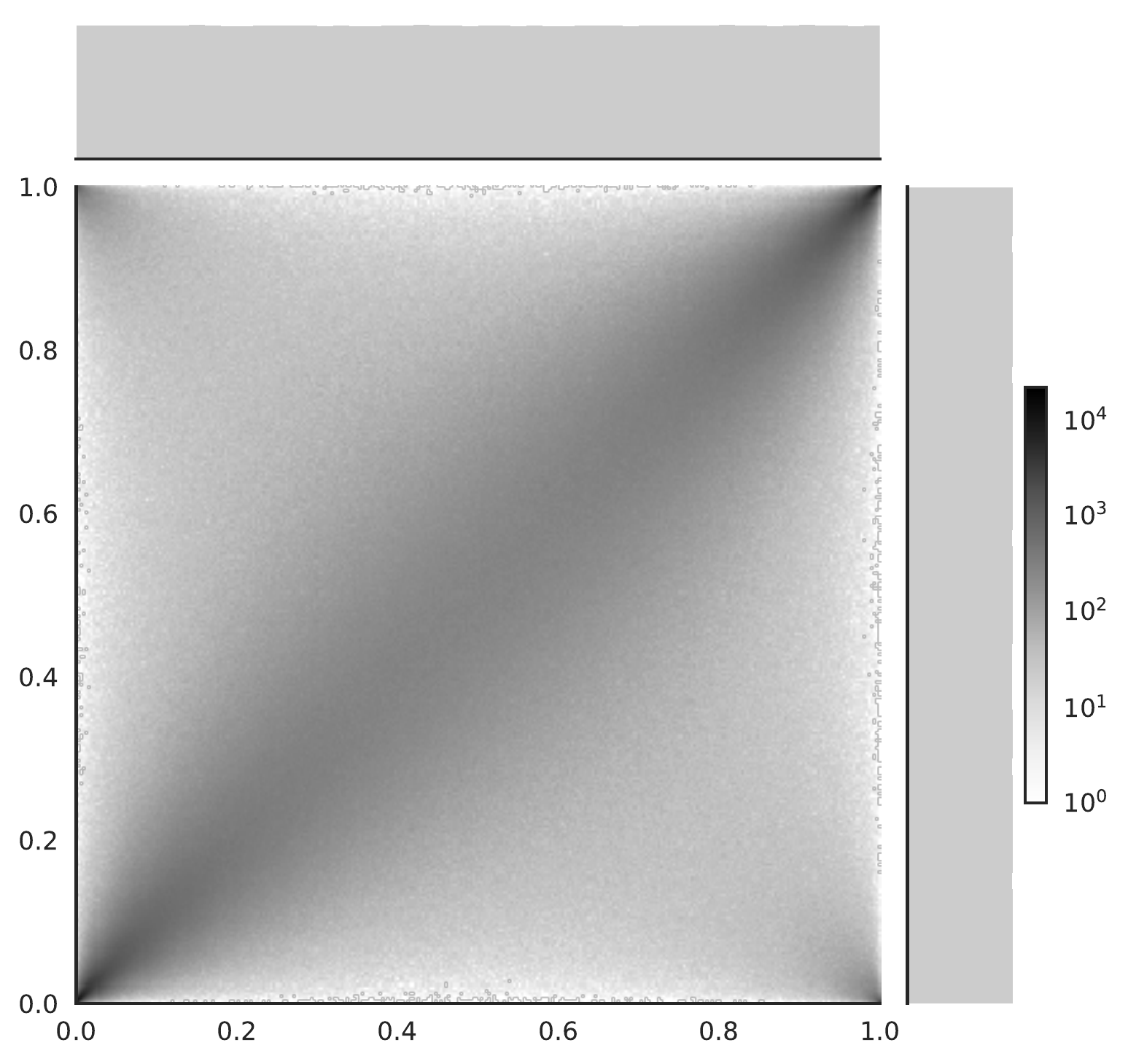}}+
		\caption{A scatter plot and histograms of data sampled from a bivariate 
			$t$-Student copula with $\nu = 1$.}\label{fig::t_cop}
	\end{figure}
Figure~\ref{fig::t_not_cor} refers to zero Kendall's, Spearman's 
and Pearson's cross-correlation between 
marginals, but a data pattern is not simple uniform, but with higher 
concentration on corners.
\end{example}

\begin{example}\label{rem::pcafailure}
	Consider bivariate random vector $\UU^{(2)}$ modelled by a $t$-Student 
	copula 
	with $\mathbf{R} = \begin{bmatrix} 1 & 0 \\ 0 & 1 \end{bmatrix}$ and $\nu = 
	1$, see Figure~\ref{fig::t_not_cor}. Referring to 
	Eq.~\eqref{eq::t_tail}, for this model we have 
	$\lambda_{l} = \lambda_{u} = \frac{1}{4}$. The 
	covariance matrix for such data model would be
 \begin{equation}
 \SSSS = \begin{bmatrix} \frac{1}{12} & 0 \\ 0 & \frac{1}{12} \end{bmatrix},
 \end{equation}
 which is diagonal. Hence, the columns of $ \1 = \begin{bmatrix} 1 & 0 \\ 0 
 & 1 \end{bmatrix}$ are its eigenvectors. From this point of view marginal 
 variables $\UU_1$ 
 and $\UU_2$ should be independent. Despite the fact, that Pearson's, 
 Spearman's and Kendall's cross-correlation measures are zero their `tail' 
 dependencies, modelling simultaneous extreme events, are not. Hence the 
 Singular Value 
 Decomposition of the 
 covariance matrix $\SSSS$ do not reduce risk of simultaneous extreme events. 
\end{example}

For further discussion on elliptical distributions and elliptical copulas see 
Chapter $4$ in~\cite{matthias2017simulating}, where elliptical distributions 
are discussed in Section $4.2$, while elliptical copulas in Section $4.5$.

\section{Upper and lower limit, Fr\'echet families}\label{sec::frechcops}

The Fr\'echet copula's family is derived from limits imposed on copulas 
functions. The $n$-dimensional copula function can be limited from above and 
from below by following Fr\'echet-Hoeffding copula bounds 
\cite{nelsen1999introduction},
\begin{equation}\label{eq::coplims}
\max \left(1-n+\sum_{i = 1}^n u_i, 0\right) \leq \C(u_1, 
\ldots, 
u_n) \leq 
\min\left(u_1, \ldots, 
u_n\right).
\end{equation}

\subsection{Maximal copula}
The upper limit,
\begin{equation}
\C_{\max}(u_1, \ldots, u_n) = \min\left(u_1, \ldots, u_n\right),
\end{equation}
is a copula (fulfils conditions in Definition~\ref{d::copdef}) for each $n$. 
It corresponds to perfectly positively correlated marginal variables. It is 
easy to show that if $\UU^{(n)}$ is a random vector modelled by the maximal 
copula for each pair of marginals all bivariate measures of the 
cross-correlation, Kendall's, Spearman's or Pearson's ones equal to $1$.
Given bivariate maximal copula its tail dependencies are $\lambda_l = \lambda_r 
= 1$. Finally observe, that since $0 \leq u_i, u_j \leq 1$, a 
bivariate maximal sub-copula 
is given by
\begin{equation}
	\C_{\max}^{(s)}(u_i, u_j) = \min\left(\{1,\ldots, u_i, \ldots, 1, \ldots 
	u_j, \ldots\, 1\right) \}= \min\left(u_i, u_j \right).
\end{equation}

\subsection{Minimal copula}\label{s::mincopula}

The lower limit form Eq.~\eqref{eq::coplims},
\begin{equation}\label{eq::lowerlim}
\max \left(1-n+\sum_{i = 1}^n u_i, 0  \right) = \mathbf{L}(u_1, \ldots, u_n),
\end{equation}
in only a copula in the bivariate case, this is called the minimal copula
\begin{equation}\label{eq::mincopula}
\C_{\min}(u_1, u_2) = \max \left(u_1 + u_2 -1, 0  \right).
\end{equation}
For $n > 2$, the condition $(3)$ form Definition~\ref{d::copdef} in not 
fulfilled \cite{nelsen1999introduction}. To show it intuitively, observe 
that in bivariate case $\C_{\min}(u_1, u_2)$ corresponds to 
perfectly negative cross-correlated marginals with elements $u_{t,1}$ and 
$u_{t,2} = 1-u_{t,1}$. Obviously, it is not 
possible to have more that $2$ univariate real valued marginals that are 
all pairwise perfectly negatively cross-correlated. As such, it is easy to 
show, that 
given a bivariate minimal copula Kendall's, Spearman's and Pearson's  
cross-correlation coefficients equal to $-1$. Finally, both 
`tail' dependencies of the minimal copula are $\lambda_u = \lambda_l = 0$.
Both maximal and minimal copulas are examples of the wider Fr\'echet family 
of copulas.
\subsection{Independent copula and a Fr\'echet family copula}

The last component to the Fr\'echet family of copulas is the independent copula.

\begin{definition}
	Independent copula of the $n$-variate random vector $\UU^{(n)}$ is simply a 
	product of arguments,
	\begin{equation}
		 \C_{\perp}(u_1, \ldots, u_n) = \prod_{i=1}^n u_i.
	\end{equation}
	The independent copula correspond to the random vector with all independent 
	marginals. It is easy to show that $\C_{\perp}$ fulfils all conditions 
	from Definition~\ref{d::copdef}. 
\end{definition}
Obviously, a bivariate sub-copula would be
	\begin{equation}
	\C_{\perp}^{(s)}(u_i, u_j) = 1 \cdot \ldots \cdot 1 \cdot u_i \cdot 1 
	\cdot \ldots \cdot 1
	\cdot u_j \cdot 1 \ldots  = u_i \cdot u_j.
\end{equation}
Given a independent copula all measures of the cross-correlation 
(\eg~Kendall's, Spearman's, Pearson's or `tail' dependencies) equal to $0$.

Given maximal, minimal and independent copulas we can integrate them 
into the Fr\'echet copulas. In the case of $n=2$ we can use $\C_{\max}$, 
$\C_{\perp}$ and $\C_{\min}$, while in the case of $n > 2$ we can use 
$\C_{\max}$ and $\C_{\perp}$, since $\C_{\min}$ is not a copula any-more.

\begin{definition}\label{def::frech2}
	The $2$-parameters Fr\'echet copula \cite{nelsen1999introduction} of a 
	bivariate random vector $\UU^{(2)}$ is given by
	\begin{equation}
	\C_{\alpha, \beta}(u_1, u_2) = \alpha \C_{\max}(u_1, u_2) + 
	\beta\C_{\min}(u_1, u_2) + \left(1-\alpha - \beta\right)\C_{\perp}(u_1, 
	u_2),
	\end{equation}
	where $0 \leq \alpha \leq 1$, $0 \leq \beta \leq 1$ and $\alpha + \beta 
	\leq 1$. These parameters indicate portions of maximal, minimal and 
	independent copula accordingly. It is easy to show that $\C_{\alpha, 
	\beta}$ is a copula by Definition~\ref{d::copdef}.
\end{definition}
 
The impact to `tail' dependencies of $\C_{\alpha, 
\beta}$ will come from the 
maximal copula only, and thus
\begin{equation}\label{eq::frechlambdas}
\lambda_l = \lambda_u = \alpha.
\end{equation}
Analogically, one can show that the Spearman's cross-correlation of the
$2$-parameters Fr\'echet copula is
\begin{equation}\label{eq::frachsp}
\rho = \alpha - \beta.
\end{equation}

Referring to Eq.~\eqref{eq::frechlambdas} and 
Eq.~\eqref{eq::frachsp} there may appear a suggestion that by varying 
parameters $\alpha$ and $\beta$ we can easily fit bivariate Fr\'echet copula 
to given Spearman cross-correlation and `tail' dependencies, if lower and 
upper `tail' dependencies are expected to be similar. However formally, a 
$2$-parameters Fr\'echet copula refers to the simple stochastic process where 
two elements of a bivariate realisation are either fully correlated, or 
uncorrelated 
(independent), or fully 
anti-correlated. Based on these, the simple procedure of sampling 
$2$-parameters Fr\'echet copula is presented in 
Algorithm~\ref{alg::frechet2dgen}. However, stochastic processes of 
real life 
data modelling are more complex in most cases, hence we will discuss other 
families of copulas in next section. Now we will discuss the $n$-variate 
generalisation of the Fr\'echet copula, where its flexibility is reduced, since 
we have to drop the minimal copula. Hence, the $n$-variate Fr\'echet copula 
refers to even simpler stochastic process where all elements of a realisation 
are either fully correlated, or uncorrelated (independent), we 
present a sampling procedure in Algorithm~\ref{alg::frechetndgen}.

\begin{algorithm}[t]
	\caption{Sample bivariate Fr\'echet copula}
		\label{alg::frechet2dgen}
	\begin{algorithmic}[1]	
		\State \textbf{Input}: $t$ -- Int, number of samples: 
		$0 \leq \alpha, \beta \leq 1$ - parameters: $\alpha + \beta \leq 1$. 	
		\State \textbf{Output:} $\U \in [0,1]^{t\times 2}$ -- samples. 
		\Function{frechetcopulagen}{$t$, $\alpha$, $\beta$}
		\For {$j \gets 1 \textrm{ to } t$}
		\State sample $v \sim \text{Uniform}([0,1])$ 
		\If{$v \leq \alpha$} 
		\State \text{sample} $u_{j,1} \sim \text{Uniform}([0,1])$
		\State $u_{j,2} = u_{j,1}$
		\ElsIf{$\alpha < v \leq \alpha+\beta$} \State \text{sample} $u_{j,1} 
		\sim \text{Uniform}([0,1])$
		\State $u_{j,2} = 1- u_{j,1}$
		\Else
		\State \text{Sample independently} $u_{j,1} \sim \text{Uniform}([0,1]), 
		u_{j,2} \sim \text{Uniform}([0,1])$
		\EndIf
		\EndFor
		\State\Return $\U$
		\EndFunction 
	\end{algorithmic}
\end{algorithm}	

\begin{algorithm}[t]
	\caption{Sample $n$-variate Fr\'echet copula}
	\label{alg::frechetndgen}
	\begin{algorithmic}[1]	
		\State \textbf{Input}: $t$ -- Int, number of samples, $n$ -- Int, 
		number of marginals, $0 \leq \alpha 
		\leq 1$ -- parameter. 	
		\State \textbf{Output:} $\U \in [0,1]^{t\times n}$ - samples. 
		\Function{frechetcopulagen}{$t$, $n$, $\alpha$}
		\For {$j \gets 1 \textrm{ to } t$}
		\State sample $v \sim \text{Uniform}([0,1])$ 
		\If{$v \leq \alpha$} 
		\State \text{sample} $u_{j,1} \sim \text{Uniform}([0,1])$
		\State $u_{j,n} = u_{j,n-1} =  \ldots = u_{j,2} = u_{j,1}$
		\Else
		\State \text{Sample indep.} $u_{j,1} \sim \text{Uniform}([0,1]), 
		\ldots,  
		u_{j,n} \sim \text{Uniform}([0,1])$
		\EndIf
		\EndFor
		\State\Return $\U$
		\EndFunction 
	\end{algorithmic}
\end{algorithm}

\begin{definition}\label{def::frechcop}
	The $1$-parameter Fr\'echet copula is given by
	\begin{equation}
	\C_{\alpha}(u_1,\ldots,  u_n) = \alpha \C_{\max}(u_1, \ldots,  u_n) +  
	\left(1-\alpha\right)\C_{\perp}(u_1, \ldots, 
	u_n),
	\end{equation}
	where $0 \leq \alpha \leq 1$.
	Here $\alpha$ is a portion of the maximal copula, while $1-\alpha$ is a 
	portion of the independent copula.
\end{definition}
It is easy to show, that both Spearman's cross-correlation and `tail' 
dependencies for each pair of marginals modelled by the $1$-parameter Fr\'echet 
copula are
\begin{equation}\label{eq::frechetrho}
\rho = \lambda_l = \lambda_u = \alpha.
\end{equation}

\begin{example}\label{rem::pcafailsagain}
Let us take a $2$-parameter bivariate Fr\'echet copula with $\alpha = \beta$, 
fulfilling $\alpha + \beta \leq 1$. Referring to 
Eq.~\eqref{eq::frachsp} the Spearman's cross-correlation between marginals 
will be $0$. Analogically to Example~\ref{rem::pcafailure} the correlation 
matrix will be
\begin{equation}
\SSSS = \begin{bmatrix} \frac{1}{12} & 0 \\ 0 & \frac{1}{12} \end{bmatrix},
\end{equation}
which eigenvectors being $[1 \ 0]^{\intercal}$, $[0 \ 1]^{\intercal}$. Hence, 
analogically to 
Example~\ref{rem::pcafailure} the Singular Value Decomposition 
of the 
covariance matrix $\SSSS$ would not reduce the risk of simultaneous extreme 
events. 
\end{example}

\section{Archimedean copulas}\label{sec::archcops}

The motivation for introducing Archimedean copulas family comes from their wide 
application in real data analysis concerning: modelling financial data (shares 
prices, credit assets and risk analysis) \cite{embrechts2001modelling, 
	naifar2011modelling, domino2014use, domino2015use}, modelling computer 
	networks traffic \cite{dong2015copula, 
		dong2017copula}, wireless communication analysis
	\cite{peters2014communications}, hydrological research 
	\cite{zhang2012application, 
	tsakiris2015flood, cong2012interdependence, domino2014meteo, ozga2016snow}, 
	signal processing 
\cite{zeng2014copulas}, and neuroimage data analysis 
\cite{silva2014statistically}. Following \cite{mcneil2009multivariate}
let us move to the following definition. 
\begin{definition}\label{def::archgen}
The Archimedean copula generator $\psi_\theta:[0,\infty)\to[0,1]$ is the
continuous function 
 parametrised by $\theta$ such that
 \begin{enumerate}
  \item $\psi_\theta(0) = 1$,
  \item $\psi_\theta(\infty) = 0$,
  \item $\psi_\theta$ is strictly decreasing on $[0, \inf\{v:\psi_\theta(v) = 
  0\}]$.
 \end{enumerate}
We can define an inverse $\psi_\theta^{-1}$ on $(0,1]$. 
Following~\cite{mcneil2009multivariate} we take $\psi^{-1}(0) = 
\inf\{v:\psi_\theta(v) = 0\}$ introducing a pseudo-inverse defined on $[0,1]$. 
Further one can show that 
$\psi^{-1}_{\theta}(1) = 0$.
\end{definition}

\begin{definition}\label{def::bivarchcop}
Given the convex Archimedean copula generator as in 
Definition~\ref{def::archgen}, the 
bivariate Archimedean copula is given by \cite{mcneil2009multivariate}
 \begin{equation}\label{eq::bivarchcop}
 [0,1] \ni \C_{\psi_{\theta}}(u_1, u_2) = \psi_{\theta}\left(
 \psi_{\theta}^{-1}(u_1) +\psi_{\theta}^{-1}(u_2)  \right).
 \end{equation}
 It can be shown, that such function fulfils conditions of a copula from 
 Definition~\ref{d::copdef}.
\end{definition}
 The $n$-variate generalisation needs further 
conditions to be fulfilled by the generator function 
\cite{mcneil2009multivariate}.
\begin{definition}\label{def::nvarchcop}
If the Archimedean copula generator 
$\psi_\theta:[0,\infty)\to[0,1]$ fulfils conditions in 
Definition~\ref{def::archgen}, and additionally is 
$n$-monotone on $[0,\infty)$ in a sense of Definition $2.3$ in 
\cite{mcneil2009multivariate}, \ie
\begin{equation}\label{eq::mcneecondition}
\forall_{v \in [0, \infty)}(-1)^k \frac{d^k}{d v^k} \psi(v) \geq 0 \ 
\text{where} \ 
k = 0,1,\ldots n-2,
\end{equation}
and
\begin{equation}\label{eq::mcneecondition2}
(-1)^{n-2} \frac{d^{n-2}}{d v^{n-2}} \psi(v) \ 
\text{is nonincreasing and convex on } [0, \infty),
\end{equation}
it defines the $n$-variate Archimedean copula
 \begin{equation}\label{eq::archcop}
 [0,1] \ni \C_{\psi_{\theta}}(\uu) = \psi_{\theta}\left (\sum_{i=1}^{n} 
 \psi_{\theta}^{-1}(u_i) \right ).
 \end{equation}
\end{definition}
In a case of $n = 2$ the $n$-monotone condition from
Definition~\ref{def::nvarchcop} gives convexity conditions in 
Definition~\ref{def::bivarchcop}. Furthermore, it can 
be shown that if the function is $n$-monotone, it is $(n-1)$-monotone as well, 
however the opposite may not be true. Hence if 
$\psi_{\theta}$ is the generator of the $n$-variate copula is a 
generator of the $(n-1)$-variate copula~\cite{mcneil2009multivariate}, while
the opposite may not be true.

 \begin{remark}\label{rem::subcopgen}
 	To show a necessity of the $d$-monotone condition, 
 	following~\cite{mcneil2009multivariate} let us 
 	take a generator 
 	\begin{equation}\label{eq::genmincop}
 	\psi(v) = \max(1-v, 0),
 	\end{equation}
 	fulfiling all conditions in 
 	Definition~\ref{def::bivarchcop} but is not $n$-monotone and $n 
 	> 
 	2$. Its pseudo-inverse is $\psi^{-1}(x) = 1-x$. From 
 	Eq.~\eqref{eq::archcop} we have Eq.~\eqref{eq::lowerlim}, that 
 	represents the 
 	minimal 
 	copula that is well defined only in the bivariate case. 
 \end{remark}
\begin{remark}
Following Definition~\ref{d::subcop}, using $\psi^{-1}_{\theta}(1) = 0$, and 
taking $n$-variate Archimedean copula, we can 
introduce a sub-copula, by setting to $1$ those $u_i$ arguments we 
are not interested in. Such sub-copula would be a copula as well.
\end{remark}
For bivariate Archimedean 
copula the Kendall's 
cross-correlation can be computed in the following manner 
\cite{genest1986copules}
\begin{equation}
\tau = 1-4 \int_{0}^{\psi_{\theta}^{-1}(0)} v \left( \frac{d 
\psi_{\theta}(v)}{d v} 
\right)^2 dv.
\end{equation}

\subsection{Archimedean copulas examples}

In this subsection we discuss well known one parameter Archimedean copulas, 
defined by following generator functions~\cite{mcneil2008sampling, 
hofert2008sampling}.
\begin{enumerate}
	\item Gumbel copula with generator
	\begin{equation}
		 \psi_{\theta}(v) = \exp \left(-t^{\frac{1}{\theta}} \right) \ \ 
		 \text{and} \ \  \psi_{\theta}^{-1}(x) = \left( - \log(x) 
		 \right)^{\theta}.
	\end{equation}
	\item Clayton copula~\cite{clayton1978model} with generator
	\begin{equation}
		\psi_{\theta}(v) = \max\left((1+\theta v)^{{-1}/{\theta}}, 0 \right) \ 
		\ 
		\text{and} \ 
		\psi_{\theta}^{-1}(x) = \frac{x^{-\theta}-1}{\theta}.
	\end{equation}
	\item Frank copula with generator
	\begin{equation}
		\psi_{\theta}(v) = -\frac{1}{\theta} \log \left(1 + e^{-v} 
		\left(e^{-\theta} -1 \right) \right) \ 
		\text{and} \ 
		\psi_{\theta}^{-1}(x) = -\log \left(\frac{e^{-\theta 
		x}-1}{e^{-\theta}-1} \right).
	\end{equation}
	\item Ali-Mikhail-Haq (AMH) copula \cite{ali1978class} with generator
	\begin{equation}
		\psi_{\theta}(v) = \frac{1-\theta}{\exp(v)-\theta}  \ 
		\text{and} \ \  \psi_{\theta}^{-1}(x) = 
		\log\left(\frac{1-\theta(1-x)}{x} \right).
	\end{equation}
\end{enumerate}
There are some limitations on copula parameter $\theta$ value necessary to 
fulfil conditions in
Definition~\ref{def::bivarchcop} in bivariate case, or 
in Definition~\ref{def::nvarchcop} in $n$-variate
case, where $n > 2$. In the bivariate case these limitations are presented in 
Table~\ref{tab::thetasbiv}, see 
\cite{nelsen1999introduction} or~\cite{kumar2010probability} for the AMH copula.
For some measures of cross-correlations between marginals given these copulas, 
see 
Table~\ref{tab::depbivarch}. 

\begin{table}[h]
	\centering
	\begin{tabular}{ccccc}
		 & Gumbel & Clayton & Frank & AMH \\ \hline
		$\theta \in$ & $[1, \infty)$ & $[-1, 0)\cup (0, \infty)$ & $(-\infty, 
		0) \cup (0, 
		\infty)$ & $[-1,1]$ \\ \hline
	\end{tabular}
	\caption{Parameters limitations for bivariate Archimedean 
	copulas.}\label{tab::thetasbiv}
\end{table}

\begin{table}[h]
	\centering
	\begin{tabular}{ccccc}
		& Gumbel & Clayton & Frank  & AMH  \\ \hline
		$\tau$ & $1-\frac{1}{\theta}$ & $\frac{\theta}{\theta + 2}$ & 
		$1-\frac{4}{\theta}\left(1-\frac{1}{\theta}\int_{0}^{\theta}\frac{t}{e^t-1}dt\right)$
		 & $\frac{3 \theta -2}{3 \theta} - \frac{2(1-\theta)^2\log(1-\theta)}{3 
		 \theta^2}$ \\ 
		 \hline
		$\lambda_u$ & $2-2^{\frac{1}{\theta}}$ & $0$ & $0$ & $0$ \\ \hline
		$\lambda_l$ & $0$ & $2^{-\frac{1}{\theta}}$ & $0$ & $= \begin{cases}
		0.5 &\text{ for } \theta = 1 \\ 0 &\text{ for } \theta < 1
		\end{cases}$ \\ \hline
	\end{tabular}
	\caption{Kendall's cross-correlation and `tail' dependencies for 
	bivariate Archimedean copulas.}\label{tab::depbivarch}
\end{table}

In $n$-variate case $\theta$ limitations are stronger and 
$n$ dependent. In Table~\ref{tab::thetasnv} we present limitations on $\theta$ 
sufficient for arbitrary $n$~\cite{hofert2008sampling}. It was 
shown, that if these limitations are fulfilled we can sample arbitrary 
$n$-variate copula using the Marshall-Olkin 
algorithm~\cite{marshall1988families}, we require here 
the Archimedean copula generator to be completely monotone,~\ie~to be 
$n$-monotone according to Definition~\ref{def::nvarchcop} for 
each~$n$~\cite{hofert2008sampling}.
\begin{example}\label{eg::claytonnvar}
Limitations on $\theta$ parameter presented in Table~\ref{tab::thetasnv} 
may be weaken in some cases, still leading to properly defined Archimedean 
copula. For example for $n$-variate Clayton copula to fulfil conditions of 
Definition~\ref{def::nvarchcop} we require $\theta \in \left[\frac{-1}{n-1}, 
0\right) \cup (0, \infty)$ \cite{mcneil2009multivariate}. However in this 
case we can not use the Marshall-Olkin sampling algorithm. 
\end{example}

\begin{table}[h]
	\centering
	\begin{tabular}{ccccc}
		& Gumbel & Clayton & Frank  & AMH  \\ \hline
		$\theta \in$ & $[1, \infty)$ & $(0, \infty)$ & $(0, 
		\infty)$ & $[0,1)$ \\ \hline
	\end{tabular}
	\caption{Parameters limitations for multivariate Archimedean 
	copulas.}\label{tab::thetasnv}
\end{table}

\subsection{Sampling Archimedean copulas}\label{sec::samplarch}

A basic method of sampling data form $n$-variate Archimedean copula 
is to use the Marshall-Olkin algorithm~\cite{marshall1988families, 
hofert2008sampling}. It assumes, that 
Archimedean copula generator $\psi_{\theta}$ is the 
Laplace–Stieltjes transform of certain continuous univariate CDF function 
$\F(v)$ or discrete 
probability mass function $\f_k$. It was shown, 
that this assumption holds, if 
$\psi_{\theta}$ is 
completely monotone according to 
Definition~\ref{def::nvarchcop}, and $\psi_{\theta}(0) = 
1$, see \cite{feller1971william}.
For Gumbel or Clayton copula we have the continuous case
\begin{equation}\label{eq::intlstrphicont}
\psi_{\theta}(s) = \int_{-\infty}^{\infty} e^{-s v} d \F(v) = 
\mathbb{E}\left( e^{-s 
\mathfrak{V}} \right) \text{ for } s \in [0, \infty),
\end{equation}
where $\mathbb{E}$ is the expecting value operator and $\mathfrak{V}$ the 
corresponding random variable. In the Clayton 
copula case, $\F(v)$ 
is the CDF of 
the
Gamma 
distribution $\Gamma_{(\frac{1}{\theta}, 1)}$, while in the Gumbel copula case 
$\F(v)$ is the CDF of the
 L\'evy  general distribution ${F_S}_{\left(\frac{1}{\theta}, 1, 
 \left(\cos\left( \frac{\pi}{2 \theta} \right) \right)^{\theta}, 0\right)}$, 
 see Remark~\ref{rem::stabledist} and Table 
$1$ in~\cite{hofert2008sampling}.
 For the Frank or AMH copulas we have a discrete case, 
\begin{equation}\label{eq::intlstrphidisc}
\psi_{\theta}(s) = \sum_{k=0}^{\infty} \f_k e^{-v_k s} \text{ where } 0 <
v_0 < 
v_1 < \ldots \text{ and } s \in [0, \infty).
\end{equation}
The probability mass function $\f_k$ is the geometric 
series expansion for AMH copula, and the logarithmic 
series expansion for Frank copula, see Table $1$ 
in~\cite{hofert2008sampling}.
After performing the difficult step of sampling the inverse Laplace-Stieltjes 
transform 
of the Archimedean copula generator, the Marshall-Olkin 
algorithm~\cite{marshall1988families, 
hofert2008sampling} is straightforward, see 
Algorithm~\ref{alg:arch_sampler}. For the proof see Algorithm $2.1$ and its 
discussion see~\cite{matthias2017simulating}. In~\cite{matthias2017simulating} 
the Marshall-Olkin algorithm is 
presented in slightly different way and compared with the analogical algorithm  
that uses the Williamson $n$-transform instead of the inverse Laplace-Stieltjes 
transform, see Algorithm $2.3$ therein.

\begin{algorithm}[t]
\caption{Sampling an $n$-variate Archimedean copula.}
\label{alg:arch_sampler}
\begin{algorithmic}[1]	
	\State \textbf{Input}: $t$ -- Int, number of samples, $n$ -- Int, 
	number of marginals, $\psi_{\theta}$ -- completely 
	monotone~\cite{hofert2008sampling} Archimedean generator function.
	\State \textbf{Output:} $\U \in [0,1]^{t \times n}$ - samples. 
	\Function{archcopulagen}{$t$, $n$, $\psi_{\theta}$}
	\For{$j=1,\ldots,t$}
	\State sample $v \sim \F$ fulfilling $\psi_{\theta}(s) = \begin{cases} 
	\int_{-\infty}^{\infty} 
	e^{-s v} d \F(v) \\ \sum_{k=0}^{\infty} \f_k e^{-x_k s}\end{cases}$
	\State Sample independently $x_1 \sim \text{Uniform}([0,1]), \ldots, x_n 
	\sim \text{Uniform}([0,1]).$ 
	\For{$i=1,\ldots,n$}
	\State $u_{j, i} \leftarrow \psi_\theta\left ( \frac{-\log 
		(x_{i})}{v}\right ) $
	\EndFor	
	\EndFor	
	\State\Return $\U$	
	\EndFunction 
\end{algorithmic}
\end{algorithm}

As mentioned before, for valid ranges of copulas parameters for the 
Marshall-Olkin algorithm, 
see Table~\ref{tab::thetasnv}. This may limit the vast part of the 
parameters range, specially in the bivariate case of the Clayton, Frank and 
AMH copulas, see Table~\ref{tab::thetasbiv}. 
To overcome this problem one can use more straightforward sampling method that 
is relatively simple for bivariate copulas \cite{nelsen1999introduction}, that 
can be represented as
\begin{equation}\label{eq::bivcopdecomp}
\C(u_1, u_2) = P(\UU_1 \leq u_1 \wedge \UU_2 \leq u_2) = P(\UU_2 \leq u_2 | 
\UU_1 \leq u_1) P(\UU_1 \leq u_1) = \C(u_2 | u_1) u_1
\end{equation}
since $u_1$ is uniformly distributed on 
$[0,1]$ and one can write $ P(\UU_1 \leq u_1) = u_1$. Thanks to the conditional 
copula
\begin{equation}
\C(u_2 | u_1) = \frac{\partial \C(u_1, u_2)}{\partial u_1},
\end{equation} 
and given a product in Eq.~\eqref{eq::bivcopdecomp} we can now sample 
$u_1$ 
from 
$\text{Unirom}\left([0,1]\right)$ and $u_2$ from $\C(u_2 | u_1)$. This 
algorithm was 
introduced in practice in~\cite{nelsen1999introduction} for Clayton and Frank 
copulas, and in~\cite{kumar2010probability} for the AMH copula, see also 
Algorithms $2.5$ and $2.6$ in~\cite{matthias2017simulating}. It is 
presented in the book as Algorithm~\ref{alg:arch_2dsampler}. 

\begin{algorithm}[t]
	\caption{Sampling bivariate Archimedean copula for $n = 2$.}
	\label{alg:arch_2dsampler}
	\begin{algorithmic}[1]	
		\State \textbf{Input}: $t$ -- Int, $\psi_{\theta}$ -- convex 
		Archimedean 
		copula generator.
		\State \textbf{Output:} $\U \in [0,1]^{t \times 2}$ -- samples. 
		\Function{archcopulagen}{$t$, $\psi_{\theta}$}
		\For{$j=1,\ldots,t$}
		\State sample $u_{j, 1} \sim \text{Uniform}([0,1])$
		\State sample $u_{j,2} \sim \C_{\psi_{\theta}}(u_2 | 
		u_1)$ 
		\EndFor	
		\State\Return $\U$	
		\EndFunction 
	\end{algorithmic}
\end{algorithm}

\begin{example}
	Suppose we want to sample the bivariate Frank copula
	\begin{equation}
	\C_F(u_1, u_2) = 
	-\frac{1}{\theta}\log\left(1+\frac{(e^{-u_1\theta}-1)(e^{-u_2\theta}-1)}{e^{-\theta}-1}
	 \right).
	\end{equation}
	We sample $u_1 \sim \text{Uniform}(0,1)$ in the first step. We use $u_1$ to 
	determine $\C(u_2 | u_1)$ and sample it in the second step. We can treat 
	$\C(u_2 | u_1)$ as an univariate CDF of $u_2$ parametrised by $u_1$. Its 
	inverse (a quantile function) would be
	\begin{equation}
	Q(w)\Big|_{u_1} = -\frac{1}{\theta}\log\left( 
	\frac{1+w(1-e^{-\theta})}{w(e^{-\theta u_1}-1)-e^{-\theta u_1}}
	 \right).
	\end{equation}
	Now we can sample $w \sim \text{Uniform}([0,1])$, and use $u_2 = 
	Q(w)\Big|_{u_1}$.
\end{example}
\begin{remark}
	Analogically one can sample the $3$-variate Archimedean
	copula using rule~\cite{nelsen1999introduction}
	\begin{equation}\label{eq::chaincopula}
	\C(u_1, u_2, u_3) = \C\left(u_3 |u_2, u_1\right) \C(u_2 | u_1) u_1
	\end{equation}
	however the computation of such chain especially if $n$ is large is  
	computationally complicated in comparison with the bivariate 
	case. Further the larger the $n$, the parameter range gain would be 
	smaller. For the copula to be correctly defined the generator 
	$\psi_{\theta}$ has 
	to be $n$-monotone, see Definition~\ref{def::nvarchcop}. For the 
	Algorithm~\ref{alg:arch_sampler} to be applicable the generator has to be 
	completely 
	monotone, \ie~$n$-monotone, as $n \rightarrow \infty$. 
	To demonstrate how the $n$-monotone condition reflect the parameter range 
	of the Clayton copula see example~\ref{eg::claytonnvar}, and compare it 
	with Table~\ref{tab::thetasnv} where the parameter range is given for the 
	completely monotone $\psi_{\theta}$.
\end{remark}

\begin{example}
	In Figure~\ref{fig::a_cop} we present a scatter-plot and univariate 
	histograms sampled from the Archimedean copula with the Kendall's $\tau$ 
	cross-correlation equal to $\pm 0.6$. Compare  
 the  positive 
	$\tau$ and $\theta$ case in Figure~\ref{fig::cl_1biv} that can be sampled 
	either by Algorithm~\ref{alg:arch_sampler} or by 
	Algorithm~\ref{alg:arch_2dsampler}, with the negative $\tau$ and $\theta$ 
	case in Figure~\ref{fig::cl_-0.5biv} sampled 
	by 
	Algorithm~\ref{alg:arch_2dsampler}. One can observe 
	completely different patterns, especially in the low tail region. 
	\begin{figure}
			\subfigure[$\theta = 
		3$.\label{fig::cl_1biv}]{\includegraphics[width=0.47\textwidth]{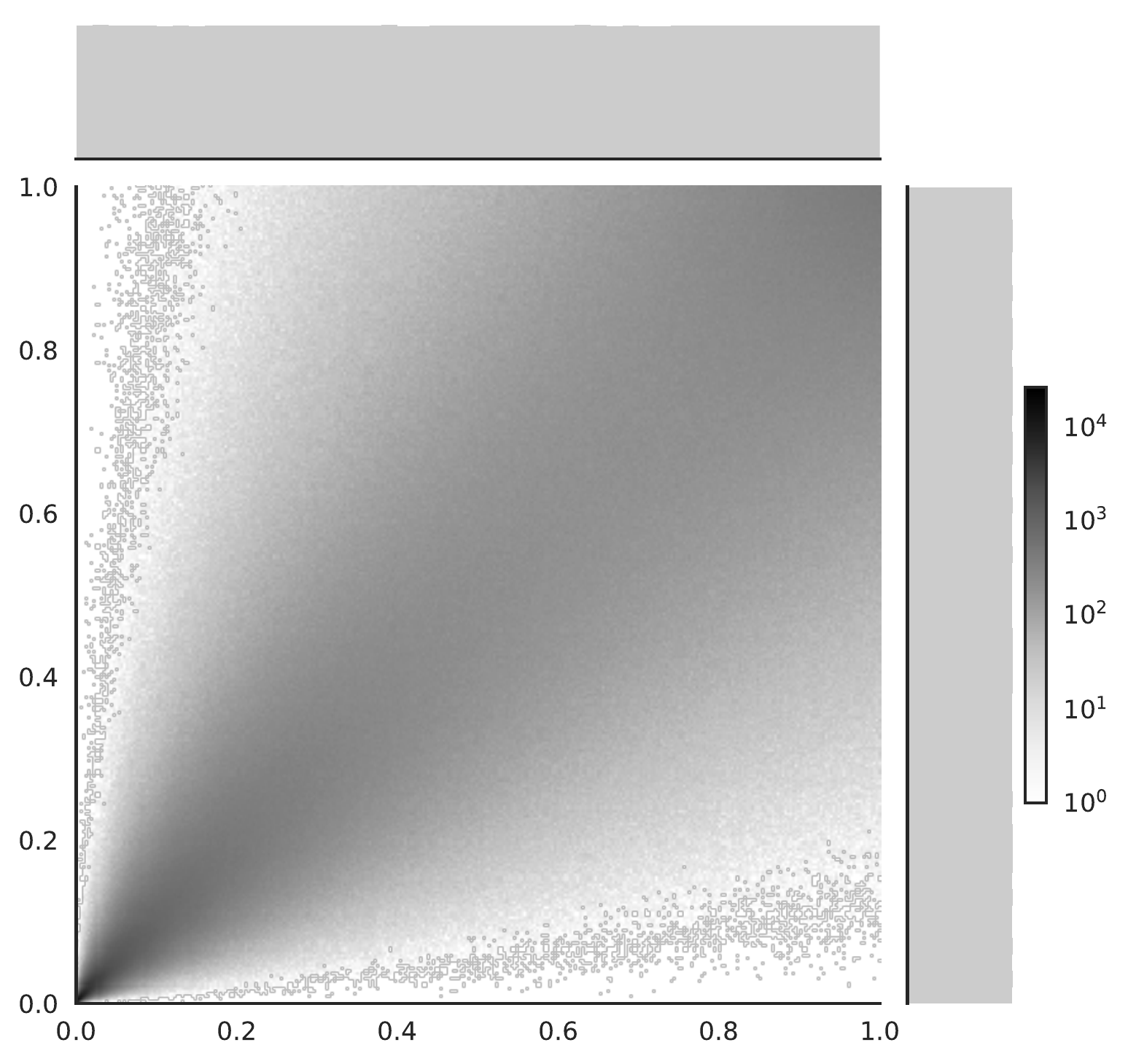}}
		\subfigure[$\theta = 
		-0.75$\label{fig::cl_-0.5biv}]{\includegraphics[width=0.47\textwidth]{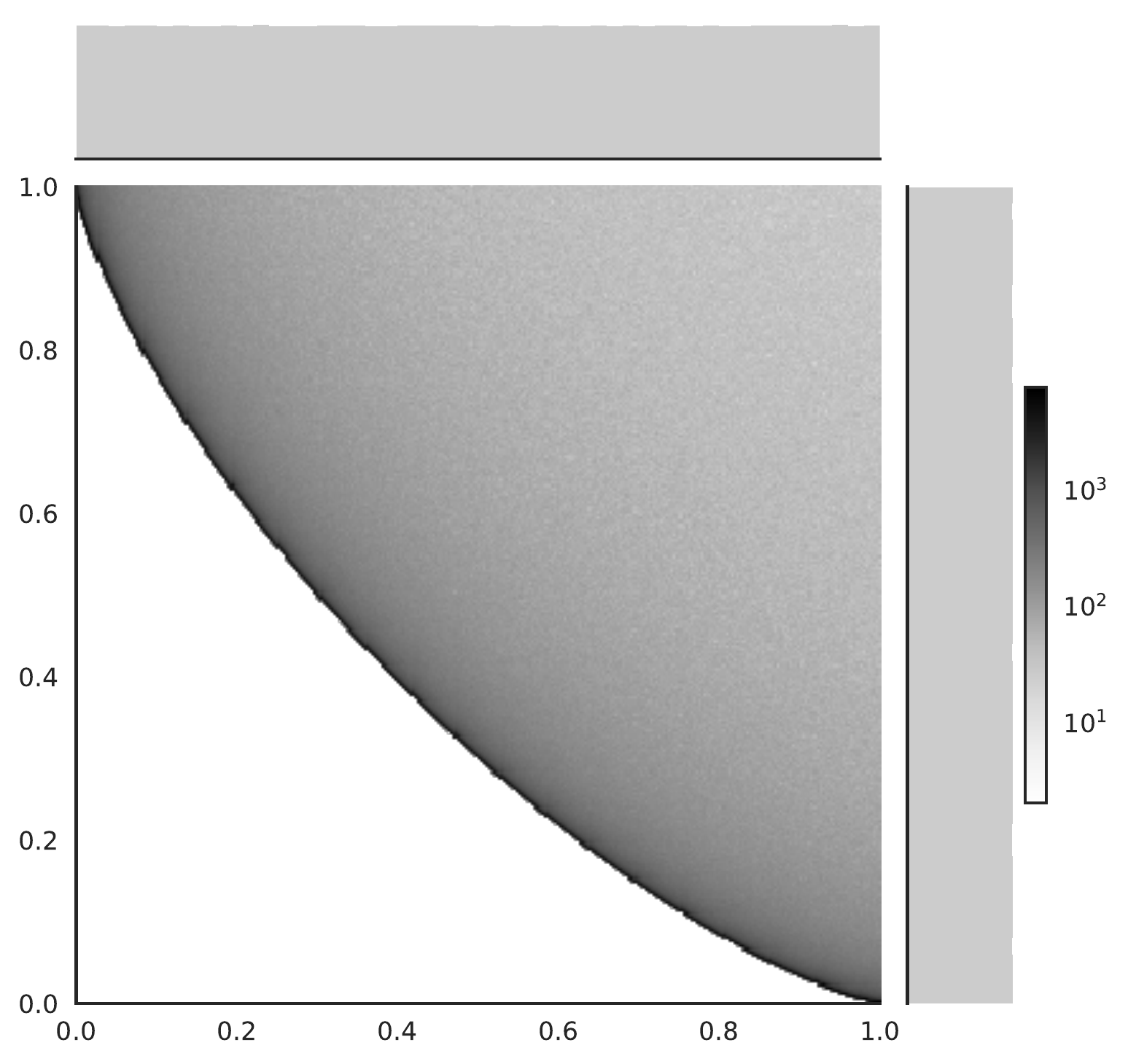}}
		\caption{A scatter plot and histograms of the Clayton 
		copula for positive and negative Kendall's cross-correlation $\tau = 
		\pm 
		0.6$.}\label{fig::a_cop}
	\end{figure}
\end{example}

\subsection{Nested Archimedean copula}\label{sec::nestedarch}

It is easy to observe, that one parameters Archimedean copulas 
such as the Gumbel, Clayton, Frank 
or AMH ones have the same cross-correlation between all pairs of marginals, 
derived from their scalar $\theta$ parameter. Unfortunately, this is far 
from modelling the real live data, especially if we deal with large number of 
marginals-$n$.
To overcome this problem, one can use nested Archimedean 
copulas \cite{hofert2013densities} that model data in such a way, that chosen 
subsets of marginals may have higher cross-correlations in comparison 
with an overall cross-correlation between marginals. 
\begin{definition}\label{d::nestedcop}
	Nested copula \cite{hofert2013densities} is defined 
	as follows
	\begin{equation}\label{eq::nescop}
	\C\left(\mathbf{u}) = \C_0(\C_{1}(\mathbf{u}_{\mathbf{r}_1}\right), \ldots, 
	\C_{\iota}\left(\mathbf{u}_{\mathbf{r}_{\iota}}\right)
	\ldots, 
	\C_{s}\left(\mathbf{u}_{\mathbf{r}_s}), \mathbf{u}_{\mathbf{r}'}\right)
	\end{equation}
	where $\mathbf{u} = (u_1, \ldots, u_n)$ is a vector of arguments of the 
	copula, and
	$\mathbf{u}_{\mathbf{r}_{\iota}} = \left(u_{r_{\iota_1}}, 
	u_{r_{\iota_2}}, \ldots\right)$ is the sub-vector of $\mathbf{r}$ indexed 
	by $\mathbf{r}_{\iota}$, that correspond to the 
	$\iota$\textsuperscript{th} child 
	copula $\C_{\iota}$. Here $\{\mathbf{r}_1, \ldots, \mathbf{r}_{\iota}, 
	\ldots, 
	\mathbf{r}_s, \mathbf{r}'\}$ is the set partition of $(1:n)$ that is a 
	vector $[1,2,\ldots, n]$. In this notation $\C_0$ is the 
	parent copula. Obviously, we can have such set partition, that $\mathbf{r}' 
	= \emptyset$ and each marginal is modelled both by the parent 
	copula and some child copula. 
\end{definition}

Interestingly, nested copula leads to different sub-copulas given by
Definition~\ref{d::subcop}, depending on a 
subset of marginals we are interested in, it will be shown in following 
examples.
\begin{example}\label{r::cumchildcop}
	Suppose we have nested copula as in 
	Definition~\ref{d::nestedcop}.
	Suppose we are interested in the subset $\mathbf{u}_{\mathbf{r}''}$ whole 
	belonging the single child copula, say 
	$\C_{\iota}$, \ie~$\mathbf{r}'' \subseteq \mathbf{r}_{\iota}$. We set 
	$u_i = 1 
	\text{ if } i \notin \mathbf{r}''$ and we have the following sub-copula
	\begin{equation}
	\C^{(s)}({\mathbf{u}_{\mathbf{r''}}}) = \C_0(1, \ldots, 1, 
	\C_{\iota}({\mathbf{u}_{\mathbf{r''}}}), 1, \ldots, 1),
	\end{equation}
	since $\C_{i \neq \iota}(1, \ldots, 1) = 1$. The 
	${\iota}$\textsuperscript{th} child copula will return some $v \in [0,1]$. 
	By Definition~\ref{d::copdef} the outcome is a $\iota$\textsuperscript{th} 
	child 
	copula if $\mathbf{r}_{\iota} = \mathbf{r}''$ or its sub-copula if 
	$\mathbf{r}_{\iota} \subset \mathbf{r}''$
	\begin{equation}
	\C^{(s)}\left({\mathbf{u}_{\mathbf{r''}}}\right) = \C_0\left(1, \ldots, 1, 
	v, 1, \ldots 
	1\right) = v = \C^{(s)}_{\iota}\left({\mathbf{u}_{\mathbf{r''}}}\right).
	\end{equation}
\end{example}

\begin{example}\label{r::cumparcop}
	Suppose we have $\mathbf{r}'' \subset (1:n)$, obtained by 
	taking one marginal from each child copula
	\begin{equation}
	\forall_{{\iota}} \  \# \left(\mathbf{r}'' \cap \mathbf{r}_{\iota} \right) 
	= 1,
	\end{equation} 
and all marginals modelled only by a 
	parent copula
	\begin{equation}
	\mathbf{r}'' \setminus \left(\cup_{\iota} \mathbf{r}_{\iota} \right) =  
	\mathbf{r'}.
	\end{equation}
	As in the previous example, we set $u_i \neq 1$ only if $i \in 
	\mathbf{r}''$.
	Each child copula will return
	\begin{equation}
	\C_{\iota}(\mathbf{u}_{\iota}) = \C_{\iota}(1,\ldots, 1, u_i, 1, \ldots, 1) 
	= u_i.
	\end{equation}
 Given these we have only the parent copula
	\begin{equation}
	\C^{(s)}\left(\mathbf{u}_{\mathbf{r''}}\right)  = 
	\C_0\left(\mathbf{u}_{\mathbf{r''}}\right).
	\end{equation}
\end{example}
\begin{remark}
In an intermediate case, there will remain some nesting structure. The use of 
the sub-copula notation allows making a simple observation, that given the 
bivariate sub-copula, the intermediate case is not possible. The sub-copula 
would refer either one child copula or to the parent copula. Given nested 
copula, any bivariate dependency measures of marginals would refer either to 
the parent copula or to the child one. A more interesting intermediate case may 
be recorded while measuring interdependency of $d$ marginals simultaneously, 
where $d > 2$. 
\end{remark}

Although an Archimedean copula is introduced by the single copula generator, 
nested Archimedean copula can be introduced by the series of generators.
\begin{definition}\label{def::nestedarch}
The nested Archimedean copula \cite{hofert2013densities}, is defined as 
\begin{equation}
\C_{\psi_{\theta_0}, \psi^1, \ldots, \psi^s}(u_1, \ldots, u_n) = 
\C_{\psi_{\theta_0}}\left(\C_{\psi^1}(\mathbf{u}_{\mathbf{r}_1}), \ldots, 
\C_{\psi^s}(\mathbf{u}_{\mathbf{r}_s}), \mathbf{u}_{\mathbf{r}'} \right),
\end{equation}
we use notation as in Definition~\ref{d::nestedcop}.
The $\C_{\psi_{\theta_0}}$ is the parents copula with generator 
$\psi_{\theta_0}$ parametrised by 
$\theta_0$ and $\C_{\psi^{\iota}}$ are children copula with generators 
$\psi^{\iota}$.
\end{definition}
There are some further generalisations of nested Archimedean copulas discussed 
for example in~\cite{hofert2013densities}, however not included in this book 
for the simplicity of presentation. One of those is the use of a higher degree 
of nesting by using the parent copula, children copulas, grandchildren copulas 
etc. Another generalisation comes from the fact that parents and children 
copulas may be from different copulas families, as long as they fulfil some 
conditions called sufficient nesting conditions \cite{hofert2011efficiently, 
hofert2013densities}. Nevertheless, for the clarity of presentation, we will 
concentrate in this book on the case where a parent and children copulas are 
from the same Archimedean family.

To sample nested Archimedean copulas one can use the advanced version of the 
Marshall-Olkin algorithm. 
We start with the parent copula generator $\psi_{\theta_0}$ parametrised by 
$\theta_0$ and sample 
$v_0$ as in Algorithm~\ref{alg:arch_sampler} (where it is called $v$). Given 
$v_0$ we have following children copulas generators each, parametrised by 
global 
$\theta_0$ and $v_0$ and individual $\theta_{\iota}$
\cite{hofert2011efficiently, mcneil2008sampling},
\begin{equation}\label{eq::nestedgen}
\psi_{v_0, \theta_0, \theta_{\iota}}(t) = \exp\left(-v_0  
{\psi_{\theta}}_0^{-1}\left(\psi_{\theta_{\iota}}(t)\right)\right) = 
\left(\exp\left( 
-{\psi_{\theta}}_0^{-1}\left(\psi_{\theta_{\iota}}(t)\right)\right)\right)^{v_0}.
\end{equation}
Such child copula generator must have completely monotone derivative 
\cite{mcneil2008sampling} in a 
sense of Definition $2.3$ 
in \cite{mcneil2009multivariate}. This condition is called the sufficient 
nesting condition.

In our case it is equivalent to
 \begin{equation}\label{eq::sufnestcond}
\forall_{{\iota}} \ \theta_0 \leq \theta_{\iota},
\end{equation}
see~\cite{hofert2011efficiently, mcneil2008sampling} for justification.
Further parameter $\theta_0$ has to fulfil conditions in 
Table~\ref{tab::thetasnv}, in a particular case of the AMH nested copula 
$\theta_{\iota} < 1$ condition is required \cite{hofert2011efficiently} 
for 
each ${\iota}$.

Following~\cite{hofert2011efficiently}, next step concern sampling an inverse 
of 
the Laplace–Stieltjes 
transform of the child copula generator given in 
Eq.~\eqref{eq::nestedgen}, analogically to the Archimedean copula 
generator in Eq.~\eqref{eq::intlstrphicont}. In the continuous case, for 
Clayton 
and Gumbel nested
copula we have
\begin{equation}\label{eq::trnestecont}
\psi_{v_0, \theta_0, \theta_{\iota}}(s) = \int_{-\infty}^{\infty} e^{-s v}   
d \F_{v_0, \theta_0, \theta_{\iota}}(v).  
\end{equation}
In the discrete case, for Frank and AMH 
nested copula we have
\begin{equation}\label{eq::trnestedisc}
\psi_{v_0, \theta_0, \theta_{\iota}}(s) = \sum_{k=0}^{\infty} {(\f_{v_0, 
\theta_0, 
\theta_{\iota}})}_k e^{-v_k s},
\end{equation}
where $0 < v_0 < v_1 < \ldots$ and $s \in [0, \infty)$.

Given these, sampling of nested Archimedean copulas is summarised in 
Algorithm~\ref{alg::nestedarchgen}. Generators of children copulas are as 
follows.
\begin{enumerate}
	\item Gumbel copula (continuous case) we sample the L\'evy general 	
	distribution parametrised by $\alpha = \frac{\theta_0}{\theta_{\iota}}$, 
	$\beta = 
	1$, $\gamma = \left(\cos\left(\frac{\pi \theta_0}{2 
		\theta_{\iota}}\right)v_0\right)^{\frac{\theta_{\iota}}{\theta_0}}$ and 
		$\delta = 
		v_0$, the 
		${F_S}_{(\alpha, \beta, \gamma, \delta)}$, see 
		Remark~\ref{rem::stabledist}. To simplify we can change parameter 
		$\gamma \rightarrow \gamma' = \left(\cos\left(\frac{\pi \theta_0}{2 
			\theta_{\iota}}\right)\right)^{\frac{\theta_{\iota}}{\theta_0}}$ 
			and 
			$\delta 
			\rightarrow \delta' = 0$, and sample 
	${F_S}_{(\alpha, \beta, \gamma' \delta')}$, and use in next step 
	$\tilde{\psi}_{\theta_0, \theta_{\iota}}(t) 
	= \exp\left(-t^{\frac{\theta_0}{\theta_{\iota}}}\right)$, instead of 
	$\psi_{v_0, \theta_0, \theta_{\iota}}(t) 
	= \exp\left(-v_0 t^{\frac{\theta_0}{\theta_{\iota}}}\right)$. Given 
	such 
	approach discussed in \cite{mcneil2008sampling} the $v_0$ 
	will 
	cancel out.
	\item Clayton copula (continuous case) we sample the exponentially 
	tilted 
	L\'evy general distribution \cite{mcneil2008sampling} with the PDF
	\begin{equation}
	f_{\left(v_0, \frac{\theta_0}{\theta_{\iota}}\right)}\left(x\right) =  
	\frac{{f_S}_{(\alpha, \beta, \gamma' \delta')}\left(x\right)}{e^{-v_0}} 
	e^{-x  v_0^{\frac{\theta_0}{\theta_{\iota}}}}.
	\end{equation} 
	\item AMH copula (discrete case): we sample $y$ from negative 
	binomial distribution with parameters $\left(v_0, 
	\frac{1-\theta_{\iota}}{1-\theta_0}\right)$ and return $v_0+y$.
	\item Frank copula (discrete case): there is a complicated representation 
	of 
	$\psi_{v_o, \theta_0, \theta_{\iota}}$ and its Laplace-Stieltjes 
	transform, hence we use a sampling scheme described step by step in
	\cite{hofert2011efficiently}. 
\end{enumerate}

\begin{algorithm}[t]
	\caption{Generate data using nested Archimedean copula 
	\cite{mcneil2008sampling}. 
		\label{alg::nestedarchgen}}
	\begin{algorithmic}[1]	
		\State \textbf{Input}: $t$ - Int, samples number; $n_1, \ldots, 
		n_{\iota}, \ldots, n_s$ - Ints marginals' numbers for each child 
		copula, 
		$\theta_1, \ldots, \theta_{\iota}, \ldots, \theta_s$ - 
		Floats children copulas parameters, $\theta_0$ - Float parent copula 
		parameter.
		\State \textbf{Output:} $\U \in [0,1]^{t \times k}$ - samples. \Comment 
		where 
		$k = \sum_{\iota} n_{\iota}$
		\Function{nestedarchcopulagen}{$t$, $n_1, \ldots, n_s$, $\theta_1, 
		\ldots, \theta_s$, $\theta_0$}
		\For {$j \gets 1 \textrm{ to } t$}
		\State sample $v_0 \sim \F$ \Comment{see line $5$ in 
		Algorithm~\ref{alg:arch_sampler}}
		\State sample indep.  $x_1 \sim 
		\text{Uniform}([0,1]), \ldots, x_k \sim 
		\text{Uniform}([0,1])$ 
		
		\For {${\iota} \gets 1 \textrm{ to } s$} \Comment 
		over children 
		\State Sample $v_{\iota} \sim \F_{v_0,\theta_0,\theta_{\iota}}$ or 
		$\f_{v_0, 
		\theta_0, \theta_{\iota}}$ \Comment see 
		Eq.~s~\eqref{eq::trnestecont}~\eqref{eq::trnestedisc}
		\For {$i \gets \sum_{l = 1}^{{\iota}-1}n_l+1 \textrm{ to } \sum_{l= 
		1}^{\iota} 
		n_l$} 
		\State  $x_{j, i} = \psi_{v_0, \theta_0, 
		\theta_{\iota}}\left(\frac{-\log(x_{j, 
		i})}{v_{\iota}}\right)$
		\EndFor
		\EndFor
		\For {$i \gets 1 \textrm{ to } k$}
		\State $u_{j, i}  = 
		\psi_{\theta}\left(\frac{-\log(x_{j, i})}{v_0}\right)$ 
		\EndFor
		\EndFor
		\State\Return $\U$
		\EndFunction 
	\end{algorithmic}
\end{algorithm}

\section{Data generation for features detection}\label{sec::datagen}

We use introduced copulas, to prepare artificial data in 
analysing cumulants based features selection and features extraction methods 
discussed in Chapter~\ref{cha::cumml}. Given tests on artificially generated 
data for which statistical features are known, discussed further features 
selection and features extraction methods can be applicable for the real 
life non-Gaussian distributed data analysis. 

Since we are interested in information tied to non-Gaussian joint 
distribution of features,  we transform multivariate Gaussian distributed data 
by introducing higher order cross-correlations into chosen subset of marginals 
using copula. Ideally the procedure should be performed in 
such a way that univariate statistics and standard cross-correlation measures  
are affected as little as possible. Given these, changed subset of marginals 
should be hard to detect using simple methods. This section is the evaluation 
of~\cite{domino2018hiding}  where such algorithm was 
introduced by means of Archimedean copulas and~\cite{domino2018use} for the 
$t$-Student copula case. To prepare data for 
further analysis, we start with 
multivariate data $\X \in \R ^{t \times n}$ sampled from $\mathcal{N}(\mu, 
\SSSS)$, and we 
transform these data set $\X \rightarrow \X' \in \R^{t \times n}$, in such a 
way that univariate statistics, the covariance matrix and the rank 
cross-correlation between marginals are as little affected 
as possible.

\begin{remark}\label{rem::datatransf}
Let us introduce the particular data transformation $\X \rightarrow \X'$, where 
$\X, \X' \in \R ^{t \times n}$ and $\X \sim \mathcal{N(\mu, \SSSS)}$ by 
fulfilling.
\begin{enumerate}
\item All univariate marginal distributions are the same for $\X$ and $\X'$.
\item The given subset of marginals $\mathbf{r} = (r_1, \ldots, 
r_k) \subset (1:n)$ of $\X'$ is modelled by the non-Gaussian copula 
$\C_{\text{ng}}$.
\item Rank cross-correlations between marginals of $\X'$ belonging to the 
$\mathbf{r}$ subset are 
similar to rank cross-correlations of analogical marginals of $\X$,
\begin{equation}
\rho(X_{a}, X_{b}) \approx \rho(X'_{a}, X'_{b}) \ \ \forall_{a, b \in 
\mathbf{r}}.
\end{equation}
By point $1$ the Pearson's 
cross-correlations would be similar as well.
\item For remaining marginals subset $(1:n) \setminus \mathbf{r}$, we have the 
same Gaussian copula for both case, hence 
referring to point $1$ we have
\begin{equation}
\text{cor}(X_a, X_b) = \text{cor}(X_a, X_b) \ \  \forall_{a, b \in (1:n) 
\setminus \mathbf{r}}.
\end{equation}
\item Cross-correlation between subsets $(1:n) \setminus \mathbf{r}$ and 
$\mathbf{r}$ in $\X'$ is similar to those in $\X$
\begin{equation}
\text{cor}(X_a, X_b) \approx \text{cor}(X'_a, X'_b) \ \  \forall_{a \in 
\mathbf{r}, b \in (1:n) \setminus \mathbf{r}}.
\end{equation}
 Finally, from points $3$ and $4$ we 
have
\begin{equation} 
	\text{cor}(\X) \approx \text{cor}(\X').
\end{equation}
From point $1$ we have
\begin{equation} 
\text{cov}(\X) \approx \text{cov}(\X').
\end{equation}
\end{enumerate}
\end{remark}

To fulfil point $1$, we need to use the 
copula approach to separate off univariate marginal distributions. To fulfil 
point $2$, the
$\mathbf{r}$ subset of marginals need  to be modelled by the non-Gaussian 
copula. To fulfil point $3$, we need to determine 
such copula parameter 
or parameters that would produce rank cross-correlations similar to
those of the corresponding marginals of original data. To fulfil point $4$, we 
simply leave unchanged marginals $(1:n) \setminus \mathbf{r}$ while 
transforming $\X \rightarrow \X'$. Fulfilling point $5$ is 
not simple. If we generate randomly marginals $X'_{r_1}, \ldots X'_{r_k}$ using 
the non-Gaussian copula fulfilling points $1-4$ we would have little 
cross-correlation between the subset $\mathbf{r}$ and remaining $(1:n) 
\setminus 
\mathbf{r}$. We call such approach the naive one that fragments the 
covariance 
matrix od $\X'$ into two blocks and gives the covariance matrix based 
algorithms advantage 
to find the subset $\mathbf{r}$. As discussed in~\cite{domino2018hiding}, to 
maintain the 
correlation between changed and non-changed subsets on the other hand we have to
transfer
$X_{r_1}, \ldots X_{r_k} \rightarrow X'_{r_1}, \ldots X'_{r_k}$ in information 
preserving way.

\subsection{$t$-Student copula case}\label{sec::tstudentsubcop}

Let us discuss the case, where we use the $t$-Student copula to modify the 
$\mathbf{r}$ subset 
of marginals according to Remark~\ref{rem::datatransf} by means of the 
relatively simple approach discussed in~\cite{domino2018use}. This approach is 
based on the fact, that there is well known algorithm of transforming 
multivariate Gaussian distributed data into $t$-Student distributed ones 
\cite{kotz2004multivariate}. 
Further we need only to transform back $t$-Student univariate marginals into 
the 
Gaussian ones.

To transform the $\mathbf{r}$ subset of marginals of multivariate Gaussian 
distributed data, with zero means and variances one each, into $t$-Student 
multivariate 
distributed one we need to perform the following transformation for each 
realisation of multivariate data, see \cite{kotz2004multivariate}. First, 
sample 
independently the scalar ${v_0}$ from $\chi^2(\nu)$ distribution, where the 
parameter $\nu$ is the scalar parameter of a $t$-Student copula. Next, for the
$\mathbf{r}$ subset of marginals multiple each element of the subset by 
$\sqrt{\frac{\nu}{v_0}}$. This transformation, as performed on the subset of 
marginals $\mathbf{r}$, affects slightly a cross-correlation between this 
subset and the 
reminding 
$(1:n) \setminus \mathbf{r}$ subset. This effect diminishes as $\nu$ 
rises, since the higher the $\nu$ value, the lower the spread of the 
distribution of $\sqrt{\frac{\nu}{v_0}}$. In next subsection we will 
discuss more complex cases where the subset is modelled by the Fr\'echet or 
the Archimedean copula.


\subsection{Fr\'echet copula case}

We can transform the $\mathbf{r}$ subset of marginals of multivariate Gaussian 
distributed data with zero means and variances one into such modelled by a 
Fr\'echet copula according to Remark~\ref{rem::datatransf} where conditions 
$3$ and $5$ are fulfilled at least approximately. We can perform it by 
modifying Algorithm~\ref{alg::frechetndgen} 
to such that transform $k$ independent vectors $U_1, \ldots, 
U_k$ all uniformly distributed on $[0,1]$ to such that are still uniformly 
distributed on $[0,1]$, but cross-correlated according to the Fr\'echet 
copula. This 
modification is presented in Algorithm~\ref{alg::unif2frech}, where last column 
of $\U$ is 
distinguished especially for large $\alpha$, see loop starting in line $8$. 
Hence this column appear to be a candidate 
to carry an information about
an overall correlation between the subset $\mathbf{r}$ and the subset $(1:n) 
\setminus \mathbf{r}$. This observation will be used in data transformation.

\begin{algorithm}[t]
	\caption{Transform independent uniformly distributed on $[0,1]$ data to 
	such modelled by multivariate Fr\'echet copula parametrised by $0 \leq 
	\alpha \leq 1$.}
	\label{alg::unif2frech}
	\begin{algorithmic}[1]	
		\State \textbf{Input}: $\U \in [0,1]^{t\times k}$ - uniformly 
		distributed data with independent marginals, $0 \leq \alpha 
		\leq 1$ - parameter of Fr\'echet copula. 	
		\State \textbf{Output:} $\U' \in [0,1]^{t\times k}$ - data modelled by 
		a Fr\'echet copula. 
		\Function{frechet}{$\U$, $\alpha$}
		\State $\U' = \U$
		\For {$j \gets 1 \textrm{ to } t$}
		\State sample $v \sim \text{Uniform}([0,1])$ 
		\If{$v \leq \alpha$} 
		\For {$i \gets 1 \textrm{ to } k-1$}
		\State $u'_{j,i} = u'_{j,k}$
		\EndFor
		\EndIf
		\EndFor
		\State\Return $\U' \in [0,1]^{t \times k}$
		\EndFunction 
	\end{algorithmic}
\end{algorithm}	

Following~\cite{domino2018hiding} we observe, that it is easy to 
transform Gaussian distributed $k$-variate $\X \in \R^{t \times k}$ into $\U 
\in [0,1]^{t 
\times k}$ where each columns $U_1, \ldots, U_i, \ldots,  U_k$ are uniformly 
distributed on $[0,1]$ 
independent on each other and carrying information about original $\X$ in such 
a way that the higher $i$ the more meaningful the information is. We make 
this transformation 
by performing the eigenvalue, eigenvector decomposition of the covariance 
matrix, 
sorting eigenvalues in increasing order, transforming data using corresponding 
eigenvectors and finally transforming univariate marginals to uniformly 
distributed on 
$[0,1]$ see Algorithm~\ref{alg::core}. Despite the fact that columns of the 
output $U_1, \ldots U_k$ are independent on each other they 
still carry the information, that in our case concerns the cross-correlation 
between the subset $\mathbf{r}$ and the subset $(1:n) \setminus \mathbf{r}$.

\begin{algorithm}[t]
	\caption{Transform multivariate normally distributed data with $0$ means 
	and variance $1$ into independent 
	marginals uniformly distributed on $[0,1]$}
	\label{alg::core}
	\begin{algorithmic}[1]	
		\State \textbf{Input}: $\mathbf X\in\R^{t \times k}$ - $t$ 
		realization of $k$-variate  $\mathcal{N}(0, \mathbf{R})$
		\State \textbf{Output:} $ \mathbf \U \in [0,1]^{t \times k}$ -- $t$ 
		realisations of $k$ independent $\UU_i \sim \text{Uniform}(0,1)$. 
		\Function{norm2unif}{$\mathbf X$}
		\State $\mathbf{R}' = \textsc{cor}(\mathbf X)$  
		\Comment{$\mathbf{R}' \approx \mathbf{R}$}
		\State $\lambda, U = \textsc{eigenvals}(\mathbf{R}'), 
		\textsc{eigenvec}(\mathbf{R}')$ 
		\Comment{$\lambda_1 < \ldots < \lambda_{k}$}
		\For{$j \leftarrow 1 \text{ to } t$}
		\State $\mathbf{x}_j\leftarrow \mathbf{x}_j\cdot U$ 
		\Comment{$\mathbf{x}_j = [x_{j,1}, \ldots, x_{j,k}]$}
		\For{$i \leftarrow 1 \text{ to } k$}
		\State $u_{j,i} =  
		F_{\mathcal{N}(0,\lambda_i)}(x_{j,i})$ 
		\Comment{convert to uniform univ. marginals}
		\EndFor 
		\EndFor 
		\State\Return $\U$		
		\EndFunction 
	\end{algorithmic}	
\end{algorithm}

Finally given the subset of multivariate Gaussian distributed marginals denoted 
by $\mathbf{r}$ we can input them into Algorithm~\ref{alg::core} to 
achieve independent uniformly distributed marginals that are carrying 
interesting us information. Next we can input them into 
Algorithm~\ref{alg::unif2frech} to 
achieve the subset of marginals modelled by the Fr\'echet copula and finally we 
can transform back their univariate distributions into the Gaussian one. 
Obviously the Fr\'echet copula gives the same cross-correlation between all 
marginals, hence
the point $3$ in 
Remark~\ref{rem::datatransf} will be fulfilled only 
approximately. This is due 
to the fact that the Spearman's cross-correlation 
inside the subset of marginals modelled by the Fr\'echet copula will be 
constant and equal to the copula's 
parameter $\alpha$, see Eq.~\eqref{eq::frechetrho}.
Further, observe that due to the resampling scheme in 
Algorithm~\ref{alg::unif2frech} 
some information about the cross-correlation between the subset $\mathbf{r}$ 
and the subset $(1:n) \setminus \mathbf{r}$ may be lost, especially for low 
$\alpha$. In the case of the Archimedean copulas this two problems can be 
resolved at least to some extend, what is discussed in the next subsection.


\subsection{Archimedean copula case}\label{sec::archsamp}

We can transform the $\mathbf{r}$ subset of marginals of multivariate Gaussian 
distributed to such modelled by the Archimedean copula using an analogical 
approach as in the Fr\'echet copula case, see 
also~\cite{domino2018use}. The basic difference is that to obtain a sample of 
$k$-variate 
Archimedean copula, by means 
of the Algorithm~\ref{alg:arch_sampler}, we need $k+1$ independent samples from 
uniform distribution 
on $[0,1]$ - one additional sample is required. In other words, following 
Algorithm~\ref{alg:arch_sampler}, $k$ samples 
will be used in line $6$ while the additional can be transformed to 
$v$ via the quantile function $\F^{-1}$ of the inverse Laplace-Stieltjes 
transform of the corresponding Archimedean copula generator
in line $5$. Following~\cite{domino2018hiding}, recall that this generator 
$\psi_{\theta}$ is strictly decreasing 
by Definition~\ref{def::archgen}. Hence the function
\begin{equation}\label{eq::genls}
 f_i(u, x_i) = \psi_\theta\left ( \frac{-\log 
 	(x_i)}{\F^{-1}(u)}\right ),
 \end{equation} 
used in line $8$ of Algorithm~\ref{alg:arch_sampler} will be 
strictly increasing in $x_i$ for constant $u$. Moreover for constant 
 $x_i \in (0,1)$  it will be strictly increasing in $u$ if $\F$ is continuous 
 (the Gumbel and Clayton copula case), or non-decreasing if $\F$ is discrete 
 (the AMH and Frank copula case). 
  In 
 data transformation scheme such $u$ can be used to carry a general information 
 into 
transformed data. Hence we will take it from the most informative (last) column 
of the 
 output of Algorithm~\ref{alg::core}. On the other hand $x_1, 
 \ldots, x_k$ can 
 be used to carry individual informations. We will take them from other 
 columns of the 
 output of Algorithm~\ref{alg::core}. Due 
 to the monotonicity of Eq.~\eqref{eq::genls} the rank of $u$ will affect 
 the rank of the realisation of all marginals (general information), while the 
 rank of $x_i$ would affect the rank of the $i$\textsuperscript{th} marginal 
 (individual information). Here the general information affects all 
 realisations, not only part of them as in the Fr\'echet copula case. It is why 
 the outcome given an Archimedean copula should be better than given the 
 Fr\'echet copula. To produce $k+1$ independent uniformly distributed data be 
 means of the 
Algorithm~\ref{alg::core}, from $\X \in \R^{t \times k}$ we sample 
independently $t$ samples from univariate $\mathcal{N}(0,1)$ and add them as 
another marginal to $\X$ before inputting data to Algorithm~\ref{alg::core}.

One can note that the same function as in Eq.~\eqref{eq::genls} is used in the 
nested case - see line $14$ of Algorithm~\ref{alg::nestedarchgen}. Hence the 
nested Archimedean copula generator can be used as well to transform data. 
Observe as well that sampling or data input to 
lines $5$ and $6$ in both Algorithm~\ref{alg:arch_sampler} 
and~\ref{alg::nestedarchgen} is the same. 
Further in line $14$ of
Algorithm~\ref{alg::nestedarchgen} we have the same monotone transformation as 
in line $8$ of Algorithm~\ref{alg:arch_sampler}. There are some differences 
as well. Data inputted to 
$x_1, \ldots, x_k$ in Algorithm~\ref{alg::nestedarchgen} have to 
pass through line $10$ of the algorithm, where we have a child copula 
generator $\psi_{v_0, 
\theta_0, \theta_{\iota}}$ parametrised by $v_0$, see 
Eq.~\eqref{eq::nestedgen}. Although this generator, in analogy to 
Eq.~\eqref{eq::genls}, gives the function $g_i(x_i) = \psi_{v_0, 
\theta_0, \theta_{\iota}}\left(\frac{-\log(x_{i})}{v_{\iota}}\right)$ that is 
expected 
to be strictly increasing in $x_i$, we have there randomly 
generated $v_{\iota}$, see line $8$ in 
Algorithm~\ref{alg::nestedarchgen}. Unfortunately, this random factor will 
worsen to some extend the cross-correlation between 
subsets $\mathbf{r}$ and $(1:n) \setminus \mathbf{r}$. On the other hand, the 
advantage of 
the use of the nested Archimedean copula is the fact that we do not have a 
constant cross-correlation within $\mathbf r$ subset but we can have subsets 
$\mathbf{r}_{\iota} \subset \mathbf{r}$ with higher cross-correlations 
$\rho_{\iota}$ 
in 
comparison with an overall cross-correlation $\rho_0$ within $\mathbf{r}$. 
Finally, we require $\rho_{\iota} > \rho_0$ to fulfil the 
sufficient nesting condition, see Eq.~\eqref{eq::sufnestcond}, since we 
assume that parents and children copulas are from the same Archimedean family.

The procedure of determination of  $\{\mathbf{r}_1,
\ldots, \mathbf{r}_{\iota}, \ldots,  \mathbf{r}_s, \mathbf{r}'' \}$, being a 
set partition of $(1:n)$, is summarised in Algorithm~\ref{alg::corfrag}. Here 
each $\mathbf{r}_{\iota}$ determines such subset of marginals, where a 
cross-correlation $\rho_{\iota}$ is higher than an overall cross-correlation 
$\rho_0$. The input to the Algorithm~\ref{alg::corfrag} is the 
matrix of the Spearman's cross-correlations of size $n \times n$. The 
Algorithm~\ref{alg::corfrag} returns an 
overall cross correlation $\rho_0$ that is used to determine parameter 
$\theta_0$ and cross-correlations $\rho_{\iota}$ corresponding to higher 
cross-correlations 
regions of children copulas parametrised by $\theta_{\iota}$ and modelling 
marginals subsets denoted by $\mathbf{r}_{\iota}$. 

\begin{algorithm}[t]
	\caption{Determinig subsets of marginals with higher cross-correlation}
	\label{alg::corfrag}
	\begin{algorithmic}[1]	
		\State \textbf{Input}: $\mathbf{C} \in\R^{k \times k}$ -- 
		''correlation'' matrix 
		\State \textbf{Output:} $\mathbf r_1, \ldots, 
		\mathbf{r}_s$ -- such that $\{\mathbf r_1, \ldots, 
		\mathbf{r}_s, \mathbf{r}''\}$ is the set partition of $(1:n)$, $\rho_1, 
		\ldots, 
		\rho_s$ -- correlations inside each subset, $\rho_0$ -- general 
		correlation. 
		\Function{getcors}{$\mathbf{C}$}
		\For {$\mathbf r_1, \ldots, \mathbf r_s \leftarrow $set 
		partitions$(1:k)$} \Comment{chose as $\forall_{\iota} 
		|\mathbf{r}_{\iota}| \geq 2$ 
		and $s > 1$}
		\For {$\iota \leftarrow 1:s$}
		\State $\rho_{\iota} = \text{mean}_{a,b}(c_{a,b})$ \Comment{mean over 
		$a,b \in 
		\mathbf{r}_{\iota} \text{ such that } a \neq b$}
	\EndFor
		\State $\rho_0 = \text{mean}_{a_0,b_0}(c_{a_0,b_0})$ \Comment{$a_0 
		\in \mathbf{r}_{\iota}$, $b_0 \in \mathbf{r}_{\iota'}: \iota \neq 
		\iota' \wedge a_0 
		\neq b_0$}
		\State test sufficient 
		nesting $\forall_{\iota} \ \rho_{\iota} > \rho_0$ 
		\State $\left(c_{\text{theor}}\right)_{a,b} = \begin{cases}
		1 &\text{ if } a = b \\
		\rho_{\iota} &\text{ if } a,b \in \mathbf{r}_{\iota} \\
		\rho_0 &\text{ elswhere}.
		\end{cases}$
		\State penalty$ = \|\mathbf{C} -  \mathbf{C}_{\text{theor}}\|$ 
		\Comment{Frobenius 
		norm}
		\EndFor
		\State\Return $[\mathbf{r}_1, \ldots \mathbf{r}_s]$, $[\rho_1, \ldots 
		\rho_s]$, $\rho_0$ such that penalty is minimal	
		\EndFunction 
	\end{algorithmic}	
\end{algorithm}

\section{Implementation and experiments}\label{sec::datamalfexp}

\subsection{Implementation}

Discussed copula-based data generation algorithms were implemented in the 
Julia programming language \cite{bezanson2012julia, 
bezanson2014julia}. The Julia is a modern, open source and high-level 
programming language. As it is open source, the code can be accessed, reviewed 
and developed online by scientists, making Julia the proper tool for 
scientific computation. Apart from this, the main advantage of the 
language is its solid performance.  Linear operations and random sampling 
operations implemented in this language require significantly less processor 
time than similar operations implemented in other well-known programming 
languages, see \cite{bezanson2014julia}. Furthermore, Julia is 
specialised in multidimensional arrays (tensors) calculation 
\cite{bezanson2014array}, which is an advantage for cumulants tensors 
calculations. 

Discussed here copula sampling algorithms are available on a GitHub repository 
\cite{cop} as the \texttt{DatagenCopulaBased.jl} module. Using implemented 
there functions one 
can sample Gaussian, $t$-Student, Fr\'echet and Archimedean (Gumbel, Frank, 
Clayton, AMH) copulas. In the case of the Fr\'echet and Archimedean copulas 
bivariate case is distinguished, since a wider range of parameters is 
available. The nested Archimedean copulas are supported as well. Instead of 
Archimedean copula parameter or parameters values, one can insert expected 
Spearman's or Kendall's cross-correlation coefficient or coefficients in a 
nested case. 

Apart from these copulas, in module \texttt{DatagenCopulaBased.jl} the sampler 
for 
Marshall-Olkin copula is introduced. For definition and features of this copula 
see~\cite{schmidt2007coping}. In the bivariate case, such copula is introduced 
by the stochastic process modelling exponentially distributed extinctions 
parametrised by three parameters $\lambda_1$, $\lambda_2$ and $\lambda_{1,2}$.  
In the $n$ variate case we have many parameters concerning single 
marginals, all combinations of two marginals, all combinations of three 
marginals etc., hence their number grows rapidly with $n$. The 
investigation of the relationship between higher order cumulants and parameters 
of the Marshall-Olkin copula seems to be interesting but complicated.

Module \texttt{DatagenCopulaBased.jl} provides functions that change a subset 
of marginals of multivariate Gaussian distributed data into those modelled by 
the $t$-Student, Fr\'echet or Archimedean copulas. This is important for future 
experiments. In the last case, both a nested and non-nested cases are 
supported. We have implemented as well the Marshal-Olkin copula case, but only 
for the subset of marginals of size $2$, since for this copula, the number of 
free parameters rises rapidly with the size of the subset. Due to this 
observation, the Marshal-Olkin copula is rather not applicable for experiments 
discussed in the next part of this book.  Finally, there are many methods of 
random correlation matrix generation for tests, see~\cite{domino2018hiding} for 
details. In experiments in the next subsection we use the following methods of 
the covariance matrix generation, all implemented in~\cite{cop}.
\begin{enumerate}
\item Constant $\mathbf{R}_{\text{const}, \alpha} \in\R^{n \times n}$, we set 
simply each off-diagonal 
element to the constant value $0 < 
\alpha <1$.
\begin{lstlisting}
julia> cormatgen_constant(3, 0.5)
3×3 Array{Float64,2}:
1.0  0.5  0.5
0.5  1.0  0.5
0.5  0.5  1.0
\end{lstlisting}

\item Constant noised $\mathbf{R}_{\text{const}, \alpha, \epsilon} \in\R^{n 
\times n}$, to each element of the constant matrix we add the random value 
multiplied by the parameter $\epsilon$. By default 
$\epsilon = \frac{1-\alpha}{2}$. 

\begin{lstlisting}
julia> Random.seed!(42);

julia> cormatgen_constant_noised(3, 0.5)
3×3 Array{Float64,2}:
1.0       0.392423  0.515595
0.392423  1.0       0.649938
0.515595  0.649938  1.0  
\end{lstlisting}

\item Random, wy sample independently elements $\mathbf{A} \in \R^{n \times n}$ 
from the $\text{Uniform}([0,1])$. Next for normalisation we use $\mathbf{D_A} = 
\text{diagm}(\mathbf{A} \mathbf{A}^{\intercal})$ and return 
\begin{equation}
\R^{n \times n} \ni \mathbf{\mathbf{R}}_{\text{rand}} = 
\mathbf{D_A}^{-\frac{1}{2}} 
\mathbf{A} 
\mathbf{A}^{\intercal} \mathbf{D_A}^{-\frac{1}{2}}.
\end{equation}
This method returns rather high positive correlations.

\begin{lstlisting}
julia> Random.seed!(42);

julia> cormatgen_rand(3)
3×3 Array{Float64,2}:
1.0       0.673166  0.538302
0.673166  1.0       0.756497
0.538302  0.756497  1.0  
\end{lstlisting}

\item Toeplitz, given a parameter $0 < \rho < 1$, we return 
$\mathbf{R}_{\text{T}, \rho} \in \R^{n \times n}$ with off diagonal elements 
$r_{i_1, i_2} = 
\rho^{|i_1 - i_2|}$

\begin{lstlisting}
julia> cormatgen_toeplitz(3, 0.5)
3×3 Array{Float64,2}:
1.0   0.5  0.25
0.5   1.0  0.5 
0.25  0.5  1.0 
\end{lstlisting}

\end{enumerate}

\subsection{Experiments}\label{sec::experimentmats}

In this subsection we analyse the following experiment. We generate at random 
element-wise positive symmetric and positive definite covariance matrix 
$\mathbf{R}$ with ones on a diagonal. The generation methods are discussed in 
~\cite{domino2018hiding}.
In this book, we use following methods, \ie~random, constant noised, and 
Toeplitz. In noised examples we use the default noise 
parameter $\epsilon = \frac{1-\alpha}{2}$. In 
Figure~\ref{fig::cors} we present chosen elements of the $\mathbf{R}$ matrix 
generated by different methods. Obviously $r_{i_1, i_2} = 1$ if $i_1 = i_2$ and 
$r_{i_1, i_2} < 1$ otherwise. For random method correlations are rather high, 
for Toeplitz method we have a specific correlation pattern, while for constant 
noised matrix the mean correlation is determined by the $\alpha$ parameter.
\begin{figure}
	\subfigure[Random method 
	\label{fig::rand_cor}]{\includegraphics{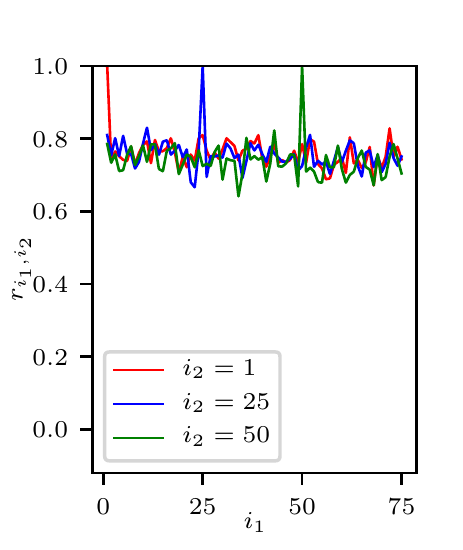}}
	\subfigure[Constant noised method with parameter $\alpha = 
	0.5$\label{fig::c_n}]{\includegraphics{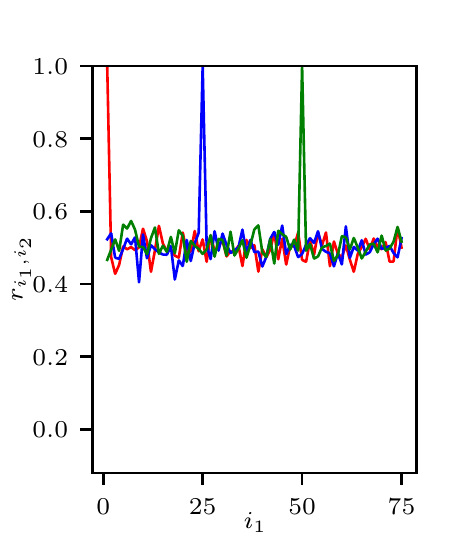}}
		\subfigure[Toeplitz method with parameter $\rho = 
	0.95$\label{fig::top}]{\includegraphics{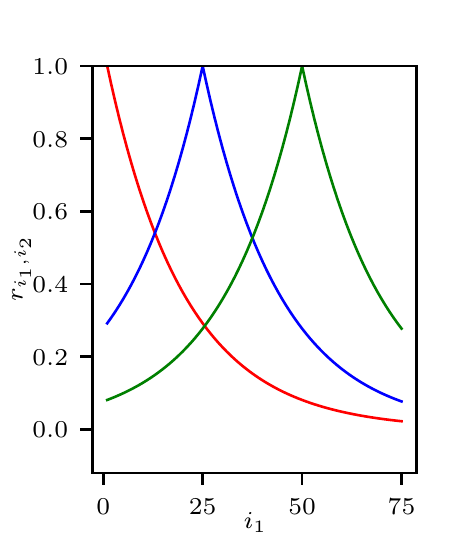}}
	\caption{Values of chosen elements of the correlation matrix given 
	different methods of generation, $n = 75$. We chose such methods and 
	parametrisations to have different patterns of the cross-correlation 
	between marginals.}\label{fig::cors}
\end{figure}

Given the correlation matrix, we sample $\R^{t 
\times n} \ni \X \sim \mathcal{N}(0, \mathbf{R})$ (with standard normal 
marginals) and perform the transformation $\X \rightarrow \X'$ \ie~we change 
randomly chosen subsets of marginals $\mathbf{r}$ of given length $|\mathbf{r}| 
= k$ by means of $t$-Student, Fr\'echet and Archimedean copulas. We use 
following parameters, number of marginals $n = 
75$, number of changed marginals $k = 10$.
For each experiment we compute the measure of the change of the covariance 
matrix due to the transformation of data.
\begin{equation}
\delta(\X, \X') = \frac{\|\text{cov}(\X) - 
\text{cov}(\X')\|}{\|\text{cov}(\X)\|},
\end{equation}
where we use the Frobenius norm of the difference between covariance matrices  
of $\X$ and $\X'$. The lower the $\delta(\X, \X')$ the better the 
transformation is.

\begin{figure}
    \subfigure[Random correlation matrix. 
    \label{fig::trand}]{\includegraphics{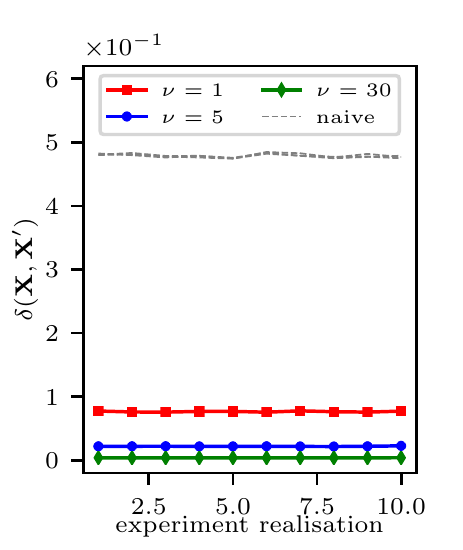}}
    \subfigure[Constant noised correlation matrix with $\alpha 
    =0.5$.\label{fig::tcn}]{\includegraphics{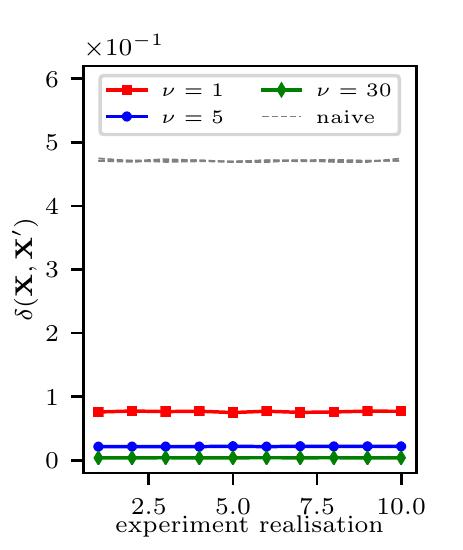}}
    \subfigure[Toeplitz correlation matrix with $\rho = 
    0.95$.\label{fig::ttoep}]{\includegraphics{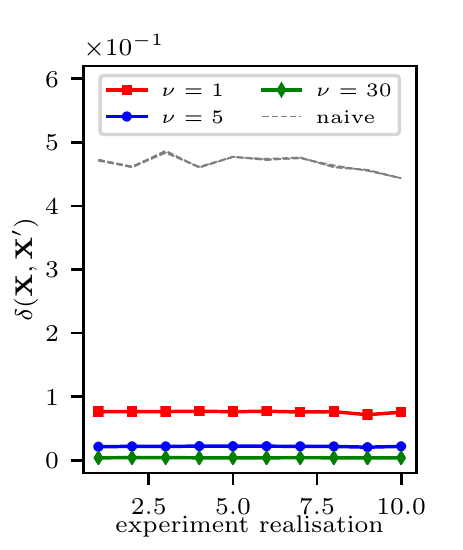}}
    
    \subfigure[Random correlation matrix. 
    \label{fig::frand}]{\includegraphics{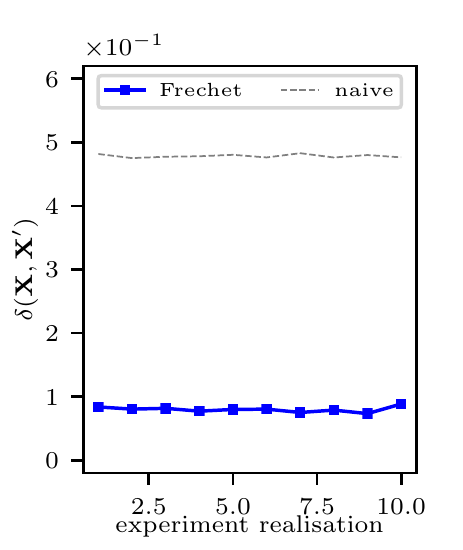}}
    \subfigure[Constant noised correlation matrix with $\alpha 
    =0.5$.\label{fig::fcn}]{\includegraphics{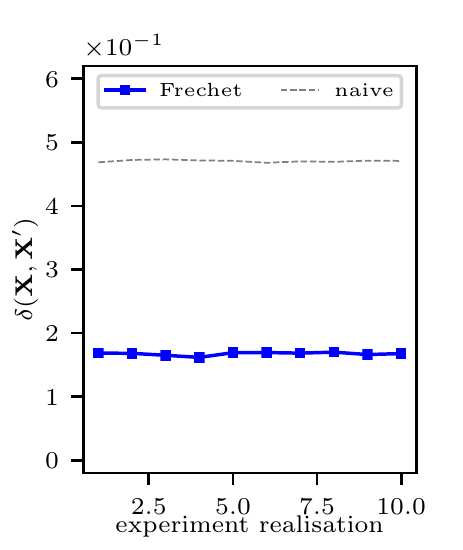}}
    \subfigure[Toeplitz correlation matrix with $\rho = 
    0.95$.\label{fig::ftoep}]{\includegraphics{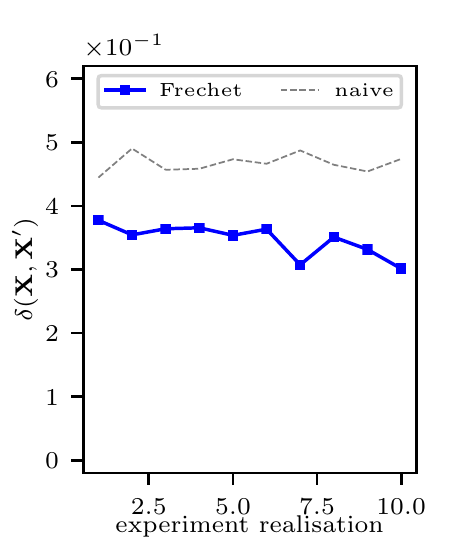}}

    
    \subfigure[Random correlation matrix. 
    \label{fig::arand}]{\includegraphics{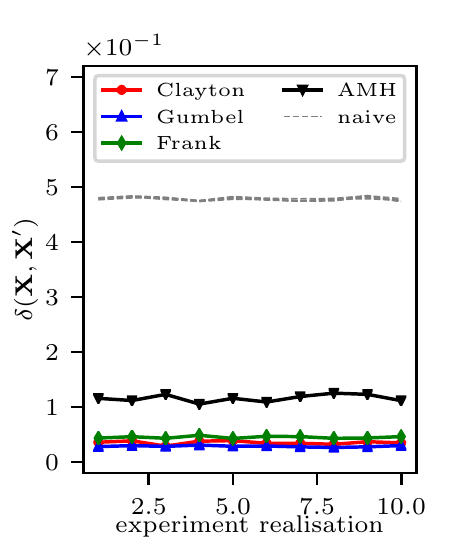}}
    \subfigure[Constant noised correlation matrix with $\alpha 
    =0.5$.\label{fig::acn}]{\includegraphics{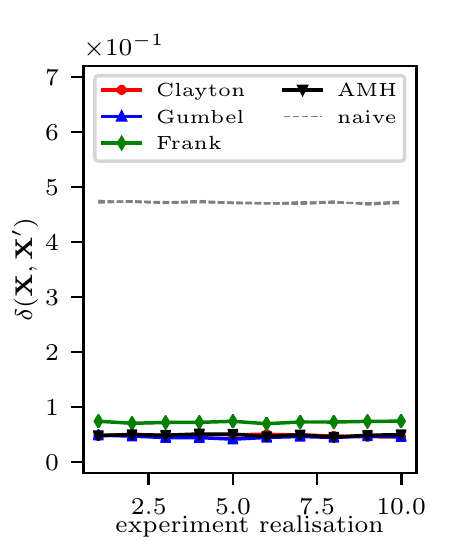}}
    \subfigure[Toeplitz correlation matrix with $\rho = 
    0.95$.\label{fig::atoep}]{\includegraphics{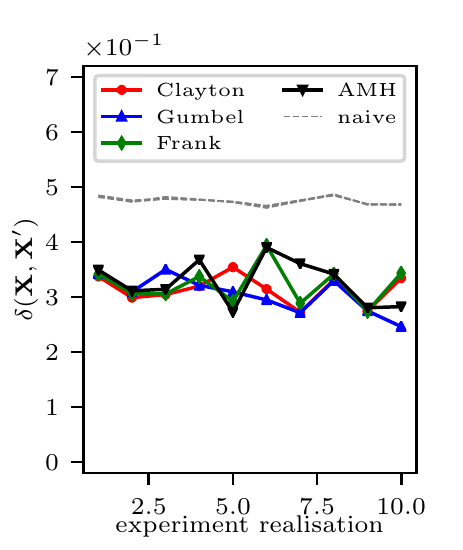}}
    
    \caption{The measure of the difference of the correlation matrix due to 
        data transformation with $k = 
        10$ and $n = 75$ for the $t$-Student copula \subref{fig::trand}, 
        \subref{fig::tcn} 
        and \subref{fig::ttoep}, the Fr\'echet copula 
        \subref{fig::frand}, \subref{fig::fcn} and 
        \subref{fig::ftoep}, and Archimedean not nested copulas 
        \subref{fig::arand}, \subref{fig::acn} and \subref{fig::atoep}.}
    
\end{figure}


In Figures~\ref{fig::trand}-\subref{fig::ttoep} we present $\delta(\X, \X')$ 
values for 
the $t$-Student copula, different $\nu$ parameters and different 
correlation matrices. As proposed in Subsection~\ref{sec::tstudentsubcop}
the higher the $\nu$ parameter value the lower $\delta(\X, \X')$.
What is important the $\delta(\X, \X')$ is low in comparison with the naive 
approach regardless the correlation matrix.


In Figures~\ref{fig::frand}-\subref{fig::ftoep} we present $\delta(\X, \X')$ 
values for 
the Fr\'echet copula and different correlation matrices. Results are much worse 
than in the $t$-Student copula. They are still on the comparable level in the 
random case, see Figure~\ref{fig::frand}, where the correlation between 
marginals is high. Here the little information about the correlation between 
changed and unchanged subset of marginals is lost during the sampling procedure 
in Algorithm~\ref{alg::unif2frech}. For the Toeplitz case, see 
Figure~\ref{fig::ftoep}, we have poor results almost as bad as the naive 
algorithm outcome.


\begin{figure}
	\subfigure[Random correlation, Gumbel copula. 
	\label{fig::ngrand}]{\includegraphics{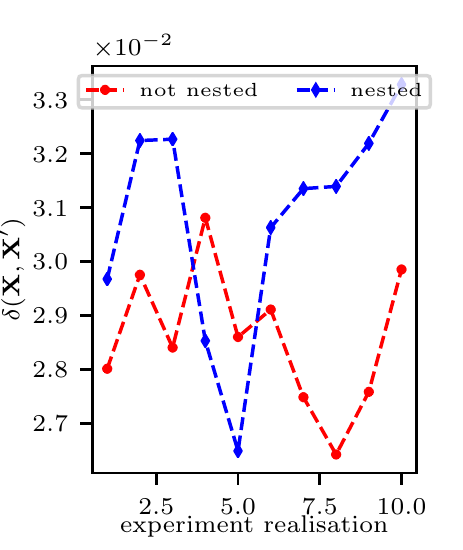}}
	\subfigure[Constant noised correlation $\alpha 
	=0.5$, Gumbel 
	copula.\label{fig::ngcn}]{\includegraphics{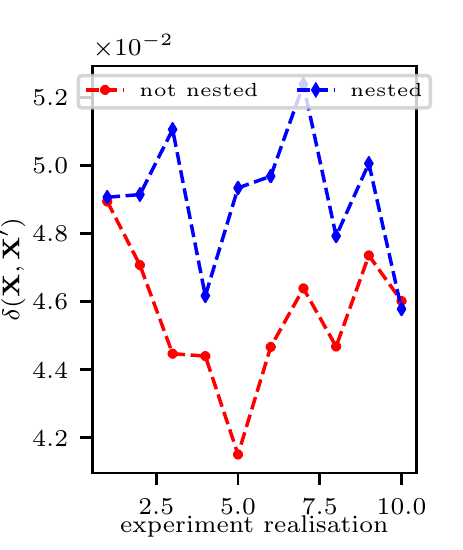}}
	\subfigure[Toeplitz correlation $\rho = 
	0.95$, Gumbel 
	copula.\label{fig::ngtoep}]{\includegraphics{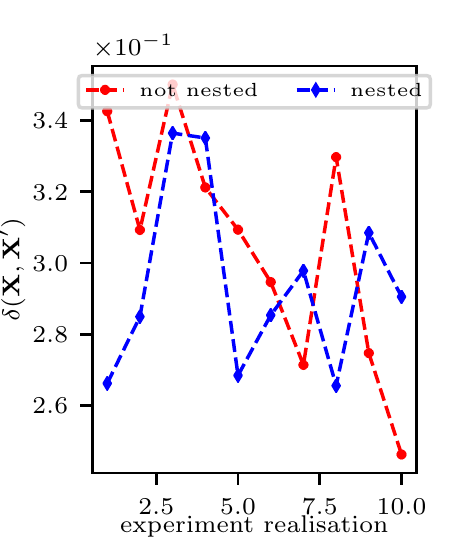}} 
	\\
		\subfigure[Random correlation, Frank copula. 
	\label{fig::nfrand}]{\includegraphics{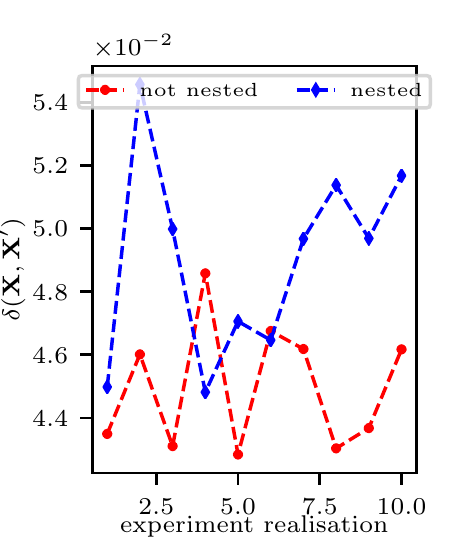}}
	\subfigure[Constant noised correlation $\alpha 
	=0.5$, Frank 
	copula.\label{fig::nfcn}]{\includegraphics{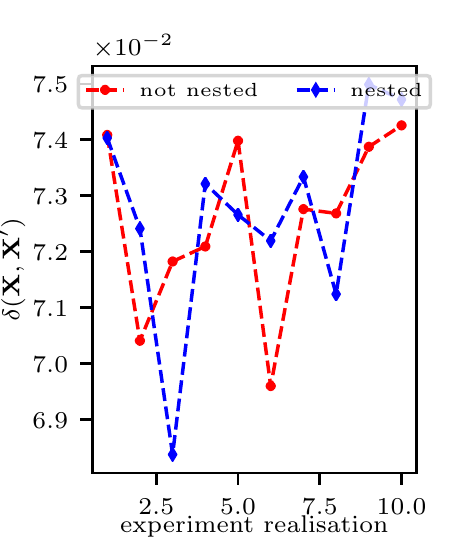}}
	\subfigure[Toeplitz correlation $\rho = 
	0.95$, Frank 
	copula.\label{fig::nftoep}]{\includegraphics{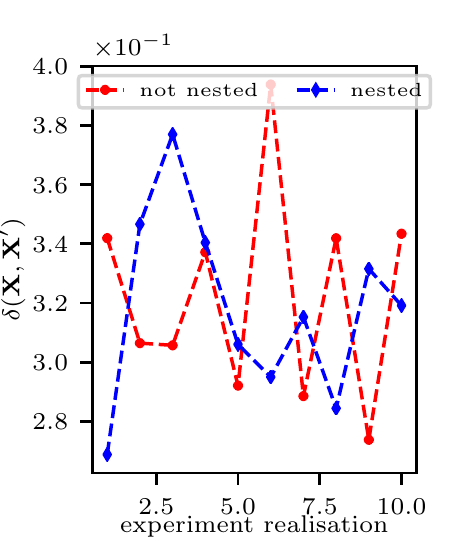}}
	\caption{The measure of the difference of the correlation matrix due to 
		data transformation. The comparison of nested nod not-nested 
		Archimedean copulas, mind different scales on y 
		axis.}\label{fig::nested}
\end{figure}

In Figures~\ref{fig::arand}-\subref{fig::atoep} we present $\delta(\X, \X')$ 
values for Archimedean not-nested copulas. In the case of high correlations the 
outcome of the AMH copula is rather poor since this copula has a limit for the 
Persons coefficient equal to $0.5$. Further the sampling of the Frank and AMH 
copula included discrete inverse Laplace-Stieltjes transform of the copula 
generator.  
Hence the function from Eq.~\eqref{eq::genls} is not strictly increasing 
and may loose some information about the cross-correlation between subsets 
$\mathbf{r}$ and $(1:n)\setminus \mathbf{r}$. For the Toeplitz matrix, see 
Figure~\ref{fig::atoep} we have poor results similar to those of the Fr\'echet 
copula. Hence for such correlation matrix the transformation discussed in 
Algorithm~\ref{alg::core} is ineffective while concerning the preservation of 
the information between the changed and unchanged data subset.

In Figure~\ref{fig::nested} we compare nested and not-nested Archimedean 
copulas on the example of the Gumbel and the Frank ones. In general, outcomes 
are similar. This is due to two competing issues. On the one hand, nesting 
allows for non-uniform correlation matrix inside changed data subset. Such 
non-uniform correlation is determined by means of 
Algorithm~\ref{alg::corfrag}.  On the other hand, however, as discussed in 
Subsection~\ref{sec::archsamp} the transfer of information about the 
correlation between changed and unchanged subsets of marginals through 
Algorithm~\ref{alg::nestedarchgen} (nested case) may be less efficient than 
through Algorithm~\ref{alg:arch_sampler} (not nested case). This is due to the 
fact that Algorithm~\ref{alg:arch_sampler} is simpler than 
Algorithm~\ref{alg::nestedarchgen}. Nevertheless, the nested case is more 
interesting since marginals can be grouped inside a changed subset.

Concluding the $t$-Student copula's algorithm is effective for all investigated 
here correlations matrices, especially for large $\nu$. For other copulas we 
use Algorithm~\ref{alg::core} that requires correlation matrices with rather 
moderate variations in elements values, as in Figure~\ref{fig::rand_cor} and 
Figure~\ref{fig::c_n} for the information preserving transformation. Given such 
correlations, Clayton and Gumbel copulas are comparable with the $t$-Student 
one and even may outperform the second for some values of its parameter $\nu$.  
Apart from this, the AMH copula requires additionally low correlations (due to 
the parameter's limitations), while Frank and Fr\'echet copulas require rather 
high correlations. Hence given the random correlation matrix, as in 
Figure~\ref{fig::rand_cor}, all copulas apart from the AMH have their optimal 
results. It is why such a correlation matrix will be used for experiments in 
Section~\ref{cha::cumml}. Finally the performance of nested and not-nested 
Archimedean copulas are similar, but the first case is more interesting.

Finally for the graphical presentation on results in Figure~\ref{fig::g} we 
present univariate and bivariate histograms of $\X \sim \NN(0, \mathbf{R})$ 
with high cross-correlations. In Figures~\ref{fig::archandg} we present 
similarly univariate and bivariate histograms of transformed data $\X'$ in such 
a way, that marginals number $1$ and $2$ have been changed employing the 
Clayton copula, while marginals number $3$ and $4$ remind unchanged. Observe 
the intermediate case of marginals $1$ versus $3$ and $2$ versus $4$. Given 
such insight into multivariate non-Gaussian distributions we can move in next 
chapter to higher order multivariate statistics and higher order multivariate 
cumulants that can be used to measure higher-order correlations introduced by 
non-Gaussian distributions as in Figures~\ref{fig::ag12}~\ref{fig::ag13} 
and~\ref{fig::ag24}.

\begin{figure}[t!]
	\subfigure[Marginals $1$ and 
	$2$\label{fig::g12}]{\includegraphics[width=0.47\textwidth]{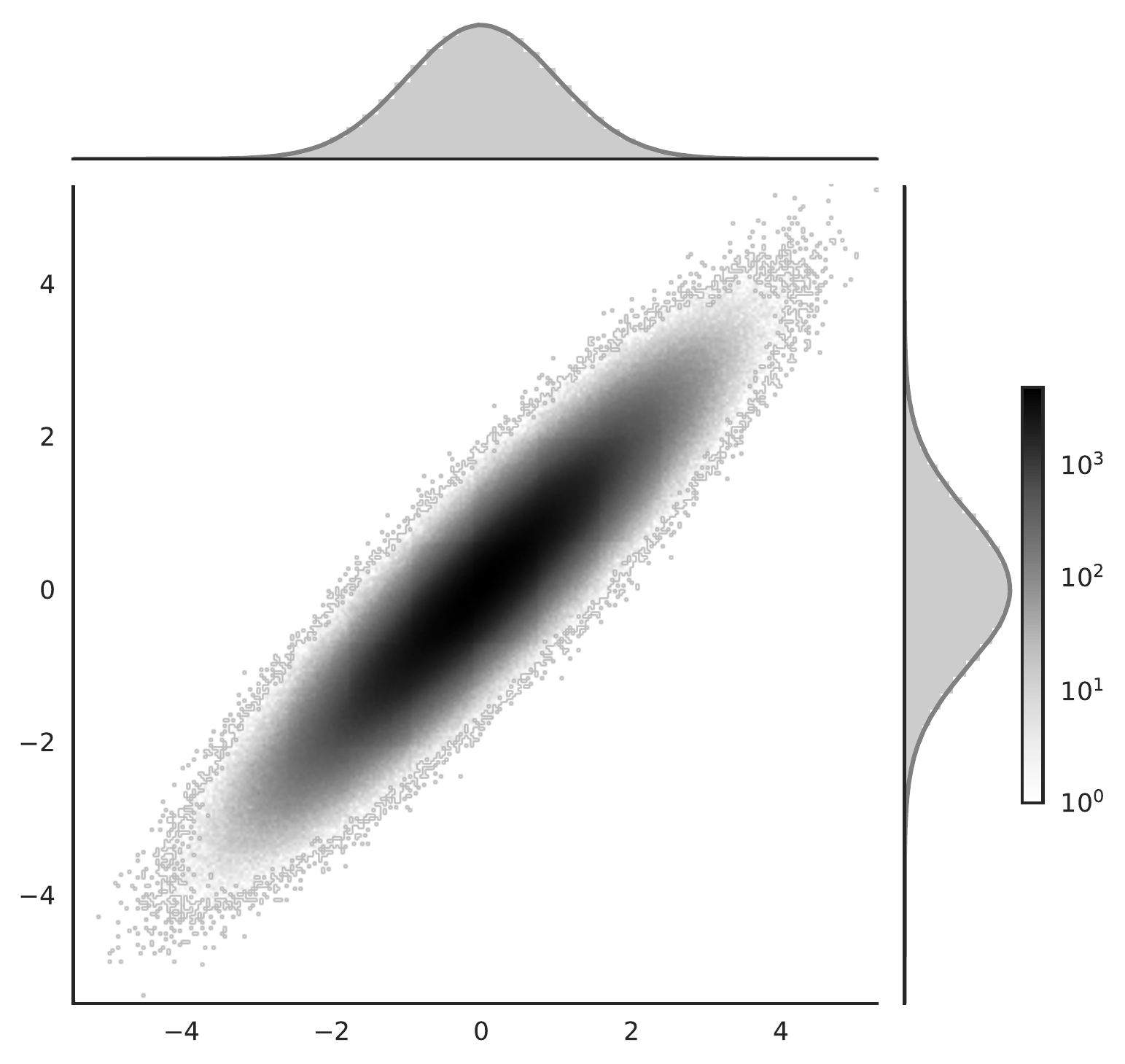}}
	\subfigure[Marginals $3$ and 
	$4$\label{fig::g34}]{\includegraphics[width=0.47\textwidth]{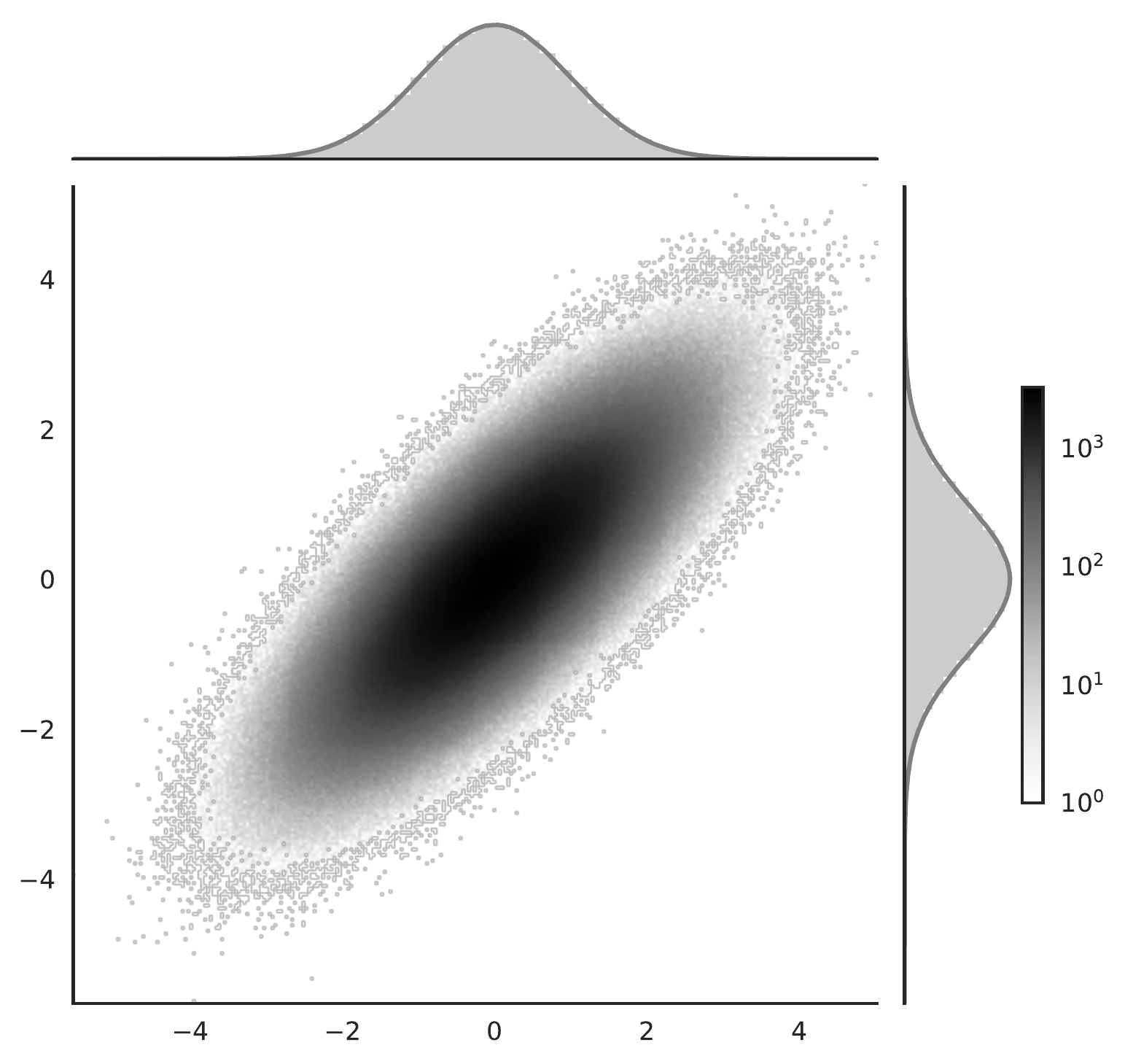}}
	\\ 
	\subfigure[Marginals $1$ and 
	$3$\label{fig::g13}]{\includegraphics[width=0.47\textwidth]{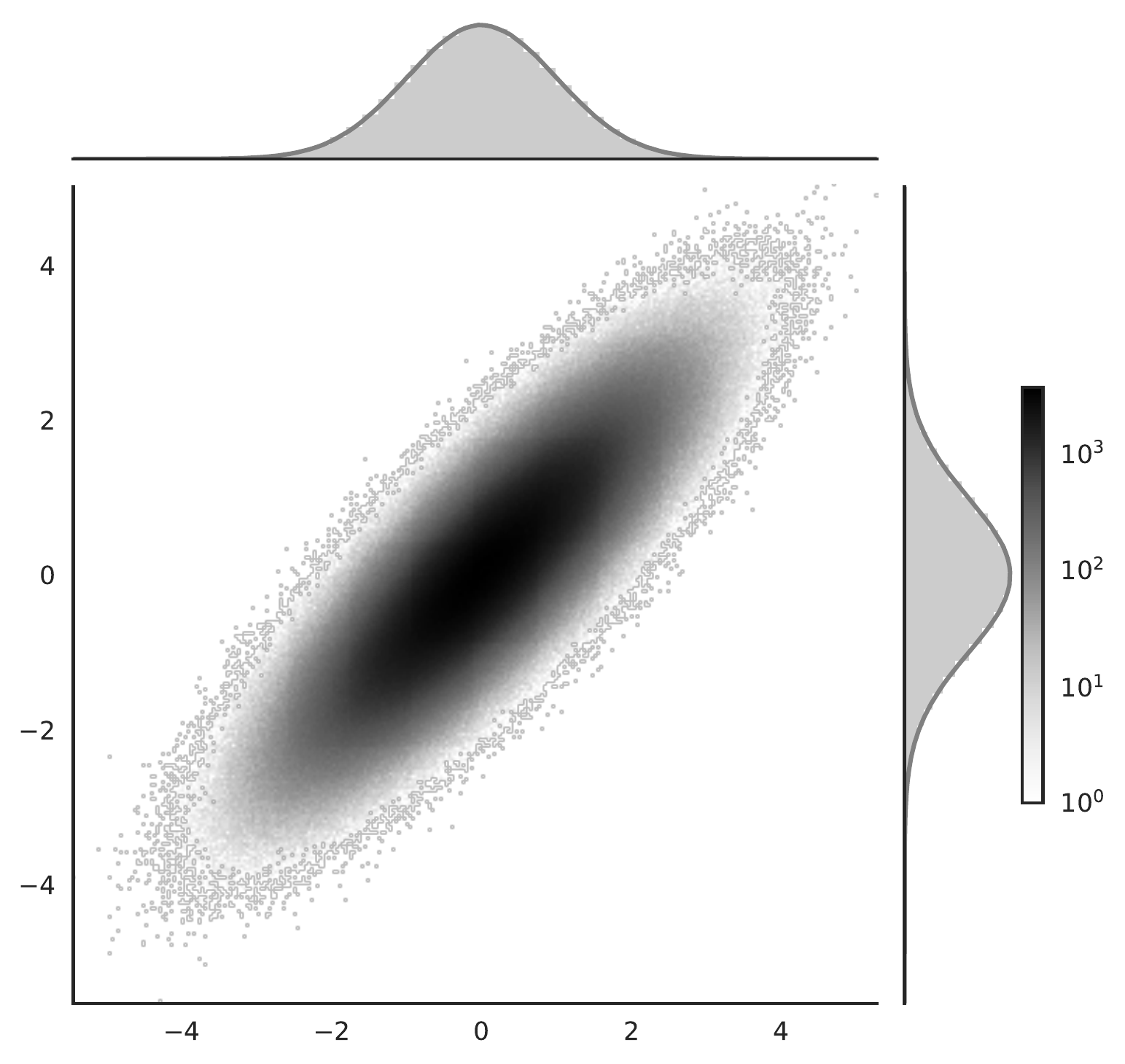}}
	\subfigure[Marginals $2$ and 
	$4$\label{fig::g24}]{\includegraphics[width=0.47\textwidth]{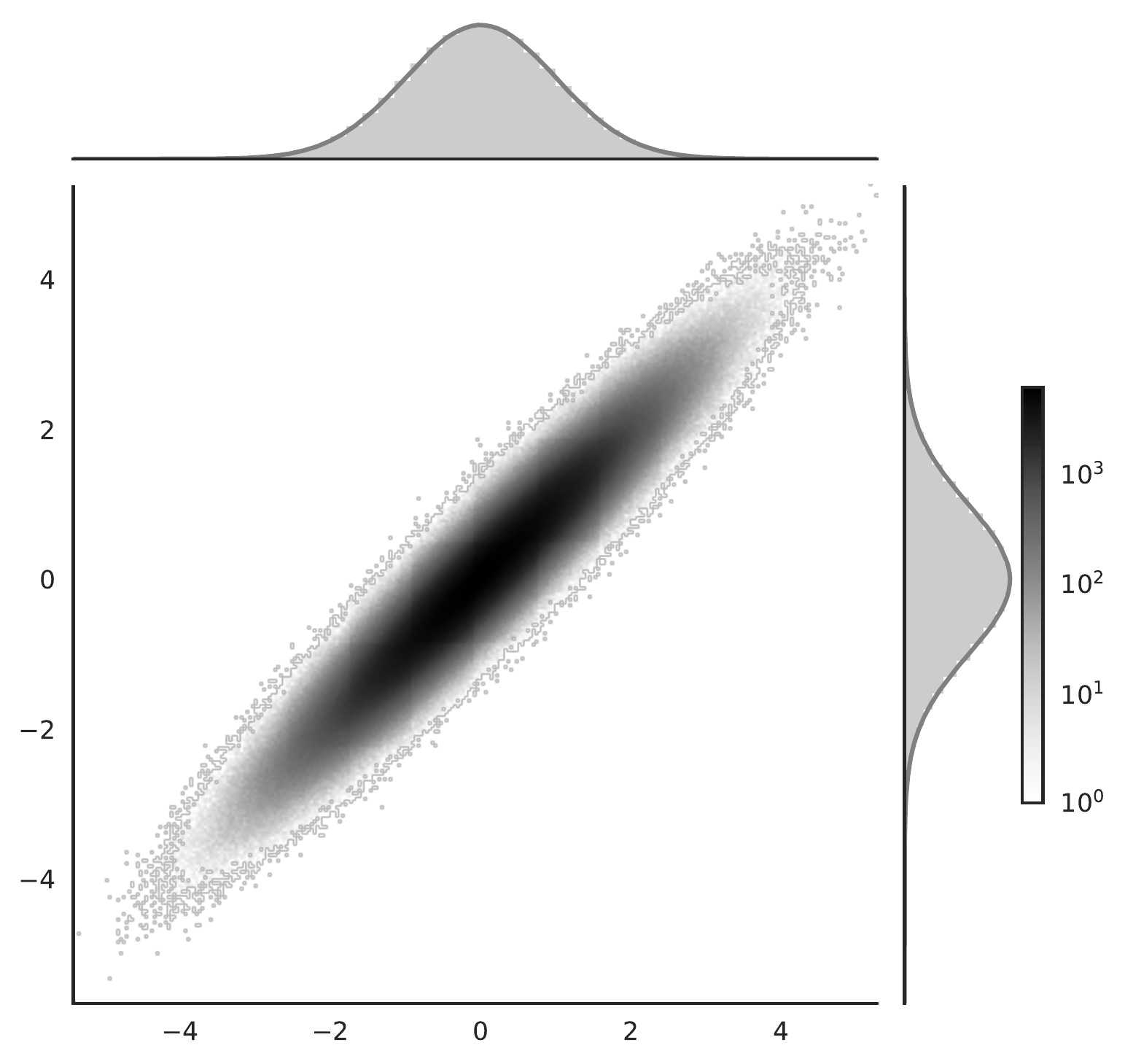}}
	\caption{Univariate and bivariate 
		histograms of the original data $\X$}\label{fig::g}
\end{figure}

\begin{figure}
	\subfigure[Marginals $1$ and 
	$2$\label{fig::ag12}]{\includegraphics[width=0.47\textwidth]{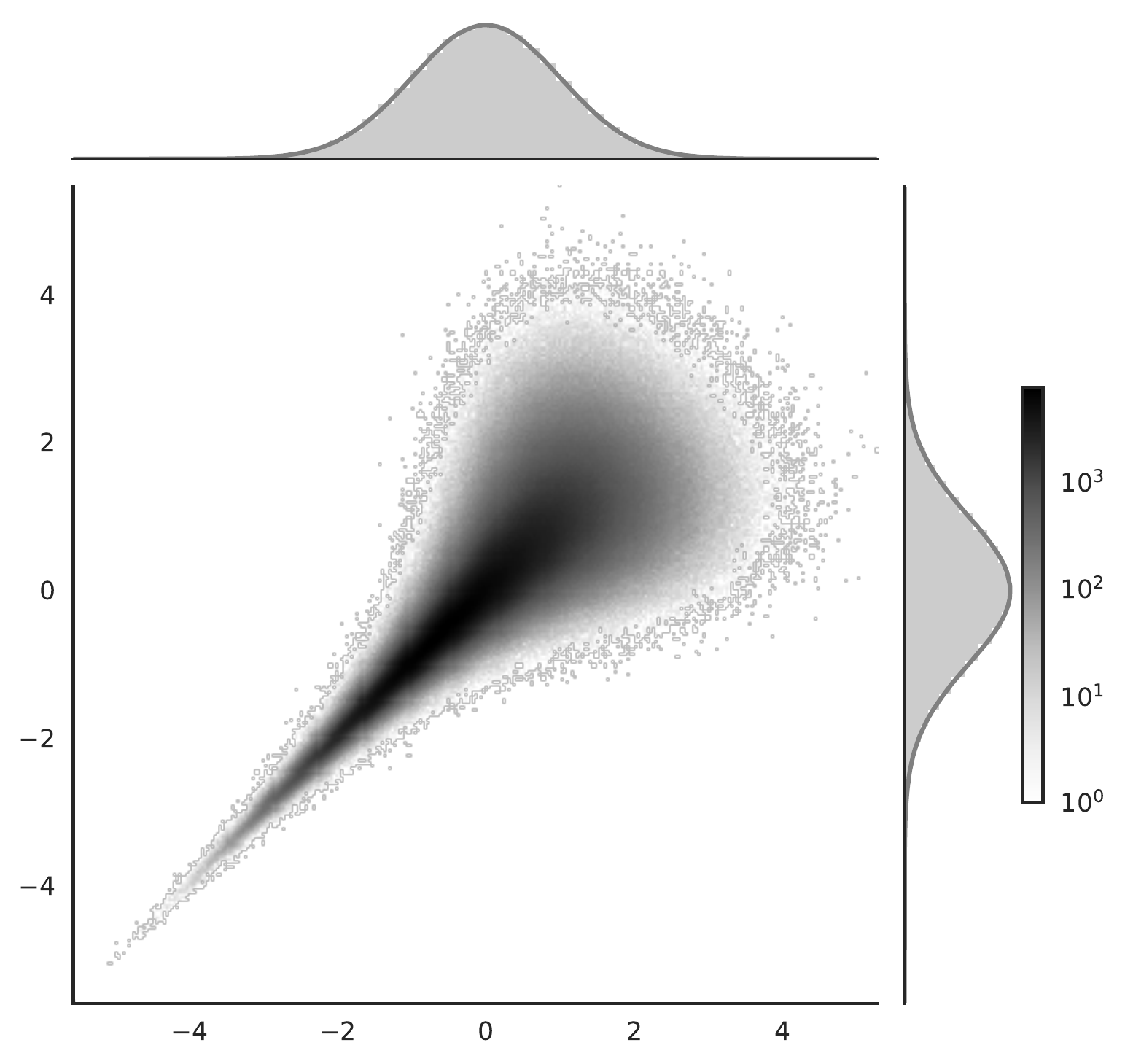}}
	\subfigure[Marginals $3$ and 
	$4$\label{fig::ag34}]{\includegraphics[width=0.47\textwidth]{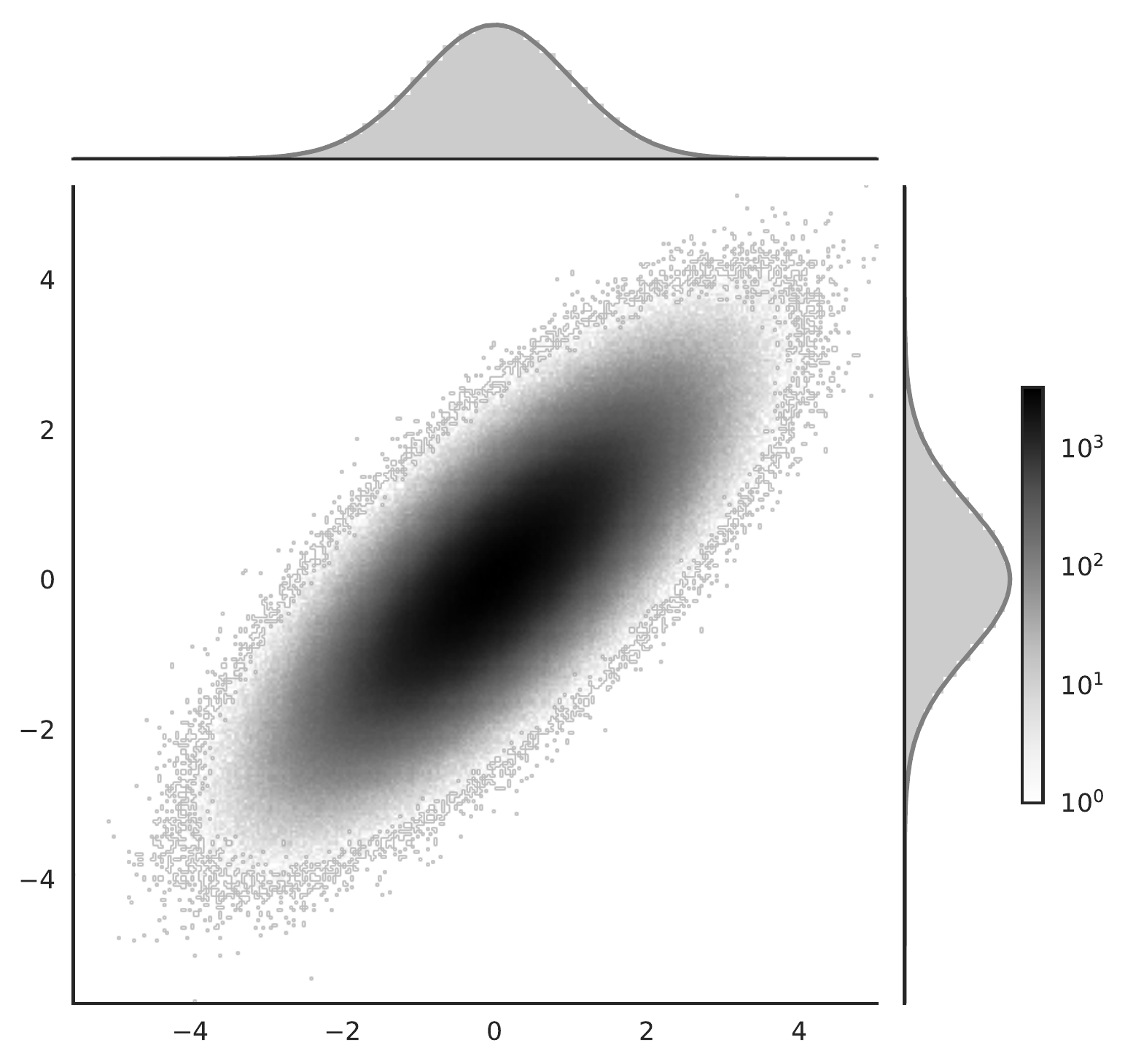}}
	\\ 
	\subfigure[Marginals $1$ and 
	$3$\label{fig::ag13}]{\includegraphics[width=0.47\textwidth]{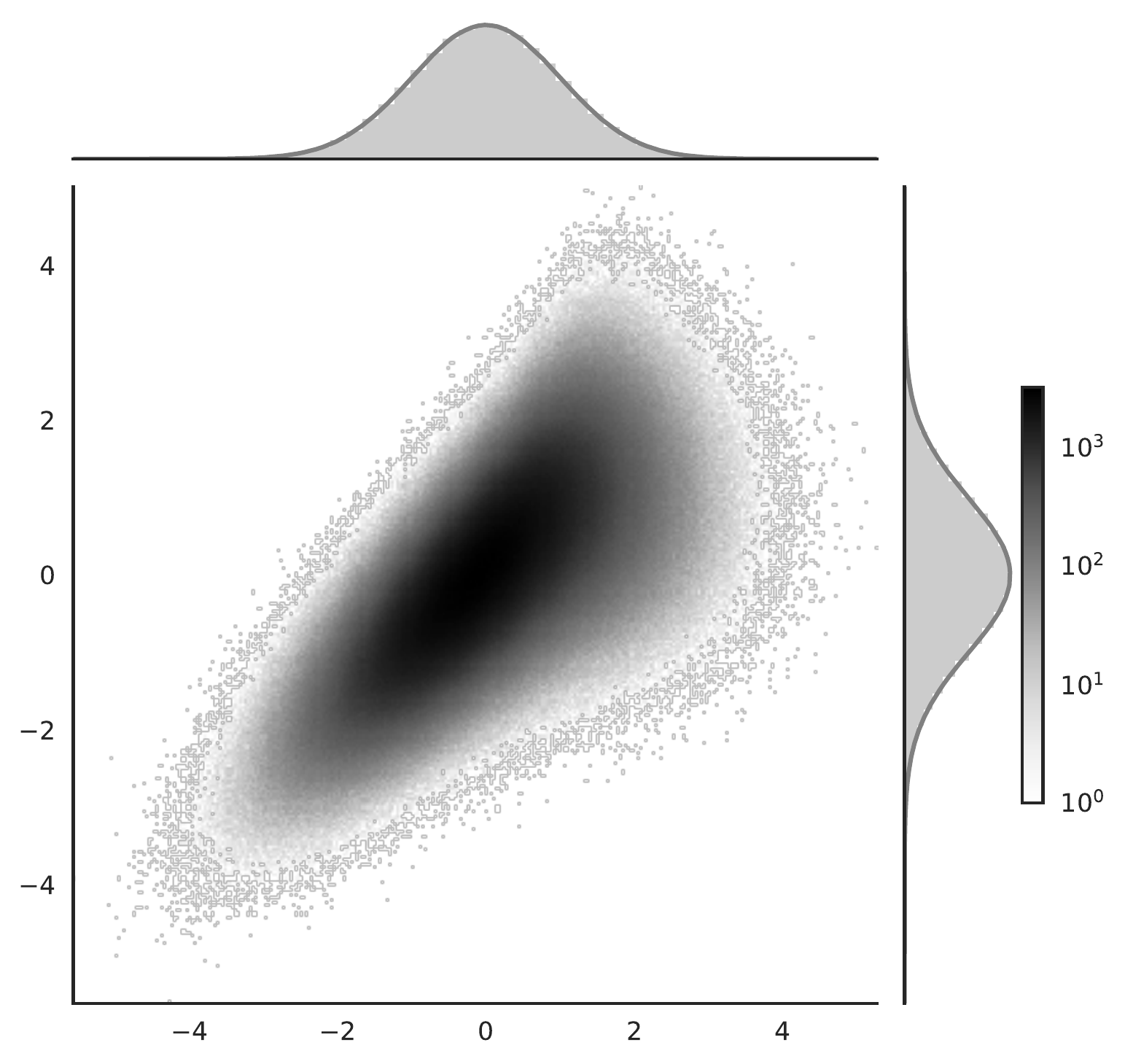}}
		\subfigure[Marginals $2$ and 
	$4$\label{fig::ag24}]{\includegraphics[width=0.47\textwidth]{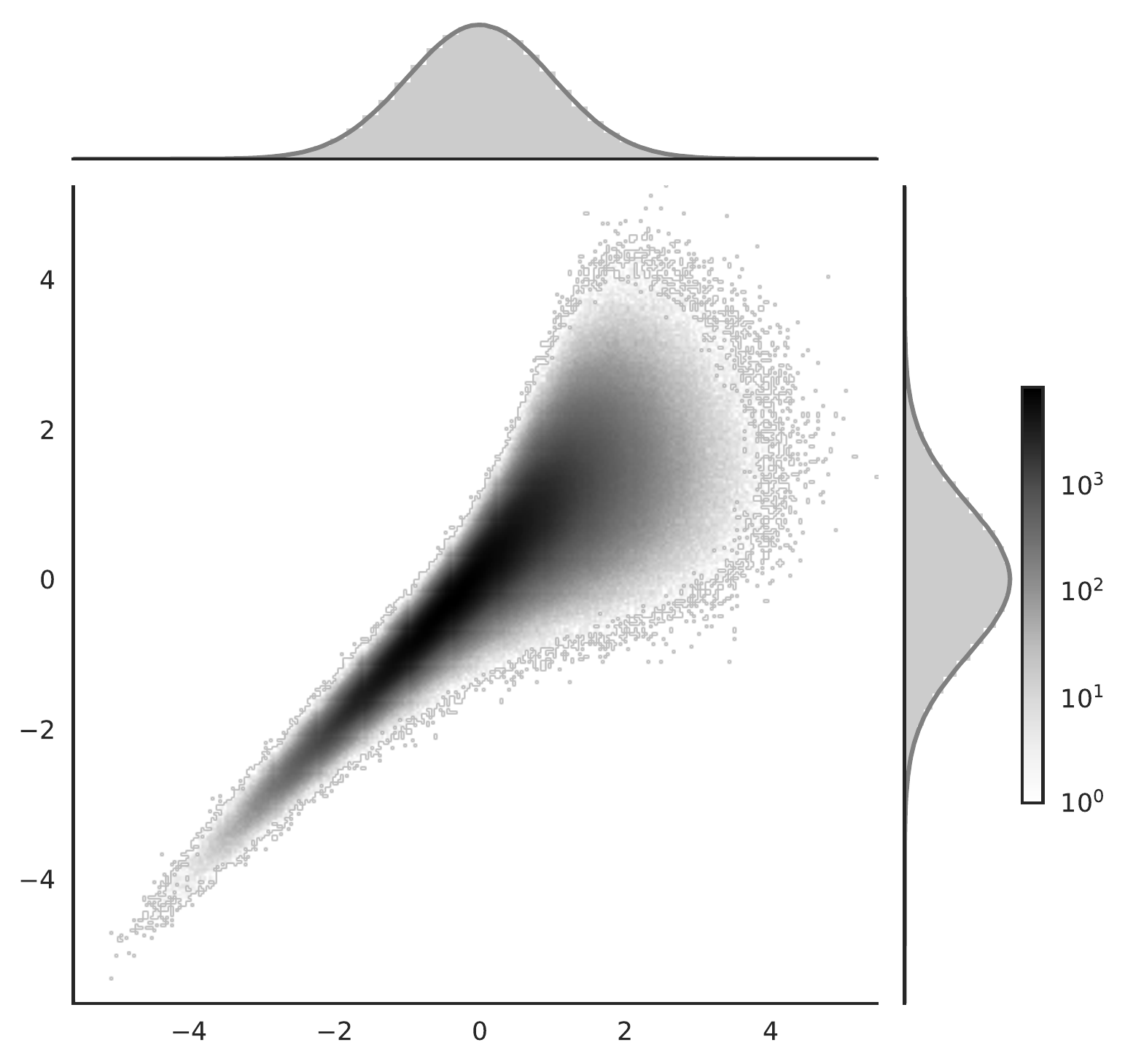}}
	\caption{Univariate and bivariate 
		histograms of transformed data $\X'$ by means of the Clayton 
		copula.}\label{fig::archandg}
\end{figure}

\chapter{Higher order statistics of multivariate data}\label{ch::hostats}

In this chapter we discuss higher order statistics of multivariate data, 
\ie~statistics of order $d$, where $d > 2$. Classical example of univariate 
higher 
order statistics are asymmetry ($d=3$) and kurtosis ($d=4$). These are 
normalised univariate higher order cumulants.  
In this chapter, we will concentrate on multivariate higher order cumulants, 
applicable 
to analyse non-Gaussian distributed multivariate data. For their 
practical application in multivariate non-Gaussian data analysis refer to  
signals analysis, for example in 
signal filtering~\cite{geng2011research, latimer2003cumulant}, finding the 
direction of received signals
\cite{porat1991direction, moulines1991second, cardoso1995asymptotic,
	liang2009joint} and signal auto-correlation
analysis~\cite{manolakos2000systematic}. Furthermore, those cumulants are used 
in
hyper-spectral image analysis~\cite{geng2015joint}, financial data
analysis~\cite{arismendi2014monte,jondeau2015moment} and neuroimage
analysis~\cite{birot2011localization, becker2014eeg}. Additionally univariate 
higher order cumulants are used in quantum noise 
investigation~\cite{gabelli2009high}, and computer network traffic analysis 
~\cite{bai2012modeling, salagean2010anomaly}. We focus on the 
multivariate data case, which can be easily simplified to the univariate~case.

Let us start first with some technical definitions including 
tensors~\cite{kolda2009tensor} and 
their super-symmetry~\cite{schatz2014exploiting}, since we use
super-symmetric tensors to store 
higher order cumulants. In our approach, the $d$-dimensional tensor is simply 
the 
$d$-dimensional array.
\begin{definition}\label{def::tensors}
Given the multi-index $\ii = (i_1, \ldots, i_d)$, where $i_k \in 
(1:I_k)$, see Table~\ref{tab::symbols} for notation explanation, and $|\ii| = 
d$, the $d$ mode tensor is the following 
$d$-dimensional array of data \cite{kolda2009tensor}
		\begin{equation}\label{equ:tensor}
		\mathcal{T}=\{t_{\ii}\}_{\ii=1}^{I_1,\ldots ,I_d}
		\in \R^{I_1 \times \ldots \times I_d} 
		\end{equation} 
 where the $k$\textsuperscript{th} element of the multi-index - $i_k$ 
 corresponds to the mode $k$. 
\end{definition}
The $3$ mode tensor is simply the $3$ dimensional data box, see 
\cite{kolda2009tensor} and Figure $1.1$ within.
\begin{definition} 
Let $\pi$ be a permutation of the multi-index $\ii$, for $d=3$ one of the 
permutations is
\begin{equation}
	\pi(i_1, i_2, i_3) = (i_2, i_1, i_3).
\end{equation}
Let $\Pi_d$ be a set of all permutations of the set $(1,2,\ldots d)$.
Tensor $\mathcal{T} \in \R^{\underbrace{n \times \dots \times n}_{d}}$ 
is 
	super-symmetric \cite{schatz2014exploiting} iff
\begin{equation} \forall_{\pi \in \Pi_d} \ \forall_{\ii} \ t_{\ii} 
	= t_{\pi(\ii)}.
\end{equation}
To be consistent with \cite{domino2018efficient} we use the following notation 
for the super-symmetric tensor
\begin{equation}
	\mathcal{T} \in \R^{[n, d]}.
\end{equation}
\end{definition}

\begin{definition}\label{def::cumsdef}
Following \cite{kendall1946advanced,lukacs1970characteristics}, let 
$\mathbf{f}(\mathbf{v}): \R^n \rightarrow \R$ be the
continuous $n$-variate Probability Density Function (PDF), its characteristic 
function is given by
\begin{equation}\label{eq::char}
	\varphi :\mathbb{R}^{n} \to 
	\mathbb{C} \ \ \  \varphi(\vv) = \int_{\R^n} 
	\exp(	\mathrm{i}\vv^{\intercal} \x) \mathbf{f}(\x) d \x,
\end{equation} 
where $\mathrm{i}$ is an imaginary unit. The $d$\textsuperscript{th} cumulant 
element of $\mathbf{f}$, indexed by $i_1, \ldots, i_d$ is given by
\begin{equation}\label{eq::cumdef}
	c_{i_1, \ldots, i_d} = (-\mathrm{i})^d \frac{\partial^d}{\partial 
		v_{i_1}, 
		\ldots, \partial v_{i_d}} \log(\varphi(\vv)) 
	\bigg{|}_{\vv = 0}.
\end{equation}
The element
$c_{i_1, \ldots, i_d}$ may by considered as a part of the $d$-mode 
super-symmetric tensor since differentiation in Eq.~\eqref{eq::cumdef} is 
commutative. Hereafter, for the $d$\textsuperscript{th} order 
cumulant's tensor we use the following notation
\begin{equation}
\CC_{d} \in \R^{[n,d]}.
\end{equation}
\end{definition}
Remark, that sometimes in literature one uses the cumulant generation function
\begin{equation}
\kappa :\mathbb{R}^{n} \to 
\mathbb{C} \ \  \kappa(\vv) = \log(\varphi(\vv)).
\end{equation}

\begin{remark}\label{rem::zeromeancums}
	It is easy to show from Eq.~\eqref{eq::char}, that if we use 
	$\mathbf{f}'(\x) = \mathbf{f}(\x - 
	\mathbf{a})$ where 
	$\mathbf{a} \in \R^n$ is a vector of constants, corresponding 
	characteristic function 
	 $\varphi'$ is
	\begin{equation} 
	\varphi'(\vv) = \varphi(\vv) \exp\left(-\mathrm{i} \vv^{\intercal} 
	\mathbf{a}\right)
	\end{equation}
	yielding
	\begin{equation}\label{eq::kammaprim}
		\kappa'(\vv) = \kappa(\vv) -\mathrm{i} \vv^{\intercal} 
		\mathbf{a}.
	\end{equation}
	What is important the term $-\mathrm{i} \vv^{\intercal} 
	\mathbf{a}$ would 
	vanish after double differentiation over $\vv$ in 
	Eq.~\eqref{eq::cumdef}. 
	Hence cumulants tensors of order $d \geq 2$ of $\mathbf{f}$ and 
	$\mathbf{f}'$ would be the same. We can note it formally by $c_{\ii} = 
	c'_{\ii}$ if $|\ii| \geq 2$. Hereafter as $\mathbf{f}'$ we will use 
	the zero mean frequency 
	distribution:
	$\mathbf{f}'(\x) = \mathbf{f}(\x - \mu)$, where $\mu = [\mu_1, \ldots, 
	\mu_n]$ is a mean vector of $\mathbf{f}$. Such approach will make the 
	cumulant 
	computation simpler.
\end{remark}

\begin{definition}\label{def::momdef}
	Analogically to Definition~\ref{def::cumsdef} the $d$\textsuperscript{th} 
	moment element is
	\begin{equation}\label{eq::momdiff}
	m_{i_1, \ldots, i_d} = (-\mathrm{i})^d \frac{\partial^d}{\partial 
		v_{i_1}, 
		\ldots, \partial v_{i_d}} \varphi (\vv)
	\Big{|}_{\vv = 0},
	\end{equation}
	for $\varphi (\vv)$ see Eq.~\eqref{eq::char}. As in the cumulant's 
	case, the above is an 
	element of the super-symmetric moment tensor
	\begin{equation}
	\MM_{d} \in \R^{[n,d]}.
	\end{equation}
	By performing differentiation of Eq.~\ref{eq::momdiff} we have
	\begin{equation}\label{eq::momelsf}
		m_{i_1, \ldots, i_d} = \int_{\R^n} x_{i_1} \cdot \ldots \cdot x_{i_d} 
		\mathbf{f}(\x) d \x.
	\end{equation}
\end{definition}

Having introduced cumulant's and moment's tensors from $\mathbf{f}(\x)$ that is 
the PDF function, 
observe from 
Eq.~\eqref{eq::co} that $\mathbf{f}(\x)$ can be 
split
onto the copula density $\mathbf{c}$ and
univariate marginal PDFs $f_i$. The later 
correspond with univariate marginal
CDFs $F_i$. From Eq.~\eqref{eq::momelsf} we have
\begin{equation}\label{eq::cumsmoms}
	m_{i_1, \ldots, i_d} = \int_{\R^n} x_{i_1} \cdot \ldots \cdot x_{i_d} \cdot 
	\mathbf{c}\left(F_1(x_1), \ldots, F_n(x_n)\right) \prod_{i = 
		1}^n f_i(x_i) d x_i.
\end{equation} 
Concluding, we have an impact on the moment's tensors elements 
both from a copula and from marginal distributions, what is rather complicated 
and will be discussed further in this book.

\begin{remark}\label{rem::univstats}
	We can shown that the super-diagonal element of $\MM_d$,~\ie~such 
	indexed by $i_1 = i_2 = \ldots = i_d = i$, corresponds to 
	the
	$d$\textsuperscript{th} moment of the $i$\textsuperscript{th} marginal 
	distribution and is copula independent. For the proof, let us start with
	\begin{equation}\label{eq::momcum}
	m_{i, \ldots, i} =\int_{\R^n} (x_{i})^d \mathbf{c}\left(F_1(x_1), \ldots, 
	F_i(x_i), \ldots, 
	F_n(x_n)\right) \prod_{i' = 1}^n f_{i'}(x_{i'}) d x_{i'}.
	\end{equation}
	 Using $ \ u_{i'} = F_{i'}(x_{i'})$ and $d u_{i'} = f(x_{i'}) dx_{i'}$, 
	where $u_{i'} \in [0,1]$, we can perform following integrations 
	\begin{equation}\label{eq::copints}
	\begin{split}
	\int_{[0,1]^{n-1}} &\mathbf{c}\left(u_1, \ldots, u_{i-1}, u_i, u_{i+1}, 
	\ldots u_n\right) d u_1\cdots 
	du_{i-1} du_{i+1} \cdots du_n \\ &= \frac{\partial \C(1,\ldots,1,u_i, 1, 
		\ldots, 1)}{\partial u_i} = \frac{\partial u_i}{\partial u_i} = 1.
	\end{split}
	\end{equation}
	We use here the relation between a copula density and a copula in 
	Eq.~\eqref{eq::subcop}, and the point $2$ of Definition~\ref{d::copdef}. 
	Next, using $x_i = F_i^{-1}(u_i)$ we have 
	\begin{equation}\label{eq::momwocums}
		m_{i, \ldots, i} =\int_{0}^1 \left(F_i^{-1}(u_{i})\right)^d du_i,
	\end{equation}
	and converting back to original marginal variable:
	\begin{equation}
	m_{i, \ldots i} =\int_{\R} (x_i)^d f_i(x_i) dx_i.
	\end{equation}
\end{remark}

\section{Cumulants of multivariate Gaussian distribution}\label{sec::cumsg}

In this section we conclude the meaning of higher order cumulants tensors in 
non-Gaussian data analysis, by showing that they are zero if data are 
multivariate Gaussian distributed. For this purpose, consider the following 
quadratic cumulant generation function
\begin{equation}\label{eq::h1}
\kappa(\vv) = \mathrm{i}\vv^{\intercal}\mu -\frac{1}{2}\vv^{\intercal} 
\SSSS \vv.
\end{equation}
where $\mu = [\mu_1, \ldots, \mu_n]$.
The first cumulant's element would be
\begin{equation}\label{eq:c1n}
	c_{i} = -\mathrm{i}\frac{\partial}{\partial v_{i}}
    \kappa(\vv) \Big|_{\vv = 0} = \mu_i,
\end{equation}
and the second cumulant's element
\begin{equation}\label{eq:c2n}
	 c_{i_1, i_2} = \frac{\partial^2}{\partial v_{i_1} \partial 
 	v_{i_2}} \kappa(\vv) \Big|_{\vv = 0} = s_{i_1, i_2},
\end{equation}
where $s_{i_1, i_2}$ is an element of $\SSSS$.
Obviously as $\kappa$ is quadratic in $\vv$, for $d > 2$ we have
\begin{equation}\label{eq:cdn}
	\forall_{d > 2} \ \  c_{i_1,\ldots, i_d} = \frac{\partial^d}{\partial 
	v_{i_1} 
	\ldots \partial 
	v_{i_d}} \kappa(\vv) \Big|_{\vv = 0} = 0.
\end{equation}
From $\kappa(\vv)$ given in Eq.~\eqref{eq::h1}, we can reconstruct the 
following characteristic function
\begin{equation}
\varphi(\vv) = \exp \left( \mathrm{i}\vv^{\intercal}\mu 
-\frac{1}{2}\vv^{\intercal} 
\SSSS \vv \right).
\end{equation}
This is a characteristic function of multivariate 
Gaussian distribution, see 
Eq.~\eqref{eq::gaussmulti} and~\cite{gut2009transforms}.

Oppositely, if the multivariate frequency distribution is non--Gaussian its 
characteristic function will have different form and higher than quadratic 
terms may appear yielding non-zero higher order cumulants. It is why the strong 
suggestion appears, that we can 
use higher order cumulants to extract information about the frequency 
distribution that is non-Gaussian. Given these, higher order cumulants have 
specific 
meaning as they carries information about the divergence form Gaussian model. 
And finally, higher order cumulants have an advantage over higher order 
moments, 
that are non-necessary non-zero for multivariate Gaussian distributed data.

\section{Tensors and tensor networks - quantum mechanics inspired 
tools}\label{sec::cumstn}

In this section we use the tensor and graphical notation to discuss relation 
between higher order cumulants and higher order moments. As such the approach 
is inspired on the quantum mechanics. 

\subsection{Moments tensors}\label{sec::momenttens}

Suppose we have the random vector $\XXX^{(n)}$, and its
$t$ realisations are stored in the matrix form $\X \in \R^{t \times 
n}$, here the $j$\textsuperscript{th} realisation is $\x_j = 
[x_{j,1}, \ldots, x_{j,n}]$. Suppose now, each such realisation is equal 
probable, what is a fair 
assumption for large $t$ - such that extreme events are sampled properly. Given 
the probability of each realisation as equal to $\frac{1}{t}$ moment's tensor's 
elements can be estimated as
way
\begin{equation}\label{eq::momel}
	m_{\ii}(\X) = \E(X_{i_1} \cdot \ldots \cdot X_{i_d}) = \frac{1}{t} 
	\sum_{j = 1}^t 
	\prod_{k=1}^d 
	x_{j, i_k}
\end{equation}
Here $X_i = [x_{1,i}, \ldots, x_{t,i}]^{\intercal}$ is a vector of all 
realisations of the $i$\textsuperscript{th} marginal and $\E$ is the 
expectation value operator. We can present Eq.~\eqref{eq::momel} using tensor 
operation and corresponding graphical notation.

\begin{definition}\label{def::tmmodemult}
Suppose we have $d$-mode tensor $\mathcal{T} \in \R^{t \times \ldots \times t}$ 
and matrix $\mathbf{A} \in \R^{n \times t}$. We defined the matrix times tensor
multiple in all modes \cite{kolda2009tensor},
\begin{equation}
	\R^{n \times \ldots \times n}  \ni \mathcal{T}' = \mathbf{A} 
	\times_{1,\ldots, d} \mathcal{T},
\end{equation} 
by the following multiple contraction
\begin{equation}
{t'}_{i_1, \ldots, i_d} = \sum_{j_1, \ldots, j_d} a_{i_1, j_1} \cdot \ldots 
\cdot a_{i_d,j_d} \cdot t_{j_1, 
	\ldots, j_d} .
\end{equation}
Obviously if $\mathcal{T}$ is super-symmetric, tensor
$\mathcal{T}'$ would be super-symmetric as well.
\end{definition}

\begin{definition}{\label{d::md1}}
	Let us define the identity tensor $\mathbb{1}_{(d)} \in \R^{[t, d]}$ as the 
	$n$'th mode tensor with $1$ on the super-diagonal~\cite{kolda2009tensor} 
	and zeros elsewhere
	\begin{equation}
	\left(\mathbb{1}_{(d)}\right)_{\ii} = \begin{cases} 1 \text{ if } i_1 = i_2 
	= \ldots = i_d \\ 0 \text{ otherwise}\end{cases}.
	\end{equation}
	For an exemplary $\mathbb{1}_{(3)}$ see Figure~\ref{fig:13}.
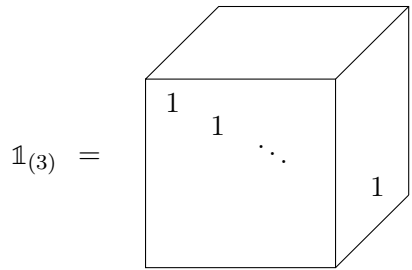
\begin{figure}[!h]
	\centering
	\begin{tikzpicture}
	\draw (2.5,0,0) coordinate (x) |- (0,2.5,0) coordinate [midway] (h) 
	coordinate 
	(y) 
	-- (0,2.5,2.5) coordinate (a) -- (0,0,2.5) coordinate (z) -- (2.5,0,2.5) 
	edge 
	(x) -- 
	(2.5,2.5,2.5) coordinate (v) edge (h)
	-- (a)  ;\node [label=45:1] at (-0.2,1.6,2) {};
	\node [label=45:1] at (0.2,1.1,1.5) {};
	\node [label=45:1] at (1.8,-0.2,0.2) {};
	\node [label=45:$\ddots$] at (0.6,0.5,1) {};
	\node [label=45:$\mathbb{1}_{(3)}$] at (-3,0,0) {};
	\node at (-1.7,0.5,0){$=$};
	\end{tikzpicture}
	\caption{The $\mathbb{1}_{(3)}$ example.}
	\label{fig:13}
\end{figure}
\end{definition}

\begin{remark}
	By taking $\R^{n \times t} \ni \mathbf{A} = \mathbf{X}^{\intercal}$ and 
	$\mathbb{1}_{(d)} \in \R^{[t, d]}$, Eq.~\eqref{eq::momel} can be 
	rewritten into the following form using tensor notation
	\begin{equation}\label{eq::momenttensornot}
	\MM_{d}(\X) = \frac{1}{t} \X^{\intercal} 
	\times_{1,\ldots, d} \1_{(d)}.
	\end{equation}
\end{remark}

\begin{definition}\label{def::matmatgraph} The matrix $\mathbf{A} \in \R^{n_a 
\times t}$ is the 
$2$ 
mode tensor. As such, it can be presented using 
	the graphic notation
	\begin{center}
		\begin{tikzpicture}
		\node[shape=rectangle,draw=black] (C) at (1,0) {$\mathbf{A}$};
		\node at (0.46,0.2) {\scriptsize{$(1)$}};
		\node at (1.6,0.2) {\scriptsize{$(2)$}};
		\path [-] (C) edge node[left] {} (0,0);
		\path [-] (C) edge node[right] {} (2,0);
		\end{tikzpicture}
	\end{center}
	where $(1)$ and $(2)$ enumerates its modes \cite{kolda2009tensor}. 
	Furthermore, 
	let $\mathbf{B} \in \R^{t \times n_b}$ be another matrix. Their 
	multiplication 
	\begin{equation}
		\R^{n_a \times n_b} \ni \mathbf{A}' = \mathbf{A} \mathbf{B}
	\end{equation} 
	is the contraction
	\begin{equation}\label{eq::simplecontract}
		a'_{i,j} = \sum_k a_{i,k} b_{k,j}
	\end{equation} 
	that can be represented by means of the graphic notation
	\begin{center}
		\begin{tikzpicture}
			\node[shape=rectangle,draw=black] (A) at (4,0) 
			{$\mathbf{A}$};
			\node[shape=rectangle,draw=black] (B) at (5.8,0) {$\mathbf{B}$};
			\node[shape=rectangle,draw=black] (C) at (1,0) {$\mathbf{A}'$};
			\node at (0.44,0.2) {\scriptsize{$(1)$}};
			\node at (1.6,0.2) {\scriptsize{$(2)$}};
			\node at (3.44,0.2) {\scriptsize{$(1)$}};
			\node at (4.55,0.2) {\scriptsize{$(2)$}};
			\node at (5.3,0.2) {\scriptsize{$(1)$}};
			\node at (6.32,0.2) {\scriptsize{$(2)$}};
			\path [-] (C) edge node[left] {} (0,0);
			\path [-] (C) edge node[right] {} (2,0);
			\node at (2.5,0){$=$};
			\path [-] (A) edge node[left] {} (B);
			\path [-] (A) edge node[left] {} (3.0,0);
			\path [-] (B) edge node[left] {} (6.8,0);
		\end{tikzpicture}
	\end{center}
	The connection of node $(2)$ in $\mathbf{A}$ with node $(1)$ in 
	$\mathbf{B}$ 
	means the contraction of mode $(2)$ of $\mathbf{A}$ with mode $(1)$ of 
	$\mathbf{B}$, as in Eq.~\eqref{eq::simplecontract}.
\end{definition}

\begin{definition}\label{def::graphicmoms}
	The $d$ mode tensor $\mathcal{T}$ can be represented graphically as
\begin{center}
	\begin{tikzpicture}
	\node[shape=rectangle,draw=black] (C) at (0,0) {$\mathcal{T}$};
	\path [-] (C) edge node[left] {} (-1.1,1);
	\path [-] (C) edge node[left] {} (-1.1,-1);
	\node at (0,1){$\cdots$};
	\node at (0.75,0.75){$\ddots$};
	\node at (0.75,-0.75){\reflectbox{$\ddots$}};
	\node at (0,-1){$\cdots$};
	\node at (-0.27,0.51) {\scriptsize{$(1)$}};
	\node at (-0.65,-0.25) {\scriptsize{$(d)$}};
	\node at (1.2,0.1) {$\vdots$};
	\end{tikzpicture}
\end{center}
In this representation, particular node of the graph corresponds to particular 
mode of the tensor. If a 
tensor (or a 
matrix) is 
super-symmetric we drop nodes numeration because it does not matter.
Suppose we have the super-symmetric tensor $\mathcal{T} \in \R^{[n,d]}$, its 
multiplication with a matrix in all modes $\mathcal{T}' = \mathcal{T} 
\times_{1,\ldots, d} \mathbf{A}$ can be represented graphically:
\begin{center}
	\begin{tikzpicture}
	\node[shape=rectangle,draw=black] (C) at (0,0) {$\mathcal{T}'$};
	\node at (1.7,0){$=$};
	\path [-] (C) edge node[left] {} (-1.1,1);
	\path [-] (C) edge node[left] {} (1.1,1);
	\path [-] (C) edge node[left] {} (-1.1,-1);
	\path [-] (C) edge node[left] {} (1.1,-1);
	\node at (0,1){$\cdots$};
	\node at (0,-1){$\cdots$};
	\node at (1.2,0.1) {$\vdots$};
	\node[shape=rectangle,draw=black] (D) at (4,0) {$\mathcal{T}$};
	\node at (4,1) {$\cdots$};
	\node[shape=rectangle,draw=black] (C) at (2.8,1) {$\mathbf{A}$};
	\node at (3.4, .76) {\scriptsize{$(2)$}};
	\node at (2.63, 1.45) {\scriptsize{$(1)$}};
	\node at (3.0, -.55) {\scriptsize{$(2)$}};
	\node at (2.23, -1.25) {\scriptsize{$(1)$}};
	\node at (5.2,0.1) {$\vdots$};
	\node at (4.6, .78) {\scriptsize{$(2)$}};
	\node at (5.4, 1.5) {\scriptsize{$(1)$}};
	\node at (5.05, -.55) {\scriptsize{$(2)$}};
	\node at (5.8, -1.25) {\scriptsize{$(1)$}};
	\node at (5.2,0.1) {$\vdots$};
	\node[shape=rectangle,draw=black] (E) at (5.2,1) {$\mathbf{A}$};
	\node at (4,-1) {$\cdots$};
	\node[shape=rectangle,draw=black] (G) at (2.8,-1) {$\mathbf{A}$};
	\node[shape=rectangle,draw=black] (I) at (5.2,-1) {$\mathbf{A}$};
	
	\path [-] (D) edge node[left] {} (E);
	\path [-] (D) edge node[left] {} (G);
	\path [-] (D) edge node[left] {} (C);
	\path [-] (D) edge node[left] {} (I);
	\path [-] (E) edge node[left] {} (5.9,1.7);
	\path [-] (C) edge node[left] {} (2.1,1.7);
	\path [-] (I) edge node[left] {} (5.9,-1.7);
	\path [-] (G) edge node[left] {} (2.1,-1.7);
	\end{tikzpicture}
\end{center}
Note that only the numeration of modes for the matrix matters.	
\end{definition}

Having introduced the graphic notation, referring to 
Eq.~\eqref{eq::momenttensornot} the 
$d$\textsuperscript{th} moment tensor can be represented in the graphical form:
\begin{center}
	\centering
	\begin{tikzpicture}
	\node[shape=rectangle,draw=black] (C) at (0,0) {$\MM_d(\X)$};
	\node at (1.7,0){$=$};
	\path [-] (C) edge node[left] {} (-1.1,1);
	\path [-] (C) edge node[left] {} (1.1,1);
	\path [-] (C) edge node[left] {} (-1.1,-1);
	\path [-] (C) edge node[left] {} (1.1,-1);
	\node at (0,1){$\cdots$};
	\node at (0,-1){$\cdots$};
	\node at (-1,1.5) {$(1)$};
	\node at (-1,-1.5) {$(d)$};
	\node at (1.2,0.1) {$\vdots$};
	\node[shape=rectangle,draw=black] (D) at (4,0) {$\frac{1}{t}\cdot 
	\1_{(d)}$};
	\node at (4,1) {$\cdots$};
	\node[shape=rectangle,draw=black] (C) at (2.8,1) {$\X$};
	\node at (3.4, .76) {\scriptsize{$(1)$}};
	\node at (2.63, 1.45) {\scriptsize{$(2)$}};
	\node at (3.0, -.55) {\scriptsize{$(1)$}};
	\node at (2.23, -1.25) {\scriptsize{$(2)$}};
	\node at (5.2,0.1) {$\vdots$};
	\node at (4.6, .78) {\scriptsize{$(1)$}};
	\node at (5.4, 1.5) {\scriptsize{$(2)$}};
	\node at (5.05, -.55) {\scriptsize{$(1)$}};
	\node at (5.8, -1.25) {\scriptsize{$(2)$}};
	\node at (5.2,0.1) {$\vdots$};
	\node[shape=rectangle,draw=black] (E) at (5.2,1) {$\X$};
	\node at (4,-1) {$\cdots$};
	\node[shape=rectangle,draw=black] (G) at (2.8,-1) {$\X$};
	\node[shape=rectangle,draw=black] (I) at (5.2,-1) {$\X$};
	
	\path [-] (D) edge node[left] {} (E);
	\path [-] (D) edge node[left] {} (G);
	\path [-] (D) edge node[left] {} (C);
	\path [-] (D) edge node[left] {} (I);
	\path [-] (E) edge node[left] {} (5.9,1.7);
	\path [-] (C) edge node[left] {} (2.1,1.7);
	\path [-] (I) edge node[left] {} (5.9,-1.7);
	\path [-] (G) edge node[left] {} (2.1,-1.7);
	\end{tikzpicture}
\end{center}
operation $\frac{1}{t}\cdot 
\1_{(d)}$ is simply a scalar element-wise multiple of $\1_{(d)}$ by 
$\frac{1}{t}$.

\begin{remark}\label{rem::cmom}
	If we use zero mean frequency 
	distribution~\ie~such that $\mathbf{f}'(\x) = \mathbf{f}(\x-\mu)$, we can 
	calculate central moments $m'_{\ii}$, see Eq.s~\eqref{eq::char} 
	and~\eqref{eq::momdiff}. Obviously, the first central moment would be zero.
	For large $t$, we can use following central moments estimators
	\begin{equation}\label{eq::cmomel}
	m'_{\ii}(\X) = m_{\ii}(\tilde{\X}) = \E(\tilde{X}_{i_1} \cdot \ldots \cdot 
	\tilde{X}_{i_d}) = \frac{1}{t} 
	\sum_{j = 1}^t 
	\prod_{k=1}^d 
	\left(x_{j, i_k} - \mu_{i_k}\right),
	\end{equation}
	where $\tilde{X}_{i}$ is a vector with zero mean, and $\tilde{\X}$  
	column (marginal) wisely centred data. Elements of $\tilde{\X}$ given 
	original data 
	$\X$, are:
	\begin{equation}
	\tilde{x}_{j,i} = x_{j,i} - \mu_i,
	\end{equation}
	where:
	\begin{equation}
	\mu_i = m_i(\X) = \E(X_i) = \frac{1}{t} \sum_{j=1}^t x_{j,i}.
	\end{equation}
	What is important, 
	following~\cite{domino2018efficient}, we use central 
	moments to compute cumulants since 
	such approach allows to simplify formulas. For
	graphic representation of central moment we can simply to replace $\X$ by 
	$\tilde{\X}$. 
\end{remark}

\subsection{Cumulants tensors}\label{sec::cumtens}

Having discussed higher order moments tensors estimators we can move to 
cumulant's calculation formulas that uses corresponding moments. Formulas for 
calculation cumulants of order $1-3$ are relatively simple, hence to give some 
examples we can derive them step by step from Definition~\ref{def::cumsdef}, 
see also Appendix $B$ in~\cite{domino2018efficient}.
\begin{example}
To compute $1$\textsuperscript{st} cumulant let us single differentiate 
Eq.~\eqref{eq::cumdef}
\begin{equation}\label{eq::c1}
c_{i} = -\mathrm{i} \frac{\partial}{\partial 
	v_{i}} \log(\varphi(\vv)) \bigg{|}_{\vv = 0} = -\mathrm{i}  
	\frac{\frac{\partial}{\partial 
		v_{i}} \varphi(\vv) }{\varphi(\vv)}\bigg{|}_{\vv = 0} = -\mathrm{i} 
		\frac{\partial}{\partial 
	v_{i}} \varphi(\vv) \Big{|}_{\vv = 0} = m_i,
\end{equation}
we used Definition~\ref{def::momdef} and the fact that according to 
Eq.~\eqref{eq::char} we have
 \begin{equation}
\varphi(\vv = 0)  = \int_{\R^n} 
\exp(0) \mathbf{f}(\x) d \x = \int_{\R^n} 
\mathbf{f}(\x) d \x = 1,
\end{equation}
due to the normalisation of the PDF function. 
\end{example}
\begin{example}
The second cumulant is
\begin{equation}\label{eq::c2}
\begin{split}
c_{i_1,i_2} =& -\frac{\partial^2}{\partial 
	v_{i_1} \partial v_{i_2}} \log(\varphi(\vv)) \bigg{|}_{\vv = 0} = 
	-\left( \frac{ \varphi(\vv)
	\frac{\partial^2}{\partial 
		v_{i_1} \partial v_{i_2}} \varphi(\vv) -  
		\left(\frac{\partial}{\partial 
		v_{i_1}}\varphi(\vv) \right) \left( \frac{\partial}{\partial 
		v_{i_2}} \varphi(\vv) \right) 
		}{\left(\varphi(\vv)\right)^2}\right)\bigg{|}_{\vv = 0} \\ =& 
		-\frac{
			\frac{\partial^2}{\partial 
				v_{i_1} \partial v_{i_2}} \varphi(\vv)}{\varphi(\vv)} 
				\bigg{|}_{\vv = 0} + \frac{\left(\frac{\partial}{\partial 
				v_{i_1}}\varphi(\vv) \right) \left( \frac{\partial}{\partial 
				v_{i_2}} \varphi(\vv) \right)
		}{\left(\varphi(\vv)\right)^2}\bigg{|}_{\vv = 0}
		= m_{i_1,i_2} - m_{i_1} 
		m_{i_2}.
\end{split}
\end{equation}
\end{example}
\begin{example}
The third cumulant requires another 
differentiation of Eq.~\eqref{eq::c2}.
Hereafter, for clarity we use $\varphi$ for $\varphi(\vv)$ and 
$\frac{\partial \varphi}{\partial v_i}$ for $\left(\frac{\partial}{\partial 
v_i} \varphi(\vv)\right)$ etc. We have
\begin{equation}\label{eq::c3dif}
\begin{split}
c_{i_1,i_2, i_3} &= \mathrm{i}\frac{\partial^3}{\partial 
	v_{i_1} \partial v_{i_2} \partial v_{i_3}} \log(\varphi) \bigg{|}_{\vv 
	= 0} \\ &= \mathrm{i} \left(\frac{
	\varphi\frac{\partial^3 \varphi}{ \partial 
		v_{i_1} \partial v_{i_2} \partial v_{i_3}} - 
		\frac{\partial^2 \varphi }{ 
		\partial v_{i_1} \partial v_{i_2}}  
		\frac{\partial \varphi }{ \partial v_{i_3}} }{\varphi^2} \right.\\
  &-
\left.\frac{\varphi^2\left(\frac{\partial^2 \varphi }{\partial 
		v_{i_1} \partial 
		v_{i_3}} \frac{\partial \varphi }{\partial 
		v_{i_2}}  + \frac{\partial \varphi}{\partial 
		v_{i_1}} \frac{\partial^2 \varphi }{\partial 
		v_{i_2}  \partial v_{i_3}} \right) - 2 \varphi \frac{\partial \varphi 
		}{\partial 
	v_{i_1}} \frac{\partial \varphi }{\partial 
	v_{i_2}}  \frac{\partial \varphi }{\partial 
	v_{i_3}}}{\varphi^4}\right)_{\vv = 0} \\
&= \mathrm{i} \frac{
	\frac{\partial^3 \varphi}{ \partial 
		v_{i_1} \partial v_{i_2} \partial v_{i_3}}}{\varphi} 
		\bigg{|}_{\vv = 0} - \mathrm{i} \frac{\frac{\partial^2 
		\varphi}{\partial 
				v_{i_1} \partial 
				v_{i_2}} \frac{\partial \varphi}{\partial 
				v_{i_3}}  + \frac{\partial^2 \varphi}{\partial 
				v_{i_1} \partial 
				v_{i_3}} \frac{\partial \varphi}{\partial 
				v_{i_2}}  + \frac{\partial \varphi}{\partial 
				v_{i_1}} \frac{\partial^2 \varphi}{\partial 
				v_{i_2}  \partial 
				v_{i_3}} 
			}{\varphi^2} \bigg{|}_{\vv = 0} \\ &+ 2 \mathrm{i} \frac{\varphi 
			\frac{\partial \varphi}{\partial 
			v_{i_1}}\frac{\partial \varphi}{\partial 
			v_{i_2}} \frac{\partial \varphi}{\partial 
			v_{i_3}} }{\varphi^4} \bigg{|}_{\vv = 0}
\end{split}
\end{equation}
what leads to
\begin{equation}\label{eq::c3}
c_{i_1, i_2, i_3} = m_{i_1, i_2, i_3} - m_{i_1, i_2} m_{i_3} - m_{i_1, i_3} 
m_{i_2} - m_{i_2, i_3} m_{i_1} + 2 m_{i_1} m_{i_2} m_{i_3}.
\end{equation}
Referring to Eqs.~\eqref{eq::c1} and \eqref{eq::c2} we can simplify this as
\begin{equation}\label{eq::c3e}
m_{i_1, i_2, i_3} = c_{i_1, i_2, i_3} + c_{i_1, i_2} c_{i_3} + c_{i_1, i_3} 
c_{i_2} + c_{i_2, i_3} c_{i_1} + c_{i_1} c_{i_2} c_{i_3}.
\end{equation}
\end{example}
To simplify this relation further let us introduce $c'_{\ii}$, the element of 
the 
cumulant's tensor of the zero mean frequency distribution $\mathbf{f}'(\x) = 
\mathbf{f}(\x - \mu)$. By Remark~\ref{rem::zeromeancums} we have
\begin{equation}
c'_{\ii} = \begin{cases} 0 &\text{ if } |\ii| = 1 \\
 c_{\ii} &\text{ if } |\ii| \geq 2 \end{cases}.
\end{equation}
Following~\cite{domino2018efficient} such approach simplifies cumulant's 
calculation formulas.

\begin{remark}
In the case of the $2$\textsuperscript{nd} cumulant, we can simplify 
Eq.~\eqref{eq::c2}
\begin{equation}
c_{i_1, i_2} = c'_{i_1, i_2} =  m'_{i_1, i_2} - m'_{i_1} m'_{i_2} = m'_{i_1, 
i_2}
\end{equation}
in a graphic tensor notation, using Definition~\ref{def::matmatgraph} 
we have
	\begin{center}
		\begin{tikzpicture}
		\node[shape=rectangle,draw=black] (A) at (4,0) 
		{$\tilde{\X}$};
		\node[shape=rectangle,draw=black] (B) at (5.5,0) {$\frac{1}{t} \cdot 
		\1$};
		\node[shape=rectangle,draw=black] (D) at (7.,0) {$\tilde{\X}$};
		\node[shape=rectangle,draw=black] (C) at (1,0) {$\CC_2$};
		\node at (6.45,0.18) {\scriptsize{$(1)$}};
		\node at (7.55,0.18) {\scriptsize{$(2)$}};
		\node at (4.55,0.18) {\scriptsize{$(1)$}};
		\node at (3.45,0.18) {\scriptsize{$(2)$}};
		\path [-] (C) edge node[left] {} (0,0);
		\path [-] (C) edge node[right] {} (2,0);
		\path [-] (B) edge node[left] {} (D);
		\node at (2.5,0){$=$};
		\path [-] (A) edge node[left] {} (B);
		\path [-] (A) edge node[left] {} (3.0,0);
		\path [-] (D) edge node[left] {} (8.,0);
		\node at (8.4,0){$=$};
		\node at (8.85,0){$\frac{1}{t} \cdot$};
		\node[shape=rectangle,draw=black] (E) at (10.,0.) {$\tilde{\X}$};
		\node[shape=rectangle,draw=black] (F) at (12.,0.) {$\tilde{\X}$};
		\path [-] (E) edge node[left] {} (F);
		\path [-] (E) edge node[right] {} (9.1,0);
		\path [-] (F) edge node[left] {} (13.,0);
		\node at (11.45,0.18) {\scriptsize{$(1)$}};
		\node at (12.55,0.18) {\scriptsize{$(2)$}};
		\node at (10.55,0.18) {\scriptsize{$(1)$}};
		\node at (9.45,0.18) {\scriptsize{$(2)$}};
		\end{tikzpicture}
	\end{center}
Where $\1$ is simply the identity matrix. In the matrix notation we have
	\begin{equation}
		\CC_2(\mathbf{X}) = \frac{1}{t} \tilde{\mathbf{X}}^{\intercal} 
		\tilde{\mathbf{X}}
	\end{equation}
\end{remark}
\begin{remark}\label{rem::cedef}
In the case of the $3$\textsuperscript{rd} cumulant, we can simplify 
Eq.~\eqref{eq::c3e}
\begin{equation}
c_{i_1, i_2, i_3} = c'_{i_1, i_2, i_3} =  m'_{i_1, i_2, i_3} - c'_{i_1, i_2} 
c'_{i_3} - c'_{i_1, i_3} c'_{i_2}  - c'_{i_2, i_3} c'_{i_1} - 
c'_{i_1}c'_{i_2}c'_{i_3} = m'_{i_1, i_2, i_3},
\end{equation}
in a graphic notation, see Definition~\ref{def::graphicmoms}, we have:
\begin{center}
	\begin{tikzpicture}
	\node[shape=rectangle,draw=black] (C) at (0,0) {$C_3$};
	\node at (1.6,0){$=$};
	\path [-] (C) edge node[left] {} (0,1);
	\path [-] (C) edge node[left] {} (-1,-1);
	\path [-] (C) edge node[left] {} (1,-1);
	\node[shape=rectangle,draw=black] (D) at (4,0.1) {$\frac{1}{t} \cdot 
	\mathbb{1}_{(3)}$};
	\node[shape=rectangle,draw=black] (E) at (4,1.3) {$\tilde{\X}$};
	\node[shape=rectangle,draw=black] (F) at (2.9,-1.1) {$\tilde{\X}$};
	\node[shape=rectangle,draw=black] (G) at (5.1,-1.1) {$\tilde{\X}$};
	\path [-] (D) edge node[left] {} (E);
	\path [-] (D) edge node[left] {} (F);
	\path [-] (D) edge node[left] {} (G);
	\path [-] (E) edge node[left] {} (4,2.2);
	\path [-] (F) edge node[left] {} (2.3, -1.8);
	\path [-] (G) edge node[left] {} (5.75, -1.8);
	\node at (4.2,1.8){\scriptsize{$(2)$}};
	\node at (4.2,0.75){\scriptsize{$(1)$}};
	\node at (5.68,-1.46){\scriptsize{$(2)$}};
	\node at (4.95,-0.62){\scriptsize{$(1)$}};
	\node at (2.3,-1.46){\scriptsize{$(2)$}};
	\node at (3.1,-0.62){\scriptsize{$(1)$}};
	\end{tikzpicture}
\end{center}
\end{remark}

For the the $4$\textsuperscript{th} cumulant its relation with central moments 
becomes more complex. To understand this observe 
that if we differentiate Eq.~\eqref{eq::c3dif} again besides
$\frac{1}{{\varphi} }\frac{\partial^4 \varphi}{ \partial 
	v_{i_1} \partial v_{i_2} \partial v_{i_3} \partial v_{i_4}}
\big{|}_{\vv = 0}$ corresponding to $4$\textsuperscript{th} moment, we have
\begin{equation}\label{eq::mixed4}
-\frac{1}{\varphi^2} \left(\frac{\partial^2 \varphi}{\partial 
		v_{i_1}  
		v_{i_2}} \frac{\partial^2 \varphi}{\partial 
		v_{i_3} 
		v_{i_4}}  + \frac{\partial^2 \varphi}{\partial 
		v_{i_1} 
		v_{i_3}} \frac{\partial^2 \varphi}{\partial 
		v_{i_2} 
		v_{i_4}}   + \frac{\partial^2 \varphi}{\partial 
		v_{i_1} 
		v_{i_4}}\frac{\partial^2 \varphi}{\partial 
		v_{i_2}   
		v_{i_3}}  \right) \bigg{|}_{\vv = 0},
\end{equation}
corresponding to the combinatorics symmetrizing sum of the product of second 
moments, which do not vanish while using central cumulants and moments. Now, 
one can 
show that $4$\textsuperscript{th} cumulant elements 
are~\cite{domino2018efficient}
\begin{equation}
c_{i_1, i_2, i_3, i_4} = c'_{i_1, i_2, i_3, i_4} =  m'_{i_1, i_2, i_3, i_4} - 
m'_{i_1, i_2} m'_{i_3, i_4} - m'_{i_1, i_3} m'_{i_2, i_4} - m'_{i_1, i_4} 
m'_{i_2, i_3},
\end{equation}
or in the equivalently
\begin{equation}\label{eq::c4elementwise}
c_{i_1, i_2, i_3, i_4} =  m'_{i_1, i_2, i_3, i_4} - 
c_{i_1, i_2} c_{i_3, i_4} - c_{i_1, i_3} c_{i_2, i_4} - c_{i_1, i_4} 
c_{i_2, i_3}.
\end{equation}
For the graphic notation, we need a few additional definitions.

\begin{definition}\label{def::symsummats}
The outer product of symmetric matrix $\mathbf{A} \in \R^{[n,2]}$ 
by itself in the $4$-mode tensor representation is $\R^{n \times n \times n 
\times 
n} \ni  \mathcal{T} = \mathbf{A} 
\otimes \mathbf{A}$,
with elements
\begin{equation}
t_{i_1, i_2, i_3, i_4} = a_{i_1, i_2} a_{i_3, i_4}.
\end{equation}
In a graphic 
notation we have
\begin{center}
		\begin{tikzpicture}
		\node[shape=rectangle,draw=black] (C2A) at (8,0.5) {$\mathbf{A}$};
		\node[shape=rectangle,draw=black] (C2B) at (8,-0.5) {$\mathbf{A}$};
		\path [-] (C2A) edge node[left] {} (8.6,1.7);
		\path [-] (C2A) edge node[left] {} (7.4,1.7);
		\path [-] (C2B) edge node[left] {} (7.4, -1.7);
		\path [-] (C2B) edge node[left] {} (8.6, -1.7);
		(14,-2.2);
		\node at (5.05,0.55){\scriptsize{$(1)$}};
		\node at (5.95,0.55){\scriptsize{$(2)$}};
		\node at (5.05,-0.55){\scriptsize{$(4)$}};
		\node at (5.95,-0.55){\scriptsize{$(3)$}};
		\node[shape=rectangle,draw=black] (A) at (5.5, 0) {$\mathbf{A} 
		\otimes \mathbf{A}$};
		\node at (7., 0) {$=$};
		\path [-] (A) edge node[left] {} (6.1,1.7);
		\path [-] (A) edge node[left] {} (4.9,1.7);
		\path [-] (A) edge node[left] {} (4.9, -1.7);
		\path [-] (A) edge node[left] {} (6.1, -1.7);
		\end{tikzpicture}
\end{center}
However, such tensor is not super-symmetric.
To overcome this problem refer to \cite{cardoso1990eigen}, where 
the idea of symmetries of tensor products of symmetrical matrices is discussed.
To get the super-symmetric tensor we need the symmetrising sum of such outer 
products $ 
\mathcal{T}_{\text{sym}} = \sum_{\text{sym}}
\mathbf{A} 
\otimes \mathbf{A}$,
which is defined element wisely:
\begin{equation}
\left(t_{sym}\right)_{i_1, i_2, i_3, i_4} = a_{i_1, i_2} a_{i_3, i_4} + 
a_{i_1, i_3} a_{i_2, i_4} + a_{i_1, i_4} a_{i_2, i_3}.
\end{equation}
 In a graphic 
notation we have
	
\begin{center}
	\begin{tikzpicture}
	\node[shape=rectangle,draw=black] (A) at (4.5, 0) {$\sum_{\text{sym}} 
	\mathbf{A} 
				\otimes \mathbf{A}$};
	\node at (6.2, 0) {$=$};
	\path [-] (A) edge node[left] {} (5.1,1.7);
	\path [-] (A) edge node[left] {} (3.9,1.7);
	\path [-] (A) edge node[left] {} (3.9, -1.7);
	\path [-] (A) edge node[left] {} (5.1, -1.7);
	\draw [decorate,decoration={brace,amplitude=10pt}](7,-2) -- (7,2);
	\node[shape=rectangle,draw=black] (C2A) at (8,0.5) {$\mathbf{A}$};
	\node[shape=rectangle,draw=black] (C2B) at (8,-0.5) {$\mathbf{A}$};
	\path [-] (C2A) edge node[left] {} (8.6,1.7);
	\path [-] (C2A) edge node[left] {} (7.4,1.7);
	\path [-] (C2B) edge node[left] {} (7.4, -1.7);
	\path [-] (C2B) edge node[left] {} (8.6, -1.7);
	(14,-2.2);
	\node at (9.4, 0){$+$};
	\node[shape=rectangle,draw=black] (C2C) at (11,0.5) {$\mathbf{A}$};
	\node[shape=rectangle,draw=black] (C2D) at (11,-0.5) {$\mathbf{A}$};
	\path [-] (C2C) edge node[left] {} (11.6,1.7);
	\path [-] (C2D) edge node[left] {} (11.6, -1.7);
	\draw (10.2,1.7) .. controls  (10.6,-1.4) .. (10.8,-0.82);
	\draw (10.2,-1.7) .. controls  (10.6,1.4) .. (10.8,0.82);
	\node at (12, 0){$+$};
	\node[shape=rectangle,draw=black] (C2E) at (13,0.5) {$\mathbf{A}$};
	\node[shape=rectangle,draw=black] (C2F) at (13,-0.5) {$\mathbf{A}$};
	\path [-] (C2E) edge node[left] {} (12.4,1.7);
	\path [-] (C2F) edge node[left] {} (12.4, -1.7);
	\draw (13.8,1.7) .. controls  (13.4,-1.4) .. (13.2,-0.82);
	\draw (13.8,-1.7) .. controls  (13.4,1.4) .. (13.2,0.82);
	\draw [decorate,decoration={brace,mirror,amplitude=10pt}](14.5,-2) -- 
	(14.5,2);
	\end{tikzpicture}
\end{center}
\end{definition}

\begin{remark}\label{rem::c4graph}
Referring to Eq.~\eqref{eq::c4elementwise}, the 
$4\textsuperscript{th}$ cumulant in a tensor notation would be:
\begin{equation}
\CC_4(\X) = \MM_4(\tilde{\X}) - \sum_{\text{sym}} \CC_2 \otimes \CC_2.
\end{equation}
In a graphic notation we have
\begin{center}
	\begin{tikzpicture}
	\node[shape=rectangle,draw=black] (C) at (1,0) {$\CC_4$};
	\node at (2.4,0){$=$};
	\path [-] (C) edge node[left] {} (0,1);
	\path [-] (C) edge node[left] {} (2,1);
	\path [-] (C) edge node[left] {} (0,-1);
	\path [-] (C) edge node[left] {} (2,-1);	
	\node[shape=rectangle,draw=black] (D) at (4,0) {$\1_{(4)}$};
	\node[shape=rectangle,draw=black] (C) at (3,1) {$\tilde{\X}$};
	\node[shape=rectangle,draw=black] (E) at (5,1) {$\tilde{\X}$};
	\node[shape=rectangle,draw=black] (F) at (3,-1) {$\tilde{\X}$};
	\node[shape=rectangle,draw=black] (G) at (5,-1) {$\tilde{\X}$};
	\path [-] (D) edge node[left] {} (E);
	\path [-] (D) edge node[left] {} (F);
	\path [-] (D) edge node[left] {} (G);
	\path [-] (D) edge node[left] {} (C);
	\path [-] (E) edge node[left] {} (5.6,1.7);
	\path [-] (C) edge node[left] {} (2.4,1.7);
	\path [-] (F) edge node[left] {} (2.4, -1.7);
	\path [-] (G) edge node[left] {} (5.6, -1.7);
	
	\node at (6.3,0){$-$};
	\draw [decorate,decoration={brace,amplitude=10pt}](7,-2) -- (7,2);
	
	\node[shape=rectangle,draw=black] (C2A) at (8,0.5) {$\CC_2$};
	\node[shape=rectangle,draw=black] (C2B) at (8,-0.5) {$\CC_2$};
	\path [-] (C2A) edge node[left] {} (8.6,1.7);
	\path [-] (C2A) edge node[left] {} (7.4,1.7);
	\path [-] (C2B) edge node[left] {} (7.4, -1.7);
	\path [-] (C2B) edge node[left] {} (8.6, -1.7);
	(14,-2.2);
	\node at (9.4, 0){$+$};
	\node[shape=rectangle,draw=black] (C2C) at (11,0.5) {$\CC_2$};
	\node[shape=rectangle,draw=black] (C2D) at (11,-0.5) {$\CC_2$};
	\path [-] (C2C) edge node[left] {} (11.6,1.7);
	\path [-] (C2D) edge node[left] {} (11.6, -1.7);
	\draw (10.2,1.7) .. controls  (10.6,-1.4) .. (10.8,-0.82);
	\draw (10.2,-1.7) .. controls  (10.6,1.4) .. (10.8,0.82);
	\node at (12, 0){$+$};
	\node[shape=rectangle,draw=black] (C2E) at (13,0.5) {$\CC_2$};
	\node[shape=rectangle,draw=black] (C2F) at (13,-0.5) {$\CC_2$};
	\path [-] (C2E) edge node[left] {} (12.4,1.7);
	\path [-] (C2F) edge node[left] {} (12.4, -1.7);
	\draw (13.8,1.7) .. controls  (13.4,-1.4) .. (13.2,-0.82);
	\draw (13.8,-1.7) .. controls  (13.4,1.4) .. (13.2,0.82);
	\node at (3.58, .69) {\scriptsize{$(1)$}};
	\node at (2.85, 1.5) {\scriptsize{$(2)$}};
	\node at (3.2, -.5) {\scriptsize{$(1)$}};
	\node at (2.45, -1.31) {\scriptsize{$(2)$}};
	\node at (4.41, .69) {\scriptsize{$(1)$}};
	\node at (5.15, 1.5) {\scriptsize{$(2)$}};
	\node at (4.8, -.5) {\scriptsize{$(1)$}};
	\node at (5.55, -1.35) {\scriptsize{$(2)$}};	
	\draw [decorate,decoration={brace,mirror,amplitude=10pt}](14.2,-2) -- 
	(14.2,2);
	\end{tikzpicture}
\end{center}
\end{remark}

In the case of the $5$\textsuperscript{th} cumulant double differentiation of 
Eq.~\eqref{eq::c3dif} produces a term $\frac{1}{{\varphi} 
}\frac{\partial^5 \varphi}{ \partial 
	v_{i_1} \partial v_{i_2} \partial v_{i_3} \partial v_{i_4} \partial v_{i_5}}
\big{|}_{\vv = 0}$ and additional $10$ terms
$-\frac{1}{\varphi^2}\left(\frac{\partial^2 \varphi}{\partial 
	v_{i_1} \partial v_{i_2}} \frac{\partial^3 \varphi}{\partial 
	v_{i_3}  \partial v_{i_4} \partial v_{i_5}} \right) \big{|}_{\vv = 0}$, 
	$-\frac{1}{\varphi^2}\left(\frac{\partial^2 \varphi}{\partial 
		v_{i_1} \partial v_{i_5}} \frac{\partial^3 \varphi}{\partial 
		v_{i_2}  \partial v_{i_3} \partial v_{i_4}} \right) \big{|}_{\vv = 0}, 
		\ldots$ that do not vanish. Reminding terms will have a single 
		derivative of $\varphi$ 
		leading to first moments that vanishes due to the centring. 
Given these, and the fact, that cumulants of order $2$ and $3$ equals to 
corresponding
central moments, the $5$\textsuperscript{th} cumulant element is given by
\begin{equation}\label{eq::c5els}
c_{i_1, \ldots i_5} = m'_{i_1, \ldots i_5} - \underbrace{c_{i_1, i_2} c_{i_3, 
i_4, i_5} - 
c_{i_1, i_3} c_{i_2, i_4, i_5} - c_{i_1, i_4} c_{i_2, i_3, i_5} - 
\ldots}_{\times 10}.
\end{equation}
Here again we have a symmetrising sum of outer products of second
cumulant's matrix and third cumulant's tensor, that 
have $10$ terms.

\begin{definition}{The symmetrising sum of tensors' outer products.}
	Suppose we have $d_1$ and $d_2$ modes super-symmetric tensors 
	$\mathcal{T}_{d_1}$ and $\mathcal{T}_{d_2}$, the symmetrised sum of outer 
	products of these tensors in a graphic notation is given by
\begin{center}
	\begin{tikzpicture}
	\node[shape=rectangle,draw=black] (A) at (4.5, 0) {$\sum_{\text{sym}} 
		\mathcal{T}_{d_1} 
		\otimes \mathcal{T}_{d_2}$};
	\node at (6.3, 0) {$=$};
	\path [-] (A) edge node[left] {} (5.5,1.7);
	\path [-] (A) edge node[left] {} (3.5,1.7);
	\path [-] (A) edge node[left] {} (3.5, -1.7);
	\path [-] (A) edge node[left] {} (5.5, -1.7);	
	\node at (4.5,1.7){$\cdots$};
	\node at (4.5,-1.7){$\cdots$};
	\draw [decorate,decoration={brace,amplitude=10pt}](7,-2) -- (7,2);	
	\node[shape=rectangle,draw=black] (C2A) at (8,0.5) {$\mathcal{T}_{d_1}$};
	\node[shape=rectangle,draw=black] (C2B) at (8,-0.5) {$\mathcal{T}_{d_2}$};
	\path [-] (C2A) edge node[left] {} (9,1.7);
	\path [-] (C2A) edge node[left] {} (7,1.7);
	\node at (8,1.7){$\cdots$};
	\path [-] (C2B) edge node[left] {} (7.1, -1.7);
	\path [-] (C2B) edge node[left] {} (9.1, -1.7);
	\node at (8,-1.7){$\cdots$};
	\node at (9.4, 0){$+$};
	\node[shape=rectangle,draw=black] (C2C) at (11,0.5) {$\mathcal{T}_{d_1}$};
	\node[shape=rectangle,draw=black] (C2D) at (11,-0.5) {$\mathcal{T}_{d_2}$};
	\path [-] (C2C) edge node[left] {} (12,1.7);
	\node at (11,1.7){$\cdots$};
	\node at (11,-1.7){$\cdots$};
	\path [-] (C2D) edge node[left] {} (12.1, -1.7);
	\draw (10,1.7) .. controls  (10.6,-1.4) .. (10.8,-0.82);
	\draw (10.2,-1.7) .. controls  (10.6,1.4) .. (10.8,0.82);
	\draw [decorate,decoration={brace,mirror,amplitude=10pt}](14.5,-2) -- 
	(14.5,2);
	\draw [decorate,decoration={brace,mirror,amplitude=10pt}](7.5,-2.2) 
	-- 
	(14,-2.2);
	\node at (10.5, -2.8){$\frac{d!}{d_1! d_2! s!} \times$};
	\end{tikzpicture}
\end{center}
	If $d_1 = d_2$ we have the outer product of the $\mathcal{T}_{d_1}$ by 
	itself and set $s = 2$, otherwise we set $s = 1$.
\end{definition}	

\begin{remark}\label{rem::c5graph}
Referring to Eq.~\eqref{eq::c5els}, in the tensor notation the 
$5$\textsuperscript{th} cumulant is given by
\begin{equation}
\CC_5(\X) = \MM_5(\tilde{\X}) - \sum_{\text{sym}} \CC_2(\X) \otimes \CC_3(\X),
\end{equation}
and in a graphic notation by
\begin{center}
\includegraphics[width=\textwidth]{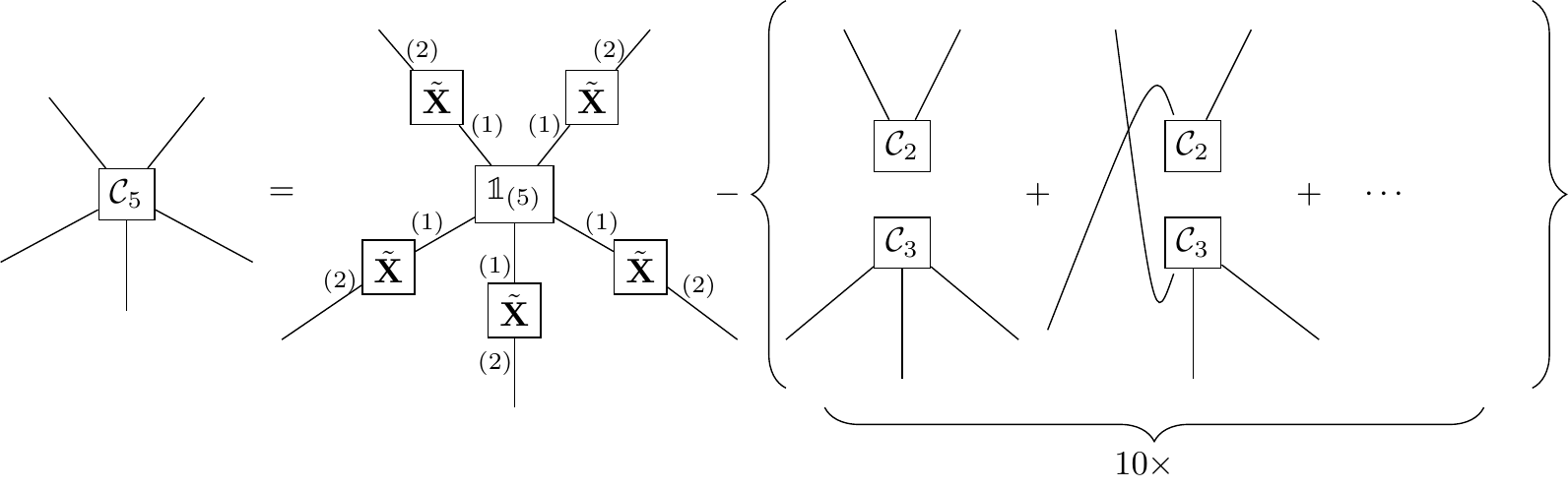}
\end{center}

\end{remark}

\begin{remark}\label{rem::c6graph}
The $6$\textsuperscript{th} cumulant's tensor formula is much more complicated. 
For the general formula in the tensor notation we can refer to 
\cite{domino2018efficient}, yielding:
\begin{equation}
\CC_6(\X) = \MM_6(\tilde{\X}) - \sum_{\text{sym}} \CC_4(\X) \otimes \CC_2(\X) - 
\sum_{\text{sym}} \CC_3(\X) \otimes \CC_3(\X) - \sum_{\text{sym}} \CC_2(\X) 
\otimes \CC_2(\X) \otimes \CC_2(\X)
\end{equation}
The last term is the symmetrising sum of the outer produce of $3$ symmetric 
matrices. $\R^{[n,6]} \in \mathcal{T}_{\text{sym}} = \sum_{\text{sym}} 
\mathbf{A} \otimes \mathbf{A} \otimes \mathbf{A}$ where $\mathbf{A} \in \R^{[n, 
2]}$ what can be represented by the following element wise notation:
\begin{equation}
\left(t_{\text{sym}}\right)_{i_1, \ldots i_6} = \underbrace{a_{i_1, i_2}a_{i_3, 
i_4}a_{i_5, 
i_6} + a_{i_1, i_3}a_{i_2, i_4}a_{i_5, i_6} + a_{i_1, i_4}a_{i_2, i_3}a_{i_5, 
i_6} + \ldots}_{\times 15}.
\end{equation}
Finally in a graphic notation we have
\begin{center}
	\includegraphics[width=\textwidth]{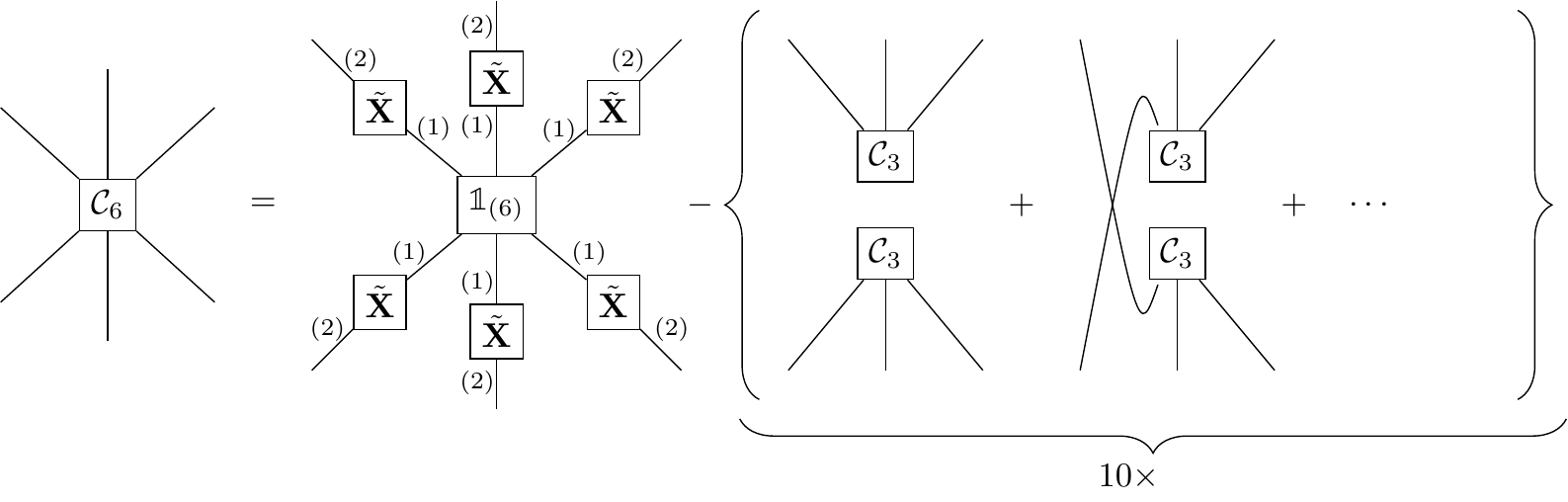}\\	
	\includegraphics[width=\textwidth]{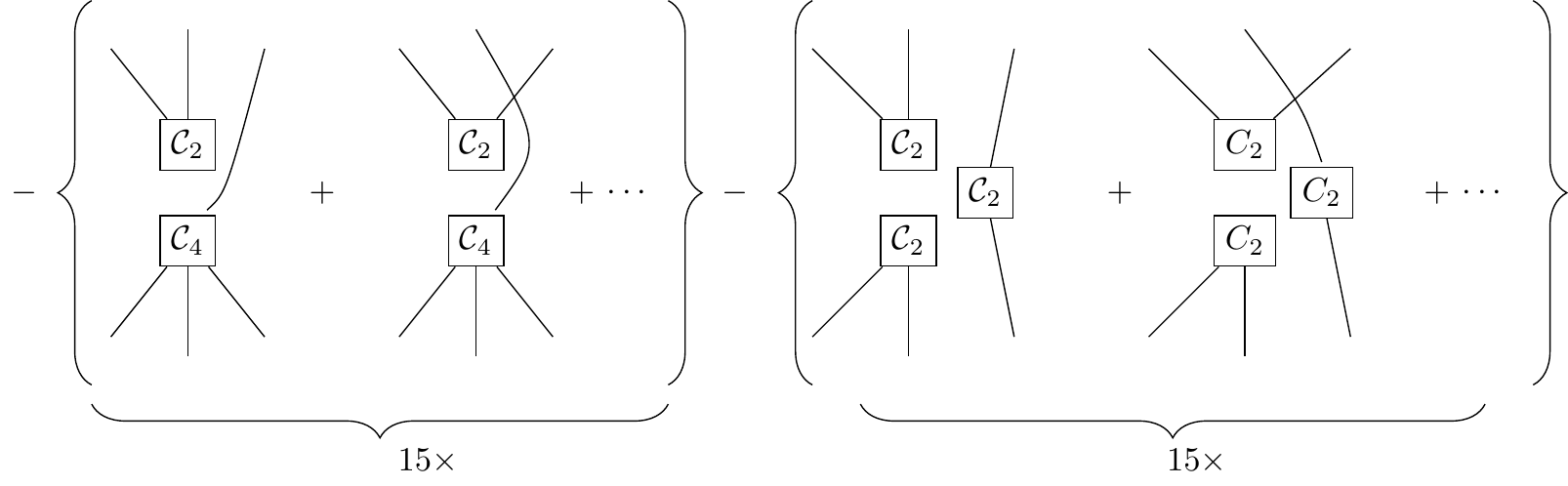}
\end{center}
\end{remark}
\subsection{Calculation and programming implementation}\label{sec::blockimpl}

As discussed is Section~\ref{sec::momenttens} and Section~\ref{sec::cumtens} 
both moments tensors and cumulants tensors are super-symmetric. Hence it is 
sufficient to store and calculate only one hyper-pyramid part of such tensor. 
Such storage scheme was discussed in~\cite{domino2018efficient}. Let us mention 
here only a symmetric matrix example that can be stored in blocks in the 
following form
\begin{equation}
\R^{[n,2]} \ni \mathbf{A} = \left[ \begin{array}{cccc}
({\mathbf{A}})_{11} & ({\mathbf{A}})_{12} & \cdots & ({\mathbf{A}})_{1\bar{n}} 
\\ 
\text{Void} & ({\mathbf{A}})_{22} & \cdots &  ({\mathbf{A}})_{2\bar{n}} \\ 
\vdots & \vdots & \ddots & \vdots \\
\text{Void} & \text{Void} & \cdots & (\mathbf{A})_{\bar{n} \bar{n}} \\ 
\end{array}   \right].
\end{equation}
The use of blocks makes a computation implementation simpler and more 
efficient, see also \cite{schatz2014exploiting}. As discussed 
in~\cite{domino2018efficient}, given a block size parameter $b$ and $\bar{n} = 
\frac{n}{b}$ and assuming that $b | n$, the super-symmetric tensor 
have $\binom{\bar{n}+d-1}{\bar{n}}$ blocks, hence we store $b^d 
\binom{\bar{n}+d-1}{\bar{n}}$ elements out of $n^d$ in an naive approach. For 
$n \gg b$ 
we save up to the $\frac{1}{d !}$ computer memory and computational power.
Those block storage scheme was implemented in the Julia programming 
language in the \texttt{SymmetricTensors.jl} module, see \cite{st} 
GitHub repository. Algorithms for the calculation of moment's and cumulant's 
were implemented as well in \texttt{Cumulants.jl} module, see
\cite{cum} GitHub repository. 

\section{Cumulants of copulas}\label{sec::cumcops}

Having introduced higher order cumulants we can discuss their 
significance in the copula determination of multivariate data. Given data 
probabilistic model including marginals and the copula, multivariate moments 
tensors can be computed by Eq.~\eqref{eq::cumsmoms}. These moments tensors are 
the base to compute multivariate cumulants tensors. We are interested 
in this section rather on the impact of 
copulas on multivariate cumulants, since non-Gaussian univariate marginal 
distributions 
are easy to detect and analyse by means of univariate statistics. 
Following~\cite{domino2018hiding} and inspired by the fact, 
that univariate Gaussian distribution results in zero higher order univariate 
cumulants, we present in this section examples of cumulants of 
many non-Gaussian copulas with univariate standard Gaussian marginals hoping
to reveal in cumulants characteristics of copulas. For each experiment we use 
the sample of 
size $t = 5 \times 10^6$ to ensure accurate approximation of cumulants.

Further in this section we have measured the simultaneous interdependence of 
$3$ or $4$ marginals by means of cumulants. What is important, we have shown by 
experiments, that these interdependences are meaningful.
Alternative approach to the use of univariate Gaussian marginals is to use 
uniform 
marginals on $[0,1]$. However such marginals 
have its own impact on even order higher order cumulants, due to highly 
negative kurtosis of the uniform distribution. 
Examples, where uniform marginals have a dominant impact on 
elements of
the $4$\textsuperscript{th} cumulant's tensor are presented at the end of this 
section.

\subsection{Archimedean copulas}\label{sec::ac}

\begin{figure}
	\subfigure[Clayton 
	copula\label{fig::cl_c3}]{\includegraphics{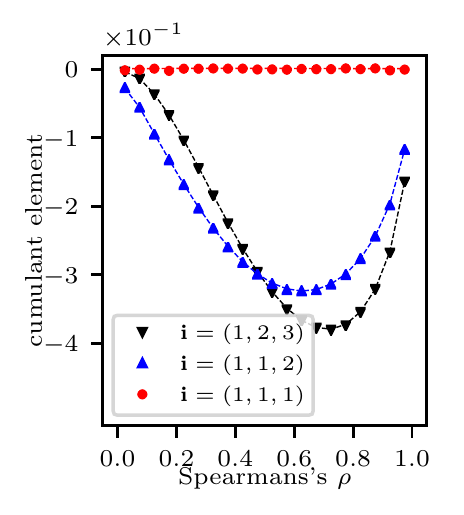}}
	\subfigure[Gumbel 
	copula\label{fig::gu_c3}]{\includegraphics{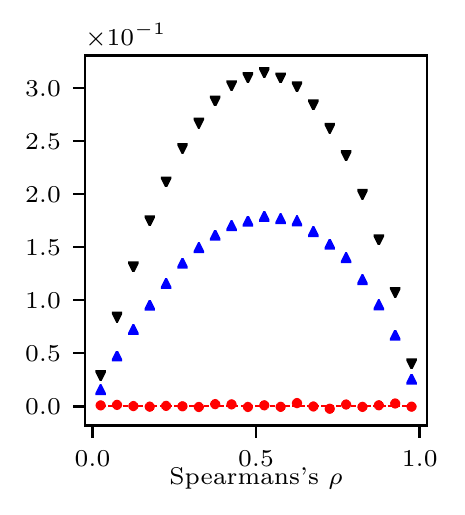}}
	\subfigure[Frank 
	copula\label{fig::fr_c3}]{\includegraphics{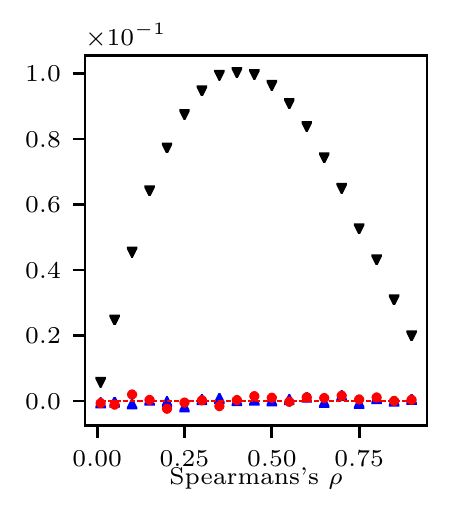}}
	\caption{$3$\textsuperscript{th} cumulant's elements of Archimedean 
		copulas, and standard Gaussian marginals. Solid lines represent 
		theoretical outcomes, while points represent outcomes form generated 
		data.}\label{fig::archc3}
\end{figure}
\begin{figure}
		\subfigure[Clayton 
		copula\label{fig::cl_c4}]{\includegraphics{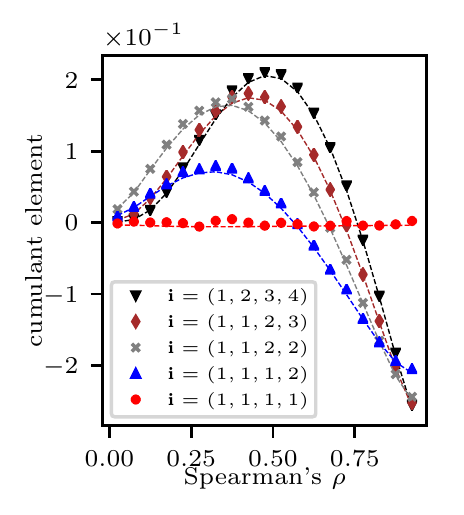}}
		\subfigure[Gumbel 
		copula\label{fig::gu_c4}]{\includegraphics{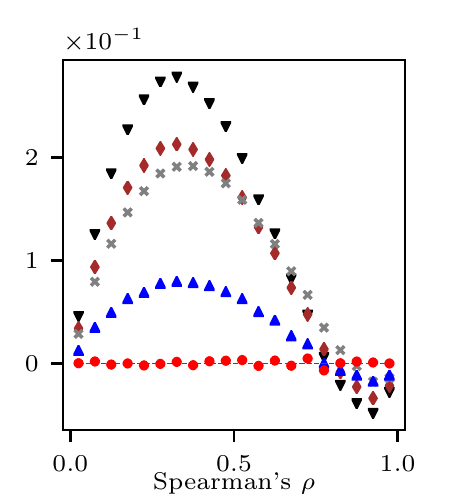}}
		\subfigure[Frank 
		copula\label{fig::fr_c4}]{\includegraphics{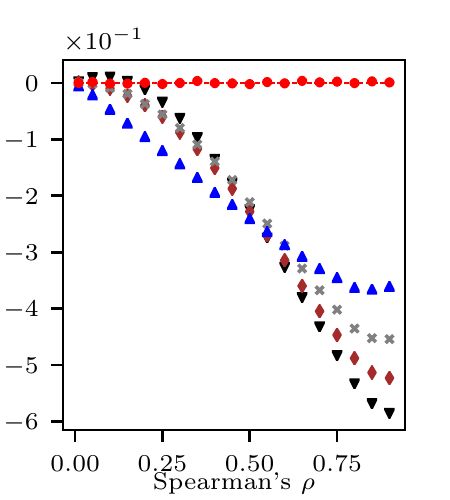}}
		\caption{$4$\textsuperscript{th} cumulant's elements of Archimedean 
		copulas, and standard Gaussian marginals. Solid lines represent 
		theoretical outcomes, while points represent outcomes form generated 
		data.}\label{fig::archc4}
\end{figure}

We start with the (non nested) Archimedean copula, as introduced in 
Definition~\ref{def::nvarchcop}. It is easy to show that such copula is 
unchanged under any permutation of
marginals,
\begin{equation}\label{eq::symarch}
\forall_{i, j \in (1:n)}\C_{\psi_{\theta}}(u_1, \ldots, 
u_i, \ldots u_j, \ldots u_n) = \C_{\psi_{\theta}}(u_1, \ldots, u_j, \ldots u_i, 
\ldots u_n),
\end{equation}
and we call such feature the symmetry of the Archimedean copula. 

The simplest approach to investigate distinct elements of 
$3$\textsuperscript{rd} cumulant tensor of the Archimedean copula 
with standard Gaussian marginals will be to take the tri-variate Archimedean 
copula. The probabilistic model is given by:
\begin{equation}
\C_{\psi_{\theta}}(u_1, u_2, u_3) \text{ and } x_i = 
F_{\mathcal{N}(0,1)}^{-1}(u_i).
\end{equation}
Following~\cite{domino2018hiding}, we have here $3$ distinct 
$3$\textsuperscript{rd}
cumulant's elements:
\begin{itemize}
\item $c_{1,1,1} = c_{2,2,2} = c_{3,3,3} = 0$ -- diagonal elements, zero due to 
Gaussian marginal distributions;
\item $c_{1,1,2} = c_{1,2,2} = c_{1,3,3} = c_{3,3,1} \ldots$ -- partially 
diagonal elements, equal due to copula symmetry, identical marginal 
distributions, and cumulants super-symmetry;
\item $c_{1,2,3} = c_{2,1,3} = c_{3,1,2} = \ldots $ - off-diagonal elements, 
equal due to cumulants super-symmetry.
\end{itemize}
In Figure~\ref{fig::archc3} we present elements of the $3$\textsuperscript{rd} 
cumulant calculated for data sampled from Gumbel, Clayton or Frank Archimedean 
copulas, for a further discussion see~\cite{domino2018hiding}. 
In the case of the 
Clayton copula~\ref{fig::cl_c3} we present in addition 
theoretical values obtained by integrating numerically 
Eq.~\eqref{eq::cumsmoms}. 
For all Archimedean copulas, we determine the 
copula parameter $\theta$ from the Spearman's $\rho$ correlation coefficient, 
using Eq.~\eqref{eq::copcor}.

Analogically we have $5$ distinct elements of the $4$\textsuperscript{th} 
cumulant's tensor given $4$-variate Archimedean copula (with standard Gaussian 
marginals in our case). Values of these elements computed from data sampled 
from these copula are presented in Figure~\ref{fig::archc4}. 

From Figures~\ref{fig::archc3} and~\ref{fig::archc4} one can conclude that 
higher 
order cumulants of different Archimedean copulas with Gaussian marginals have 
different patterns, hence such cumulants can be used to 
distinguish between copulas. 
Moreover, from Figure~\ref{fig::fr_c3}, one can conclude that given the 
Frank copula, the $3$\textsuperscript{rd} cumulant has only non-zero 
off-diagonal elements. These elements are 
calculated from three distinct marginals. This is a meaningful example of the 
tri-variate $3$\textsuperscript{rd} order cross-correlation measure that is 
non-zero given zero bivariate $3$\textsuperscript{rd} order cross-correlation 
measures, such as $c_{1,1,2}$. For the practical discussion of
cumulants based $d$-variate interdependence measures (where $d > 2$) in data 
analysis see~\cite{glombband2018}. There cumulants are used to analyse 
information from the hyper-spectral data.

\subsection{Fr\'echet copula}

In this subsection, we discuss cumulants of the $n$-variate $1$-parameter 
Fr\'echet copula with 
standard Gaussian marginals, see
Definition~\ref{def::frechcop}. It is 
easy to show that this copula has the 
same symmetry as an Archimedean one, see 
Eq.~\eqref{eq::symarch}. Further 
it can be demonstrated, at least numerically, that higher order odd 
cumulants of such copula with symmetric marginals (such as Gaussian 
marginals) should be 
zero, see Figure~\ref{fig::norms} for justification.

For experiments results, in Figure~\ref{fig::fre4} we present distinguishable
$4$\textsuperscript{th} cumulant's elements of discussed here Fr\'echet copula. 
One can observe that the pattern of the 
$4$\textsuperscript{th} cumulant differs from patterns of Archimedean copulas, 
presented in 
Figure~\ref{fig::archc4}. Importantly for each Spearman's $\rho$ value, 
off-diagonal cumulant's tensor's elements of the Fr\'echet copula, are 
distinguishable (due to their higher values) in comparison with 
partially diagonal ones. These off-diagonal elements (such as $c_{1,2,3,4}$)
correspond to the $4$\textsuperscript{th} order $4$-variate cross-correlation 
measure between marginals. 

\begin{figure}
	\subfigure[$t$-Student copula $\nu = 
	5$\label{fig::t5}]{\includegraphics{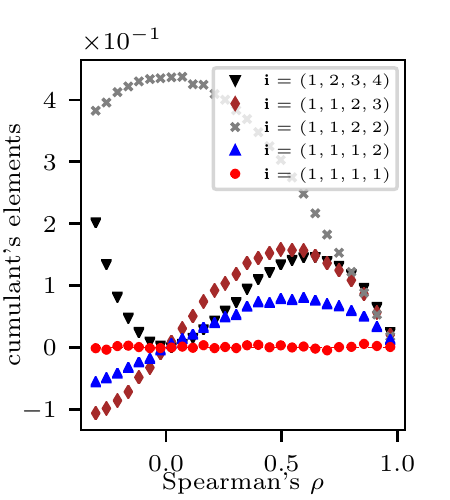}}
	\subfigure[$t$-Student copula $\nu = 
	20$\label{fig::t20}]{\includegraphics{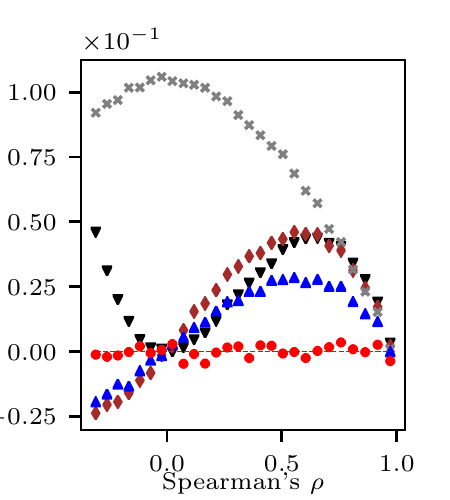}}
	\subfigure[Fr\'echet 
	copula\label{fig::fre4}]{\includegraphics{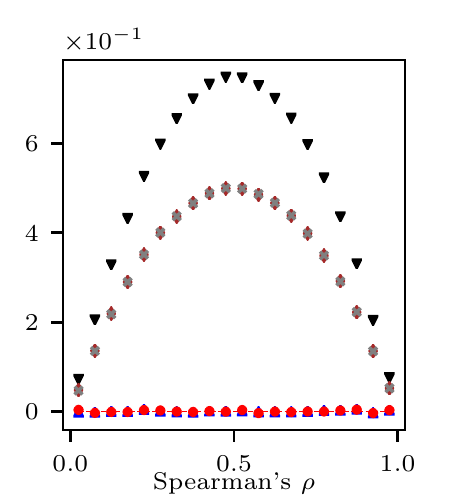}}
	\caption{$4$\textsuperscript{th} cumulant's elements of $t$-Student and 
	Fr\'echet 
		copula, and standard Gaussian marginals.}\label{fig::t}
\end{figure}

\subsection{$t$-Student copula}

Next, we analyse the $t$-Student copula with standard Gaussian marginals
parametrised by an integer 
parameter $\nu$ and the matrix $\mathbf{R}$ with constant off-diagonal elements 
$r_{1,2} = r_{1,3} = \ldots = r$ - see the 
constant correlation matrix in~\cite{domino2018hiding}. Given such 
parametrisation, similar marginals exchange symmetry appears to 
Archimedean and Fr\'echet copulas cases, hence similar presentation of results 
is possible. For experiments we use the $4$-variate $t$-Student copula 
parametrised by $r$ and $\nu$ and standard Gaussian marginals. We use the 
following limitation for the $r$ parameter
$-\frac{1}{3} < r < 1$ to ensure the matrix $\mathbf{R} \in \R^{[4,2]}$ 
to be positive definite given $1$ on its diagonal.
 In 
Figures~\ref{fig::t5}~\ref{fig::t20} we present results of our 
experiments concerning the $4$\textsuperscript{th} cumulant's tensors. For 
further analysis of cumulants of $t$-Student copula see~\cite{domino2018use}. 
The pattern of the $t$-Student copula differs from patterns of other 
copulas 
(Archimedean and Fr\'echet), what gives an advantage for 
$4$\textsuperscript{th} order cumulant in 
copula detection.
For the $t$-Student copula, we discuss $4$\textsuperscript{th} cumulant, since 
 in the case of the $t$-Student copula with 
symmetric marginals higher 
order odd cumulants are supposed to be zero as it is in a case of $t$-Student 
multivariate 
distribution~\cite{kotz2004multivariate}. See also a discussion on elliptical 
copulas (including $t$-Student one) in~\cite{matthias2017simulating}.

To demonstrate the significance of univariate Gaussian marginals to reveal in 
cumulants features of copulas, refer to Figure~\ref{fig::umar}, where we 
present values of elements of the 
$4$\textsuperscript{th} cumulants tensors of various cumulants with uniform 
marginals on $[0,1]$. Here patterns are similar, for different copulas, and the 
proper copula's 
determination on the base of these cumulants is difficult. This is due a high 
negative kurtosis of the uniform distribution and its impact on multivariate 
cumulant of order $4$.

\begin{figure}
	\subfigure[Gumbel 
	copula\label{fig::4gu}]{\includegraphics{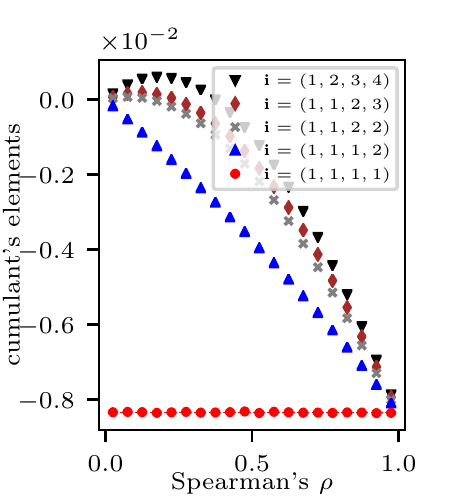}}
	\subfigure[$t$-Student copula $\nu = 
	5$\label{fig::4t}]{\includegraphics{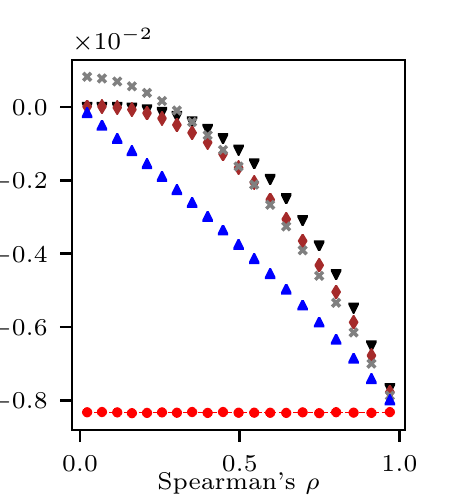}}
	\subfigure[Fr\'echet copula 	
	\label{fig::4f}]{\includegraphics{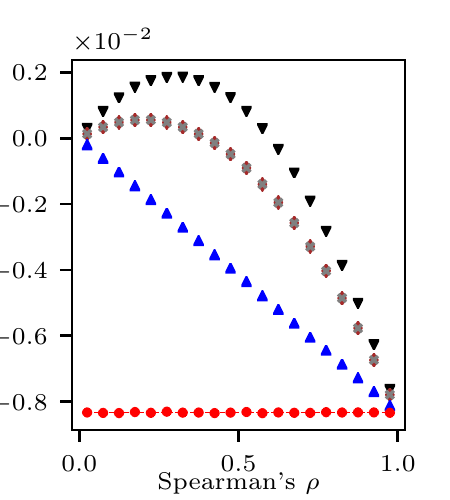}}
	\caption{$4$\textsuperscript{th} cumulant's elements given uniform 
	marginals on $[0,1]$.}\label{fig::umar}
\end{figure}

\section{Auto-correlation function and cumulants}\label{sec::cumsacf}

Having discussed the use of cumulants to analyse interdependency between 
marginals of multivariate data we can mention their potential use to analyse 
auto-correlation of univariate data. What is interesting, the combination of 
both approaches is possible as well.

Suppose we have a random process represented by a series of univariate random 
variables 
$\ZZ_1, \ldots \ZZ_i, \ldots, \ZZ_j, \ldots, \ZZ_k$ and want to analyse the
auto-correlation inside such series. The simplest approach would be to use the 
Pearson's correlation based auto-correlation coefficient 
\cite{dunn2017measurement}:
\begin{equation}
\text{acc}_{i,j} = \frac{\text{acf}_{i,j}}{\sigma_i \sigma_j} = \frac{\E((\ZZ_i 
	- 
	\mu_i)(\ZZ_j-\mu_j))}{\sigma_i \sigma_j}
\end{equation}
here $\mu_i$ and $\sigma_i$ are the mean and the standard deviation of $\ZZ_i$, 
and $\E$ is expecting value operator. 
It is easy to show, that the 
auto-correlation function is simply an element of the 
$2$\textsuperscript{nd} cumulant of $k$ - variate random vector $\ZZ^{(k)}$ 
with marginals $\ZZ_1, \ldots \ZZ_k$. If $\ZZ^{(k)} \sim \mathcal{N}(\mu, 
\SSSS)$ the covariance matrix carries all information about the 
auto-correlation, and $\text{acf}_{i,j} = s_{i,j}$. Given some assumptions 
about statistics of these variables, further simplification of the 
auto-correlation function is possible. 

\begin{remark} If all random variables $\ZZ_i$ 
are from the 
same distribution we have: $\sigma_i = \sigma_j = \sigma$ and $\mu_i = 
\mu_j = \mu$. Further if the auto-correlation function $\text{acf}_{i,j}$ is 
only the function 
of $\tau = |i-j|$ we have a stationary process and we can parametrize the 
auto-correlation 
function or coefficient by one lag parameter $\tau$. Suppose we have 
$\tilde{Z} \in \R^t$ being a vector of $t$ realisations of zero mean univariate 
random 
variable. Given stationary of $\tilde{Z}$ the 
auto-correlation function can be estimated by~\cite{dunn2017measurement}
\begin{equation}
\text{acf}_{\tau}(Z) = \frac{1}{t-\tau} \sum_{i = 1}^{t-\tau}\tilde{z}_i 
\tilde{z}_{i+\tau}
\end{equation}
\end{remark}

If our series is represented by $\ZZ^{(k)}$ that is not distributed according 
to multivariate Gaussian,
we may want to consider higher order cumulants approach in place of standard 
auto-correlation function, see \cite{nikias1993signal, manolakos2000systematic}.
\begin{example}
	Let us present some examples of cumulants based auto-correlation functions, 
	namely
	\begin{enumerate}
		\item $3$\textsuperscript{rd} order in analogy to 
		$3$\textsuperscript{rd} cumulant
	\begin{equation}\label{eq::acc3}
	\text{acf}_{\tau_1, \tau_2}(Z) = \frac{1}{t-\max(\tau_1, \tau_2)} \sum_{i = 
	1}^{t-\max(\tau_1, \tau_2)} \tilde{z}_i 
	\tilde{z}_{i+\tau_1}\tilde{z}_{i+\tau_2}
	\end{equation}
	\item $4$\textsuperscript{th} order in analogy to $4$\textsuperscript{th} 
	cumulant
	\begin{equation}
	\begin{split}
			&\text{acf}_{\tau_1, \tau_2, \tau_3}(Z) = \frac{1}{t-\tau} \sum_{i 
			= 
				1}^{t-\tau}\tilde{z}_i 
			\tilde{z}_{i+\tau_1}\tilde{z}_{i+\tau_2}\tilde{z}_{i+\tau_3} - 
			\left(\frac{1}{t-\tau}\right)^2 \\  &\left(\sum_{i = 
			 	1}^{t-\tau}\tilde{z}_i\tilde{z}_{i+\tau_1} \sum_{i = 
			 	1}^{t-\tau}\tilde{z}_{i+\tau_2}\tilde{z}_{i+\tau_3} + \sum_{i = 
			 	1}^{t-\tau}\tilde{z}_i\tilde{z}_{i+\tau_2} \sum_{i = 
			 	1}^{t-\tau}\tilde{z}_{i+\tau_1}\tilde{z}_{i+\tau_3}  + \sum_{i 
			 	= 
			 	1}^{t-\tau}\tilde{z}_i\tilde{z}_{i+\tau_3} \sum_{i = 
			 	1}^{t-\tau}\tilde{z}_{i+\tau_1}\tilde{z}_{i+\tau_2}   \right) 
			 	\\ 
			 	& = \frac{1}{t-\tau} \sum_{i = 
			 		1}^{t-\tau}\tilde{z}_i 
			 	\tilde{z}_{i+\tau_1}\tilde{z}_{i+\tau_2}\tilde{z}_{i+\tau_3} - 
			\text{acf}_{\tau_1} \text{acf}_{|\tau_3 - \tau_2|} - 
			\text{acf}_{\tau_2} \text{acf}_{|\tau_3 - \tau_1|} - 
			\text{acf}_{\tau_3} \text{acf}_{|\tau_2 - \tau_1|},
			\end{split}
	\end{equation}
	where $\tau = \max(\tau_1, 
	\tau_2, \tau_3)$. Here products of one parameter auto-correlation functions 
	ensures that $\text{acf}_{\tau_1, \tau_2, \tau_3}(Z)$ carries additional 
	information included in those one parameter auto-correlation functions.
		\end{enumerate}
\end{example}

To show a significance of higher order cumulants in auto-correlation analysis 
in 
the computer science
refer to \cite{manolakos2000systematic}, where the 
parallel computer architecture for the calculation of $4$'th order cumulant 
that measures auto-correlation is discussed. Further in \cite{qasim2008fpga} 
the novel algorithm for the FPGA (Field Programmable 
Gate Array) based architecture was used to calculate the third order 
auto-correlation cumulant. Let us present further potential applications of 
cumulant based auto-correlations in real life biomedical data analysis.

\begin{example}
While analysing biomedical signals one is often interested in a correlation of 
two signals in a frequency domain. For justification of such approach 
consider~\cite{muthuswamy1998spectral} where the spectral distance analysis of 
the neurological EEG signals was introduced. To determine the correlation of 
two signals as a function of frequency the
coherence function~\cite{pereda2005nonlinear} is used. Such coherence is the 
normalised 
Fourier transform 
of auto-correlation function between two signals. In details, suppose we have 
two zero 
mean signals 
$\tilde{Z} \in \R^t$ and $\tilde{W} \in \R^t$. First we compute their 
auto-correlation function:
\begin{equation}\label{eq::crossac}
	\text{acc}_{\tau}(Z,W) = \frac{1}{\sigma(Z) \sigma(W)(t-\tau)} \sum_{i = 
	1}^{t-\tau}\tilde{z}_i 
	\tilde{w}_{i+\tau},
\end{equation}
where $\sigma(Z) \sigma(W)$ are corresponding standard deviations of signals. 
Next we perform the Fourier transform of the result over 
$\tau$, moving to frequency domain since $\tau$ for biomedical data indicates 
often a time lag. Finally the coherence $-1 \leq \kappa^2_{Z,W}(f) \leq 1$ is 
normalised Fourier transform of the auto-correlation function as in 
Eq.~\eqref{eq::crossac}.

For application of the coherence function in biomedical data analysis 
see~\cite{rabotti2010characterization, 
domino2018uterine, domino2016linear}. 
However consider for example~\cite{domino2017biomathematical} where $3$ 
biomedical signals are analysed and the pairwise 
coherence was computed for those signals. Such approach can be generalised 
using the $3$\textsuperscript{rd}
cumulants~\ie~computing following auto-correlation
\begin{equation}\label{eq::crossac3}
\text{acc}_{\tau}(Z,W,V) = \frac{1}{\sigma(Z) \sigma(W) \sigma(V)(t-\tau)} 
\sum_{i = 1}^{t-\tau}\tilde{z}_i 
\tilde{w}_{i+\tau_1}\tilde{v}_{i+\tau_2},
\end{equation}
	where $\tau = \max(\tau_1, 
	\tau_2)$. Next by the double Fourier transform and corresponding 
	normalisation we can have the double frequencies coherence that can be 
	compared with pairwise results.

\end{example}

\chapter{Cumulants in machine learning}\label{cha::cumml}

In this chapter, we consider multivariate data where marginal variables are 
features. Next, we discuss algorithms applying cumulants that can be used for 
feature selection or feature extraction when new features are linear 
combinations of original ones. We focus on examples, where interesting 
information is associated with the joint non-Gaussian distribution of a subset 
of features. In general, such an approach is applicable when searching for 
outliers that represent extreme events often resulting from non-usual dynamics 
of the complex (physical) system producing data, \eg~\cite{domino2014meteo, 
gligor2001econophysics}. Such dynamics may result in non-Gaussian distribution. 
In other words, outliers lie in the tail of the multivariate distribution of 
data. Hence outliers may be modelled by non-Gaussian models.  Apart from this, 
the majority of algorithms presented in this chapter can be modified in such a 
way, that they can be used to search for/find subsets of Gaussian-distributed 
features.  These can then be used in machine learning algorithms that rely on 
assumption that data are normally distributed. 

For the practical example, consider financial data analysis in the safe 
investment portfolio determination. We are searching for such a portfolio (the 
linear combination of assets being features) where potential extreme events, 
being high drops in the portfolio's value are unlikely. The standard approach 
that uses Gaussian multivariate distribution~\cite{best2000implementing}, 
although simple applicable, fails to anticipate extreme events.  To overcome 
this problem, one can use higher order cumulants tensors, that reflect higher 
order cross-correlations between marginals and hence can be used to anticipate 
simultaneous extreme events occurring for many 
marginals~\cite{morton2009algebraic}. The practical use of higher order 
cumulants tensors in the safe investment portfolios determination, during the 
crisis on the Warsaw Stock Exchange, is discussed in~\cite{domino2016use}. 

As another example, let us consider non-Gaussian distributed data collected 
utilising the hyper-spectral imaging (HSI).  Analysis of such images is 
complicated due to their high dimensionality, lack of training examples for 
classification and the presence of feature redundancy phenomena 
\cite{ghamisi2017advances}. 

\begin{example}
	In this example we discuss data collected by means of the hyper-spectral 
	camera. Such camera outputs for each pixel (a position on 2D plane) a 
	vector of $n$ features being intensities of light (reflectance) recorded 
	for $n$ different wavelengths: $\lambda_{1}, \ldots, \lambda_{n}$.  
	Hyper-spectral images are represented by a $3$ mode tensor $\mathcal{X} \in 
	\R^{x \times y \times n}$. One can analyse hyper-spectral data in the 
	spectral approach or the spatial approach~\cite{benediktsson2015spectral}. 
	In the spectral approach, each feature is a reflectance vector tied to the 
	given wavelength. Here pixels index realisations. Hence, we can represent 
	data in the form of matrix $\textbf{X} \in \R^{t \times n}$. Such 
	representation gives naturally $n$ features and $t = xy$ realisations. The 
	disadvantage of such representation is the loose of the spatial 
	information, what is the cost of the use of the spectral approach. 
	Alternatively, in the spatial approach, we examine $n$ different monochrome 
	images of data that correspond to $n$ different wavelengths.
    
    In~\cite{glomb2018application} authors discussed the detection of the 
    gunshot residue (GSR) on various fabrics types utilizing a hyper-spectral 
    imagining. For this purpose, the authors used advanced machine learning 
    algorithms. Such an approach is potentially applicable to support the 
    forensic analyst in the objective evidence collection and follows the 
    active research in forensic science of the GSR identification and 
    analysis~\cite{dalby2010analysis}.  In more details, 
    in~\cite{glomb2018application}, authors used spectral approach together 
    with unsupervised and supervised machine learning algorithms, the first 
    being the (RX) Reed-Xiaoli Detector~\cite{reed1990adaptive}, and the second 
    the (SVM) Support Vector Machine classifier~\cite{smola1998learning}.  
    Authors focused on the anomaly detection scenario, where the GSR pattern 
    occupies a small portion of pixels. For the purpose of experiments, authors 
    have created the dataset by annotating, by hand, some of the pixels 
    reflecting the fabric itself (background) and GSR (outliers).  Even though 
    GSR patterns have a district spatial characteristics: spherical or 
    elliptical, the spectral approach applied here gives a straightforward 
    application of probabilistic models in detection.

    The detection performance in both unsupervised RX detector and supervised 
    SVM classification varied between materials. In majority cases most of the 
    annotated outliers have been detected. However there was a notable number 
    of false positives. To demonstrate unsupervised detection, the 
    visualisation of both background and GSR data is presented in 
    Figure~\ref{fig:rxnorm}. Here the PCA projection of features on two most 
    significant components is presented.  Further, some realisations are 
    annotated as background data or outliers. The ellipse represents the 
    approximate threshold of the multivariate Gaussian model at $95\%$ 
    confidence. Hence, if data were multivariate Gaussian distributed (with the 
    same covariance matrix for data and outliers), the ellipse should refer to 
    the detection threshold. Given those,  most of the outliers are supposed to 
    lay outside the ellipse.  Analogically most ($95\%$) of background data are 
    supposed to lay inside the ellipse. In Figure~\ref{fig:rxmoro} we 
    demonstrate the result, where most of the outliers are detected.
    
    Nevertheless, we have a significant number of false detections due to the 
    non-Gaussian structure of the background. In Figure~\ref{fig:rxwhite}, 
    although outliers are on the side of the background data, the vast number 
    of them lie within the background model.  One can see here the non-Gaussian 
    structure of data as well, however different than in 
    Figure~\ref{fig:rxmoro} case. Given those, in~\cite{glomb2018application} 
    authors have concluded, that data pattern suggests the non-Gaussian 
    distribution of outliers, as well as of background data given some (mainly 
    not uniform) materials.
	
		\begin{figure*}[!t]
		\centering
		\subfigure[A mixture of cotton and polyester (a military uniform). 
		\label{fig:rxmoro}]{
			\includegraphics[width=0.48\linewidth]{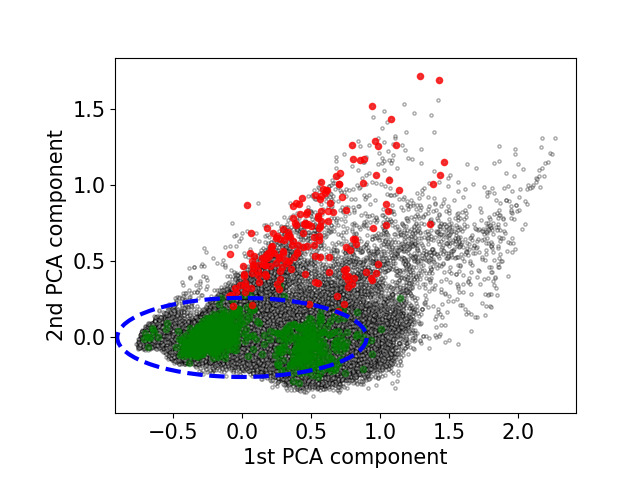}}
		\subfigure[White cotton (a shirt).\label{fig:rxwhite}]{	
			\includegraphics[width=0.48\linewidth]{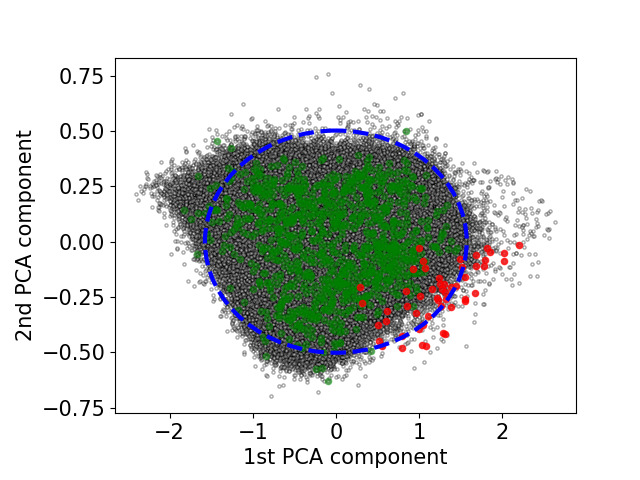}}
		\caption{Each figure presents a PCA projection on two most significant 
			components. Dark grey dots mark projections of not annotated 
			hyper-spectral vectors, light grey dots mark projections of vectors 
			annotated as 
			background, while black dots mark projections of vectors annotated 
			as 
			GSR.}
		\label{fig:rxnorm}
	\end{figure*}

	For the supervised approach the detection 
	rate is consistent or slightly better than in the unsupervised case 
	and the number of false positives is lower. What is important from the 
	application point of view, authors  have shown that the hyper-spectral 
	imaging in general outperforms the RGB (red green blue) imaging in term of 
	the GSR detection accuracy.

Promising directions of further research include the development of background 
model, possibly using copulas, to build background's non-Gaussian probabilistic 
model. Further one can evolve features selection algorithm to find such 
features that have highly non-Gaussian joint distribution and carry meaningful 
information about outliers. Alternatively, one can find such linear combination 
of features that carries most information about non-Gaussian joint distribution 
of features in the features extraction procedure. Both of above can be 
performed by means of higher order cumulants tensors. 
\end{example}

From such examples, one can conclude that it would be beneficial to have the 
measure determining 
how far the multivariate distribution of data or its subset (a background data, 
outliers, etc...) is from the multivariate Gaussian one.
Such measure may be introduced, by means of higher 
order cumulants tensors that are zero for multivariate Gaussian 
distributed data, see Section~\ref{sec::cumsg}. As discussed 
in~\cite{domino2018sliding} the simple approach, to construct such measure, 
would be to compute the Frobenius norm of higher order cumulants tensors. 
For an arbitrary tensor $\mathcal{T} \in \R^{d_1 
\times ,\ldots, \times d_n}$ the Frobenius norm is given by
\begin{equation}\label{eq::cumnorm}
	\| \mathcal{T} \| = \sqrt{\sum_{i_1, \ldots i_d} \left(t_{i_1, \ldots, 
	i_d}\right)^2}.
\end{equation}
The norm in Eq.~\eqref{eq::cumnorm} is easily implementable for 
the super-symmetric tensor (such as cumulant's tensor), that uses blocks to 
store only its meaningful part, see Section~\ref{sec::blockimpl}. For the
algorithmic implementation see~\cite{domino2018sliding} and Algorithm $5$ 
within, while for implementation in Julia programming language 
see~\cite{cupdat} and \texttt{norm()} function within. Following 
discussion in~\cite{domino2018sliding} we 
use the normalised 
norm
\begin{equation}
h_{\text{norm,d}}(\CC_2, \CC_d) = \frac{\| \CC_d \|}{\|\CC_2 \|^{d/2}} \text{ 
for } d \geq 3,
\end{equation}
to measure how much the 
probabilistic model of data diverges from the multivariate Gaussian 
distribution. 
The motivation for such normalisation in denominator
comes from 
the fact that in an univariate 
case,~\ie~$n=1$ we reproduce the module of the asymmetry for $d = 3$ 
and the module of the kurtosis for $d = 4$. 

In Figure~\ref{fig::norms} we 
present 
$h_{\text{norm,d}}$ for artificial data modelled by the $t$-Student copula 
(Figure~\ref{fig::t1norm}) and the Fr\'echet copula (Figure~\ref{fig::fnorm}). 
The $t$-Student copula is parametrised by the scalar parameter $\nu = 1$ and 
the matrix parameter 
$\mathbf{R}$ with constant off-diagonal elements equal to $r=0.53$ and ones on 
the diagonal. This reflect constant cross-correlation between marginals. The 
Fr\'echet copula is parametrised by $\alpha = 0.5$. The cross-correlation 
between marginals  in the $t$-Student copula case is similar to the 
cross-correlation in the 
Fr\'echet copula case. Further, in both 
cases we use standard Gaussian marginals (with zero kurtosis), to reveal only 
copulas properties, see Section~\ref{sec::cumcops} to justify such approach. 
We performed all computations for $t = 5 \times 10^6$ 
realisations of $n = 30$ features. 

\begin{figure}
    \begin{center}
	\subfigure[$t$-Student copula $\nu = 1, r = 	
	0.53$\label{fig::t1norm}]{\includegraphics{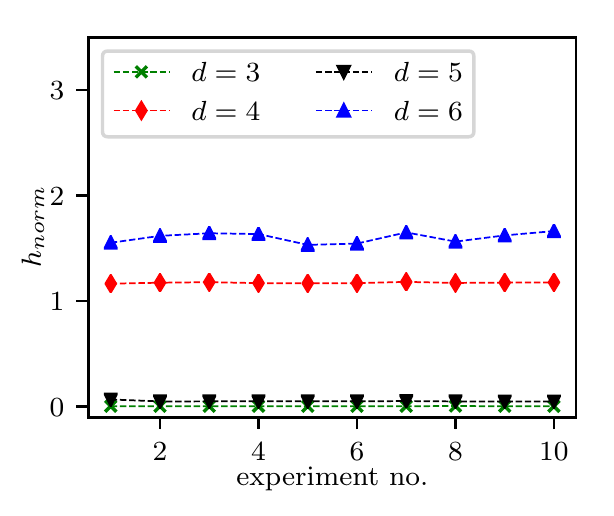}}
	\subfigure[Fr\'echet 
	copula $\alpha = 0.5$ 
	\label{fig::fnorm}]{\includegraphics{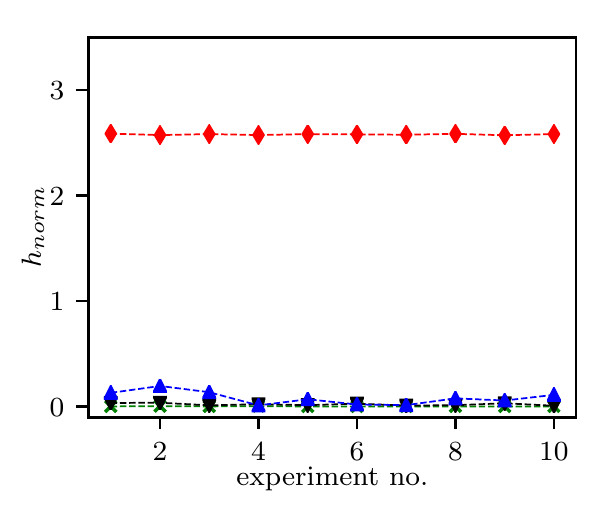}}
	\caption{$h_{\text{norm,d}}$ of higher order cumulants for $t$-Student and 
		$1$-parameter Fr\'echet 
		copulas with standard Gaussian marginals and $n = 30$ 
		features. Given such parametrisation both copulas have similar 
		cross-correlation 
		between marginals. For $t$-Student and Fr\'echet 
		copulas with standard Gaussian marginals we have a strong suggestion of 
		zero 
		odd order cumulants. Observe the 
		characteristic case of the $1$-parameter Fr\'echet 
		copula, where only the $4$\textsuperscript{th} cumulant's tensor is 
		non-zero, while other cumulants (of order $3,5,6$) appear to be 
		zero.}\label{fig::norms}
    \end{center}
\end{figure}

Analysing Figure~\ref{fig::norms}, we can conclude that for both 
copulas, there is a strong suggestion that odd order higher order cumulants are 
zero. 
In the case 
of the $t$-Student copula it results form the specific symmetry of the 
$t$-Student multivariate distribution~\cite{kotz2004multivariate} that is the 
base of the $t$-Student copula. In the case of the $1$-parameter Fr\'echet 
copula there is a 
strong suggestion that cumulant of order $6$ is zero as well, while cumulant 
of order $4$ is not zero. Interestingly the $6$\textsuperscript{th} cumulant is 
not zero for bivariate $2$-parameters Fr\'echet copula introduced by 
Definition~\ref{def::frech2}, however such copula is not applicable for $n > 2$.

Concluding, the $h_{\text{norm}, d}$ can be used in 
features selection given various copulas models. Given Gaussian univariate 
marginals we have
\begin{enumerate}
	\item $h_{\text{norm, 4}} = h_{\text{norm, 6}} = 0$ for Gaussian copula,
	\item $h_{\text{norm, 4}} \neq 0$ and $h_{\text{norm, 6}} \approx 0$ for 
	Fr\'echet copula,
	\item $h_{\text{norm, 4}} \neq 0$ and $h_{\text{norm, 6}} \neq 0$ for 
	$t$-Student copula.
\end{enumerate}
While using other copulas families (such as Archimedean) the $h_{\text{norm}, 
d}$ for odd $d$ would be non-zero as well, see Subsection~\ref{sec::ac}. 
Hence more selection scenarios are possible.
Apart from this, norms of higher order cumulants can be used to analyse 
real-life data, as in following example.

\begin{example}\label{eg::pigs}

Consider the analysis of real-life biomedical data 
analysis \cite{dominodomino2018} where electromyographic (EMG) 
\cite{eswaran2004prediction} signals recorded on the porcine uterine walls, 
during the pre-fertilisation and early pregnancy periods. The motivation for 
such research comes from the fact, that pigs uterine activity has been used to 
model those of humans due to their close similarities 
\cite{fanchin2009uterine}. Refer to researches on the role of the uterine 
activity in promotion of the fertilization~\cite{sammali2018feasibility}.  
In~\cite{dominodomino2018} the novelty of the study comes from the fact that 
the analytic tools there are higher order cumulants tensors.

In practice, the EMG signal organises in the form of bursts reflecting the 
electrical activity in the uterus. In~\cite{dominodomino2018}, these bursts are 
investigated employing commonly used features in the EMG signal analysis. 
Features in the time domain \cite{devedeux1993uterine, gajewski2004oviductal} 
are Duration of the burst, Pause -- length of the pause between bursts, 
Amplitude of the burst,  and RMS -- root mean square of the burst.  Features in 
the frequency domain \cite{oczeretko2007uterine} are MaxP -- maximum power ob 
the burst, MinP -- minimum power of the burst, and DF -- and dominant frequency 
of the burst.  The motivation for the second approach comes from the fact, that 
frequency domain features were suggested to be more effective than the 
conventional time-dependent features while analysing the uterine 
dynamics~\cite{garfield2007biophysical}. Given those, quantitative multivariate 
data in~\cite{dominodomino2018} were discussed here features of the EMG signal 
recorded in different part of the porcine uterus and different days of the 
pre-fertilisation and early pregnancy periods. These locations are: the tip of 
the right uterine horn (channel $1$), the middle of the right uterine horn 
(channel $2$), and the corpus uteri (channel $3$), see 
Figure~\ref{fig::uterus}. Data were collected from $8$ Polish Landrace sows and 
aggregated together for a given location and a given day of the experimental 
period.

\begin{figure}
	\includegraphics[width=\textwidth]{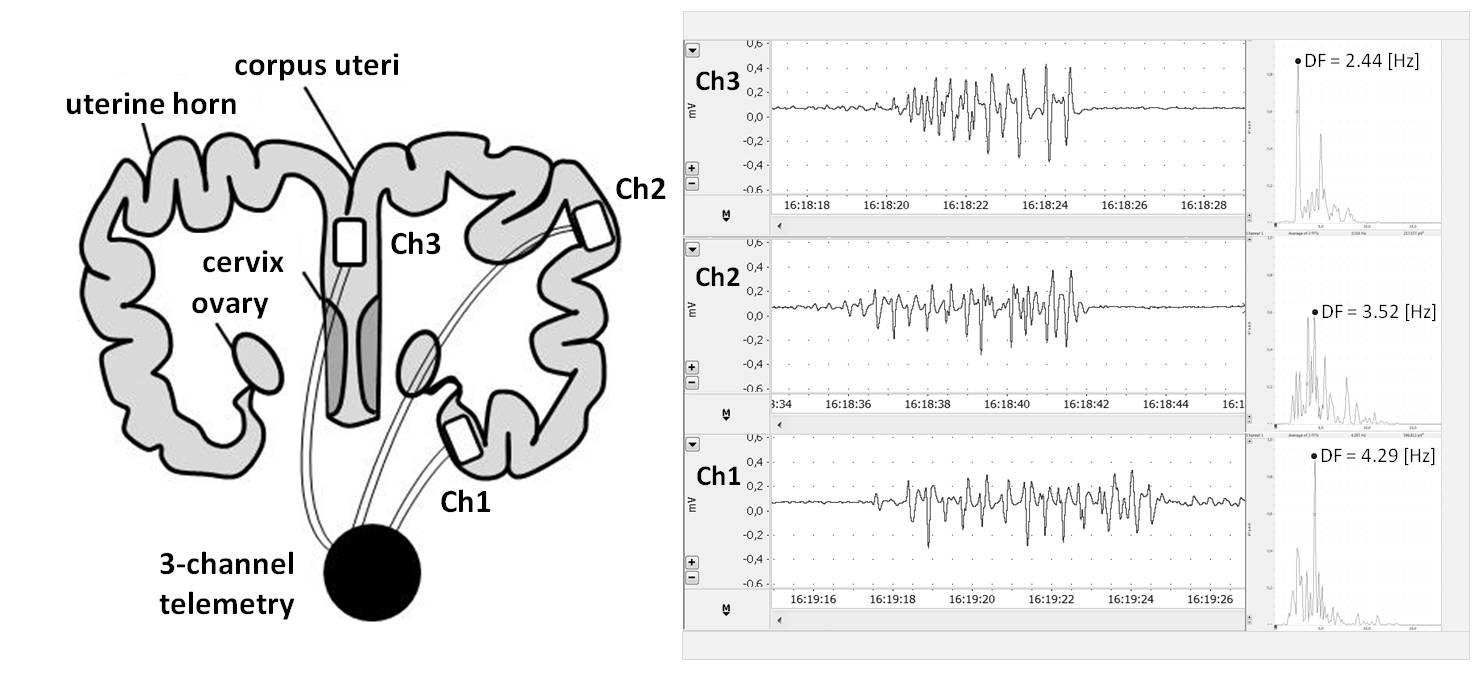}
	\caption{The experimental setting used for data acquisition 
	in~\cite{dominodomino2018}. Here channels correspond to various locations 
	of probes on the uterus on the left panel. Example of rough EMG data 
	recorded are presented on the right panel.}\label{fig::uterus}
\end{figure}
Due to complex  
biochemical process associated with the uterine 
behaviour~\cite{rabotti2015propagation}, authors expect non-Gaussian joint 
distribution of the features. For each day of the pre-fertilisation and early 
pregnancy periods and each channel, features are represented in the form of 
matrix $\mathbf{X} \in \R^{t \times n}$. Each time we have $t=400$ 
realisations and $n=7$ features (Duration; Pause; Amplitude; RMS; MaxP; 
MinP;  DF). Since values of different features (marginals) differs by many 
orders of magnitude, authors normalise each column vector of $\mathbf{X}$ 
by its standard deviation. Normalised data are represented in the form of 
matrix 
$\mathbf{\hat{X}}\in \R^{t \times n}$, see 
Eq.~$(4)$ in~\cite{dominodomino2018}. Such normalisation facilitates the 
analysis of the joint distribution and higher order cross-correlations between 
features.

\begin{figure}[ht!]
	\subfigure[channel 
	$3$ - corpus 
	uteri\label{fig::bioc33}]{\includegraphics{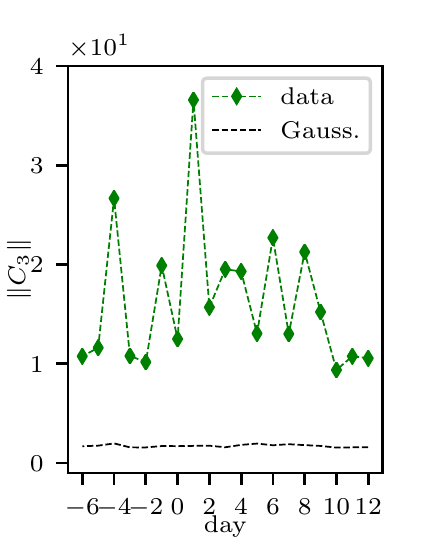}}
	\subfigure[channel 
	$2$ - middle of the right 
	horn of 
	the uterus\label{fig::bioc32}]{\includegraphics{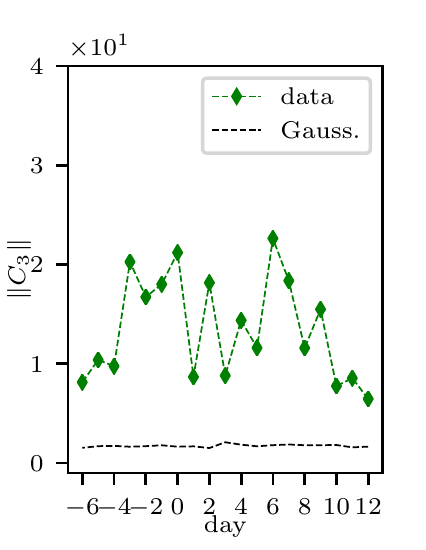}}
	\subfigure[channel $1$ - tip of the right horn of the uterus
	\label{fig::bioc31}]{\includegraphics{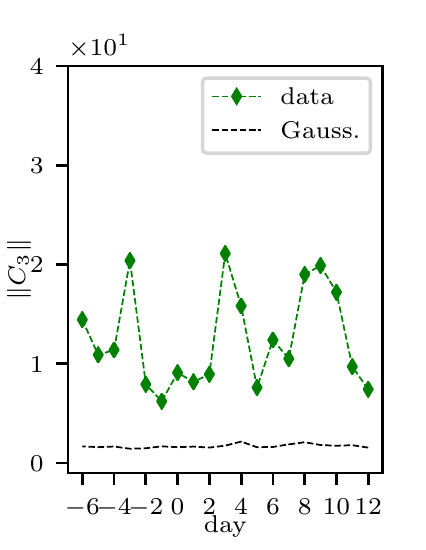}} \\
		\subfigure[channel 
	$3$ - corpus 
	uteri\label{fig::bioc43}]{\includegraphics{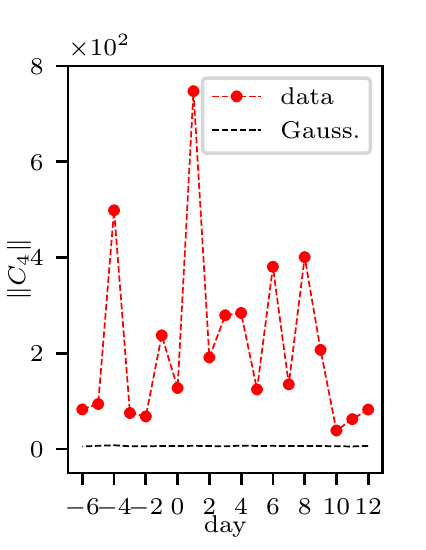}}
	\subfigure[channel 
	$2$ - middle of the right 
	horn\label{fig::bioc42}]{\includegraphics{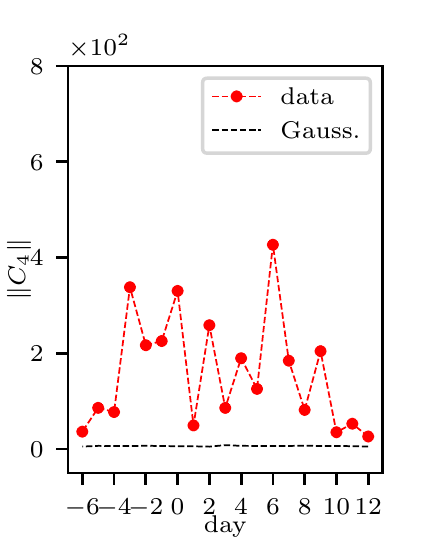}}
	\subfigure[Chanel $1$ - tip of the right horn
	\label{fig::bioc41}]{\includegraphics{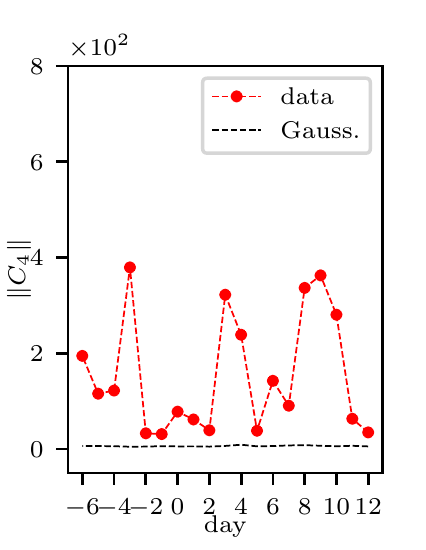}}
	\caption{Frebenious norms of higher order cumulants tensors calculated for 
$\mathbf{\hat{X}}$ - features of EMG signals, collected on 
	pigs uterus. Channels indicates position on the uterus where data are 
	collected. On the horizontal scale we have days of the per-fertilisation 
	and early pregnancy cycle. Dashed black line indicates $0.975$ percentile 
	of $\| \CC_d\|$ given hypothetical Gaussian model. 
	Day $0$ corresponds to the Artificial Insemination (AI), 
	see~\cite{dominodomino2018}.}\label{fig::bio}
\end{figure}

The goal of \cite{dominodomino2018} was 
to apply higher order cumulants tensors to analyse probabilistic models of data 
for distinct days of the pre-fertilization period and the early pregnancy. To 
measure how much the probabilistic model of features diverged form the Gaussian 
model, one uses the Frobenius norm of the cumulants tensors of order $3$ 
and $4$, see Eq.~\eqref{eq::cumnorm}. Results of such analysis are 
presented in Figure~\ref{fig::bio}. A straight forward conclusion is that 
analysed features do not have multivariate Gaussian distribution, what should 
be taken into account in their analysis. Further multivariate higher order 
cumulants indicates that probabilistic models of these features changes 
expressly on the daily basis. As discussed in~\cite{dominodomino2018}, these 
changes appears in concordance with crucial periods of the development and 
maintenance of pregnancy.

\begin{enumerate}
	\item Authors suggest the first pick of $\| \CC_d\|$ that 
begins in the corpus uteri in day $-4$, see 
Figures~\ref{fig::bioc33}~\ref{fig::bioc43} is the signal of the estrus cycle  
synchronization. Next it spreads along the uterine horn (days $-3$ to $0$ in 
Figures~\ref{fig::bioc32}~\ref{fig::bioc42}) into the tip of the uterine horn 
(day $-3$ in Figures~\ref{fig::bioc31}~\ref{fig::bioc41}).
\item The second pick of $\| \CC_d\|$ occurs shortly after the Artificial 
Insemination (AI) in the corpus uteri (day $1$, 
Figures~\ref{fig::bioc33}~\ref{fig::bioc43}), next it moves to the middle 
of the utters 
horn (day $2$, Figures~\ref{fig::bioc32}~\ref{fig::bioc42}) and finally to the 
tip of the uterine horn (day $3,4$, 
Figures~\ref{fig::bioc31}~\ref{fig::bioc41}). Authors suggest that such 
behaviour is related to the sperm cells transport.
\item The third pick occurs only at the middle in the uterine horn (days $6$ to 
$8$, 
Figures~\ref{fig::bioc32}~\ref{fig::bioc42}) and in the tip (days $8$ to $10$, 
Figures~\ref{fig::bioc31}~\ref{fig::bioc41}). It is probably caused by the he 
descent of embryos into the uterine lumen.
\end{enumerate}

To summarize, in~\cite{dominodomino2018} cumulants tensors of order $3$ and $4$ 
were used to anticipate information about the 
probabilistic model of data. It is why authors used the $3$\textsuperscript{rd} 
cumulant tensor based features selection procedure to achieve a rank of 
features importance. For this purpose authors 
applied the features selection 
procedure introduced in~\cite{geng2015joint} for hyper-spectral 
data analysis. This procedure is called the JSBS (Joint Skewness Band 
Selection) and is
discussed in this book in Section~\ref{sec::cumsfsel}. Results of the ranking 
of the importance of features for all days and channels are presented in Table 
$1$ 
of~\cite{dominodomino2018}. Based on these authors speculate for investigated 
animals, features: MinP, 
MaxP and Pause are 
crucial 
elements of the middle of the uterine horn activity. This is an important 
result in evolving models of 
the uterine activity, that partially approves the outcome 
of~\cite{garfield2007biophysical}, hence we have two features in a frequency 
domain.

Concluding, modelling based on higher order cumulants tensors offers a unique 
opportunity to understand the mechanisms underlying uterine contractility and 
development of early pregnancy in animals.
Further such novel approaches may point the direction of research aimed to 
develop uterine models, hence improve management of the early pregnancy both in 
animals and in human, due to close similarities 
\cite{fanchin2009uterine} of models in both cases. The $3$\textsuperscript{rd} 
cumulant's tensor bases JSBS features selection used in~\cite{dominodomino2018} 
is a member of the wider family of features selection methods that are 
introduced and discussed in next section of this book.
\end{example}

Using the analogical approach in \cite{domino2018comp} authors have applied 
higher 
order culuants tensors to analyse features of the EMG signal of the estrus 
cycle in 
pigs. Here, in contrary to the classical statistical approach, the higher order
cumulants tensors model shows evidences of large EMG activity during 
induced estrus. This is in contrary to the spontaneous estrus. Authors of 
\cite{domino2018comp} concludes, that 
higher order cumulants tensors are indicator revealing a probabilistic model of 
dynamic changes in the myometrial electrical activity as a result of positive 
and negative feedback of the reproductive hormones. 

Referring to the last two paragraphs of Example~\ref{eg::pigs}, and the 
application of the cumulants tensors in the features selection for 
hyper-spectral data~\cite{geng2015joint} we can move to the discussion of 
cumulant's based features selection algorithms in next section.

\section{Features selection}\label{sec::cumfdet}

In this section we discuss algorithms aimed to select a subset of marginals 
$\mathbf{r} \in (1:n)$ 
of multivariate data that carries a meaningful information tied to the 
non-Gaussian joint distribution of such subset. Suppose we have 
$n$-variate random vector $\XXX^{(n)}$ 
with marginals $\XXX_1, 
\ldots, \XXX_n$ and data are given in a form of matrix $\mathbf{X} \in \R^{t 
\times n}$, where columns
correspond to marginals. Following~\cite{geng2015joint, domino2018use, 
glombband2018} we can introduce a 
family of features selection algorithms that uses higher order cumulants 
tensors to select a subset of marginals with non-Gaussian joint 
distribution.

\begin{definition}\label{def::fcut}
Let $\mathcal{T} \in \R^{[n, d]}$ be the super-symmetric tensor. Following 
\cite{geng2015joint, glombband2018}, we define its $r$\textsuperscript{th} 
fibres cut, 
as the following tensor
$\mathcal{T}_{(-r)} = \mathcal{T'} \in \R^{[(n-1), d]}$, where
\begin{equation}
t'_{i'_1, \ldots i'_d} = t_{i_1, \ldots, i_d}: i'_k = \begin{cases} i_k 
 \ &\text{if} \ \ \ \ \ \ \  i_k < r \\
i_{k-1}
 \ &\text{if} \ r \leq i_k < n,
\end{cases} 
\end{equation}
and $i_k' \in (1:n-1)$. Referring to \cite{kolda2009tensor} we simply remove 
all $r$\textsuperscript{th} fibres from $\mathcal{T}$, obviously such 
transformation preserves super-symmetry.
\end{definition} 

It is easy to show referring to Definition~\ref{def::cumsdef}, that given 
$\XXX^{(n)}$ and its cumulant's tensor $\CC_d$ if we change $\XXX^{(n)}$ to 	
${\XXX'}^{(n-1)}$ by removing $r$\textsuperscript{th} marginal, the new 
cumulant's tensor would be $\CC'_d = {\CC_{d}}_{(-r)}$. Hence 
following~\cite{geng2015joint, domino2018use, glombband2018} we can use the 
following target function
\begin{equation}
h(\CC_{d_1}, \CC_{d_2}, \ldots)
\end{equation}
and maximise it at each iterative step, while removing a feature that carries 
little information tied to non-Gaussian joint distribution of features. Such 
procedure, see Algorithm~\ref{alg::itterremove}, 
 is discussed in~\cite{domino2018use}, and~\cite{glombband2018} where it was 
 applied to analyse hyper-spectral data in the small target detection problem. 
 For the implementation in the Julia programming language 
see~\cite{cumf}. In following subsections we discuss 
various target functions.

\begin{algorithm}[t]
 	\caption{Cumulants based features selection, see also ~\cite{glombband2018}.
 		\label{alg::itterremove}}
 	\begin{algorithmic}[1]	
 		\State \textbf{Input}: $\CC_{d_1} \in \R^{[n, d_1]}, \CC_{d_2} \in 
 		\R^{[n, d_2]}, \ldots$ - cumulant's tensors, $h$ - target function, $s 
 		< n$ - Int, a stop 
 		condition. 	
 		\State \textbf{Output:} a subset of 
 		marginals that carries important information.
 		\Function{features select}{$\CC_{d_1}, \CC_{d_2}, \ldots, h, s$}
 		\For {$n' \text{ in }  n , n-1, \ldots, s$}
 		\For {$i \gets 1 \textrm{ to } n'$}
 		\State $m_i = h({\CC_{d_1}}_{(-i)}, {\CC_{d_2}}_{(-i)}, \ldots)$
 		\EndFor
 		\State \text{set} $r \text{ such that } m_r = \text{max}(\{m_1, \ldots, 
 		m_{n'}\})$
 		\State $\CC_{d_1}, \CC_{d_2}, \ldots = {\CC_{d_1}}_{(-r)}, 
 		{\CC_{d_2}}_{(-r)}, \ldots$
 		\EndFor
 		\State \Return reminding $s$ marginals.
 		\EndFunction 
 	\end{algorithmic}
 \end{algorithm}

\subsection{Classical method MEV}\label{sec::mev}

Before moving to higher order cumulants, for the sake of comparison, consider 
the 
MEV (Maximum Ellipsoid Volume) 
\cite{sheffield1985selecting} that uses information stored in the covariance 
matrix - hence the MEV is appropriate for multivariate Gaussian 
distributed data. Following \cite{sheffield1985selecting} the MEV measures 
information stored in the covariance matrix by means of the volume of the 
hyper-ellipsoid spanned on its eigenvectors. Its follows from fact that the 
covariance matrix $\SSSS$, being real valued, symmetric, and positively semi 
definite, can be decomposed according to Eq.~\eqref{eq::cvd}:
\begin{equation}\label{eq::SVDagain}
	\SSSS= \mathbf{A} \SSd \mathbf{A}^{\intercal},
\end{equation}
where $\SSSS_{(d)}$ is a diagonal matrix with elements $\lambda_1, \ldots, 
\lambda_n$ on its diagonal. Here each $\lambda_i \geq 0$ is a non-negative 
eigenvector of $\SSSS$. Columns of $\mathbf{A}$, \ie~$A_1, \ldots, A_n$ are 
normalised 
eigenvectors of 
$\SSSS$.
Since $\mathbf{A}$ is an unitary 
matrix, the determinant of the covariance matrix can be written as:
\begin{equation}\label{eq::detmev}
	\det(\SSSS) = \det(\SSd) = \prod_{i=1}^n \lambda_i.
\end{equation}
This determinant is proportional to the hyper-ellipsoid volume span by vectors 
$\lambda_1 A_1, 
\ldots,$ 
$\lambda_n A_n$. Thus, as discussed in~\cite{sheffield1985selecting}, we can 
use such decomposition in data processing, if we 
have multivariate Gaussian distributed data and the meaningful information is 
tied to the variance of features (we are not interested in the mean vector). In 
such approach, the higher the absolute value of the cross-correlation between 
two 
features (marginals) the more information carried by the first one is carried 
by the second 
one as well. Having a pair of highly (positively or negatively) 
cross-correlated marginals, two columns of the correlation matrix $\SSSS$ would 
be similar (or approximately proportional) and one eigenvalue in 
Eq.~\eqref{eq::detmev} would be small (near zero) affecting strongly 
the determinant.
Suppose now, that we remove one marginal, in such a way, that the determinant 
of reminding 
data is maximised. Here we would remove one of two highly cross-correlated 
marginals, but we would preserve most information. 
Based on these, we can have the iterative dimensionality reduction procedure. 
At each step we 
remove one marginal in such a way, that we maximise the 
determinant of the correlation matrix of reminding data. Performing iterations, 
we would obtain a subset 
of marginals that carry significant part of information. Nevertheless 
the discussion on the lose of information, and acceptable information 
error is necessary to determine the stop condition of iterations. For further 
discussion on the MEV see~\cite{sheffield1985selecting}. Importantly, if the 
multivariate distribution of data is non-Gaussian, not all 
information would be tie to the covariance matrix and the MEV procedure may 
not be optimal. Such scenarios will be discussed in the reminding part of 
this section.

At the end let us refer to the 
simple example. Suppose we have two repeating 
marginals. 
The covariance matrix would have two the same columns, one of its 
eigenvalue would be zero, and the determinant in Eq.~\eqref{eq::detmev} 
would be zero. By removing one of the two marginals (assuming all other 
marginals differs from each other) we have a non-zero determinant of reminding 
data set, but the information carried by data is not reduced.

\subsection{Cumulant based features selection}\label{sec::cumsfsel}

Following~\cite{domino2018use, glombband2018} we propose to modify the MEV 
algorithm to handle non-Gaussian  distributed data. Suppose, we have 
data for with information is tied to non-Gaussian joint distribution of 
features. To 
analyse such data, we apply method based on the High Order Singular Value 
Decomposition (HOSVD), see~\cite{tucker1966some, de2000multilinear, 
jondeau2015moment} of the higher order cumulant's tensor.

\begin{definition}\label{def::hosvdsym}
	Let $\CC \in R^{[n, d]}$ be the super-symmetric tensor. The 
	relation
	\begin{equation}\label{eq::hsvd}
		\CC = \1 \times_{1,\ldots, d} \CC = 
		 \mathbf{A} \mathbf{A}^{\intercal} \times_{1,\ldots, d} \CC = 
		 \mathbf{A} \times_{1,\ldots, d} \CC',
	\end{equation}
	is called its HOSVD decomposition, for used there notation 
	see Definition~\ref{def::tmmodemult}. 
	Here $\mathbf{A} \in \R^{n \times n}$ is an unitary factor matrix 
	and 
	$\mathbf{A}
	\mathbf{A}^{\intercal} = \1$. Tensor 
	\begin{equation}\label{eq::coretensor}
	\R^{[n,d]} \ni \CC' = 
	\mathbf{A}^{\intercal}
	\times_{1,\ldots, d} \CC
	\end{equation} is called the 
	core-tensor. Eq.~\eqref{eq::coretensor} can be shown using element 
	wise notation, where
	\begin{equation}
	t'_{j_1, \ldots, j_n} = \sum_{i_1, \ldots i_n} a_{i_1, j_1} \dots a_{i_n, 
	j_n} t_{i_1, \ldots, i_n}.
	\end{equation}
 To introduce a graphic notation observe that by 
	Definition~\ref{def::matmatgraph} for an 
	unitary matrix the $\mathbf{A} 
	\mathbf{A}^{\intercal} = \1$ can be represented graphically by
		\begin{center}
			\begin{tikzpicture}
			\node[shape=rectangle,draw=black] (A) at (4,0) 
			{$\mathbf{A}$};
			\node[shape=rectangle,draw=black] (B) at (5.8,0) {$\mathbf{A}$};
			\node[shape=rectangle,draw=black] (C) at (1,0) {$\1$};
			\node at (3.44,0.2) {\scriptsize{$(1)$}};
			\node at (4.55,0.2) {\scriptsize{$(2)$}};
			\node at (5.3,0.2) {\scriptsize{$(2)$}};
			\node at (6.32,0.2) {\scriptsize{$(1)$}};
			\path [-] (C) edge node[left] {} (0,0);
			\path [-] (C) edge node[right] {} (2,0);
			\node at (2.5,0){$=$};
			\path [-] (A) edge node[left] {} (B);
			\path [-] (A) edge node[left] {} (3.0,0);
			\path [-] (B) edge node[left] {} (6.8,0);
			\end{tikzpicture}
		\end{center}
Hence, Eq.~\eqref{eq::hsvd} can be represented in a graphic notation (see 
Definition~\ref{def::graphicmoms}), from 
which 
clearly comes Eq.~\eqref{eq::coretensor}
	\begin{center}
		\begin{tikzpicture}
		\node[shape=rectangle,draw=black] (C) at (-1,0) {$\CC$};
		\node at (1.2,0){$=$};
		\path [-] (C) edge node[left] {} (-2.1,1);
		\path [-] (C) edge node[left] {} (.1,1);
		\path [-] (C) edge node[left] {} (-2.1,-1);
		\path [-] (C) edge node[left] {} (.1,-1);
		\node at (-1,1){$\cdots$};
		\node at (-1,-1){$\cdots$};
		\node at (0.2,0.1) {$\vdots$};
		\node[shape=rectangle,draw=black] (D) at (4,0) {$\CC$};
		\node at (4,1.5) {$\cdots$};
		\node[shape=rectangle,draw=black] (C) at (2.8,1) {$\mathbf{A}$};
		\node[shape=rectangle,draw=black] (CC) at (1.8,2.) {$\mathbf{A}$};
		\node at (3.4, .76) {\scriptsize{$(1)$}};
		\node at (2.63, 1.45) {\scriptsize{$(2)$}};
		\node at (3.0, -.55) {\scriptsize{$(1)$}};
		\node at (2.23, -1.25) {\scriptsize{$(2)$}};
		\node at (4.6, .78) {\scriptsize{$(1)$}};
		\node at (5.4, 1.5) {\scriptsize{$(2)$}};
		\node at (5.05, -.55) {\scriptsize{$(1)$}};
		\node at (5.8, -1.25) {\scriptsize{$(2)$}};
		\node at (1.6, 2.5) {\scriptsize{$(1)$}};
		\node at (2.38, 1.75) {\scriptsize{$(2)$}};
		\node at (1.25, -2.25) {\scriptsize{$(1)$}};
		\node at (2.0, -1.55) {\scriptsize{$(2)$}};
		\node at (6.4, 2.5) {\scriptsize{$(1)$}};
		\node at (5.7, 1.75) {\scriptsize{$(2)$}};
		\node at (6.75, -2.25) {\scriptsize{$(1)$}};
		\node at (6.05, -1.55) {\scriptsize{$(2)$}};
		\node at (5.6,0.1) {$\vdots$};
		\node[shape=rectangle,draw=black] (E) at (5.2,1) {$\mathbf{A}$};
		\node[shape=rectangle,draw=black] (EC) at (6.2,2.) {$\mathbf{A}$};
		\node at (4,-1.5) {$\cdots$};
		\node[shape=rectangle,draw=black] (G) at (2.8,-1) {$\mathbf{A}$};
		\node[shape=rectangle,draw=black] (GC) at (1.8,-2.) {$\mathbf{A}$};
		\node[shape=rectangle,draw=black] (I) at (5.2,-1) {$\mathbf{A}$};
		\node[shape=rectangle,draw=black] (IC) at (6.2,-2.) {$\mathbf{A}$};
		\path [-] (D) edge node[left] {} (E);
		\path [-] (D) edge node[left] {} (G);
		\path [-] (D) edge node[left] {} (C);
		\path [-] (D) edge node[left] {} (I);
		\path [-] (EC) edge node[left] {} (6.9,2.7);
		\path [-] (CC) edge node[left] {} (1.1,2.7);
		\path [-] (IC) edge node[left] {} (6.9,-2.7);
		\path [-] (GC) edge node[left] {} (1.1,-2.7);
		\path [-] (E) edge node[left] {} (EC);
		\path [-] (G) edge node[left] {} (GC);
		\path [-] (I) edge node[left] {} (IC);
		\path [-] (C) edge node[left] {} (CC);
		\node at (7.5,0){$=$};
		\node[shape=rectangle,draw=black] (DA) at (10,0) {$\CC'$};
		\node at (10,1) {$\cdots$};
		\node[shape=rectangle,draw=black] (CA) at (8.8,1) {$\mathbf{A}$};
		\node at (9.4, .76) {\scriptsize{$(2)$}};
		\node at (8.63, 1.45) {\scriptsize{$(1)$}};
		\node at (9.0, -.55) {\scriptsize{$(2)$}};
		\node at (8.23, -1.25) {\scriptsize{$(1)$}};
		\node at (11.2,0.1) {$\vdots$};
		\node at (10.6, .78) {\scriptsize{$(2)$}};
		\node at (11.4, 1.5) {\scriptsize{$(1)$}};
		\node at (11.05, -.55) {\scriptsize{$(2)$}};
		\node at (11.8, -1.25) {\scriptsize{$(1)$}};
		\node at (11.2,0.1) {$\vdots$};
		\node[shape=rectangle,draw=black] (EA) at (11.2,1) {$\mathbf{A}$};
		\node at (10,-1) {$\cdots$};
		\node[shape=rectangle,draw=black] (GA) at (8.8,-1) {$\mathbf{A}$};
		\node[shape=rectangle,draw=black] (IA) at (11.2,-1) {$\mathbf{A}$};		
		\path [-] (DA) edge node[left] {} (EA);
		\path [-] (DA) edge node[left] {} (GA);
		\path [-] (DA) edge node[left] {} (CA);
		\path [-] (DA) edge node[left] {} (IA);
		\path [-] (EA) edge node[left] {} (11.9,1.7);
		\path [-] (CA) edge node[left] {} (8.1,1.7);
		\path [-] (IA) edge node[left] {} (11.9,-1.7);
		\path [-] (GA) edge node[left] {} (8.1,-1.7);
		\end{tikzpicture}
	\end{center}
\end{definition}
The Eq.~\eqref{eq::coretensor} can be represented in the  
graphical notation as
	\begin{center}
	\begin{tikzpicture}
	\node[shape=rectangle,draw=black] (C) at (4,0) {$\CC'$};
	\node at (6.8,0){$=$};
	\path [-] (C) edge node[left] {} (3.1,1);
	\path [-] (C) edge node[left] {} (5.1,1);
	\path [-] (C) edge node[left] {} (3.1,-1);
	\path [-] (C) edge node[left] {} (5.1,-1);
	\node at (4,1){$\cdots$};
	\node at (4,-1){$\cdots$};
	\node at (5.2,0.1) {$\vdots$};
	\node at (9,1.5) {$\cdots$};
	\node[shape=rectangle,draw=black] (DA) at (10,0) {$\CC$};
	\node at (10,1) {$\cdots$};
	\node[shape=rectangle,draw=black] (CA) at (8.8,1) {$\mathbf{A}$};
	\node at (9.4, .76) {\scriptsize{$(1)$}};
	\node at (8.63, 1.45) {\scriptsize{$(2)$}};
	\node at (9.0, -.55) {\scriptsize{$(1)$}};
	\node at (8.23, -1.25) {\scriptsize{$(2)$}};
	\node at (11.2,0.1) {$\vdots$};
	\node at (10.6, .78) {\scriptsize{$(1)$}};
	\node at (11.4, 1.5) {\scriptsize{$(2)$}};
	\node at (11.05, -.55) {\scriptsize{$(1)$}};
	\node at (11.8, -1.25) {\scriptsize{$(2)$}};
	\node at (11.2,0.1) {$\vdots$};
	\node[shape=rectangle,draw=black] (EA) at (11.2,1) {$\mathbf{A}$};
	\node at (10,-1) {$\cdots$};
	\node[shape=rectangle,draw=black] (GA) at (8.8,-1) {$\mathbf{A}$};
	\node[shape=rectangle,draw=black] (IA) at (11.2,-1) {$\mathbf{A}$};		
	\path [-] (DA) edge node[left] {} (EA);
	\path [-] (DA) edge node[left] {} (GA);
	\path [-] (DA) edge node[left] {} (CA);
	\path [-] (DA) edge node[left] {} (IA);
	\path [-] (EA) edge node[left] {} (11.9,1.7);
	\path [-] (CA) edge node[left] {} (8.1,1.7);
	\path [-] (IA) edge node[left] {} (11.9,-1.7);
	\path [-] (GA) edge node[left] {} (8.1,-1.7);
	\end{tikzpicture}
\end{center}
Observe, that efinition~\ref{def::hosvdsym} concerns only the HOSVD of the 
super-symmetric tensor 
being of our interest since cumulants tensors are super-symmetric. In the 
general tensor 
case~\cite{de2000multilinear}, the HOSVD is more complex, yielding $d$ distinct
factor matrices.

Let us now discuss the standard method of finding the factor 
matrix~\cite{de2000multilinear} of the HOSVD of the super-symmetric tensor 
case.

\begin{definition}
	Let $\CC \in \R^{[n, d]}$ be a super-symmetric tensor, its 
	contraction with itself in $d-1$ 
	modes (all modes but first) produces the following symmetric matrix
	\begin{equation}\label{eq::bdef}
	\R^{[n,2]}	\ni \mathbf{B} = \sum_{\mathbf{i}\setminus(i_1)} \CC 
	\CC,
	\end{equation} 
	with elements
	\begin{equation}\label{eq::bedef}
	b_{j_1, j_2} = \sum_{i_2, \ldots, i_n} 
	c_{j_1, i_2, \ldots, i_n} 
	c_{j_2, i_2, \ldots, i_n}.
	\end{equation}
	The matrix $\mathbf{B}$ is symmetric, since indices $j_1$ and $j_2$ are 
	exchangeable in Eq.~\eqref{eq::bedef}. Further due to the 
	super-symmetry of $\CC$ it does not matter 
	in which $d-1$ modes we contract $\CC$ by itself. We can introduce a 
	graphic notation analogical to 
	Definition~\ref{def::matmatgraph} and Definition~\ref{def::graphicmoms}. 
	On the graphic notation, each connection between tensors accounts for the 
	contraction in one mode. In 
	Eq.~\eqref{eq::bedef} we have contractions over $n-1$ indices 
	(corresponding to 
	$n-1$ modes). These correspond 
	to the following graphical notation (with $n-1$ connections between 
	$\CC$ and~$\CC$),
	\begin{center}
		\begin{tikzpicture}
		\node[shape=rectangle,draw=black] (A) at (-4,0) {$\mathbf{B}$};
		\path [-] (A) edge node[left] {} (-5,0);
		\path [-] (A) edge node[right] {} (-3,0);
		\node at (-2.5,0){$=$};
		\node[shape=rectangle,draw=black] (C) at (-1,0) {$\CC$};
		\path [-] (C) edge node[left] {} (.1,1);
		\path [-] (C) edge node[left] {} (-2.1,0.);
		\path [-] (C) edge node[left] {} (.1,-1);
		\node at (0.1,0.45) {$\vdots$};
		\node at (0.1,-0.45) {$\vdots$};
		\node[shape=rectangle,draw=black] (D) at (1.2,0) {$\CC$};
		\path [-] (C) edge node[right] {} (D);
		\path [-] (D) edge node[left] {} (0.1,1);
		\path [-] (D) edge node[left] {} (2.2,0);
		\path [-] (D) edge node[left] {} (0.1,-1);
		\end{tikzpicture}
	\end{center}
	 We do 
	not number 
	edges due to symmetry / super-symmetry of all objects 
	involved.
\end{definition}
Replacing $\CC$ by its HOSVD, we have
	\begin{center}
		\begin{tikzpicture}
		\node[shape=rectangle,draw=black] (C) at (-3,0) {$\CC$};
		\path [-] (C) edge node[left] {} (-1.9,1);
		\path [-] (C) edge node[left] {} (-4.1,0.);
		\path [-] (C) edge node[left] {} (-1.9,-1);
		\node at (-1.9,0.45) {$\vdots$};
		\node at (-1.9,-0.45) {$\vdots$};
		\node[shape=rectangle,draw=black] (D) at (-0.8,0) {$\CC$};
		\path [-] (C) edge node[right] {} (D);
		\path [-] (D) edge node[left] {} (-1.9,1);
		\path [-] (D) edge node[left] {} (.2,0);
		\path [-] (D) edge node[left] {} (-1.9,-1);
		\node at (0.5,0){$=$};
		\node[shape=rectangle,draw=black] (A) at (3,0) {$\CC'$};
		\node[shape=rectangle,draw=black] (CA) at (4.1,1) {$\mathbf{A}$};
		\node[shape=rectangle,draw=black] (BA) at (4.1,-1) {$\mathbf{A}$};
		\node[shape=rectangle,draw=black] (AA) at (4.3,0) {$\mathbf{A}$};
		\node[shape=rectangle,draw=black] (DA) at (1.6,0) {$\mathbf{A}$};
		\node[shape=rectangle,draw=black] (CB) at (6.1,1) {$\mathbf{A}$};
		\node[shape=rectangle,draw=black] (BB) at (6.1,-1) {$\mathbf{A}$};
		\node[shape=rectangle,draw=black] (AB) at (5.9,0) {$\mathbf{A}$};
		\node[shape=rectangle,draw=black] (DB) at (8.6,0) {$\mathbf{A}$};
		\path [-] (A) edge node[left] {} (CA);
		\path [-] (A) edge node[left] {} (BA);
		\path [-] (A) edge node[left] {} (AA);
		\path [-] (A) edge node[right] {} (DA);
		\path [-] (DA) edge node[left] {} (.75,0.);
		\node at (5.1,0.75) {$\vdots$};
		\node at (5.1,-0.55) {$\vdots$};
		\node[shape=rectangle,draw=black] (B) at (7.2,0) {$\CC'$};
		\path [-] (AB) edge node[right] {} (B);
		\path [-] (AA) edge node[right] {} (AB);
		\path [-] (BA) edge node[right] {} (BB);
		\path [-] (CA) edge node[right] {} (CB);
		\path [-] (B) edge node[left] {} (BB);
		\path [-] (DB) edge node[left] {} (9.7,0);
		\path [-] (B) edge node[left] {} (CB);
		\path [-] (B) edge node[left] {} (DB);
		\node at (3.82,0.2) {\scriptsize{$(2)$}};
		\node at (4.8,0.2) {\scriptsize{$(1)$}};
		\node at (3.6,.83) {\scriptsize{$(2)$}};
		\node at (4.6,1.15) {\scriptsize{$(1)$}};
		\node at (5.6,1.15) {\scriptsize{$(1)$}};
		\node at (6.6,0.83) {\scriptsize{$(2)$}};
		\node at (3.6,-.83) {\scriptsize{$(2)$}};
		\node at (4.6,-0.83) {\scriptsize{$(1)$}};
		\node at (5.6,-0.83) {\scriptsize{$(1)$}};
		\node at (6.6,-0.83) {\scriptsize{$(2)$}};
		\node at (1.1,0.2) {\scriptsize{$(1)$}};
		\node at (2.12,0.2) {\scriptsize{$(2)$}};
		\node at (5.4,0.2) {\scriptsize{$(1)$}};
		\node at (6.38,0.2) {\scriptsize{$(2)$}};
		\node at (8.1,0.2) {\scriptsize{$(2)$}};
		\node at (9.12,0.2) {\scriptsize{$(1)$}};
		\end{tikzpicture}
	\end{center}
Using $\mathbf{A}^{\intercal} \mathbf{A} = \1$ we have
	\begin{center}
		\begin{tikzpicture}
		\node[shape=rectangle,draw=black] (C) at (-3,0) {$\CC$};
		\path [-] (C) edge node[left] {} (-1.9,1);
		\path [-] (C) edge node[left] {} (-4.1,0.);
		\path [-] (C) edge node[left] {} (-1.9,-1);
		\node at (-1.9,0.45) {$\vdots$};
		\node at (-1.9,-0.45) {$\vdots$};
		\node[shape=rectangle,draw=black] (D) at (-0.8,0) {$\CC$};
		\path [-] (C) edge node[right] {} (D);
		\path [-] (D) edge node[left] {} (-1.9,1);
		\path [-] (D) edge node[left] {} (.2,0);
		\path [-] (D) edge node[left] {} (-1.9,-1);
		\node at (0.5,0){$=$};
		\node[shape=rectangle,draw=black] (A) at (3,0) {$\CC'$};
		\node[shape=rectangle,draw=black] (DA) at (1.6,0) {$\mathbf{A}$};
		\node[shape=rectangle,draw=black] (DB) at (6.6,0) {$\mathbf{A}$};
		\path [-] (A) edge node[left] {} (4.1,1);
		\path [-] (A) edge node[left] {} (4.1,-1);
		\path [-] (A) edge node[left] {} (4.1,0.);
		\path [-] (A) edge node[right] {} (DA);
		\path [-] (DA) edge node[left] {} (.75,0.);
		\node at (4.1,0.5) {$\vdots$};
		\node at (4.1,-0.5) {$\vdots$};
		\node[shape=rectangle,draw=black] (B) at (5.2,0) {$\CC'$};
		\path [-] (DB) edge node[left] {} (B);
		\path [-] (B) edge node[left] {} (4.1,-1);
		\path [-] (DB) edge node[left] {} (7.7,0);
		\path [-] (B) edge node[left] {} (4.1,1);
		\path [-] (B) edge node[left] {} (4.1,0);
		\node at (1.1,0.2) {\scriptsize{$(1)$}};
		\node at (2.12,0.2) {\scriptsize{$(2)$}};
		\node at (6.1,0.2) {\scriptsize{$(2)$}};
		\node at (7.12,0.2) {\scriptsize{$(1)$}};
		\end{tikzpicture}
	\end{center}
	Analogically, using tensor notation
	\begin{equation}\label{eq::hsvdt}
	\sum_{\mathbf{i} \setminus i_1} \CC \CC = 
	\mathbf{A} \left( 
	\sum_{\mathbf{i} \setminus i_1} \CC' \CC' \right) 
	\mathbf{A}^{\intercal}.
	\end{equation}
	Recall now that 
	Eq.~\eqref{eq::hsvdt} is similar to Eq.~\eqref{eq::SVDagain} in 
	such a way in which the HOSVD is similar to the Eigenvalue Decomposition.
	Our strategy is to perform the Singular Value Decomposition of the
	$\sum_{\mathbf{i} \setminus i_1} \CC \CC$ matrix, yielding 
	$\sum_{\mathbf{i} \setminus i_1} \CC' \CC'$ that is a 
	diagonal matrix of eigenvalues of $\sum_{\mathbf{i} \setminus i_1} 
	\CC \CC$. 
	Importantly, in Eq.~\eqref{eq::bdef} we can rearrange summation over 
	$i_2, \ldots, 
	i_d$ to the summation over one index 
	\begin{equation}\label{eq::j}
		j = 1+\sum_{k=2}^d\left(i_k-1\right)d^{k-2}.
	\end{equation}
	Hence $j$ is uniquely determined by $i_2, 
	\ldots, 
	i_d$. Now Eq.~\eqref{eq::bdef} is equivalent to
	\begin{equation}\label{eq::hsvdunfold}
		\mathbf{B} = \mathbf{V}^{\intercal} \mathbf{V} 
		= \sum_{\mathbf{i} \setminus i_1} \CC \CC,
	\end{equation}
	where $\mathbf{V} \in \R^{n^{d-1} \times n}$ is a matrix being 
	a transposition of the	unfold in first 
	mode of the tensor $\CC$ according to definition~\cite{kolda2009tensor}. 
	Elements of $\mathbf{V}$ are
	\begin{equation}\label{eq::unf}
		v_{j, i_1} = c_{i_1, i_2, \ldots, i_d},
	\end{equation}
	where index $j$ is given by Eq.~\eqref{eq::j}. Due to 
	Eq.~\eqref{eq::hsvdunfold} 
	we 
	expect $\mathbf{B}$ to be positive definite. To enlighten the meaning of 
	the $\mathbf{V}$ in higher order cross-correlation analysis let us refer 
	to the following remark and example.
\begin{remark}\label{rem::cinf}
	Suppose,
	$\mathbf{V}$ is an unfold of the $d$\textsuperscript{th} order cumulant of 
	original 
	data $\X$, see Eq.~\eqref{eq::unf}. We, can treat 
	$\mathbf{V}$ as a data vector of $n$ marginals and 
	$n^{d-1}$ realisations. Its $i$\textsuperscript{th} column ($V_i$) contains 
	all possible $d$\textsuperscript{th} order cross-correlation between the 
	$i$\textsuperscript{th} marginal of original data $\X$ and all other 
	marginals. See also~\cite{domino2018use} for further 
	discussion on the $4$\textsuperscript{th} cumulant's case. Hence 
	$\mathbf{V}$ carries information about higher order corss-correlation 
	between margins of original data. Each 
	column of $V_i$ contains a lot of combinations of $d$\textsuperscript{th} 
	order correlations with all other marginals, hence there is a hope for some 
	sort of the Central Limit Theorem, yielding that $\mathbf{V}$ would be 
	Gaussian or almost Gaussian distributed. Furthermore there is a suggestion, 
	based on experiments, that $\mathbf{V} \in \R^{n^{d-1}\times n}$ is much 
	nearer to 
	the	multivariate Gaussian distribution than the original data $\mathbf{X}$, 
	see 
	Example~\ref{ex::vgauss}.  
\end{remark}
\begin{example}\label{ex::vgauss}
	Let us take the artificial data $\mathbf{X} \in \R^{t \times n}$ sampled 
	for the $t$-Student copula parametrised by the Topelitz $\mathbf{R}$ matrix 
	with $\rho = 0.95$, see Figure~\ref{fig::top}, and $\nu = 5$. As priorly we 
	assume standard normal marginals. Next we compute $\CC_4(\X)$ and its 
	unfold $\mathbf{V} \in \R^{n^{3}\times n}$ as in Eq.~\eqref{eq::unf}.
	We use $n = 50$ and $t = 50^3$ to compare data of the same size.	
	\begin{figure}
		\subfigure[Comparison of $h_{\text{norm}, 4}(\X)$ and
		$h_{\text{norm}, 
		4}(\mathbf{V})$\label{fig::ncv}]{\includegraphics{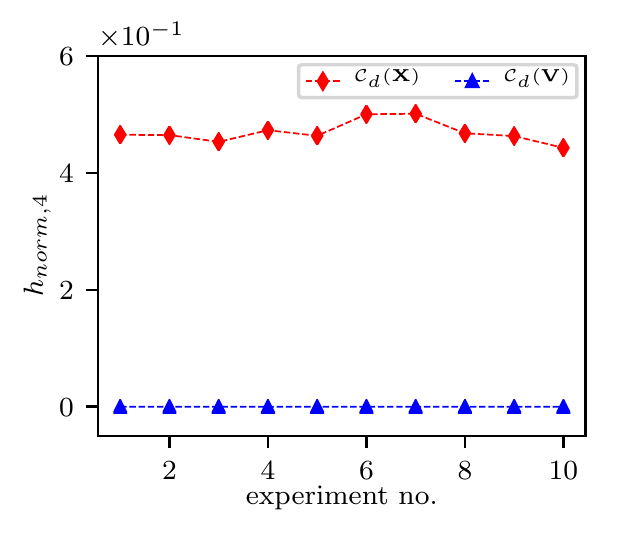}}
		\subfigure[Norms $\| \CC_d( \mathbf{V}) 
		\|$ for experiments 
		from~\ref{fig::ncv}\label{fig::nv}]{\includegraphics{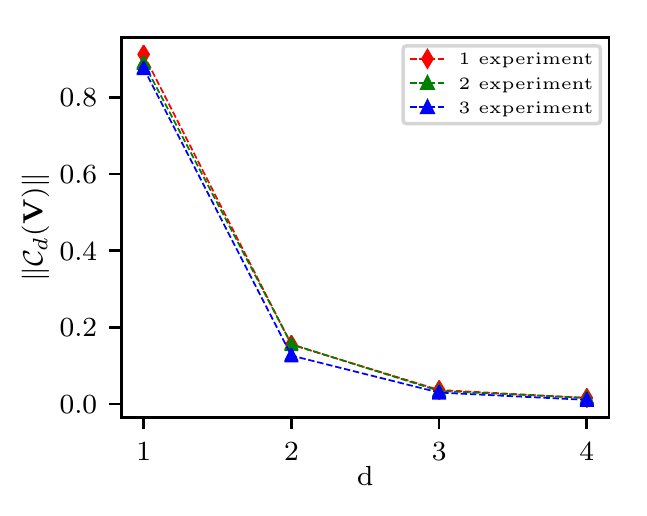}}
		\caption{Cumulants based measures the distance form the multivariate 
		Gaussian distribution for $\X$ and $\mathbf{V}$. Original data $\X$ are 
		generated by means of the $t$-Student copula parametrised by the 
		Topelitz matrix $\mathbf{R}$ and the scalar parameter $\nu = 5$, with 
		standard Gaussian marginals. The $\mathbf{V}$ is an unfold of the 
		$\CC_4(\X)$. Using the $h_{\text{norm}, 4}$ measure, one can conclude 
		that 
		$\mathbf{V}$ is nearer to the multivariate Gaussian distribution than 
		$\X$.}\label{fig::normshosvd}
	\end{figure}
In Figure~\ref{fig::normshosvd} we present measures based on the Frobenius 
norms of cumulants of $\mathbf{V}$ and $\mathbf{X}$. It results, that 
$\mathbf{V}$ is much nearer to the Gaussian 
distribution than $\X$, see $h_{\text{norm}, 4}$ in Figure~\ref{fig::ncv}. 
Further most of information of $\mathbf{V}$ is stored in the mean vector, see 
Figure~\ref{fig::nv}, hence we can not use zero mean $\mathbf{V}$ for analysis.
\end{example}
Based on the discussion above we use $\mathbf{B}$, as in 
Eq.~\eqref{eq::hsvdunfold} to measure higher order cross-correlation in 
$\X$. In 
fact $\mathbf{B}$ is a sum of the covariance matrix of $\mathbf{V}$ and an 
outer produce of the mean vector and mean vector transposition of $\mathbf{V}$. 
Further $\mathbf{V}$ is assumed to be Gaussian or almost Gaussian assumed. 
Henceforth the transformation $\CC \rightarrow 
\CC'$, see Eq.~\eqref{eq::hsvdt} is consistent with the transformation of 
the $\mathbf{B}$ matrix. Although, since 
$\mathbf{A}$ is unitary we have the following equality in the sense of the 
Frobenius norm
 \begin{equation}\label{eq::cnorms}
\| \mathbf{B} \| = \| \mathbf{V}^{\intercal} \mathbf{V} \| = \Big{\|} 
\sum_{\mathbf{i} \setminus i_1} \CC \CC 
\Big{\|} = 
 \Big{\|} 
 \sum_{\mathbf{i} \setminus i_1} \CC' \CC'\Big{\|} \text{ and }
\| \CC \| =  \| \CC' \|\end{equation}
As $\mathbf{V}$ is expected to be Gaussian distributed or at least near the 
Gaussian distribution, we can measure the information while using the 
Frobenius norm. This will be further related to the mean squared 
approximation of Gaussian distributed data.
We can assume that by the transformation $\CC \rightarrow 
\CC'$ we do not lose significantly
information~\cite{de2000multilinear} stored in $\CC$. 

Referring to Eq.~\eqref{eq::cnorms}, in general $\| \CC \| \neq \| 
\sum_{\mathbf{i} \setminus i_1} \CC \CC \|$. Nevertheless we 
expect some monotone relation between $\| \CC \|$ and $\| 
\sum_{\mathbf{i} \setminus i_1} \CC \CC \|$, to tie information carried by 
$\CC$ with information carried by $\sum_{\mathbf{i} \setminus i_1} 
\CC' \CC'$. See \cite{de2000multilinear} and discussion in 
Remark~\ref{rem::cinf}. Further, since 
$\mathbf{A}$ is unitary we have:
 \begin{equation}
\det \left(\sum_{\mathbf{i} \setminus i_1} \CC \CC \right) = 
\det \left(
\sum_{\mathbf{i} \setminus i_1} \CC' \CC' \right) = \prod 
\left( \text{diag} \left(
\sum_{\mathbf{i} \setminus i_1} \CC' \CC' \right) \right).
\end{equation}
The last term is the product of diagonal elements of $\sum_{\mathbf{i} 
\setminus i_1} \CC' \CC'$, that is a diagonal matrix.

Based on proposed relation between $\CC$ and the product of diagonal 
elements of $\sum_{\mathbf{i} \setminus i_1} \CC' \CC'$, in the 
informative sense, we will use $\det\left( 
\sum_{\mathbf{i}\setminus{i_1}} \CC_d \  \CC_d \right)$ to measure information 
tied to $d$\textsuperscript{th} order cross-correlations between marginals of 
data for which $\CC_d$ is computed.
 Hence we 
introduce following general target function for non-Gaussian features selection.
\begin{equation}\label{eq::h}
h_{\det,d}(\CC_2, \CC_d) = \frac{\det\left( \sum_{\mathbf{i}\setminus{i_1}} 
\CC_d \  \CC_d \right)}{\det (\CC_2)^d}
\end{equation}
The normalisation (denominator) assures that in univariate case we 
reproduce the module 
of asymmetry if $d = 3$ and the module of kurtosis if $d = 4$. 
See~\cite{glombband2018} for further discussion on the normalisation and 
the application of Eq.~\eqref{eq::h} to analyse real life hyper-spectral 
data. 

Following the argumentation from subsection~\ref{sec::mev} the determinant in 
the numerator would be near zero in two cases.
The first case concerns two similar columns of the matrix $\mathbf{B} = 
\mathbf{V}^{\intercal} 
\mathbf{V} = \det \left( 
\sum_{\mathbf{i}\setminus{i_1}} 
\CC_d \  \CC_d \right)$. This would correspond to
two features with similar $d$\textsuperscript{th} order cross-correlations with 
all other features. Hence, from the informative point of view, one of these 
features would be 
redundant. The 
second case concerns zero all $d$\textsuperscript{th} order cross-correlations 
of the $i$\textsuperscript{th} feature. Such feature may be suspected to be  
Gaussian distributed. Hence, we expect to remove such Gaussian features by
Algorithm~\ref{alg::itterremove}.

Experiments concerning features selection based on $h_{\text{det}, d}$ are 
widely 
discussed in~\cite{domino2018hiding, domino2018use}. For some experiments on 
artificial data, see also next subsection.

\subsection{Experiments}\label{sec::experiments}

In this subsection we test algorithms introduced in this chapter on artificial 
data generated, see 
Section~\ref{sec::datagen}. For each experiment's 
realisation we start with data $\X \in \R^{t 
	\times n}$ randomly sampled from multivariate Gaussian distribution 
	$\mathcal{N}(0, \mathbf{R})$. The $\mathbf{R}$ is the positive definite 
	symmetric correlation 
matrix with ones on a diagonal, generated at random for each experiment, see
random method in Subsection~\ref{sec::experimentmats} and 
Figure~\ref{fig::rand_cor}. Next we change chosen at random subset of marginals 
$|\mathbf{r}| = k$ by means of $t$-Student, Fr\'echet or Archimedean nested 
copulas. The transformation is performed 
in such a way that the correlation matrix of changed data is similar 
to  the covariance matrix of an original data.  
Next, we perform the features selection procedure. We remove iterative 
low-informative features using one of following target functions: 
$h_{\text{det}, d}$, $h_{\text{norm}, d}$ or the MEV. See 
also~\cite{glombband2018} for the application of the $h_{\text{det}, d}$ on 
real life hyper-spectral data and the iteration algorithm. We perform the 
iterative elimination of 
features as long as there reminds $k$ features left, that are most informative 
according to the target function. Next we check how many of these 
$k$ features are correctly selected as features from the non-Gaussian copula 
subset. We denote number of correctly selected features by
$|\mathbf{r}_{\text{detected}}|$, obviously in our setting the perfect outcome 
would give $|\mathbf{r}_{detected}| = k$. In Figure~\ref{fig::fdetect} we 
present the empirical probability of the number of detections, given 
different target functions in Figure~\ref{fig::detmethods} or different 
non-Gaussian copulas of changed subset of features in 
Figure~\ref{fig::detcops}. For 
further experiments see \cite{domino2018use} in the case ot the $t$-Student 
copula and \cite{domino2018hiding} in the case of Archimedean Copulas.

\begin{figure}
    \begin{center}
	\subfigure[Clayton nested 
	copula\label{fig::detmethods}]{\includegraphics{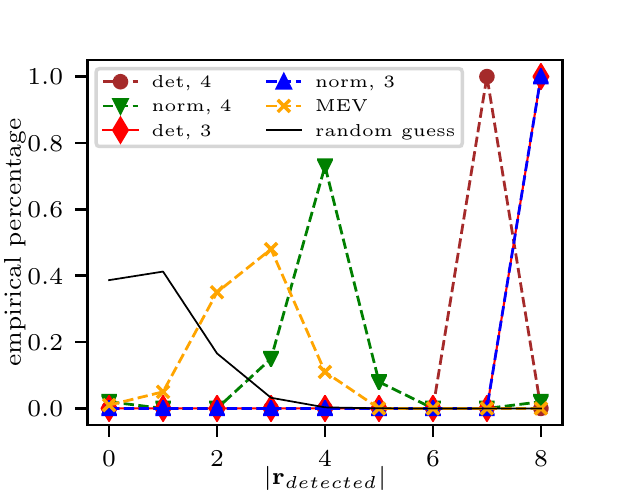}}
	\subfigure[$h_{\text{det}, 4}$ target 
	function\label{fig::detcops}]{\includegraphics{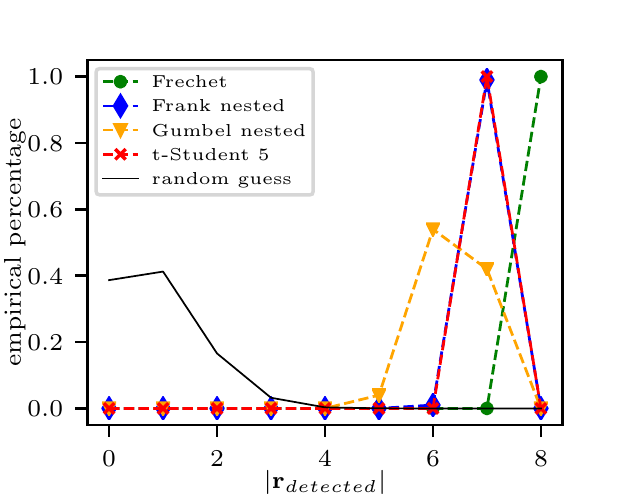}}
	\caption{Features selection experiments $t = 10^5, n = 75, k = 
	8$, and $100$ experiments realisations. Cumulant's based methods are 
	compared with the MEV and the random 
	guess. On the horizontal axis there are numbers of correctly detected 
	features, 
	the more to the right the better the results. Results of the MEV are poor 
	as it 
	was 
	expected. Results of the
	$h_{\text{det}, 4}$ for Fr\'echet, $t$-Student and Frank copula are good. 
	For Clayton copula methods using the 
	$3$\textsuperscript{rd} cumulant seams to work better than these using the 
	$4$\textsuperscript{th} cumulant. }\label{fig::fdetect}
\end{center}
\end{figure}

Referring to Figure~\ref{fig::detmethods} we have a poor performance of the MEV 
similar to the random guess. This is due to the fact that the transformation 
of marginal's subset does not affect significantly the 
covariance matrix, that is used by the MEV method to extract information, see 
Subsection~\ref{sec::experimentmats}.
On the other hand, we have a good performance of the $3$\textsuperscript{rd} 
cumulant 
based method, as absolute values of $3$\textsuperscript{rd} cumulant's tensor's
elements for the Clayton copula (with Gaussian marginals) are relatively large, 
see Figure~\ref{fig::cl_c3} and compare it with Figure~\ref{fig::cl_c4} for 
$4$\textsuperscript{th} cumulant's elements for this copula. In 
these experiments we use cross-correlation between marginals, as in 
Figure~\ref{fig::rand_cor}, due the random $\mathbf{R}$ matrix generation 
algorithm. For the 
$4$\textsuperscript{th} cumulant case the $h_{\text{det}, 4}$ function is much 
better than $h_{\text{norm}, 4}$ and almost as good as the 
$3$\textsuperscript{rd} cumulant's based methods. Hence sophisticated 
HOSVD based method discussed in Subsection~\ref{sec::cumsfsel} have 
an advantage over the simple norm of the cumulant's tensor.

Nevertheless the $4$\textsuperscript{th} cumulant's based methods are more 
general, since there are copulas (such as a $t$-Student one, widely used in the 
financial data analysis) for which elements of the $3$\textsuperscript{rd} 
cumulant's tensor are zero, given symmetric marginal distributions. Hence, 
 we will use the target function $h_{\text{det}, 4}$ to 
analyse more general case for various copulas. Referring to 
Figure~\ref{fig::detcops} we have good detection for all copulas, far from the 
random guess case. Best results gives the Fr\'echet copula, next we have 
$t$-Student and Frank copulas. We observe, the worse 
detectability of features modelled by the Gumbel copula if using the 
$4$\textsuperscript{th} cumulant's based method. This comes form the fact that 
for 
the Gumbel copula with high cross-correlations, $4$\textsuperscript{th} 
cumulant's elements are small, see Figure~\ref{fig::gu_c4}. For the 
corss-correlation level between marginals during experiments see 
Figure~\ref{fig::rand_cor}.

\section{Features extraction}\label{sec::fextract}

Having discussed the cumulant's based features selection, we can move to 
another dimensionality reduction scenario, the features extraction. 
Let us start as in the previous section with $n$-variate random vector 
$\XXX^{(n)}$. Its $t$ realisations are given in a form of 
a matrix $\mathbf{X} \in \R^{t \times n}$, where $n$ marginals are features.
Features extraction is the projection of features onto directions that carry 
most meaningful information~\cite{duda2012pattern}. In other words we want to 
find such linear combinations of marginals that carry most meaningful 
information. Suppose we have the 
following factor matrix $\mathbf{A}' \in \R^{n \times n'}$, where $n' < n$. In 
our case we would use the orthonormal columns corresponding to $n'$ 
'independent' features, hence
$(\mathbf{A'})^{\intercal} \mathbf{A'} = \1$, but $\mathbf{A'} 
(\mathbf{A'})^{\intercal} \neq \1$. We can reduce dimensionality of data 
by performing the transformation
\begin{equation}
\R^{t \times n'} \ni \Y' = \X \mathbf{A'}
\end{equation}
Columns of $\Y'$ represent linear combinations of original features. If 
matrix $\mathbf{A}$ is properly determined most of information from 
$\X$ will remain in $\Y'$. 
 
	Suppose that $n$ is large, it would be computationally complex to 
	use 
	whole $\X \in \R^{t \times n}$ in some sophisticated machine learning 
	algorithm~\eg~classifying realisations $\x_j \in \R^n$ to given classes, 
	detecting outlier realisations or fitting some model to data. Further it 
	may require large computer memory to store or process a 
	whole $\X$ especially if $t$ is large. But, if we chose such 
	$\mathbf{A'}$ that most of meaningful information from $\X$ is still 
	present in $\Y' 
	\in \R^{t \times n'}$ and $n' \ll n$, we would have a significant gain on 
	computational complexity or computational memory, 
	see~\cite{duda2012pattern}.

\begin{example}
In the case of financial data analysis, consider investment portfolio 
management, where we are basically interested in finding such portfolios of 
assets that are characterised by low risk. It can be done, in a statistical 
approach, by the analysis of
statistics of historical assets values. Obviously, we assume that the 
probabilistic model of future data will be similar to the probabilistic model 
of past data. 
Suppose 
$\X \in \R^{t \times n}$ represents data of $t$ past records of $n$ assets 
returns. To minimise risk, we need to find such linear 
combination of assets that has low variability, what is an opposite approach to 
the 
information maximisation. We have
\begin{equation}
	\R^{t \times n''} \ni \textbf{Y}'' = \X \mathbf{A''},
\end{equation}
suppose another factor matrix $\mathbf{A}'$ is determined in such a way, that 
it projects data on directions caring most information about the variability of 
data. Hence we can search for directions, orthogonal to those in $\mathbf{A}'$ 
projecting data onto directions with low variability. Assuming such directions 
are orthonormal to each other, they can be stored in the orthonormal factor 
matrix 
$\mathbf{A}''$. One of the methods of obtaining $\mathbf{A}''$ will be 
searching for columns orthogonal to columns from $\mathbf{A}'$. 
\end{example}

The most popular method of determining the factor matrix $\mathbf{A'}$ (and 
$\mathbf{A''}$) is the 
Singular Value Decompositions (SVD) of a covariance matrix. Let us concentrate 
on $\mathbf{A'}$, following
Eq.~\eqref{eq::SVDagain} we have
\begin{equation}
	\CC_2 = \mathbf{A} \SSd \mathbf{A}^{\intercal}.
\end{equation}
If we shuffle columns of $\mathbf{A}$ in such a way 
that eigenvalues are in decreasing order (these eigenvalues are real 
and non-negative) first columns of matrix $\mathbf{A}$ would correspond to 
largest information or highest variability of linear combination of features in 
reference to the covariance matrix.
Here by taking first $n'$ columns of $\mathbf{A}$ we can determine the factor 
matrix that preserves most of information, see~\cite{krzanowski2000principles}. 
The 
covariance matrix $\SSd' = \mathbf{A'}^{\intercal} 
\CC_2 \mathbf{A'}$ of new data $\mathbf{Y} = \X \mathbf{A'}$ would be a 
diagonal one composed of highest $n'$ eigenvectors of $\CC_2$.

For a graphic representation clarity, we assume centred $\tilde{\X}$, such that
$\CC_1({\tilde{\X}}) = 0$, hence  $\CC_1({\tilde{\Y}}) = 0$. We can discuss 
$\tilde{\Y}$ using a graphic notation. Starting with:
	\begin{center}
		\begin{tikzpicture}
		\node[shape=rectangle,draw=black] (C) at (-3,0) {$\tilde{\Y}$};
		\path [-] (C) edge node[left] {} (-4,0);
		\path [-] (C) edge node[right] {} (-2,0);
		\node[shape=rectangle,draw=black] (AA) at (0,0) 
		{$\tilde{\X}$};
		\node[shape=rectangle,draw=black] (BA) at (1.8,0) 
		{$\mathbf{A}'$};
		\node at (-0.5,0.2) {\scriptsize{$(1)$}};
		\node at (.55,0.2) {\scriptsize{$(2)$}};
		\node at (1.25,0.2) {\scriptsize{$(1)$}};
		\node at (2.38,0.2) {\scriptsize{$(2)$}};
				\node at (-3.52,0.2) {\scriptsize{$(1)$}};
				\node at (-2.45,0.2) {\scriptsize{$(2)$}};
		\node at (-1.5,0){$=$};
		\path [-] (AA) edge node[left] {} (BA);
		\path [-] (AA) edge node[left] {} (-1.0,0);
		\path [-] (BA) edge node[left] {} (2.8,0);
		\end{tikzpicture}
	\end{center}
	the covariance matrix of $\tilde{\Y}$ is
	\begin{center}
		\begin{tikzpicture}
		\node[shape=rectangle,draw=black] (C) at (-2.7,0) {$\CC_2'$};
		\node at (9.25,0.18) {\scriptsize{$(1)$}};
		\node at (10.35,0.18) {\scriptsize{$(2)$}};
		\node at (7.6,0.18) {\scriptsize{$(1)$}};
		\node at (8.62,0.18) {\scriptsize{$(2)$}};
				\node at (6.0,0.18) {\scriptsize{$(2)$}};
				\node at (7.0,0.18) {\scriptsize{$(1)$}};
		\node at (5.15,0.18) {\scriptsize{$(1)$}};
		\node at (4.05,0.18) {\scriptsize{$(2)$}};
		\node[shape=rectangle,draw=black] (A) at (4.6,0) {$\mathbf{A'}$};
		\node[shape=rectangle,draw=black] (B) at (6.5,0) {$\tilde{\X}$};
		\node[shape=rectangle,draw=black] (E) at (8.1,0) {$\tilde{\X}$};
		\node[shape=rectangle,draw=black] (D) at (9.8,0) 
		{$\tilde{\mathbf{A'}}$};
		\path [-] (C) edge node[left] {} (-3.4,0);
		\path [-] (C) edge node[right] {} (-2,0);
		\path [-] (E) edge node[left] {} (D);
		\path [-] (B) edge node[left] {} (E);
		\node at (-1.5,0){$=$};
		\path [-] (A) edge node[left] {} (B);
		\path [-] (A) edge node[left] {} (3.5,0);
		\path [-] (D) edge node[left] {} (10.7,0);
		\node[shape=rectangle,draw=black] (AA) at (0,0) 
		{$\tilde{\mathbf{Y}}$};
		\node[shape=rectangle,draw=black] (BA) at (1.8,0) 
		{$\tilde{\mathbf{Y}}$};
		\node at (-0.5,0.2) {\scriptsize{$(2)$}};
		\node at (.55,0.2) {\scriptsize{$(1)$}};
		\node at (1.3,0.2) {\scriptsize{$(1)$}};
		\node at (2.32,0.2) {\scriptsize{$(2)$}};
		\node at (3.2,0){$=$};
		\path [-] (AA) edge node[left] {} (BA);
		\path [-] (AA) edge node[left] {} (-1.0,0);
		\path [-] (BA) edge node[left] {} (2.8,0);
		\end{tikzpicture}
	\end{center}
	but
	\begin{center}
		\begin{tikzpicture}
		\node[shape=rectangle,draw=black] (C) at (-3,0) {$\CC_2$};
		\path [-] (C) edge node[left] {} (-4,0);
		\path [-] (C) edge node[right] {} (-2,0);
		\node[shape=rectangle,draw=black] (AA) at (0,0) 
		{$\tilde{\X}$};
		\node[shape=rectangle,draw=black] (BA) at (1.8,0) 
		{$\tilde{\X}$};
		\node at (-0.5,0.2) {\scriptsize{$(2)$}};
		\node at (.55,0.2) {\scriptsize{$(1)$}};
		\node at (1.3,0.2) {\scriptsize{$(1)$}};
		\node at (2.32,0.2) {\scriptsize{$(2)$}};
		\node at (-1.5,0){$=$};
		\path [-] (AA) edge node[left] {} (BA);
		\path [-] (AA) edge node[left] {} (-1.0,0);
		\path [-] (BA) edge node[left] {} (2.8,0);
		\end{tikzpicture}
	\end{center}
The covariance matrix of $\tilde{\Y}$ will be:
\begin{center}
	\begin{tikzpicture}
	\node[shape=rectangle,draw=black] (A) at (4,0) 
	{$\mathbf{A'}$};
	\node[shape=rectangle,draw=black] (B) at (5.5,0) {$\CC_2$};
	\node[shape=rectangle,draw=black] (D) at (7.,0) {$\mathbf{A'}$};
	\node[shape=rectangle,draw=black] (C) at (1,0) {$\CC_2'$};
	\node at (6.45,0.18) {\scriptsize{$(1)$}};
	\node at (7.55,0.18) {\scriptsize{$(2)$}};
	\node at (4.55,0.18) {\scriptsize{$(1)$}};
	\node at (3.45,0.18) {\scriptsize{$(2)$}};
	\path [-] (C) edge node[left] {} (0,0);
	\path [-] (C) edge node[right] {} (2,0);
	\path [-] (B) edge node[left] {} (D);
	\node at (2.5,0){$=$};
	\path [-] (A) edge node[left] {} (B);
	\path [-] (A) edge node[left] {} (3.0,0);
	\path [-] (D) edge node[left] {} (8.,0);
	\end{tikzpicture}
\end{center}
In the case of non-centred data $\X$ we need to observe that if $\Y' = \X 
\mathbf{A}'$ than $\tilde{\Y'} = \tilde{\X} \mathbf{A'}$ and use 
$\CC_2(\tilde{\Y}) = \CC_2(\Y)$.

If we use  $\mathbf{A'} \in 
\R^{n \times n'}$ where $n > n'$ and $\mathbf{A'}^{\intercal} \mathbf{A'} = 
\1$, the transformation $\Y' = \X 
\mathbf{A}'$ will loose some information in the sense of the Frobenius norm
	\begin{equation}
	\|\CC_2 \| < \|\CC_2' \| \text{ or } \| \Y'\| < \| \X \|.
	\end{equation}
hence we have to chose a proper value of $n'$ to make an information lose 
acceptable. In 
these approximation the lose of information is minimised by the minimisation 
of 
the mean square error measured by the Frobenius norm. Such approach is 
optimal for Gaussian distributed multivariate data. 
 Unfortunately most of real life data, such as financial one are usually 
non-Gaussian distributed. Hence, these simple approach is not optimal. Higher 
order statistics may play a vital role in more adequate approach for these data.

\subsection{High Order Singular Value Decomposition}

Suppose now we have a following linear data transformation $\Y' = \X 
\mathbf{A'}$. We can take zero mean example $\tilde{\Y'} = \tilde{\X} 
\mathbf{A'}$ for simplicity. 
One can show, that the $d$\textsuperscript{th} order cumulant for $d \geq 2$ 
(for $d=1$ we do not have equity between the cumulant of centred and 
non-centred 
data) of 
$\Y$ or $\tilde{\Y}$ would be
\begin{equation}\label{eq::cumtransf}
\R^{[n',d]} \ni \CC_d' = 
 \left(\mathbf{A'}\right)^{\intercal} \times_{1,\ldots, d} \CC_d,
\end{equation}
where $\CC_d \in \R^{[n, d]}$ is the $d$\textsuperscript{th} order cumulant of 
$\X$ or $\tilde{\X}$, and $\mathbf{A} \in \R^{n \times n'}$ is a factor matrix. 
Let us refer to a
notation for the $d$\textsuperscript{th} order central moment, see 
Definition~\ref{def::graphicmoms}, of $\tilde{\Y'} = \tilde{\X} \mathbf{A'}$. 
For the proof of Eq.~\eqref{eq::cumtransf} refer to graphical notation
\begin{center}
	\begin{tikzpicture}
	\node[shape=rectangle,draw=black] (C) at (-1,0) {$\MM_d'$};
	\node at (1.2,0){$=$};
	\path [-] (C) edge node[left] {} (-2.1,1);
	\path [-] (C) edge node[left] {} (.1,1);
	\path [-] (C) edge node[left] {} (-2.1,-1);
	\path [-] (C) edge node[left] {} (.1,-1);
	\node at (-1,1){$\cdots$};
	\node at (-1,-1){$\cdots$};
	\node at (0.2,0.1) {$\vdots$};
	\node[shape=rectangle,draw=black] (D) at (4,0) {$\1_{(d)}$};
	\node at (4,1.5) {$\cdots$};
	\node[shape=rectangle,draw=black] (C) at (2.8,1) {$\tilde{\X}$};
	\node[shape=rectangle,draw=black] (CC) at (1.8,2.) {$\mathbf{A'}$};
	\node at (3.4, .76) {\scriptsize{$(1)$}};
	\node at (2.63, 1.45) {\scriptsize{$(2)$}};
	\node at (3.0, -.55) {\scriptsize{$(1)$}};
	\node at (2.23, -1.25) {\scriptsize{$(2)$}};
	\node at (4.6, .78) {\scriptsize{$(1)$}};
	\node at (5.4, 1.5) {\scriptsize{$(2)$}};
	\node at (5.05, -.55) {\scriptsize{$(1)$}};
	\node at (5.8, -1.25) {\scriptsize{$(2)$}};
	\node at (1.6, 2.5) {\scriptsize{$(2)$}};
	\node at (2.38, 1.75) {\scriptsize{$(1)$}};
	\node at (1.25, -2.25) {\scriptsize{$(2)$}};
	\node at (2.0, -1.55) {\scriptsize{$(1)$}};
	\node at (6.4, 2.5) {\scriptsize{$(2)$}};
	\node at (5.65, 1.75) {\scriptsize{$(1)$}};
	\node at (6.75, -2.25) {\scriptsize{$(2)$}};
	\node at (6.05, -1.55) {\scriptsize{$(1)$}};
	\node at (5.6,0.1) {$\vdots$};
	\node[shape=rectangle,draw=black] (E) at (5.2,1) {$\tilde{\X}$};
	\node[shape=rectangle,draw=black] (EC) at (6.2,2.) {$\mathbf{A'}$};
	\node at (4,-1.5) {$\cdots$};
	\node[shape=rectangle,draw=black] (G) at (2.8,-1) {$\tilde{\X}$};
	\node[shape=rectangle,draw=black] (GC) at (1.8,-2.) {$\mathbf{A'}$};
	\node[shape=rectangle,draw=black] (I) at (5.2,-1) {$\tilde{\X}$};
	\node[shape=rectangle,draw=black] (IC) at (6.2,-2.) {$\mathbf{A'}$};
	\path [-] (D) edge node[left] {} (E);
	\path [-] (D) edge node[left] {} (G);
	\path [-] (D) edge node[left] {} (C);
	\path [-] (D) edge node[left] {} (I);
	\path [-] (EC) edge node[left] {} (6.9,2.7);
	\path [-] (CC) edge node[left] {} (1.1,2.7);
	\path [-] (IC) edge node[left] {} (6.9,-2.7);
	\path [-] (GC) edge node[left] {} (1.1,-2.7);
	\path [-] (E) edge node[left] {} (EC);
	\path [-] (G) edge node[left] {} (GC);
	\path [-] (I) edge node[left] {} (IC);
	\path [-] (C) edge node[left] {} (CC);
	\node at (7.5,0){$=$};
	\node[shape=rectangle,draw=black] (DA) at (10,0) {$\MM_d$};
	\node at (10,1) {$\cdots$};
	\node[shape=rectangle,draw=black] (CA) at (8.8,1) {$\mathbf{A'}$};
	\node at (9.4, .76) {\scriptsize{$(1)$}};
	\node at (8.63, 1.45) {\scriptsize{$(2)$}};
	\node at (9.0, -.55) {\scriptsize{$(1)$}};
	\node at (8.23, -1.25) {\scriptsize{$(2)$}};
	\node at (11.2,0.1) {$\vdots$};
	\node at (10.6, .78) {\scriptsize{$(1)$}};
	\node at (11.4, 1.5) {\scriptsize{$(2)$}};
	\node at (11.05, -.55) {\scriptsize{$(1)$}};
	\node at (11.8, -1.25) {\scriptsize{$(2)$}};
	\node at (11.2,0.1) {$\vdots$};
	\node[shape=rectangle,draw=black] (EA) at (11.2,1) {$\mathbf{A'}$};
	\node at (10,-1) {$\cdots$};
	\node[shape=rectangle,draw=black] (GA) at (8.8,-1) {$\mathbf{A'}$};
	\node[shape=rectangle,draw=black] (IA) at (11.2,-1) {$\mathbf{A'}$};		
	\path [-] (DA) edge node[left] {} (EA);
	\path [-] (DA) edge node[left] {} (GA);
	\path [-] (DA) edge node[left] {} (CA);
	\path [-] (DA) edge node[left] {} (IA);
	\path [-] (EA) edge node[left] {} (11.9,1.7);
	\path [-] (CA) edge node[left] {} (8.1,1.7);
	\path [-] (IA) edge node[left] {} (11.9,-1.7);
	\path [-] (GA) edge node[left] {} (8.1,-1.7);
	\end{tikzpicture}
\end{center}
shows the relation between central moments of $\Y'$ and $\X$, namely
\begin{equation}\label{eq::momtransf}
\R^{[n',d]} \ni \MM_d' = 
\left(\mathbf{A'}\right)^{\intercal} \times_{1,\ldots, d} \MM_d.
\end{equation}
Hence, Eq.~\eqref{eq::cumtransf} holds for cumulant of order $d=2$ and 
$d=3$ as these are central moments.  
For $d > 3$, one need to refer to the graphic representation of corresponding 
cumulant, see Remarks~\ref{rem::c4graph}~\ref{rem::c5graph}~\ref{rem::c6graph}, 
use Eq.~\eqref{eq::momtransf} and the fact that 
Eq.~\eqref{eq::cumtransf} holds for $2, \ldots, d-2$.

Now our task is to find such factor matrix that gives highly informative 
features with respect to the $d$\textsuperscript{th} order cumulant. This can 
by performed by means of the HOSVD procedure of the corresponding cumulant's 
tensor. Suppose 
$\mathbf{A} \in \R^{n \times n}$ is an unitary column matrix of eigenvectors of 
$\sum_{\mathbf{i}\setminus{i_1}} \CC_d \CC_d$, see 
Eq.~\eqref{eq::hsvdt}. The core-tensor fulfils $\CC_d' = 
\left(\mathbf{A}\right)^{\intercal} \times_{1,\ldots, d} \CC_d$, and referring 
to Eq.~\eqref{eq::hsvdt} we get
\begin{equation}\label{eq::hosvdc}
	\sum_{\mathbf{i}\setminus{i_1}} \CC_d \CC_d =\mathbf{A} \left( 
	\sum_{\mathbf{i}\setminus{i_1}} \CC_d' \CC_d' \right) 
	\mathbf{A}^{\intercal}.
\end{equation}
where $\sum_{\mathbf{i}\setminus{i_1}} \CC_d' \CC_d'$ is diagonal eigenvalue 
matrix. 
Hence, if 
we shuffle columns of 
$\mathbf{A}$ is such a way that its eigenvalues are sorted in 
decreasing order (given they are real and non-negative) 
first columns of $\mathbf{A}$ should give a linear combination of data with 
high information measured by the $d$\textsuperscript{th} cumulant. To discuss 
this in more details consider $\mathbf{V_y} \in \R^{n^{d-1} \times n}$, that is 
the matrix representation of $\CC'$ with elements given by Eq.~\eqref{eq::unf}. 
We 
have
\begin{equation}
\left(\mathbf{V_y}\right)^{\intercal} \mathbf{V_y} = 
\sum_{\mathbf{i}\setminus{i_1}} \CC_d' 
\CC_d'.
\end{equation}
Following Remark~\ref{rem::cinf} the $i$\textsuperscript{th} column of 
$\mathbf{V_y}$ contains $d$\textsuperscript{th} order cross-correlations of the 
$i$\textsuperscript{th} marginal of $\mathbf{Y} = \mathbf{X} \mathbf{A}$ with 
all other marginals of $\mathbf{Y}$.
Further as discussed in Remark~\ref{rem::cinf} data in $\mathbf{V_y}$ should be 
$n$-variate Gaussian distributed or have a probabilistic model near to the 
Gaussian one. Due to the Singular Value 
Decomposition matrix $\left(\mathbf{V_y}\right)^{\intercal} \mathbf{V_y}$ is 
diagonal. Concluding, columns of $\mathbf{V_y}$ can be ordered 
with respect to the information they carry about the $d$\textsuperscript{th} 
order cross-correlation of data. Assuming the Gaussian model of 
$\mathbf{V_y}$ we can use the mean square error approximation in the analysis 
of the decomposition.

Analogically to the Singular Value Decomposition of the covariance matrix case, 
if we use unitary $\mathbf{A} \in \R^{n \times n}$ we have the following 
information conservation in the sense of the Frobenius norm,
\begin{equation}
\| \left(\mathbf{A}\right)^{\intercal} \times_{1,\ldots, d} \CC_d  \| = \|\CC_d 
\|  \text{ and } \| \X 
\mathbf{A} \| = \| \X \|.
\end{equation}	
Given the eigenvalues ordering, one can conclude, that elements of the 
following core tensor $\CC_d' = 
\left(\mathbf{A}\right)^{\intercal} 
\times_{1,\ldots, d} 
\CC_d $ indexed by low multi-index (left right corner in the $d = 3$ case), 
should have 
relatively large
impact to the Forbenious norm of this core tensor $\| 
\CC_d'\|$ \cite{de2000multilinear}. Such ordering leads to $\mathbf{A}' \in 
\R^{n 
\times n'}$ composed of first $n'$ columns of $\mathbf{A}$. Such $\mathbf{A}'$ 
can be used to achieve a linear 
combination of data carrying most of information with respect to 
$d$\textsuperscript{th} order cumulant. We have here
\begin{equation}
\| \left(\mathbf{A'}\right)^{\intercal} \times_{1,\ldots, d} \CC_d  \| < 
\|\CC_d \|.
\end{equation}
Hence, we have to chose such $n'$ to make an information lose 
acceptable.	In the
next subsection we will discuss a real life data application of the procedure.

\subsection{Multi-cumulant higher order singular value 
decomposition}\label{sec::als}

In this subsection, following~\cite{domino2016use}, we discuss the procedure of 
the real life financial data analysis where we want to find such factor matrix 
that gives low risk investment portfolios. Investment portfolio is composed of 
many assets, negative quantity of assets is possible and correspond to the 
`short sale'. The value of the portfolio is the linear combination of values of 
these assets (often being shares prices). To determine the low risk portfolio 
we want to reduce its variability measured by cumulants of order 
$2,3,\ldots, d$.
In~\cite{domino2016use} the author has shown, that such approach, given $d = 
6$, gives relatively save portfolios of shares traded on the Warsaw Stock 
Exchange, during the crisis of year $2015$. On average, the losses on such 
portfolios were smaller than the loss of the benchmark being the stock exchange 
index (WIG20).

Do determine safe portfolios we use the 
Alternating Least Squares (ALS) algorithm, which is some composition of the 
HOSVD of cumulants tensors and the Singular Value Decomposition of the 
covariance matrix
\cite{morton2009algebraic, 
	de2004dimensionality, savas2010quasi}. 
The ALS searches for the local maximum of the function
\begin{equation}\label{eq::als1step}
\xi_{d}(\mathbf{A'}) = \frac{1}{2!} \|\mathbf{A'}^{\intercal} \CC_2 
\mathbf{A'}\| + \sum_{k=3}^{d} \frac{1}{k!} 
\|\left(\mathbf{A'}\right)^{\intercal} \times_{1,\ldots, k} \CC_k\|,
\end{equation}
given $\mathbf{A'}^{\intercal} \mathbf{A'} = 
\1$ where $\mathbf{A'} \in \R^{n \times n'}$ and $n' < n$, where $n'$ is a 
parameter.
Following \cite{domino2016use}, we discuss here the following iterative 
approach.
\begin{enumerate}
	\item In the first step we construct the matrix
	\begin{equation} 
	\R^{[n, 2]} \ni \mathbf{T}_{\xi} = \frac{1}{2!} \CC_2 \CC_2 + \sum_{k=3}^d 
	\frac{1}{k!} 
	\sum_{\mathbf{i}\setminus(i_1)} \CC_k \CC_k.
	\end{equation}
	\item We perform the eigenvalue decomposition of $\textbf{T}_{\xi}$,
	\begin{equation}\label{eq::alsA}
	\textbf{T}_{\xi} = \mathbf{A} \textbf{T}_{(d)} \mathbf{A}^{\intercal}
	\end{equation}
	where $\mathbf{A}$ is unitary matrix which columns are eigenvectors of 
	$\textbf{T}_{\xi}$ and $\textbf{T}_{(d)}$ is diagonal matrix of real 
	non-negative eigenvalues sorted in decreasing order. Here $\mathbf{A}' \in 
	\R^{n \times n'}$ is supposed to carry meaningful information. Such 
	information would be tied both to its standard measure being a covariance 
	of features and not-standard measures being higher order cross-correlations 
	of order 
	($3,4,\ldots, d$).
	\item In the $p^{\mathrm{th}}$ step we take a factor matrix from the 
	$(p-1)$\textsuperscript{th} step, compute
	\begin{equation}\label{eq::alspstep}
		\mathbf{T}_{\xi, (k)} = \sum_{\mathbf{i}\setminus(i_1)} \mathcal{T}_{k} 
	\mathcal{T}_{k}
	\text{ where }  \mathcal{T}_{k} = 
	\mathbf{A'}^{\intercal}_{(p-1)}\times_{2,\ldots, k} \CC_k,
	\end{equation}
	 what can be represented graphically as
		\begin{center}
			\begin{tikzpicture}
			\node[shape=rectangle,draw=black] (C) at (0,0) {$\mathbf{T}_{\xi, 
			(k)}$};
			\path [-] (C) edge node[left] {} (-1.,0.);
			\path [-] (C) edge node[left] {} (1,0);
			\node at (1.4,0){$=$};
			\node[shape=rectangle,draw=black] (A) at (3,0) {$\CC_k$};
			\node[shape=rectangle,draw=black] (CA) at (4.1,1) {$\mathbf{A'}$};
			\node[shape=rectangle,draw=black] (BA) at (4.1,-1) {$\mathbf{A'}$};
			\node[shape=rectangle,draw=black] (AA) at (4.3,0) {$\mathbf{A'}$};
			\node[shape=rectangle,draw=black] (CB) at (6.1,1) {$\mathbf{A'}$};
			\node[shape=rectangle,draw=black] (BB) at (6.1,-1) {$\mathbf{A'}$};
			\node[shape=rectangle,draw=black] (AB) at (5.9,0) {$\mathbf{A'}$};
			\path [-] (A) edge node[left] {} (CA);
			\path [-] (A) edge node[left] {} (BA);
			\path [-] (A) edge node[left] {} (AA);
			\path [-] (A) edge node[left] {} (1.75,0.);
			\node at (5.1,0.75) {$\vdots$};
			\node at (5.1,-0.55) {$\vdots$};
			\node[shape=rectangle,draw=black] (B) at (7.2,0) {$\CC_k$};
			\path [-] (AB) edge node[right] {} (B);
			\path [-] (AA) edge node[right] {} (AB);
			\path [-] (BA) edge node[right] {} (BB);
			\path [-] (CA) edge node[right] {} (CB);
			\path [-] (B) edge node[left] {} (BB);
			\path [-] (B) edge node[left] {} (8.7,0);
			\path [-] (B) edge node[left] {} (CB);
			\node at (3.72,0.2) {\scriptsize{$(1)$}};
			\node at (4.85,0.2) {\scriptsize{$(2)$}};
			\node at (3.55,.83) {\scriptsize{$(1)$}};
			\node at (4.65,1.15) {\scriptsize{$(2)$}};
			\node at (5.55,1.15) {\scriptsize{$(2)$}};
			\node at (6.7,0.83) {\scriptsize{$(1)$}};
			\node at (3.55,-.83) {\scriptsize{$(1)$}};
			\node at (4.65,-0.83) {\scriptsize{$(2)$}};
			\node at (5.55,-0.83) {\scriptsize{$(2)$}};
			\node at (6.65,-0.83) {\scriptsize{$(1)$}};
			\node at (5.35,0.2) {\scriptsize{$(2)$}};
			\node at (6.45,0.2) {\scriptsize{$(1)$}};
			\end{tikzpicture}
		\end{center}
	compute 
	\begin{equation}\mathbf{T}_{\xi} = \sum_{k = 2}^d 
	\frac{1}{k!}\mathbf{T}_{\xi, (k)},
	\end{equation}
	and repeat point $2$.
	\item The procedure is repeated to satisfy the stop condition.
\end{enumerate}

Recall that if we take $n = n'$ the Eq.~\eqref{eq::alspstep} would give the 
Eq.~\eqref{eq::als1step}, see graphic notation for justification and the 
fact that for $n = n'$ we have $\mathbf{A} 
\mathbf{A}^{\intercal} = \1$. Nevertheless since the factor matrix in not 
unequally determined by Eq.~\eqref{eq::alspstep}, some iterations may be
still useful,~\eg~requiring $\det(\mathbf{A}) = +1$ in a stop 
condition to make an outcome deterministic, see~\cite{domino2016use}. If taking 
$n = n'$, last 
columns of $\mathbf{A}$ will give a linear combination of features with low 
variability. The variability measures both by the variance and higher order 
cumulants $3, \ldots, d$ as well. These linear combination of features with low 
variability, correspond to safe portfolios especially for a crisis. During the 
crisis 
we have simultaneous high drops of values of many assets, what is modelled by 
higher order correlations \cite{sornette2009stock, morton2009algebraic}.

Alternatively, to take a full advantage of the iterative procedure, we may use 
$\mathbf{A}' \in \R^{n \times n'}$, where $n' < n$.
The output would be a factor matrix giving portfolios with high variability. 
Next, one have to search 
for the factor matrix 
$\mathbf{A}''$ with columns orthonormal to the output matrix. Such 
$\mathbf{A}''$ is suppose to give a factor matrix reflecting linear 
combinations with variability - safe portfolios.

\subsection{Experiments}

To show that last columns of the factor matrix of the HOSVD of the 
$d$\textsuperscript{th} cumulant's tensors gives mainly a linear combination of 
such marginals that have 
little $d$ order cross-correlation, we propose a following experiment.
In the experimental set we use $t 
= 10^5$ 
and $n = 5$, and for each experimental realisation we generate randomly 
positive definite symmetric correlation matrix $\mathbf{R}$ with ones on the 
diagonal, using random method discussed in Section~\ref{sec::datamalfexp}, see 
Figure~\ref{fig::rand_cor} there. Next, we sample $\X$ from 
$\NN(0, 
\mathbf{R})$, and change first two marginals of $\X$ by means of a non-Gaussian 
copula. Next we perform the HOSVD of the 
$\CC_d$ computed for changed data, see Eq.~\eqref{eq::hosvdc}, to get a 
factor matrix $\mathbf{A} \in 
\R^{n \times n}$. Such factor matrix, is a matrix of columns that are 
eigenvectors 
of 
$\sum_{\mathbf{i}\setminus{i_1}} \CC_d \CC_d$. In our procedure, these 
eigenvectors are ordered by 
the information significance, which is measured by the $d$\textsuperscript{th} 
order cumulant of the corresponding linear combination of features. Hence if we 
take the last column of 
the factor 
matrix - $A_5 \in \R^n$, it corresponds to the lowest eigenvalue of 
$\sum_{\mathbf{i}\setminus{i_1}} \CC_d \CC_d$. Such column should generally 
correspond to the
linear combination of marginals that have small 
$d$\textsuperscript{th} 
order cross-correlation. This linear combination should have small 
$d$\textsuperscript{th} order cumulant, see~\cite{morton2009algebraic}. Given 
the last column $A_5$ and its first two elements $a_{1,5}$ and $a_{2,5}$ that 
correspond to the impact of non-Gaussian features, we use the 
norm
\begin{equation}\label{eq::w}
w = \frac{\sqrt{a_{1,5}^2+a_{2,5}^2}}{\| A_5 \|},
\end{equation}
which should be low if there is a little impact of the non-Gaussian distributed 
subset in the linear combination of features given by $A_5$. 

\begin{figure}
    \begin{center}
	\subfigure[Gumbel 
	copula\label{fig::hosvdarch}]{\includegraphics{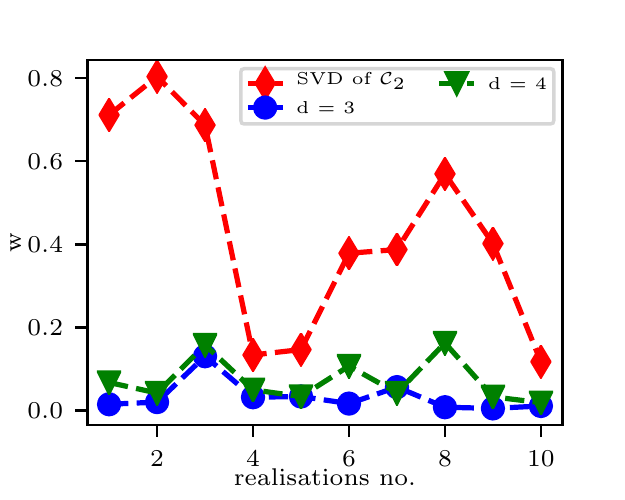}}
	\subfigure[$t$-Student copula $\nu = 5$ 
	\label{fig::hosvdt}]{\includegraphics{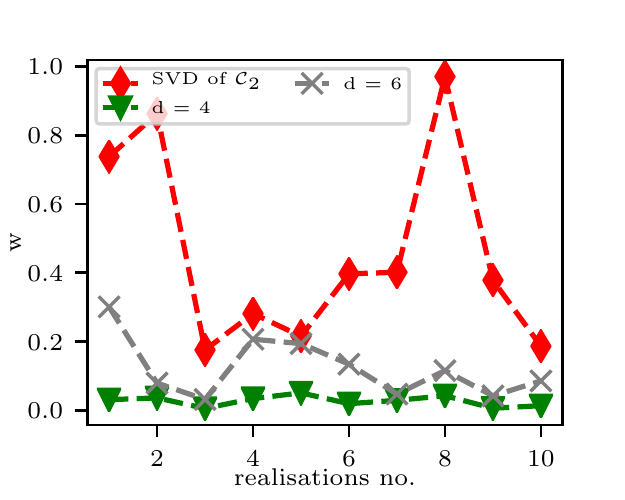}}
	\caption{An impact of non-Gaussian features on linear combination that is 
		supposed to have small $d$\textsuperscript{th} 
		cumulant. Here, following Eq.~\eqref{eq::w}, $w$ is the measure of the 
		participation of non-Gaussian 
		features in the linear combination of features, given by the last 
		column of the factor matrix. The lower $w$ the lower participation of 
		non-Gaussian features. For Gumbel copula best results gives the HOSVD 
		of $3$\textsuperscript{rd} cumulant, while for $t$-Student copula the 
		HOSVD of the $4$\textsuperscript{th} cumulant.
		}\label{fig::hosvd}
    \end{center}
\end{figure}

Results are presented in 
Figure~\ref{fig::hosvd}, ideally we expect $w = 0$. We have $w \approx 0$ for 
many realisations of the experiment given the HOSVD of $\CC_3$ 
for data changed by the Gumbel copula and of the HOSVD of $\CC_4$ for 
data changed by the $t$-Student copula. In the Gumbel copula case, 
observe that absolute values of elements 
$3$\textsuperscript{rd} cumulant's tensor's are basically larger than those  
corresponding to the $4$\textsuperscript{th} cumulant's tensor's, compare 
Figures~\ref{fig::gu_c3} and~\ref{fig::gu_c4}. Nevertheless the performance of 
the HOSVD of $\CC_4$ is almost as good as those of the HOSVD of 
$\CC_3$, especially in comparison with the performance of the SVD of the 
covariance matrix.

In the $t$-Student copula case, see Figure~\ref{fig::hosvdt}, we did not 
present 
outcomes of the HOSVD of odd cumulants, hence these cumulants are zero and the 
HOSVD of these cumulants would produce random results. Importantly 
in the $t$-Student copula case, the HOSVD of $\CC_4$ 
produces good results, better than the HOSVD of $\CC_6$. This is probably due 
to 
higher estimation error of $\CC_6$ in comparison with $\CC_4$, see Appendix A 
in~\cite{domino2018efficient}. Nevertheless, the HOSVD of $\CC_6$ is still 
better than 
the SVD of $\CC_2$.

As a conclusion one can observe, that last columns of the factor matrix of the 
HOSVD of higher order 
cumulants tensors are good candidates for such linear combination of features 
in 
which non-Gaussian features have a little representation. Nevertheless the 
order of the cumulant for which the HOSVD is performed has to be determined on 
the base of the probabilistic mode of data.

\chapter{Discussion}

In this book, we discuss stochastic models of real-life data that break the 
Central Limit Theorem assumptions both in a univariate and a multivariate case. 
Such models give motivation for the developments of algorithms applicable to 
the analysis of non-Gaussian distributed data. To discuss and develop some of 
such algorithms we analyse probabilistic models of data, including copulas and 
univariate marginal distributions. Further, to analyse these probabilistic 
models, we use higher order multivariate statistics and, notably, higher order 
multivariate cumulants. This leads to particular algorithms presented in 
Chapter~\ref{cha::cumml} of the book.  

In Chapter~\ref{cha::cumml} we discuss different methods of dimensionality 
reduction applicable for non-Gaussian distributed data. These methods provide 
the core of various features selection or feature extraction schemes, all based 
on higher order cumulants tensors. One should keep in mind that dimensionality 
reduction is vital in many machine learning scenarios where we either want to 
reduce the dimensionality of data to such that carries only interesting us 
information or to filter out data that are not informative or carry a noise.  
There we can either improve a score of further analysis or reduce the size of 
the data set to be further processed.

Discussed dimensionality reduction methods were tested successfully on 
specifically prepared artificial data with higher order dependencies within a 
subset of marginals. Marginals from this subset where modelled by some 
non-Gaussian copula, while reminding were modelled by the Gaussian copula. The 
discussion on the optimal dimensionality reduction schemes given various 
probabilistic models is provided as well in Chapter~\ref{cha::cumml}. For 
example, if we have higher order dependencies modelled by the $t$-Student 
copula, we need even higher order cumulants to analyse it, since odd ones are 
supposed to be zero. The wide discussion on copulas and copula-based algorithms 
for data preparation is presented in Chapter~\ref{chap::cops}, while a wide 
discussion on cumulants is presented in Chapter~\ref{ch::hostats}.

Apart from this, we present some applications of cumulants based dimensionality 
reduction algorithms in real-life data analysis. In particular we concentrate 
on financial data analysis, hyper-spectral imaging data analysis, or biomedical 
data analysis. We hope for the further applications on real data, given 
the fact cumulants based dimensionality reduction algorithms are tested both on 
artificial and real-life data. Finally, these algorithms are implemented in a 
modern and efficient Julia programming language and available in the 
\texttt{GitHub} repository.

Potential applications involve the further analysis of various non-Gaussian 
distributed data, both from sources mentioned in preceding paragraph, and from 
other sources not mentioned there. The vast source of non-Gaussian data prvided 
from the hyper-spectral imaging. 
Here we can refer to various hyper-spectral analysis scenarios. One of these is 
the search for features (or features combinations) that carry meaningful 
information about small targets in detection scenarios.  Another is the 
introduction of non-Gaussian models of hyper-spectra data.  

Large sources of non-Gaussian data are financial systems as well. Referring to 
financial data one can use stochastic models as discussed in 
Chapter~\ref{ch::univ} to predict crises or other periods of high and unusual 
variability of financial data. In such periods extreme changes in financial 
data values, simultaneous on many marginals, are possible. Such dynamics can be 
analysed by higher order cumulants. In practice, different algebraic 
multi-linear methods may be applied to find such linear combinations of 
assets---the portfolio---that have small variability and are relatively safe. 
For example, one can improve the ALS approach discuss in 
Chapter~\ref{cha::cumml}, to be applicable to the practical problem of finding
portfolios with low variability and positive portions of all assets, and to 
exclude the problematic short sale (negative portions) case. Such algorithm can 
be tested on a vast amount of financial data from around the world.

Analogically one can investigate non-Gaussian distributions of various 
multi-features biomedical data---such as multi-sources EMG or EEG 
data---employing higher order cumulants tensors. The proposed approach of 
cumulant-based multilag autocorrelation analysis of such signals is presented 
at the end of Chapter~\ref{ch::hostats}. By using advanced statistical models 
to analyse existing biomedical data, new knowledge can be retained without 
further data acquisition involving experiments on animals or human based 
research. 

There are much more non-Gaussian real-life data that can be analysed by higher 
order cumulants, such as traffic 
data. Analysis of these data can be beneficial in computer network modelling or 
road, rail or air traffic modelling. Here non-Gaussian patterns can be analysed 
by copulas or higher order cumulants, while data dynamics ba various stochastic 
models. There is also a potential for quantum computing in handling traffic 
data.
Observe finally, that non-Gaussian are also weather data and solar activity 
data. These can be 
analysed in the climate and weather research, especially as extreme events are 
now more probable due to the climate changes. Such analysis can be beneficial 
both in analysing a human impact on the climate and in civil engineering. 

We believe that this book will significantly contribute to the 
development of algorithms applicable for the non-Gaussian data analysis. As 
mentioned before, such analysis is crucial for various scientific disciplines 
and humanity development.

\bibliographystyle{plain}
\bibliography{manuscript}

\end{document}